\documentclass[12pt]{article}

\usepackage{amsmath,amssymb,latexsym,amsfonts}
\newcommand{\qed}{\hfill \ensuremath{\Box}}

\usepackage[left=2.5cm,right=2.5cm,top=2.5cm,bottom=2.5cm]{geometry}

\usepackage{tikz}
\usepackage{graphicx}
\usepackage{float}
\usepackage{subfigure}

\usepackage[ruled,vlined,linesnumbered]{algorithm2e}
\newfloat{Algorithm}{h}{mel}
\usepackage{alltt}

\usepackage{tabularx}

\usepackage{dsfont} 

\usepackage{hyperref}
\newcommand{\mathj}{\texorpdfstring{$j$}{j}\xspace}
\newcommand{\mathk}{\texorpdfstring{$k$}{k}\xspace}
\newcommand{\elltt}{\texorpdfstring{$\ell_2^2$}{l-2-2}\xspace}

\newcommand{\Linfty}{\texorpdfstring{$\L_{\infty}$}{L-infinity}\xspace}
\newcommand{\mathC}{\texorpdfstring{$\mathcal C$}{C}\xspace}

\makeatletter
\newtheorem{rep@theorem}{\rep@title}
\newcommand{\newreptheorem}[2]{%
\newenvironment{rep#1}[1]{%
 \def\rep@title{#2 \ref{##1}}%
 \begin{rep@theorem}}%
 {\end{rep@theorem}}}
\makeatother

\newcommand{\REAL}{\ensuremath{\mathbb{R}}}
\providecommand{\norm}[1]{\left\lVert#1\right\rVert}
\newcommand{\eps}{\varepsilon}
\renewcommand{\epsilon}{\varepsilon}
\newtheorem{theorem}{Theorem}
\newtheorem{corollary}[theorem]{Corollary}
\newtheorem{lemma}[theorem]{Lemma}

\newtheorem{definition}[theorem]{Definition}

\newtheorem{claim}[theorem]{Claim}

\newtheorem{observation}[theorem]{Observation}

\newenvironment{proof}{\noindent {\bf Proof:}\ }{\qed \par\vskip 4mm\par}

\newcommand{\br}[1]{\left\{#1\right\}}                            

\renewcommand{\S}{2}
\renewcommand{\j}{j}
\newcommand{\dist}{\mathrm{dist}}

\DeclareMathOperator{\opt}{opt}
\DeclareMathOperator{\cost}{cost}

\newcommand{\ie}{i.\,e.,}
\newcommand{\eg}{e.\,g.,}

\DeclareFontFamily{OT1}{pzc}{}
\DeclareFontShape{OT1}{pzc}{m}{it}{<-> s * [1.2] pzcmi7t}{}
\DeclareMathAlphabet{\mathpzc}{OT1}{pzc}{m}{it}

\DeclareFontFamily{OT1}{pzcl}{}
\DeclareFontShape{OT1}{pzcl}{m}{it}{<-> s * [1] pzcmi7t}{}
\DeclareMathAlphabet{\mathpzcl}{OT1}{pzcl}{m}{it}

\newcommand{\ucc}[1]{\mathpzc{#1}}

\newreptheorem{theorem}{Theorem}
\newreptheorem{lemma}{Lemma}
\newreptheorem{corollary}{Corollary}



\newcommand{\RR}{\mathbb{R}}

\renewcommand{\j}{j}
\renewcommand{\L}{\ensuremath{\mathcal{L}}}
\hfuzz=1000.002pt

\allowdisplaybreaks

\newcommand{\algstr}{\textsc{Streaming-Subspace-Approximation}}
\newcommand{\algout}{\textsc{Output-Coreset}}

\newcommand{\coresetsize}{\text{CoresetSize}}

\newcommand{\ranges}{\mathrm{ranges}}
\newcommand{\range}{\mathrm{range}}
\newcommand{\image}{\RR^{\ge 0}}

\newcommand{\kk}{\texorpdfstring{$k$}{k}}
\newcommand{\jj}{\texorpdfstring{$j$}{j}}

\begin{document}
\thispagestyle{empty}
\title{Turning Big data into tiny data: \\Constant-size coresets for\\ $k$-means, PCA and projective clustering}

\author{Dan Feldman\thanks{University of Haifa, Robotics \& Big Data Lab. Email: dannyf.post@gmail.com}
			\and Melanie Schmidt\thanks{Rheinische Friedrich-Wilhelms-Universit\"at Bonn, Germany, Email: melanieschmidt@uni-bonn.de }
			\and Christian Sohler\thanks{TU Dortmund, Germany, Email: christian.sohler@tu-dortmund.de. The author acknowledges the support
			of the collaborative research center 876, project A2, funded by the German Science Foundation.}}

\date{}

\maketitle

\begin{abstract}
We develop and analyze a method to reduce the size of a very large set of data points in a high dimensional Euclidean space $\REAL^d$ to a small set of weighted points such that the result of a predetermined data analysis task on the reduced set is approximately the same as that for the original point set. For example, computing the first $k$ principal components of the reduced set will return approximately the first $k$ principal components of the original set or computing the centers of a $k$-means clustering on the reduced set will return an approximation for the original set. Such a reduced set is also known as a \emph{coreset}.
The main new feature of our construction is that the cardinality of the reduced set is independent of the dimension $d$ of the input space \emph{and} that the sets are mergable. The latter property means that the union of two reduced sets is a reduced set for the union of the two original sets (this property has recently also been called \emph{composable}, see~\cite{indyk2014composable}). It allows us to turn our methods into streaming  or distributed algorithms using standard approaches.
For problems such as $k$-means and subspace approximation the coreset sizes are also independent of the number of input points.

Our method is based on projecting the points on a low dimensional subspace and reducing the cardinality of the points
inside this subspace using known methods. The proposed approach works for a wide range of data analysis techniques including $k$-means clustering, principal component analysis and subspace clustering.

The main conceptual contribution is a new coreset definition that allows to charge costs that appear for every solution to an additive constant.
\end{abstract}
\setcounter{page}{1}

\renewcommand{\paragraph}[1]{\medskip\noindent\textbf{{#1}.}}
\section{Introduction\label{sec:int}}

In many areas of science, progress is closely related to the capability to analyze massive amounts of data.
Examples include particle physics where according to the webpage dedicated to the Large Hadron collider beauty experiment \cite{lhcb} at CERN, after a first filtering phase 35 GByte of data per second need to be processed
``to explore what happened after the Big Bang that allowed matter to survive and build the Universe we inhabit today'' \cite{lhcb}.
The IceCube neutrino observatory ``searches for neutrinos from the most violent astrophysical sources: events like exploding stars, gamma ray bursts, and cataclysmic phenomena involving black holes and neutron stars.'' \cite{IceCube}. According to the webpages \cite{IceCube}, the datasets obtained are of a projected size of about 10 Teta-Bytes per year.
Also, in many other areas the data sets are growing in size because they are increasingly being gathered by ubiquitous information-sensing mobile devices, aerial sensory technologies (remote sensing), genome sequencing, cameras, microphones, radio-frequency identification chips, finance (such as stocks) logs, internet search,  and wireless sensor networks~\cite{Hellerstein, Segaran}. 

The world's technological per-capita capacity to store information has roughly doubled every 40 months since the 1980s~\cite{Hilbert}; as of 2012, every day 2.5  quintillion bytes($2.5\times 10^{18}$) of data were created~\cite{IBM}. Data sets as the ones described above and the challenges involved when analyzing them is often subsumed in the term \emph{Big Data}.
Big Data is also sometimes described by the ``3Vs" model~\cite{Beyer}:
increasing \emph{volume} $n$ (number of observations or records), its \emph{velocity} (update time per new observation) and its \emph{variety} $d$ (dimension of data, features, or range of sources).

In order to analyze data that for example results from the experiments above, one needs to employ automated data analysis methods that can identify important patterns and substructures in the data, find the most influential features, or reduce size and dimensionality of the data.
Classical methods to analyze and/or summarize data sets include clustering, i.e., the partitioning of data into subsets of similar characteristics, and principal component analysis which allows to consider the dimensions of a data set that have the highest variance. Examples for such methods include $k$-means clustering, principal component analysis (PCA), and subspace clustering.

The main problem with many existing approaches is that they are often efficient for large number $n$ of input records, but are not efficient enough to deal with Big Data where also the dimension $d$ is asymptotically large.
One needs special algorithms that can easily handle massive streams of possibly high dimensional measurements and that can be easily parallelized and/or applied in a distributed setting.

In this paper, we address the problem of analyzing Big Data by developing and analyzing a new method to reduce the data size while approximately keeping its main characteristics in such a way that any approximation algorithm run on the
reduced set will return an approximate solution for the original set. This reduced data representation (semantic compression) is sometimes called a \emph{coreset}.
Our method reduces any number of items to a number of items that only depends on some problem parameters (like the number of clusters) and the quality of the approximation, but not on the number of input items or the dimension of the input space.

Furthermore, we can always take the union of two data sets that were reduced in this way and the union provides an approximation for the two original data sets.
The latter property is very useful in a distributed or streaming setting and allows for very simple algorithms using standard techniques. For example, to process a very large data set on a cloud, a distributed system or parallel computer,
we can simply assign a part of the data set to each processor, compute the reduced representation, collect it somewhere
and do the analysis on the union of the reduced sets. This merge-and-reduce method is strongly related to MapReduce and its popular implementations (e.g. Hadoop~\cite{Hadoop}). If there is a stream of data or if the data is stored on a secondary storage device, we can read chunks that fit into the main memory of each individual computer and then reduce the data in this chunk. In the end, we apply our data analysis tools on the union of the reduced sets via small communication of only the coresets between the computers.

Our main result is a dimensionality reduction algorithm for $n$ points in high $d$-dimensional space to $n$ points in $O(j/\eps^2)$ dimensional space, such that the sum of squared distances to every object that is contained in a $j$-dimensional subspace is approximated up to a $(1+\eps)$-factor. This result is applicable to a wide range of problems like PCA, $k$-means clustering and projective clustering. For the case of PCA, i.e., subspace approximation, we even get a coreset of cardinality $O(j/\eps)$ (here $j$ is just the dimension of the subspace). The cardinality of this coreset is constant in the sense that it is independent of the input size: both its original cardinality $n$ and dimension $d$. A coreset of such a constant cardinality is also obtained for $k$-means queries, i.e., approximating the sum of squared distances over each input point to its closest center in the query.

For other objectives, we combine our reduction with existing coreset constructions to obtain very small coresets. A construction that computes coresets of cardinality $f(n,d,k)$ will result in a construction that computes coresets of cardinality $f(n,O(k/\eps^2),k)$, i.e., independent of $d$. This scheme works as long as there is such a coreset construction, e.g., it works for $k$-means or $k$-line-means. For the projective clustering problem (more precisely, the affine $j$-subspace $k$-clustering problem that we define below), such a coreset construction does not and can not exist. We circumvent this problem by requiring that the points are on an integer grid (the resulting size of the coreset will depend polylogarithmically on the size of the grid and n).  

A more detailed (technical) description of our results is given in Section~\ref{ourresults} after a detailed discussion about the studied problems and concepts.

\subsection*{Erratum}

We remark that in the conference version of this paper, some of the coreset sizes resulting from applying our new technique were incorrect. We have updated the results in this paper (see Section \ref{ourresults} for an overview). In particular, the
coreset size for projective clustering is not independent of $n$.

\subsection*{Previous publications}
The main results of this work have been published in~\cite{FSS13}. However, the version at hand is significantly different. We carefully derive the concrete application to several optimization problems, develop explicit streaming algorithms, explain and re-prove some related results that we need, correct errors from the conference version (see above), provide pseudo code for most methods and add a lot of explanations compared to the conference version. The PhD thesis~\cite{S14} also contains a write-up of the main results, but without the streaming algorithms for subspace approximation and projective clustering.

\section{Preliminaries}\label{background}
In this section we formally define our notation and the problems that we study.

\paragraph{Matrix notation}
The set of all real-valued $n\times d$ matrices is denoted by $\REAL^{n\times d}$. Our input data set is a set of $n$ points in $\REAL^d$. We will represent it by a matrix $A\in\REAL^{n\times d}$, whose rows are the input points. The entry in the $i$th row and $j$th column of $A$ is denoted by $A_{ij}$. We use $A_{i*}$ to denote the $i$-th row of a $A$ and $A_{*j}$ to denote its $j$-th column.
We use $I_d$ to denote the $d \times d$ identity matrix, or just $I$ if the dimension is clear from the context. We say that a matrix $X\in\REAL^{d\times j}$
has orthonormal columns if its columns are orthogonal unit vectors.  Notice that every such matrix $X$ satisfies $X^T X = I$.
 If $A$ is also a square matrix, it is called an \emph{orthogonal} matrix.

Any matrix $A\in\REAL^{n\times d}$ has a singular value decomposition (SVD), which is a factorization $A= U \Sigma V^T$, where $U$ is an $n\times n$ orthogonal matrix, $V$ is a $d \times d$ orthogonal matrix and $\Sigma$ is an $n \times d$ rectangular diagonal matrix
whose diagonal entries are non-negative and non-increasing.
We use $\sigma_1,\cdots, \sigma_{\min\br{n,d}}$ to denote the diagonal elements $\Sigma_{1,1}, \cdots,\Sigma_{\min\br{n,d},\min\br{n,d}}$ of $\Sigma$.
The $n$ columns of $U$ are called the \emph{left singular vectors} of $A$. Similarly, the $d$ columns of $V$ are called the \emph{right singular vectors} of $A$. Note that the right singular vectors are the eigenvectors of $A^T A$ in order of non-increasing corresponding eigenvalues.

The number of non-zeroes entries in $\Sigma$ is the rank $r$ of $A$, which is bounded by $\min\br{n,d}$, that is $r = |\br{i \mid \sigma_i >0, i=1,\ldots,\min\br{n,d}}|\leq \min\br{n,d}$. This motivates the \emph{thin SVD} $A=U_r \Sigma_r (V_r)^T$, where  $U_r\in\REAL^{n\times r}$, $\Sigma_r\in\REAL^{r\times r}$ and $V_r\in\REAL^{d\times r}$ denote the first $r$ columns of $U$, first $r$ columns/rows of $\Sigma$ and first $r$ columns of $V$, respectively.
Notice that the matrix product is still equal to $A$ (it is still a factorization). If we keep less than $r$ entries of $\Sigma$, then we get an approximation of $A$ with respect to the squared Frobenius norm. We will use this approximation frequently in this paper.
\begin{definition}\label{def:rank}
Let $A\in\REAL^{n\times d}$ and $U\Sigma V^T=A$ be its SVD.
Let $m\in[1,\min\br{d,n}]$ be an integer and define $\Sigma	^{(m)}$ to be the $d\times d$ diagonal matrix whose first $m$ diagonal entries are the same as that of $\Sigma$ and whose remaining entries are $0$.
Then the \emph{$m$-rank approximation} $A^{(m)}\in\REAL^{n\times d}$ of $A$ is defined as $A^{(m)}=U\Sigma^{(m)}V^T$.
\end{definition}

\paragraph{Subspaces}
The columns of a matrix $X$ span a linear subspace $L$ if the set of all linear combinations of columns of $X$ equals $L$.
In this case we also say that $X$ spans $L$:
A $j$-dimensional linear subspace $L \subseteq \REAL^d$ can be represented by a matrix $X\in\REAL^{d\times j}$ with orthonormal columns that span $L$.
The projection of a point set (matrix) $A\in\REAL^{n\times d}$ on a linear subspace $L$ represented by $X$ is the point set (matrix) $AX\in\REAL^{n\times \j}$.
The projection of the point set (matrix) $A$ on $L$ using the coordinates of $\REAL^d$ is the set of rows of $AXX^T$.

Given $p\in\REAL^d$ and a set $S$ we denote by $p+S=\br{p+s\mid s\in S}$ the translation of $S$ by $p$. Similarly, $A+p$ is the translation of each row of $A$ by $p$. An \emph{affine subspace} is a translation of a linear subspace and as such can be written as $p+L$, where $p \in \REAL^d$ is the translation vector and $L$ is a linear subspace.

\paragraph{Distances and norms}
The origin of $\REAL^d$ is denote by $\vec{0}$, where the dimension follows from the context.
For a vector $x \in \REAL^d$ we write $\|x\|_2$ to denote its $\ell_2$-norm, the square root of the sum of its squared entries. More generally, for an $n \times d$ matrix $A$ we write
$\|A\|_\S = \max_{x\in \REAL^d, \|x\| \not=0} \frac{\|Ax\|_2}{\|x\|_2} = \sigma_1$
to denote its spectral norm and $$\|A\|_F = \sqrt{\sum_{1\le i \le n } \sum_{1\le j \le d} A_{ij}^2}=\sqrt{\sum_{1\le j \le \min\br{d,n}} \sigma^2_j}$$
to denote its Frobenius norm. It is known that the Frobenius norm does not change under orthogonal transformations, i.e.,
$\|A\|_F = \|AQ\|_F$ for an $n\times d$ matrix $A$ and an orthogonal matrix $Q$. This also implies the following observation
that we will use frequently in the paper.
\begin{observation}\label{observation:simple}
Let $A$ be an $n \times d$ matrix and $B$ be a $j \times d$ matrix with orthonormal columns. Then
$$\|A\|_F^2 \ge \|AB^T\||_F^2.$$
\end{observation}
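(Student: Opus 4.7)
My plan is to invoke the orthogonal invariance of the Frobenius norm stated immediately before the observation, namely $\|A\|_F=\|AQ\|_F$ for every orthogonal $d\times d$ matrix $Q$. The key point is that the columns of $B^T\in\REAL^{d\times j}$ are $j$ orthonormal vectors in $\REAL^d$, and can therefore be completed to a full orthonormal basis of $\REAL^d$.

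First, I would pick $C\in\REAL^{d\times(d-j)}$ whose columns form an orthonormal basis of the orthogonal complement of the column span of $B^T$; such a $C$ exists because the column space of $B^T$ has dimension $j$ and its orthogonal complement in $\REAL^d$ has dimension $d-j$, so any Gram--Schmidt extension will do. Setting $Q=\bigl[\,B^T\mid C\,\bigr]\in\REAL^{d\times d}$ then yields an orthogonal matrix by construction, since its columns are orthonormal in $\REAL^d$.

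Next I would partition $AQ=\bigl[\,AB^T\mid AC\,\bigr]$ by columns. Because the squared Frobenius norm is just the sum of squared entries, it splits over column blocks:
$$\|AQ\|_F^2 \;=\; \|AB^T\|_F^2 \,+\, \|AC\|_F^2.$$
Combining this with $\|A\|_F^2=\|AQ\|_F^2$ (orthogonal invariance) and dropping the non-negative term $\|AC\|_F^2$ delivers the desired inequality $\|A\|_F^2\ge\|AB^T\|_F^2$.

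There is no real obstacle here; the only step requiring a little care is the existence of the orthonormal completion $C$, which is a standard linear-algebra fact. As an alternative route that avoids the explicit completion, one could appeal to the Pythagorean identity $\|A\|_F^2=\|AB^TB\|_F^2+\|A-AB^TB\|_F^2$ for the Frobenius inner product (since $B^TB$ is an orthogonal projector on $\REAL^d$ when $BB^T=I_j$), together with the short calculation $\|AB^TB\|_F=\|AB^T\|_F$ via $BB^T=I_j$, arriving at the same conclusion.
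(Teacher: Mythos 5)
Your proof is correct and follows essentially the same route as the paper: complete $B^T$ to a full $d\times d$ orthogonal matrix, invoke the orthogonal invariance of the Frobenius norm, and discard the nonnegative contribution of the remaining columns. The only difference is that you spell out the column-block splitting and the existence of the completion explicitly, which the paper leaves implicit.
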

\begin{proof}
Let $B'$ be a $d\times d$ orthogonal matrix whose first $j$ columns agree with $B$. Then we have
$\|A\|_F^2 = \|A(B')^T\|_F^2 \ge \|AB^T\|_F^2$.
\end{proof}

\begin{claim}\label{Claim:Pythagorean}[Matrix form of the Pythagorean Theorem]
Let $X$ be a $d \times j$ matrix with orthonormal columns and $Y$ be a $d \times (d-j)$ matrix with orthonormal columns that spans the orthogonal complement of $X$.
Furthermore, let $A$ be any $n \times d$ matrix. Then we have
$$
\|A\|_F^2 = \|AX\|_F^2 + \|AY\|_F^2.
$$
\end{claim}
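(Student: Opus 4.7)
The plan is to reduce the claim to the invariance of the Frobenius norm under orthogonal transformations (a fact already invoked just above Observation~\ref{observation:simple}) by assembling $X$ and $Y$ into a single $d\times d$ orthogonal matrix.

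Concretely, I would first define the block matrix $Q = [X \mid Y] \in \REAL^{d\times d}$. The hypotheses tell us that the columns of $X$ are orthonormal, the columns of $Y$ are orthonormal, and the column space of $Y$ is the orthogonal complement of the column space of $X$, so every column of $X$ is orthogonal to every column of $Y$. Hence $Q^T Q = I_d$, meaning $Q$ is orthogonal.

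Next, I would form $AQ$ and observe that, by the block-column structure, $AQ = [AX \mid AY]$. Because $\|\cdot\|_F^2$ is just the sum of squares of all entries, and the entries of $AQ$ are exactly the entries of $AX$ together with the entries of $AY$, we immediately get
\[
\|AQ\|_F^2 \;=\; \|AX\|_F^2 + \|AY\|_F^2.
\]
Finally I would apply the orthogonal invariance of the Frobenius norm, which gives $\|A\|_F = \|AQ\|_F$, and combining these two equalities yields the desired identity.

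There is no real obstacle here: the only thing to be careful about is verifying that $Q$ is actually orthogonal (so that the invariance property applies), which is precisely where the assumption that $Y$ spans the orthogonal complement of $X$ enters. Everything else is bookkeeping about how the Frobenius norm behaves under horizontal concatenation of matrices.
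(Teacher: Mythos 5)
Your proposal is correct and is essentially identical to the paper's own proof: both assemble $X$ and $Y$ into a single $d\times d$ orthogonal matrix, invoke orthogonal invariance of the Frobenius norm, and split $\|AQ\|_F^2$ into the two blocks. No issues.
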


\begin{proof}
Let $B$ be the $d \times d$ matrix whose first $j$ columns equal $X$ and the second $d-j$ columns equal $Y$. Observe that $B$ is an orthogonal matrix.
Since the Frobenius norm does not change under multiplication with orthogonal matrices, we get
$$
\|A\|_F^2 = \|AB\|_F^2
$$
The result follows by observing that $\|AB\|_F^2 = \|AX\|_F^2 + \|AY\|_F^2$.
\end{proof}

\newcommand{\Sset}{S}
\newcommand{\Sitem}{s}
For a set $C \subseteq\REAL^d$ and a vector $p$ in $\REAL^d$, we denote the Euclidean distance between $p$ and (its closest point in) $C$ by $\dist(p,C):=\inf_{c \in C}\norm{p-c}_2$ if $C$ is non-empty (notice that the infimum always exists because the distance is lower bounded by zero, but the minimum might not exist, \eg\ when $C$ is open), or $\dist(p,C):=\infty$ otherwise. In this paper we will mostly deal with the squared Euclidean distance, which we denote by $\dist^2(.,.)$.
For an $n\times d$ matrix $A$, we will slightly abuse notation and write $\dist(A_{i*},C)$ to denote the distance $(A_{i*})^T$ to $C$. The sum of the squared distances of the rows of $A$ to $C$ by
$\label{eq8}
\dist^2(A,C) = \sum_{i=1}^n \dist^2(A_{i*},C).
$
If the rows of the matrix are weighted by non-negative weights $w_1, \dots,w_n$ then we sometimes use
$
\dist^2_w(A,C) = \sum_{i=1}^n w_i \cdot \dist^2(A_{i*},C).
$
Let $L \subseteq\REAL^d$ be a $j$-dimensional linear subspace represented by a matrix $X\in\REAL^{d\times j}$ with orthonormal columns that spans $L$.
Then the orthogonal complement $L^{\bot}$ of $L$ can also be represented by a matrix with orthonormal columns which spans $L^{\bot}$. We usually name it $Y \in\REAL^{d \times (d-j)}$.
The distance from a point (column vector) $p\in\REAL^d$ to $L$ is the norm of its projection on $L^{\bot}$, $\dist(p,L)=\norm{p^TY}_F$.
The sum of squared distances from the rows of a matrix $A\in\REAL^{n\times d}$ to $L$ is thus $\dist^2(A,L)=\norm{AY}_F^2$.

\paragraph{Range spaces and VC-dimension}
In the following we will introduce the definitions related to range spaces and VC-dimension that are used in this paper.

\begin{definition}
 A range space is a pair $(F,\ranges)$ where $F$ is a set, called \emph{ground set} and $\ranges$ is a family (set) of subsets of $F$, called \emph{ranges}.
\end{definition}

\begin{definition}[VC-dimension]
The VC-dimension of a range space $(F,\ranges)$ is the size $|G|$ of the largest subset $G\subseteq F$ such that
\[
\Big|\{G\cap\range \mid \range\in\ranges\}\Big|= 2^{|G|}.
\]
\end{definition}

In the context of range spaces we will use the following type of approximation.

\begin{definition}[\cite{HPS11}] 
Let $\eta,\eps>0$, and $(F,\ranges)$ be a range space with finite $F \not= \emptyset$. An $(\eta,\eps)$-approximation of
 $(F,\ranges)$ is a set $S\subseteq F$ such that for all $\range \in \ranges$ we have
$$
\bigg| \frac{|\range \cap S|}{|S|} -\frac{|\range \cap F|}{|F|}\bigg| \le \eps \cdot \frac{|\range \cap F|}{|F|} , \text{ if $|\range \cap F| \ge \eta |F|$,}
$$
and
$$
\bigg| \frac{|\range \cap S|}{|S|} - \frac{|\range \cap F|}{|F|}\bigg| \le \eps \eta, \text{ if $|\range \cap F| \le \eta |F|$.}
$$
\end{definition}

We will also use the following bound from \cite{LLS01} (see also \cite{HPS11}) on the sample size required to obtain a $(\eta,\epsilon)$-approximation.

\begin{theorem}\cite{LLS01}\label{approximation}
Let $(F,\ranges)$ with finite $F \neq \emptyset$ be a range space with VC-dimension $d$, $\eta>0$ and $\eps,\delta\in(0,1)$. There is a universal
constant $c>0$ such that a sample of
$$
\frac{c}{\eta \eps^2} \cdot \big(d \log \frac{1}{\eta}+ \log \frac{1}{\delta} \big)
$$
elements drawn independently and uniformly at random from $F$ is a $(\eta,\epsilon)$-approximation for $(F,\ranges)$ with probability at least $1-\delta$,
where $d$ denotes the $VC$-dimension of $(F,\ranges)$.
\end{theorem}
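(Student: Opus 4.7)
The plan is to prove this via the classical VC-dimension toolkit: a symmetrization (``ghost sample'') argument, the Sauer-Shelah lemma, and a Bernstein-style concentration inequality. The key observation is that the two-branch guarantee in the definition of $(\eta,\epsilon)$-approximation can be written uniformly as $|p_\range - \hat p_\range| \le \epsilon \max(p_\range, \eta)$, where $p_\range := |\range \cap F|/|F|$ and $\hat p_\range := |\range \cap S|/|S|$. This is exactly the form that comes out of Bernstein's inequality when the variance proxy is $\max(p_\range, \eta)/m$, which is what unifies the multiplicative and additive regimes of the definition.

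First I would fix a single range $\range \in \ranges$ and analyze the random variable $m\cdot \hat p_\range \sim \text{Binomial}(m, p_\range)$, where $m = |S|$. When $p_\range \ge \eta$, a multiplicative Chernoff bound gives $\Pr[\,|\hat p_\range - p_\range| > \epsilon p_\range\,] \le 2\exp(-c_1\epsilon^2 m p_\range) \le 2\exp(-c_1 \epsilon^2 m \eta)$. When $p_\range < \eta$, an additive Chernoff/Bernstein bound using variance $p_\range(1-p_\range)/m \le \eta/m$ gives $\Pr[\,|\hat p_\range - p_\range| > \epsilon \eta\,] \le 2\exp(-c_2 \epsilon^2 m \eta)$. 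So each fixed range fails with probability at most $2\exp(-c \epsilon^2 m \eta)$ for a universal constant $c > 0$.

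To convert this pointwise bound into a uniform bound over a potentially infinite family $\ranges$, I would use the standard symmetrization trick: introduce an independent ghost sample $S'$ of the same size, and show that if the original failure event has probability at least $1/2$ on $S$, then an analogous two-sample event holds with probability at least $1/4$. The advantage is that on the combined multiset $S \cup S'$, the number of distinct traces $\{\range \cap (S \cup S') : \range \in \ranges\}$ is controlled by the growth function and, via Sauer-Shelah, bounded by $(2em/d)^d$. A union bound over this finite collection of effective ranges together with the single-range tail estimate gives an overall failure probability of at most
\[
(2em/d)^d \cdot 2\exp(-c' \epsilon^2 m \eta).
\]

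Setting this quantity to be at most $\delta$ and solving for $m$ is a routine calculation: one absorbs the $\log m$ factor appearing on the left into the other logs and obtains $m = \Theta\!\big(\tfrac{1}{\eta \epsilon^2}(d \log(1/\eta) + \log(1/\delta))\big)$, which matches the claimed bound. The main obstacle I expect is the unified handling of the two regimes of the approximation definition: a naive multiplicative Chernoff bound degrades badly when $p_\range \ll \eta$, so one must explicitly pair it with an additive bound (or, more elegantly, use a single Bernstein inequality whose variance proxy is $\max(p_\range,\eta)/m$). Once that is done, the symmetrization plus Sauer-Shelah machinery is completely standard.
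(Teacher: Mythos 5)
First, note that the paper does not prove this theorem at all: it is imported verbatim from Li, Long and Srinivasan \cite{LLS01} (see also \cite{HPS11}), so there is no internal proof to compare against. Your proposal therefore has to be judged against what is actually needed to establish the stated bound, and there it has a genuine gap in the very last step. Your reduction of the two branches of the definition to the single condition $|\hat p_\range - p_\range|\le \eps\max(p_\range,\eta)$ is correct, and your single-range tail bound $2\exp(-c\,\eps^2 m\eta)$ is fine. The problem is the claim that after symmetrization and Sauer--Shelah one can ``absorb the $\log m$ factor into the other logs.'' The union bound requires
$c\,\eps^2 m\eta \ge d\log(2em/d)+\log(2/\delta)$, and with $m=\Theta\bigl(\tfrac{1}{\eta\eps^2}(d\log\tfrac1\eta+\log\tfrac1\delta)\bigr)$ the term $d\log(2em/d)$ contains $d\log(1/\eps^2)$. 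There is no $\log(1/\eps)$ term in the target bound to absorb this into: taking $\eta$ constant and $\eps\to 0$, the stated sample size is $\Theta((d+\log(1/\delta))/\eps^2)$ while your argument needs $m\gtrsim d\log(1/\eps)/\eps^2$. So the symmetrization plus Sauer--Shelah route provably yields only the weaker bound $O\bigl(\tfrac{1}{\eta\eps^2}(d\log\tfrac{1}{\eta\eps}+\log\tfrac1\delta)\bigr)$. Eliminating the extra $d\log(1/\eps)$ is precisely the content of \cite{LLS01}; it requires replacing the crude trace-counting union bound by a chaining argument (or Haussler's packing lemma, which bounds the number of $\eps$-separated ranges by $O((1/\eps)^d)$ rather than $O(m^d)$), carried out for the relative distance $d_\nu(r,s)=|r-s|/(r+s+\nu)$.

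A secondary, fixable issue: symmetrization for a \emph{relative} deviation bound is more delicate than you suggest, because the failure threshold $\eps\max(p_\range,\eta)$ depends on the true measure $p_\range$, which is not determined by the double sample. One must restate the event in terms of the empirical measure on the ghost sample (or on the union of the two samples) and separately control the discrepancy between that and $p_\range$; this is standard but needs to be done explicitly. In summary, your outline proves a correct but strictly weaker theorem; as a proof of the statement as written it is incomplete at the step where the $d\log(1/\eps)$ factor must be removed.
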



\subsection{Data Analysis Methods}


In this section we briefly describe the data analysis methods for which we will develop coresets in this paper.
The first two subsection define and explain two fundamental data analysis methods: $k$-means clustering and principal
component analysis. Then we discuss the other techniques considered in this paper, which can be viewed as
generalizations of these problems.
We always start to describe the motivation of a method, then give the technical problem definition and in the end discuss the state of the art.

\paragraph{$k$-Means Clustering}\label{paragraph:kmeans}
The goal of clustering is to partition a given set of data items into subsets such that items in the same subset are similar and items in different subsets are dissimilar. Each of the computed subsets can be viewed as a class of items and, if done properly, the classes have some semantic interpretation. Thus, clustering is an unsupervised learning problem. In the context of Big Data, another important aspect is that many clustering formulations are based on the concept of a cluster center, which can be viewed as some form of representative of the cluster. When we replace each cluster by its representative, we obtain a concise description of the original data. This description is much smaller than the original data and can be analyzed much easier (and possibly by hand). There are many different clustering formulations, each with its own advantages and drawbacks and we focus on some of the most widely used ones.
Given the centers, we can typically compute the corresponding partition by assigning each data item to its closest center. Since in the context of Big Data storing such a partition may already be difficult, we will focus on computing the centers in the problem definitions below.

Maybe the most widely used clustering method is $k$-means clustering. Here the goal is to minimize the sum of squared error to $k$
cluster centers.

\begin{definition}[The $k$-means clustering problem (sum of squared error clustering)]
Given $A\in\REAL^{n\times d}$, compute a set $C$ of $k$ centers (points) in $\REAL^d$ such that its sum of squared distance to the rows of $A$,
$
\dist^2(A,C),
$
is minimized.
\end{definition}

The $k$-means problem is studied since the fifties. It is NP-hard, even for two centers~\cite{ADHP09} or in the plane~\cite{MNV09}. When either the number of clusters $k$ is a constant (see, for example, \cite{FMS07,KSS10,FL11}) or the dimension $d$ is constant \cite{FRS16,CKM16}, it is possible to compute a $(1+\epsilon)$-approximation for fixed $\epsilon>0$ in polynomial time. In the general case, the $k$-means problem is APX-hard and cannot be approximated better than 1.0013~\cite{ACKS15,LSW17} in polynomial time. On the positive side, the best known approximation guarantee has recently been improved to 6.357~\cite{ANSW16}.


\subsubsection{Principal Component Analysis}\label{sec:pca}


Let $A$ be an $n \times d$ matrix whose rows are considered as data points. Let us assume that $A$
has mean $\vec 0$, i.e., the rows sum up to the origin $\vec 0$ of $\REAL^d$. Given such a matrix $A$, the goal of principal component analysis is to transform its representation
in such a way that the individual dimensions are linearly uncorrelated and are ordered by their importance.
In order to do so, one considers the co-variance matrix $A^T A$ and computes its eigenvectors.
They are sorted by their corresponding eigenvalues and normalized to form an orthonormal basis of the input space.
The eigenvectors can be computed using the singular value decomposition. They are simply the right singular vectors of $A$ (sorted by their corresponding singular values).

The eigenvectors corresponding to the largest eigenvalues point into the direction(s) of highest variance. These are the most important directions. Ordering all eigenvectors according to their eigenvalues means that one gets a basis for $A$ which is ordered by importance. Consequently, one typical application of PCA is to identify the most important dimensions.
This is particularly interesting in the context of high dimensional data, since maintaining a complete basis of the input space requires $\Theta(d^2)$ space.
Using PCA, we can keep the $j$ most important directions. We are interested in computing an approximation of these directions.

An almost equivalent geometric formulation of this problem, which is used in this paper, is to find a linear
subspace of dimension $j$ such that the variance of the projection of the points on this
subspace is maximized; this is the space spanned by the first $j$
right singular vectors (note that the difference between the two problem definitions is that knowing the subspace
does not imply that we know the singular vectors. The subspace may be given by any basis.)
By the Pythagorean Theorem, the problem of finding this subspace is equivalent to the problem of minimizing the sum of squared distances to a subspace,
i.e., find a subspace $L$ such that $\dist^2(A,L)$ is minimized over all $j$-dimensional subspaces of $\REAL^d$. We remark that in this problem formulation we are not assuming that the data is normalized, i.e., that the mean of the rows of $A$ is $0$.

\begin{definition}[linear $j$-subspace problem]
Let $A\in\REAL^{n \times d}$ and $j\in[1,d-1]$ be an integer. The $j$-subspace problem
is to compute a $j$-dimensional subspace $L$ of $\REAL^d$ that minimizes $\dist^2(A,L)$.
\end{definition}

We may also formulate the above problem as finding a matrix $Y\in\REAL^{d\times (d-j)}$ with
orthonormal columns that minimizes $\norm{AY}_F^2$ over every such possible matrix $Y$.
Such a matrix $Y$ spans the orthogonal complement of $L$. For the subspace $L^*$ that minimizes the squared Frobenius norm
we have
\[
\norm{AY}_F^2=\dist^2(A,L^*)=\norm{A-A^{(j)}}_F^2.
\]

If we would like to do a PCA on unnormalized data, the problem is better captured by the affine $j$-subspace problem.

A coreset for $j$-subspace queries, i.e., that approximates the sum of squared distances to a given $j$-dimensional subspace was suggested by Ghashami, Liberty, Phillips, and Woodruff~\cite{liberty2013simple,ghashami2016frequent}, following the conference version of our paper. This coreset is composable and has cardinality of $O(j/\eps)$. It also has the advantage of supporting streaming input without the merge-and-reduce tree as defined in Section~\ref{sec:int} and the additional $\log n$ factors it introduces. However, it is not clear how to generalize the result for affine $j$-subspaces~\cite{jeff} as defined below.

\begin{definition}[affine $j$-subspace problem]
Let $A\in\REAL^{n \times d}$ and $j\in[1,d-1]$ be an integer. The affine $j$-subspace problem
is to compute a $j$-dimensional affine subspace $p+L$ of $\REAL^d$ that minimizes $\dist^2(A,p+L)$.
\end{definition}

The singular value decomposition was developed by different mathematicians in the 19th century (see~\cite{S93} for a historic overview). Numerically stable algorithms to compute it were developed in the sixties~\cite{GK65,GR70}. Nowadays, new challenges include very fast computations of the SVD, in particular in the streaming model (see Section~\ref{prelim:streaming}). Note that the projection of $A$ to the optimal subspace $L$ (in which we are interested in the case of PCA), is called \emph{low-rank approximation} in the literature since the input matrix $A \in \REAL^{n\times d}$ is replaced by a matrix (representing $L$) that has lower rank, namely rank $j$.
Newer $(1+\varepsilon)$-approximation algorithms for low-rank approximation and subspace approximation are based on randomization and significantly reduce the running time compared to computing the SVD~\cite{CW09,CW13,DV06,DR10,DTV11,feldman2010coresets,NDT09,Sarlos06,SV12}.
More information on the huge body of work on this topic can be found in the surveys by Halko, Martinsson and Tropp~\cite{HMT11} and Mahoney~\cite{Mahoney}.

It is known~\cite{DFKVV04} that computing the $k$-means on the low $k$-rank of the input data (its first $k$ largest singular vectors), yields a $2$-approximation for the $k$-means of the input. Our result generalizes this claim by replacing $2$ with $(1+\eps)$ and $k$ with $O(k/\eps)$, as well as approximating the distances to any $k$ centers that are contained in a $k$-subspace. 

The coresets in this paper are not subset of the input. Following papers aimed to add this property, e.g. since it preserves the sparsity of the input, easy to interpret, and more numerically stable. However, their size is larger and the algorithms are more involved. The first coreset for the $j$-subspace problem (as defined in this paper) of size that is independent of both $n$ and $d$, but are also subsets of the input points, was suggested in~\cite{FeldmanVR15,feldman2016dimensionality}. The coreset size is larger but still polynomial in $(k/\eps)$. A coreset of size $O(k/\eps^2)$ that is a bit weaker (preserves the spectral norm instead of the Frobenius norm) but still satisfies our coreset definition was suggested by Cohen, Nelson, and Woodruff in~\cite{CohenNW16}. This coreset is a generalization of the breakthrough result by Batson, Spielman, and Srivastava~\cite{batson2012twice} that suggested such a coreset for $k=d-1$. Their motivation was graph sparsification, where each point is a binary vector of 2 non-zeroes that represents an edge in the graph. An open problem is to reduce the running time and understand the intuition behind this result.




\subsubsection{Subspace Clustering}


A generalization of both the $k$-means and the linear $j$-subspace problem is linear $j$-subspace $k$-clustering.
Here the idea is to replace the cluster centers in the $k$-means definition by linear subspaces and then to minimize
the squared Euclidean distance to the nearest subspace.
The idea behind this problem formulation is that the important information of the input points/vectors lies in their
direction rather than their length, i.e., vectors pointing in the same direction correspond to the same type of
information (topics) and low dimensional subspaces can be viewed as combinations of topics describe by basis vectors
of the subspace.
For example, if we want to cluster webpages by their TFIDF (term frequency inverse document frequency) vectors that contain for each word its frequency inside a given webpage divided by its frequency over all webpages, then a subspace might be spanned by one basis vector for each of the words ``computer",``laptop", ``server", and  ``notebook", so that the subspace spanned by these vectors contains all webpages that discuss different types of computers.

A different view of subspace clustering is that it is a combination of clustering and PCA: The subspaces provide for each cluster
the most important dimensions, since for one fixed cluster the subspace that minimizes the sum of squared distances is
the space spanned by the right singular vectors of the restricted matrix. First provable PCA approximation of Wikipedia were obtained using coresets in~\cite{feldman2016dimensionality}.

\begin{definition}
[Linear/Affine $j$-subspace $k$-clustering]
Let $A\in\REAL^{n \times d}$. The linear/affine $j$-subspace $k$-clustering problem is to find a set $C$ that is the union of $k$ linear/affine $j$-dimensional subspaces, such that the sum of squared distances to the rows of $A$,
$
\dist^2(A, C),
$
is minimized over every such set $C$.
\end{definition}

Notice that $k$-means clustering is affine subspace clustering for $j=0$ and the linear/affine $j$-subspace problem
is linear/affine $j$-subspace $1$-clustering.
An example of a linear $1$-subspace $2$-clustering is visualized in Figure~\ref{fig:projectiveclustering}.

\begin{figure}
\begin{center}
\includegraphics{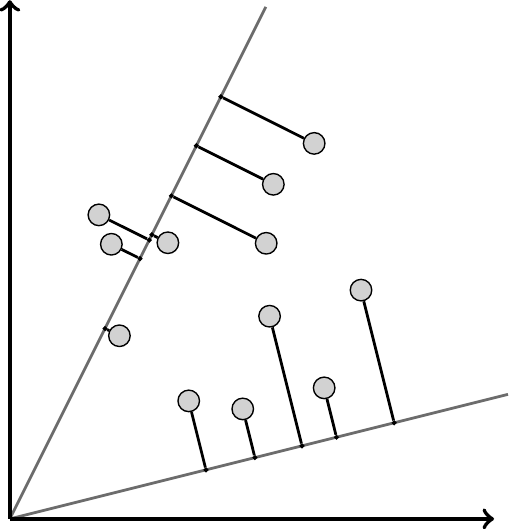}
\end{center}
\caption{Points clustered with two arbitrary 1-dimensional \lq centers\rq, an example for subspace clustering with $j=1$ and $k=2$\label{fig:projectiveclustering}.}
\end{figure}

Affine $j$-subspace $k$-clustering is NP-hard to approximate, even for $j=1$ and $d=2$.
This is due to a result by Megiddo and Tamir~\cite{MT82}, who show that it is NP-complete to decide whether a set of $n$ points in $\REAL^2$ can be covered by $k$ affine subspaces of dimension one. Any multiplicative approximation would have to decide whether it is possible to find a solution of zero cost.
Feldman, Fiat and Sharir~\cite{FFS06} give a $(1+\varepsilon)$-approximation algorithm for the affine $1$-subspace $k$-clustering problem (which is called $k$-line mean problem in their paper) for constant $d,k$ and $\epsilon$.

Deshpande, Rademacher, Vempala and Wang propose a polynomial time $(1+\epsilon)$-approximation algorithm for the $j$-subspace $k$-clustering problem~\cite{deshpande2006matrix} when $k$ and $j$ are constant.
Newer algorithms with faster running time are based on the sensitivity sampling framework by Feldman and Langberg~\cite{FL11}. We discuss~\cite{FL11} and the results by Varadarajan and Xiao \cite{VX12-soda} in detail in Section~\ref{sec:projclustering}.

\newcommand{\CC}{\mathcal{C}}
\paragraph{$\mathcal C$-Clustering under $\ell_2^2$-Distance}		
In order to keep our notation concise we will summarize the above problems in the larger class of clustering problems under $\ell_2^2$ distance, which is defined as follows. Let $\CC$ be a family of subsets of $\REAL^d$. The set $\CC$ can be thought of as a set of candidate solutions and in this paper we will typically think of each $C \in \mathcal C$ as a union of $k$ centers, i.e., in $k$-means clustering
$C$ is a set of $k$ points, in $j$-subspace $k$-clustering, $C$ is the union of $k$ subspaces, etc.

\begin{definition}[$\mathcal C$-Clustering Problem under $\ell_2^2$-distance]\label{c-clustering-problem}
Given a matrix $A\in\REAL^{n\times d}$, and a set $\mathcal C$ of sets in $\REAL^d$, the $\mathcal C$-clustering problem under $\ell_2^2$-distance is to compute a set $C \in \mathcal C$ that minimizes
$
\dist^2(A,C)
$.
\end{definition}
It is easy to see that the previously mentioned problems are special cases of the $\mathcal C$-Clustering problem under $\ell_2^2$-distance for different choices of $\mathcal C$. For example, when we choose $\mathcal C$ to be the family of all $j$-dimensional subspaces of $\REAL^d$ we obtain the $j$-subspace problem or when $\mathcal C$ is the family of sets of $k$ centers we obtain the $k$-means problem.


\subsection{Coresets and Dimensionality Reductions}\label{intro:coresets}


A coreset for an optimization problem is a (possibly weighted) point set that approximates the cost (sum of squared distances) of every feasible solution to the problem up to a small factor. In the case of a clustering problem as defined above, the set of feasible solutions is simply the set $\mathcal C$.
There are a number different definitions for coresets for clustering problems that have different properties.
A commonly used definition for the $k$-means problem goes back to the work of Har-Peled and Mazumdar \cite{HM04}: a coreset is a weighted set of points that approximates the sum of squared distances of the original input set to every candidate solution, up to a factor of $1\pm\eps$.
In this paper we introduce a new definition for a coreset that generalizes the definition of Har-Peled and Mazumdar \cite{HM04}. The main difference is that we allow to have an additive constant $\Delta$ that may be added to the coreset cost.
The main idea behind this definition is that in high dimensional spaces that we can partition the input data into a ''pseudorandom'' part, i.e., noise,
and a structured part. The pseudorandom part can then be removed from the data and will just result in an additive constant and the true information
is maintain in the structured part.
The value of the additive constant $\Delta$ may depend on the input point set $A$, the value of $\epsilon$ and the family $\mathcal C$, but it must not depend on the particular choice of $C$, i.e., for all $C\in \mathcal C$ the value of $\Delta$ will be identical. Also note that this does {\bf not} introduce an additive error, i.e., the desired value $\dist^2(A,C)$ is approximated up to a multiplicative factor of $1\pm\eps$.
Below is the definition of a coreset as it is used in this paper.

\begin{definition}[coreset for $\mathcal C$-clustering under $\ell_2^2$-distance]\label{coresetdef}
Let $\mathcal C$ be a family of non-empty sets in $\REAL^d$.
Let $A\in\REAL^{n\times d}$, $k\geq1$ be an integer, and $\eps>0$. A tuple $(S,\Delta,w)$ of a matrix $S\in\REAL^{m\times d}$ with a vector of $n$ non-negativ weights $w=(w_1,\ldots,w_m)\in \REAL^m$ associated with its rows and a value $\Delta = \Delta(A, \epsilon,\mathcal C)$
 is an \emph{$\epsilon$-coreset for the $\mathcal C$-clustering problem} under $\ell_2^2$-distance for $A$, if for every $C\in \mathcal C$ we have
$$
(1-\epsilon) \dist^2(A,C) \le \sum_{i=1}^m w_i\dist^2(S_{i*},C) + \Delta \le (1+\epsilon) \cdot \dist^2(A,C).
$$
\end{definition}

 In the first place it seems to be surprising that the addition of $\Delta$ helps us to construct smaller coresets. The intuition is that
if the input is in high dimensional space and the shape is contained in a low-dimensional space, we can split the contribution of any point to the
$\ell_2^2$-distance into a part that corresponds to its distance to some low-dimensional subspace (that may depend on the considered shape of the cluster centers) and its contribution inside the subspace. By projecting the points to a minimum cost subspace of sufficient dimensionality, we can reduce the first part and get a low-dimensional point set. $\Delta$ takes care of the reduced costs.

Many coreset constructions (without $\Delta$) have been proposed for the $k$-means problem. Early algorithms compute exponential grids or similar geometric structures to merge close enough points into coreset points~\cite{HM04,FS05,HPK07}. This approach leads to a number of coreset points which is exponential in the dimension.
Chen~\cite{C09} showed how to reduce the size to a polynomial in $k$, $\epsilon$, $\log n$ and $d$ by combining geometric arguments with sampling.
Further improvement was then based on refined sampling approaches. Langberg and Schulman~\cite{LS10} defined the \emph{sensitivity} of an input point and showed how to compute coresets of size $\tilde{\mathcal{O}}(d^2k^3\epsilon^{-2})$. The sensitivity-based framework by Feldman and Langberg~\cite{FL11} then yields coresets of size $\tilde{\mathcal{O}}(kd\epsilon^{-4})$ for the $k$-means problem.

For the general $j$-subspace $k$-clustering problem, coresets of small size do not exist~\cite{H04,HP06}. Edwards and Varadarajan~\cite{EV05} circumvent this problem by studying the problem under the assumption that all input points have integer coordinates. They compute a coreset for the $(d-1)$-subspace $k$-clustering problem with maximum distance instead of the sum of squared distances. We discuss their result together with the work of Feldman and Langberg~\cite{FL11} and Varadarajan and Xiao~\cite{VX12-soda} in Section~\ref{sec:projclustering}. The latter paper proposes coresets for the general $j$-subspace $k$-clustering problem with integer coordinates.

Definition~\ref{coresetdef} requires that the coreset approximates the sum of squared distance for \emph{every possible solution}. We require the same strong property when we talk about \emph{dimensionality reduction} of a point set. The definition is verbatim except that instead of a matrix $S \in \REAL^{n \times d}$, we want a matrix $S$ with $n$ rows but of lesser (intrinsic) dimension. A famous example for this idea is the application of the Johnson-Lindenstrauss-Lemma: It allows to replace any matrix $A\in \REAL^{n\times d}$ with a matrix $S \in \REAL^{n \times \mathcal{O}(n/\epsilon^2)}$ while preserving the $k$-means cost function up to a $(1+\epsilon)$-factor.
Boutsidis, Zouzias and Drineas~\cite{BZD10} develop a dimensionality reduction that is also based on a random projection onto $\Theta(k/\epsilon^{2})$ dimensions. However, the approximation guarantee is $2+\epsilon$ instead of $1+\epsilon$ as for the Johnson-Lindenstrauss-Lemma.

Drineas et. al. ~\cite{DFKVV04} developed an SVD based dimensionality reduction for the $k$-means problem. They projected onto the $k$ most important dimensions and solved the lower dimensional instance to optimality (assuming that $k$ is a constant). This gives a $2$-approximate solution. Boutsidis, Zouzias, Mahoney, and Drineas~\cite{BZMD15} show that the exact SVD can be replaced by an approximate SVD, giving a $2+\epsilon$-approximation to $k$ dimensions with faster running time.
Boutsidis et. al.~\cite{BMD09, BZMD15} combine the SVD approach with a sampling process that samples dimensions from the original dimensions, in order to obtain a projection onto features of the original point set. The approximation guarantee of their approach is $2+\varepsilon$, and the number of dimensions is reduced to $\Theta(k/\epsilon^2)$.




\subsection{Streaming algorithms}\label{prelim:streaming}


A \emph{stream} is a large, possibly infinitely long list of data items that are presented in arbitrary (so possibly worst-case) order. An algorithm that works in the data stream model has to process this stream of data on the fly. It can store some amount of data, but its memory usage should be small. Indeed, reducing the space complexity is the main focus when developing streaming algorithms. In this paper, we consider algorithms that have a constant or polylogarithmic size compared to the input that they process. The main influence on the space complexity will be from the model parameters (like the number of centers $k$ in the $k$-means problem) and from the desired approximation factor.

There are different streaming models known in the literature. A good introduction to the area is the survey by Mutukrishnan~\cite{Muthu05}. We consider the \emph{Insertion-Only} data stream model for geometric problems. Here, the stream consists of points $x_1, x_2, \ldots$ from $\mathbb{R}^d$ which arrive in arbitrary order. At any point in time $t$ (i.e., after seeing $x_1,\ldots,x_t$) we want to be able to produce an approximate solution for the data seen so far. This does not mean that we always have a solution ready. Instead, we maintain a coreset of the input data as described in Section~\ref{intro:coresets}. Since the cost of any solution is approximated by the coreset, we can always compute an approximate solution by running any approximation algorithm on the coreset (as long as the algorithm can deal with weights, since the coreset is a weighted set).

A standard technique to maintain coresets is the merge-and-reduce method, which goes back to Bentley and Saxe~\cite{bent} and was first used to develop streaming algorithms for geometric problems by Agarwal et al. \cite{AHPV04}. It processes chunks of the data and reduces each chunk to a coreset. Then the coresets are merged and reduced in a tree-fashion that guarantees that no input data point is part of more than $\mathcal{O}(\log n)$ reduce operations. Every reduce operation increases the error, but the upper bound on the number of reductions allows the adjustment of the precision of the coreset in an appropriate way (observe that this increases the coreset size). We discuss merge-and-reduce in detail in Section~\ref{sec:streaming}.

Har-Peled and Mazumdar initiated the development of coreset-based  streaming algorithms for the $k$-means problem. Their algorithm stores at most $\mathcal{O}(k \varepsilon^{-d} \log^{2d+2} n)$ during the computation. The coreset construction by Chen~\cite{C09} combined with merge-and-reduce gave the first the construction of coresets of polynomial size (in $\log n$, $d$, $k$ and $1/\varepsilon$) in the streaming model. Various additional results exist that propose coresets of smaller size or coreset algorithms that have additional desirable properties like good implementability or the ability to cope with point deletions~\cite{AMRSLS12,FGSSS13,feldman2010coresets,FS05,HPK07,LS10,BFLSY17}.
The construction with the lowest space complexity is due to Feldman and Langberg~\cite{FL11}.

Recall from Section~\ref{paragraph:kmeans} that the $k$-means problem can be approximated up to arbitrary precision when $k$ or $d$ is constant, and that the general case allows for a constant approximation. Since one can combine the corresponding algorithms with the streaming algorithms that compute coresets for $k$-means, these statements are thus also true in the streaming model.


\subsection{Our results and closely related work}\label{ourresults}



Our main conceptual idea can be phrased as follows. For clustering problems with low dimensional centers any high dimensional input
point set may be viewed as consisting of a structured part, i.e. a part that can be clustered well and a ''pseudo-random'' part, i.e.
a part that induces roughly the same cost for every cluster center (in this way, it behaves like a random point set). This idea is
captured in the new coreset definition given in Definition \ref{coresetdef}.

Our new idea and the corresponding coreset allows us to use the following approach. We show that for any clustering problem
whose centers fit into a low-dimensional subspace, we can replace the input matrix $A$ by its low-rank approximation $A_\ell$ for a
certain small rank $\ell$ that only depends on the shape and number of clusters and the approximation parameter $\epsilon$. The low
rank approximation $A_{\ell}$ may be viewed as the structured part of the input. In order to take care of the ``pseudo-random'' part, we
add the cost of projecting $A$ onto $A_{\ell}$ to any clustering.

Our new method allows us to obtain coresets and streaming algorithms for a number of problems. For most of the problems our coresets
are independent of the dimension and the number of input points and this is the main qualitative improvement over previous results.

In particular, we obtain (for constant error probability) a coreset of size
\begin{itemize}
\item
 $O(j/\epsilon)$ for the linear and affine $j$-subspace problem,
\item
 $\tilde {\mathcal{O}}(k^3/\epsilon^4)$ for the $k$-means problem\footnote{When we use $\tilde{\mathcal{O}}(X)$, then factors that are polylogarithmic in $X$ are hidden in the stated term.
},
\item
$\tilde{\mathcal{O}}(k^{\mathcal{O}(k)} \epsilon^{-4} \log^2 n)$ for the $k$-line means problem,
\item
and $\tilde{\mathcal{O}}(\log (Mn)^{h(j,k)}/\epsilon^2)$ for the $j$-dimensional subspace $k$-clustering problem when the input points are
integral and have maximum $l_2$-norm $M$ and where $h(j,k)>0$ is a function that depends only on $j$ and $k$.
\end{itemize}

We also provide detailed streaming algorithms for subspace approximation, $k$-means, and $j$-dimensional subspace $k$-clustering. We do not explicitly state
an algorithm that is based on coresets for $k$-line means as it follows using similar techniques as for $k$-means and $j$-dimensional subspace
$k$-clustering and a weaker version also follows from the subspace $k$-clustering problem with $j=1$.

Furthermore, we develop a different method for constructing a coreset of size independent of $n$ and $d$ and show that this construction works for a restricted
class of Bregman divergences.

The SVD and its ability to compute the optimal solution for the linear and affine subspace approximation problem has been known for over a century. About ten years ago, Drineas, Frieze, Kannan, Vempala, Vinay~\cite{DFKVV04} observed that the SVD can be used to obtain approximate solutions for the $k$-means problem. They showed that projecting onto the first $k$ singular vectors and then optimally solving $k$-means in the lower dimensional space yields a $2$-approximation for the $k$-means problem.

After the publication of the conference version of this work, Cohen, Elder, Musco, Musco and Persu~\cite{CEMMP15} observed that the dimensionality reduction and the coreset construction for subspace approximation can also be used for the $k$-means problem because the $k$-means problem can be seen as a subspace approximation problem with side constraints (in $\mathbb{R}^n$ instead of $\mathbb{R}^d$). By this insight, they show that $\lceil k / \epsilon\rceil$ dimensions suffice to preserve the $k$-means cost function. Additionally, they show that this is tight, i.e., projecting to less singular vectors will no longer give a $(1+\epsilon)$-guarantee.

\newcommand{\proj}{\mathrm{proj}}


\section{Coresets for the linear \mathj-subspace problem}\label{sec:linearsubspace}


We will first develop a coreset for the problem of approximating the sum of squared distances of a point set to a \emph{single}
linear $j$-dimensional subspace for an integer $j\in[1,\min\br{d,n}-1]$.
Let $L\subseteq \REAL^{d}$ be a $\j$-dimensional subspace represented by $X\in \REAL^{d \times j}$ whose columns are orthonormal and span $L$. Similarly, Let $L^{\bot}$ be the subspace that spans the orthogonal complement to $L$, represented by a $d\times (d-j)$ matrix $Y$ with orthonormal columns.

Recall that for a given matrix $A\in\REAL^{n\times d}$ containing $n$ points of dimension $d$ as its rows,
the sum of squared distances (cost) $\|AY\|_F^2$ of the points to $L$ is at least $\sum_{i=\j+1}^{\min\br{d,n}} \sigma_i^2$, where $\sigma_i$ is the $i$th singular
value of $A$ (sorted non increasingly). Furthermore, the subspace that is spanned by the first $j$ right singular vectors of $A$
achieves the minimum cost $\sum_{i=\j+1}^{\min\br{d,n}} \sigma_i^2$.

Now, we will show that $m:=\j+\lceil\j/\varepsilon \rceil-1$ appropriately chosen vectors suffice to approximate the cost of \emph{every} $\j$-dimensional subspace $L$.
We obtain these vectors by considering the singular value decomposition $A=U \Sigma V^T$. Our first step is to replace the matrix $A$ by its rank $m$ approximation $A^{(m)}$ as defined in Definition~\ref{def:rank}.
We show the following simple lemma regarding the error of this approximation with respect to squared Frobenius norm.

\begin{figure}
\begin{center}
\includegraphics{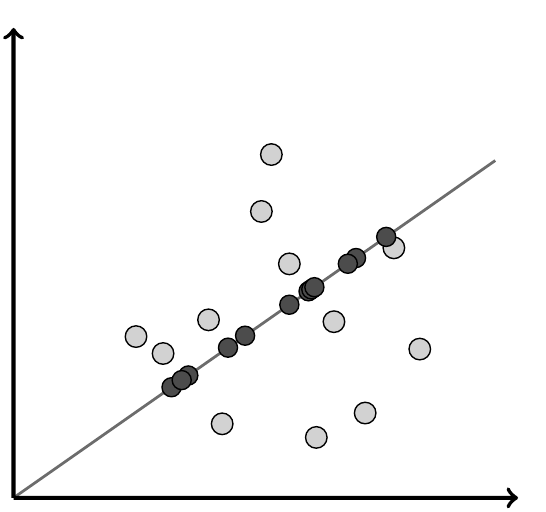}
\
\includegraphics{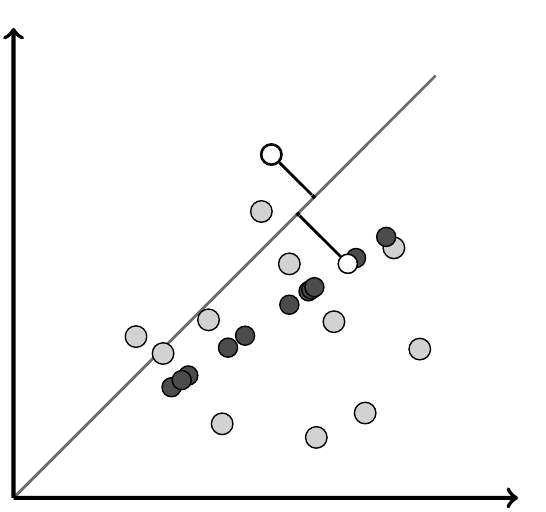}
\end{center}
\caption{
A point set is projected to a $1$-dimensional subspace. In the coreset, the projected points have to approximate the distance to any arbitrary query subspace, at least when looking at the sum of the squared distances for all input points / all projected points.
Notice that both subspaces are of the same dimension to keep the picture 2-dimensional, but in our construction the query subspace has smaller dimension.
}
\end{figure}

\begin{lemma}\label{thm:main}
Let $A\in\REAL^{n\times d}$, and let $X\in\REAL^{d\times j}$ be a matrix whose columns are orthonormal.
Let $\eps\in(0,1]$ and $ m\in[1,\min\br{n,d}-1]$ be an integer.
Then
\[
0 \le\|AX\|_F^2 - \|A^{(m)}X\|^2_F  \le j \cdot \sigma_{m+1}^2. 
\]
\end{lemma}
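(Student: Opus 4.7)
The lower bound should be immediate once one decomposes $A$ into its top-$m$ part and its tail. For the upper bound I would leverage that the tail $A-A^{(m)}$ has spectral norm $\sigma_{m+1}$, together with the fact that $X$ has only $j$ orthonormal columns.

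First I would record the key orthogonality between $A^{(m)}$ and $A-A^{(m)}$. Writing $A = U\Sigma V^T$, we have $A^{(m)} = U\Sigma^{(m)}V^T$ and $A - A^{(m)} = U(\Sigma-\Sigma^{(m)})V^T$. The diagonal matrices $\Sigma^{(m)}$ and $\Sigma-\Sigma^{(m)}$ have disjoint supports, so $\Sigma^{(m)}(\Sigma-\Sigma^{(m)})^T = 0$, and consequently $A^{(m)}(A-A^{(m)})^T = 0$. Equivalently, the rows of $A^{(m)}$ lie in the span of the first $m$ right singular vectors and the rows of $A-A^{(m)}$ lie in the span of the remaining right singular vectors, so these two row spaces are orthogonal.

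Next I would obtain a Pythagorean identity in the style of Claim~\ref{Claim:Pythagorean}. Expanding $\|AX\|_F^2 = \mathrm{tr}(X^T A^T A X)$ with $A = A^{(m)} + (A-A^{(m)})$ and using the orthogonality above to kill the cross terms yields
$$
\|AX\|_F^2 \;=\; \|A^{(m)}X\|_F^2 \;+\; \|(A-A^{(m)})X\|_F^2.
$$
This identity immediately gives the lower bound $\|AX\|_F^2 - \|A^{(m)}X\|_F^2 = \|(A-A^{(m)})X\|_F^2 \ge 0$.

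For the upper bound I would bound $\|(A-A^{(m)})X\|_F^2$ column by column. Let $X_{*1},\dots,X_{*j}$ be the columns of $X$; they are unit vectors. The singular values of $A-A^{(m)}$ are $\sigma_{m+1},\sigma_{m+2},\dots$, so its spectral norm equals $\sigma_{m+1}$ and $\|(A-A^{(m)})X_{*i}\|_2^2 \le \sigma_{m+1}^2 \|X_{*i}\|_2^2 = \sigma_{m+1}^2$ for each $i$. Summing over $i=1,\dots,j$ gives
$$
\|(A-A^{(m)})X\|_F^2 \;=\; \sum_{i=1}^{j} \|(A-A^{(m)})X_{*i}\|_2^2 \;\le\; j\cdot \sigma_{m+1}^2,
$$
which finishes the proof. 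The only mildly delicate step is verifying the cross-term cancellation in the Pythagorean identity; everything else is a direct manipulation of the SVD together with the definition of the spectral norm.
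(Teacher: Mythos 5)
Your proof is correct and follows essentially the same route as the paper's: both reduce the difference to $\|(A-A^{(m)})X\|_F^2$ via the disjoint supports of $\Sigma^{(m)}$ and $\Sigma-\Sigma^{(m)}$ (the paper does this in the rotated coordinates $M=V^TX$, you via a trace expansion), and then bound that quantity by $\|A-A^{(m)}\|_2^2\cdot\|X\|_F^2=j\sigma_{m+1}^2$. One small correction: the cross terms in $\mathrm{tr}(X^TA^TAX)$ vanish because $(A^{(m)})^T(A-A^{(m)})=V(\Sigma^{(m)})^T(\Sigma-\Sigma^{(m)})V^T=0$, not because of the row-Gram identity $A^{(m)}(A-A^{(m)})^T=0$ that you cite --- but this transposed identity holds by the identical disjoint-support argument, so the proof goes through.
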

\begin{proof}
Using the singular value decomposition we write $A=U\Sigma V^T$ and $A^{(m)} = U \Sigma^{(m)} V^T$.
We first observe that $\|U \Sigma V^TX\|_F^2 - \|U\Sigma^{(m)}V^TX\|^2_F$ is always non-negative. Then
\begin{align}
\label{firsteq-linear} \|U\Sigma V^TX\|_F^2 - \|U\Sigma^{(m)}V^TX\|_F^2
= \|\Sigma V^TX\|_F^2 - \|\Sigma^{(m)}V^TX\|_F^2
\end{align}
holds  since $U$ has orthonormal columns. Now we observe that
$M:=V^TX$ and its rows $M_{1*},\cdots,M_{d*}$ satisfy that
\[
\begin{split}
\norm{\Sigma M}^2_F-\norm{\Sigma^{(m)}M}^2_F
&= \sum_{i=1}^{\min\br{n,d}} \sigma^2_{i} \norm{M_{i*}}_F^2-\sum_{i=1}^m \sigma^2_i \norm{M_{i*}}_F^2\\
&= \sum_{i=m+1}^{\min\br{n,d}}\sigma^2_i \norm{M_{i*}}_F^2 = \norm{(\Sigma-\Sigma^{(m)})M}^2_F.
\end{split}
\]
We can thus continue and get
\begin{align*}
(\ref{firsteq-linear}) = \|(\Sigma-\Sigma^{(m)})V^TX\|_F^2
&\le  \|\Sigma-\Sigma^{(m)}\|_2^2 \cdot \|V^TX\|_F^2
= \|\Sigma-\Sigma^{(m)}\|_2^2 \cdot \|X\|_F^2\\
 &=   j \cdot \sigma_{m+1}^2.
\end{align*}
To see the inequality, recall that the spectral norm is compatible with the Euclidean norm (\cite{QSS10}), set $D=\Sigma-\Sigma^{(m)}$ and $M=V^TX$ and observe that
\[
\|D M\|_F^2 = \sum_{\ell=1}^j \|D M_{\ast \ell}\|_2^2 \le \sum_{\ell=1}^j \|D\|_2^2 \|M_{\ast \ell}\|_2^2 = \|D\|_2^2 \|M\|_F^2.
\vspace*{-2\baselineskip}
\]
\end{proof}

In the following we will use this result to give an estimate for the $\ell_2^2$-distance to a
$j$-dimensional subspace $L$. We will represent the orthogonal complement of $L$ by a $d\times(d-j)$ matrix $Y$
with orthonormal columns. Recall that $\dist^2(A,L) = \|AY\|_F^2$. We then split $A$ into its low rank approximation
$A^{(m)}$ for some suitable value of $m$. This will be the ''structured'' part of the input. Furthermore,
we will view the cost $\Delta=\|A-A{(m)}\|_F^2$ of projecting $A$ onto the optimal $m$-dimensional subspace w.r.t. the
$m$-subspace problem as taking care of the ''pseudorandom'' part of the input.
The argument is formalized in the next lemma.

\newcommand{\msize}{j+j/\eps}
\begin{lemma}\label{pcaoff2}
Let $A\in\REAL^{n\times d}$, $j\in[1,d-1]$ be an integer and $\eps>0$.
For every integer $ m\in [1,\min\br{n,d}-1]$, and every matrix $Y\in\REAL^{d \times (d-j)}$ with orthonormal columns, by letting $\Delta = \norm{A-A^{(m)}}_F^2$, we have
 $$
 0 \le  \big(\|A^{(m)} Y\|_F^2 + \Delta\big) -\|AY\|_F^2  \le j \cdot \sigma^2_{m+1}.
 $$
\end{lemma}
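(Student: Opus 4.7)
The plan is to reduce Lemma~\ref{pcaoff2} to Lemma~\ref{thm:main} applied to the \emph{complementary} subspace. A first attempt might be to apply Lemma~\ref{thm:main} directly with $X=Y$, but since $Y$ has $d-j$ columns that would give the useless bound $(d-j)\sigma_{m+1}^2$. The key observation is that $\|AY\|_F^2$ can be expressed in terms of the projection onto the $j$-dimensional complementary subspace via the matrix Pythagorean theorem, and then everything collapses to a statement about a matrix with only $j$ orthonormal columns.

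Concretely, I would first introduce a $d\times j$ matrix $X$ with orthonormal columns spanning the orthogonal complement of the column span of $Y$ (such an $X$ exists because $Y$ has $d-j$ orthonormal columns, leaving a $j$-dimensional complement). Applying Claim~\ref{Claim:Pythagorean} to both $A$ and $A^{(m)}$ gives
\[
\|AY\|_F^2 = \|A\|_F^2 - \|AX\|_F^2, \qquad \|A^{(m)}Y\|_F^2 = \|A^{(m)}\|_F^2 - \|A^{(m)}X\|_F^2.
\]

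Next I would use the SVD $A=U\Sigma V^T$ together with the definition $A^{(m)}=U\Sigma^{(m)}V^T$ to evaluate the relevant Frobenius norms via singular values: $\|A\|_F^2=\sum_i \sigma_i^2$, $\|A^{(m)}\|_F^2=\sum_{i=1}^m \sigma_i^2$, and $\Delta=\|A-A^{(m)}\|_F^2=\sum_{i>m}\sigma_i^2$. These immediately give the additive identity $\|A\|_F^2=\|A^{(m)}\|_F^2+\Delta$. Substituting, the quantity of interest telescopes:
\[
\bigl(\|A^{(m)}Y\|_F^2+\Delta\bigr)-\|AY\|_F^2 \;=\; \|AX\|_F^2-\|A^{(m)}X\|_F^2.
\]

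At this point the proof is essentially done: since $X$ has exactly $j$ orthonormal columns, Lemma~\ref{thm:main} applied to $X$ yields
\[
0 \;\le\; \|AX\|_F^2-\|A^{(m)}X\|_F^2 \;\le\; j\cdot \sigma_{m+1}^2,
\]
which is precisely the desired two-sided bound. There is no real technical obstacle; the only conceptual step is recognizing that one should not plug $Y$ into Lemma~\ref{thm:main} but rather first swap the distance to $L^\perp$ for the projection onto $L$ via the Pythagorean identity, so that the subsequent application of Lemma~\ref{thm:main} sees a matrix with $j$ (rather than $d-j$) orthonormal columns.
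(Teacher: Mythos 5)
Your proof is correct and follows essentially the same route as the paper: express $\|AY\|_F^2$ and $\|A^{(m)}Y\|_F^2$ via Claim~\ref{Claim:Pythagorean} and the singular-value identities, reduce the difference to $\|AX\|_F^2-\|A^{(m)}X\|_F^2$, and invoke Lemma~\ref{thm:main}. The only (harmless) difference is that you obtain the lower bound from the nonnegativity part of Lemma~\ref{thm:main} via the same identity, whereas the paper proves it separately with a triangle-inequality argument.
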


\begin{proof}
By the triangle inequality and the fact that $Y$ has orthonormal columns we have \[
\|AY\|_F^2 \le \|A^{(m)}Y\|_F^2 + \|(A-A^{(m)})Y\|_F^2 \le\|A^{(m)}Y\|_F^2 + \|A-A^{(m)}\|_F^2,
\] which proves that
$\|A^{(m)} Y\|_F^2 + \Delta - \|AY\|_F^2 \geq 0$.
Let $X$ by a $d \times j$ matrix that spans the orthogonal complement of the column space of $Y$.
Using Claim \ref{Claim:Pythagorean}, $\norm{A}^2_F = \sum_{i=1}^{\min\br{n,d}} \sigma_i^2$, $\Delta = \sum_{i=m+1}^{\min\br{n,d}} \sigma_i^2$, $\norm{A^{(m)}}^2_F
= \sum_{i=1}^m \sigma_i^2$ and $\norm{A-A^{(m)}}^2_F
= \sum_{i=m+1}^{\min\br{n,d}}\sigma_i^2$ we obtain
\begin{eqnarray*}
\|A^{(m)}Y\|_F^2 + \Delta
- \|AY\|_F^2
&= & \norm{A^{(m)}}^2_F - \norm{A^{(m)}X}^2_F  + \Delta - \norm{A}^2_F + \norm{AX}^2_F\\
&= & \norm{AX}^2_F - \norm{A^{(m)}X}^2_F\\
&\le & j \cdot \sigma^2_{m+1}\\
\end{eqnarray*}
where the inequality follows from Lemma~\ref{thm:main}.
\end{proof}

\begin{corollary}\label{cor:linearsubspace:dimred}
\label{pcaoff3}
Let $A\in\REAL^{n\times d}$, $\eps>0$ and $j\in[1,d-1]$ be an integer. Let
 $m \geq \lceil j/\epsilon \rceil + j -1 $ and suppose that $m\leq \min\br{n,d}-1$.
For $\Delta = \norm{A-A^{(m)}}_F^2$ and every matrix $Y\in\REAL^{d \times (d-j)}$ whose columns are orthonormal, we have
 $$
 \|AY\|_F^2 \le  \|A^{(m)} Y\|_F^2 + \Delta \le (1+\epsilon) \cdot \|AY\|_F^2
 $$
\end{corollary}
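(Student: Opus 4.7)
The plan is to derive this corollary directly from Lemma \ref{pcaoff2}, which already supplies the two-sided bound
$$0 \le \bigl(\|A^{(m)}Y\|_F^2 + \Delta\bigr) - \|AY\|_F^2 \le j\cdot \sigma_{m+1}^2.$$
The left inequality immediately gives the lower bound $\|AY\|_F^2 \le \|A^{(m)}Y\|_F^2 + \Delta$, so the only real content is showing $j\cdot \sigma_{m+1}^2 \le \epsilon \cdot \|AY\|_F^2$ under the assumption $m \ge \lceil j/\epsilon \rceil + j - 1$.

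To bound $\|AY\|_F^2$ from below, I would invoke the fact recalled right before Lemma \ref{thm:main}: for any $Y \in \REAL^{d\times(d-j)}$ with orthonormal columns, $\|AY\|_F^2 \ge \sum_{i=j+1}^{\min\{n,d\}} \sigma_i^2$, since this is the minimum cost of any $j$-subspace and $Y$ represents the orthogonal complement of such a subspace. (Equivalently, apply the Pythagorean identity Claim \ref{Claim:Pythagorean} with the trailing singular vectors and then compare Frobenius norms.)

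Next, I count how many singular values $\sigma_i$ with $i \ge j+1$ are at least $\sigma_{m+1}$. Because $m+1 \le \min\{n,d\}$ by hypothesis, the indices $i = j+1,\ldots,m+1$ are all valid, and there are $m - j + 1 \ge \lceil j/\epsilon \rceil \ge j/\epsilon$ of them. Since the singular values are non-increasing, each such $\sigma_i \ge \sigma_{m+1}$, whence
$$\|AY\|_F^2 \;\ge\; \sum_{i=j+1}^{m+1} \sigma_i^2 \;\ge\; \frac{j}{\epsilon}\cdot \sigma_{m+1}^2.$$
Rearranging gives $j\cdot \sigma_{m+1}^2 \le \epsilon\cdot \|AY\|_F^2$.

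Plugging this into the upper bound of Lemma \ref{pcaoff2} yields
$$\|A^{(m)}Y\|_F^2 + \Delta \;\le\; \|AY\|_F^2 + j\cdot \sigma_{m+1}^2 \;\le\; (1+\epsilon)\cdot \|AY\|_F^2,$$
completing the corollary. There is no real obstacle here; the only thing to be careful about is the off-by-one in the choice of $m$, which is why the bound is $\lceil j/\epsilon \rceil + j - 1$ rather than $\lceil j/\epsilon \rceil + j$: we use singular values $\sigma_{j+1},\ldots,\sigma_{m+1}$ (inclusive on both ends), and there are exactly $m-j+1$ of them.
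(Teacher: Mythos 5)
Your proposal is correct and follows essentially the same route as the paper: both reduce the claim to the bound $j\sigma_{m+1}^2 \le \eps\|AY\|_F^2$, obtained by counting the $m-j+1 \ge j/\eps$ singular values $\sigma_{j+1},\dots,\sigma_{m+1}$, each at least $\sigma_{m+1}$, and lower-bounding $\|AY\|_F^2$ by the optimal $j$-subspace cost $\sum_{i=j+1}^{\min\{n,d\}}\sigma_i^2$, then invoking Lemma~\ref{pcaoff2}. The off-by-one bookkeeping for $m$ is also handled exactly as in the paper.
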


\begin{proof}
From our choice of $m$ it follows that
\begin{equation}\label{byby}
j \sigma_{m+1}^2  \le  \eps  \cdot (m-j+1) \sigma_{m+1}^2 \le \eps \cdot \sum_{i=j+1}^{m+1} \sigma_i^2 \le \eps \cdot \sum_{i=j+1}^{\min\br{n,d}} \sigma_i^2 \le \epsilon \cdot \|AY|\|_F^2 ,
\end{equation}
where the last inequality follows from the fact that the optimal solution to the $j$-subspace problem has cost
$\sum_{i=j+1}^{\min\br{n,d}} \sigma_i^2$. Now the Corollary follows from Lemma \ref{pcaoff2}.
\end{proof}

The previous corollary implies that we can use $A^{(m)}$ and $\Delta$ to approximate the cost of any $j$-dimensional subspace. The number of rows in $A^{(m)}$ is $n$ and so it is not a small coreset. However,  $A^{(m)}$
has small rank, which we can exploit to obtain the desired coreset. In order to do so, we observe that by orthonormality of the columns of $U$ we have $\|UM\|_F^2 = \|M\|_F^2$ for any matrix $M$, which implies
that $\|U\Sigma V^TY\|_F^2 = \|\Sigma V^TY\|_F^2$.
Thus, we can replace the matrix $U \Sigma^{(m)} V^T$ in the above corollary by $\Sigma^{(m)}V^T$.
This is interesting, because all rows except the first $m$ rows of this new matrix have only $0$ entries
and so they don't contribute to $\|\Sigma^{(m)}V^TY\|_F^2$.
Therefore, we define our coreset $S$ to be the matrix $\tilde{A}$
consisting of the first $m=O(j/\eps)$ rows of $\Sigma^{(m)} V^T$.
The rows of this matrix are the coreset points. We summarize our coreset construction in the following
algorithm.
\newcommand{\apxalg}{\textsc{subspace-Coreset}}
\begin{algorithm}[ht]
    \caption{$\apxalg(A,j,\eps)$\label{algk}}
{\begin{tabbing}
\textbf{Input: \quad } \=$A\in\REAL^{n\times d}$, an integer $j\geq1$ and an error parameter $\eps>0$.\\
\textbf{Output:  } \>A pair $(S,\Delta)$ that satisfies Theorem~\ref{col1}.
\end{tabbing}}
\vspace{-0.3cm}
   Set $m \gets \min\br{n,d,j + \lceil j/\epsilon\rceil -1}$.\\
	 Set $A^{(m)}\gets U\Sigma^{(m)}V^T$ to be the $m$-rank approximation of $A$; see Definition~\ref{def:rank}.\\
     Set $S\gets \Sigma^{(m)}V^T$\\
	 Set $\Delta \gets \norm{A-A^{(m)}}_F^2$. \\
	 Set $w$ to be all $1$ vector of dimension $m$ \\
	 \Return $(S,\Delta,w)$\\
\end{algorithm}

In the following, we summarize the properties of our coreset construction.
\begin{theorem}[Coreset for $j$-subspace]\label{subapproxcor2}\label{col1}\label{thm:subcor}\label{thm:linearsubspace:coreset}
Let $A\in\REAL^{n\times d}$, $j\geq 1$ be an integer and $\eps>0$. Let $(S,\Delta,w)$ be the output of a call to $\apxalg (P,j,\eps)$; see Algorithm~\ref{algk}. Then $S\in\REAL^{m\times d}$ where $m \leq j + \lceil j/\epsilon \rceil -1$, $\Delta>0$, and for every $j$-dimensional linear subspace $L$ of $\REAL^d$ we have that
$$
 \dist^2(A,L) \le \sum_{i=1}^m w_i \cdot \dist^2(S_{i*},L) + \Delta \le (1+\eps) \cdot \dist^2(A,L) .
 $$
 This takes $O(\min\br{nd^2,dn^2})$ time.
\end{theorem}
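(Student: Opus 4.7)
The plan is to assemble the three pieces already developed in the section. The size bound $m \leq j + \lceil j/\varepsilon \rceil - 1$ and the non-negativity $\Delta \geq 0$ are immediate from the definitions in Algorithm~\ref{algk}; the claim $\Delta > 0$ amounts to excluding the trivial case where $A$ already has rank at most $m$, in which case $A^{(m)}=A$ and both inequalities collapse to equality. The runtime claim follows because the singular value decomposition of $A$ takes $O(\min\{nd^2, dn^2\})$ time, and every other operation (truncating $\Sigma$ to $\Sigma^{(m)}$, forming $\Sigma^{(m)} V^T$, computing $\Delta$ from the discarded singular values) costs $O(\min\{n,d\}\cdot d)$, which is absorbed.

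The substantive step is the cost approximation. Fix an arbitrary $j$-dimensional linear subspace $L$ and let $Y \in \REAL^{d \times (d-j)}$ have orthonormal columns spanning $L^\perp$, so that $\dist^2(A,L) = \|AY\|_F^2$ and $\dist^2(S_{i*},L) = \|S_{i*}Y\|_2^2$ for each row of $S$. Since $\Sigma^{(m)}$ has nonzero entries only in its first $m$ rows, the trailing rows of $\Sigma^{(m)} V^T$ vanish, so $S$ (the first $m$ rows of $\Sigma^{(m)} V^T$) satisfies
$$
\sum_{i=1}^m w_i \dist^2(S_{i*}, L) \;=\; \|S Y\|_F^2 \;=\; \|\Sigma^{(m)} V^T Y\|_F^2.
$$
Left-multiplication by $U$ preserves the Frobenius norm (its columns are orthonormal), so this equals $\|U \Sigma^{(m)} V^T Y\|_F^2 = \|A^{(m)} Y\|_F^2$.

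It remains to apply Corollary~\ref{pcaoff3}. The choice $m = \min\{n,d,\, j + \lceil j/\varepsilon \rceil - 1\}$ either meets the hypothesis $m \geq \lceil j/\varepsilon \rceil + j - 1$ of the corollary directly, giving
$$
\|AY\|_F^2 \;\leq\; \|A^{(m)} Y\|_F^2 + \Delta \;\leq\; (1+\varepsilon)\|AY\|_F^2,
$$
or else $m = \min\{n,d\}$, in which case $A^{(m)} = A$, $\Delta = 0$, and the two-sided bound is trivial. Combining this with the identity $\|A^{(m)}Y\|_F^2 = \sum_{i=1}^m w_i \dist^2(S_{i*}, L)$ established above yields exactly the claimed inequality.

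All the technical work is already packaged in Lemma~\ref{thm:main} and Corollary~\ref{pcaoff3}, so the only real hazard is bookkeeping: being careful that $S$ is interpreted as the $m \times d$ matrix (so that the zero trailing rows of $\Sigma^{(m)} V^T$ are discarded and the weights $w_i = 1$ only sum over the $m$ kept rows), and handling the degenerate regime $m = \min\{n,d\}$ where the rank-$m$ truncation is vacuous. Neither obstacle is serious.
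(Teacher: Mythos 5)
Your proposal is correct and follows essentially the same route as the paper: the paper's proof also reduces to Corollary~\ref{pcaoff3} via the identity $\sum_i w_i\dist^2(S_{i*},L)=\|\Sigma^{(m)}V^TY\|_F^2=\|A^{(m)}Y\|_F^2$ (using orthonormality of $U$ and the vanishing trailing rows of $\Sigma^{(m)}V^T$), with the SVD dominating the runtime. Your explicit handling of the degenerate case $m=\min\{n,d\}$ and the remark that $\Delta$ can in fact be $0$ when $\mathrm{rank}(A)\le m$ are minor refinements the paper leaves implicit.
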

\begin{proof}
The correctness follows immediately from Corollary \ref{pcaoff3} and the above discussion together with the observation that all $w_i$ are $1$.
The running time follows from computing the exact SVD~\cite{pearson1901}.
\end{proof}

\subsection{Discussion}

If one is familiar with the coreset literature it may seem a bit strange that the resulting point set is unweighted, i.e., we replace $n$ unweighted points by $m$ unweighted points. However, for this problem the weighting is implicitly done by scaling. Alternatively, we could also define our coreset to be the set of the first $m$ rows of $V^T$ where the $i$th row is weighted by $\sigma_i$, and $A=UDV^T$ is the SVD of $A$.

As already described in the Preliminaries, principal component analysis requires that the data is translated such that its mean is the origin of $\REAL^d$. If this is not the case, we can easily enforce this by subtracting the mean before the coreset computation. However, if we are taking the union of two or more coresets, they will have different means and cannot be easily combined. This limits the applicability to streaming algorithms to the case where we a priori know that the data set has the origin as its mean.
Of course we can easily maintain the mean of the data, but a simple approach such as subtracting it from the coresets points at the end
of the stream does not work as it invalidates the properties of the coreset.
In the next section we will
show how to develop a coreset for the affine case, which allows us to deal with data that is not a priori normalized.


\section{Coresets for the Affine \mathj-Subspace Problem}
\label{section:affine}


We will now extend our coreset to the affine $j$-subspace problem. The main idea of the new construction is very simple:
Subtract the mean of $A$ from each input point to obtain a matrix $A'$, compute a coreset $S'$ for $A'$ and then add the mean to the points in the coreset to obtain a coreset $S$. While this works in principle, there are two hurdles that we have to overcome. Firstly, we need to ensure that the mean of the coreset is $\vec 0$ before we add $\mu(A)$. Secondly, we need to scale and weight our coreset. The resulting construction is given as pseudo code in Algorithm~\ref{algkk-unweighted}.

\newcommand{\apxalgb}{\textsc{affine-$j$-subspace-Coreset}}
\begin{algorithm}[ht]
    \caption{$\apxalgb(A,j,\eps)$\label{algkk-unweighted}}
{\begin{tabbing}
\textbf{Input: \quad   }  \= $A\in\REAL^{n\times d}$, an integer $j\geq1$ and an error parameter $\eps>0$.\\
\textbf{Output: } \> A triple $(S,\Delta,w)$ that satisfies Theorem~\ref{col22}.
\end{tabbing}}
\vspace{-0.3cm}
  {
   Set $\mu(A) = \frac{1}{n}\sum_{i=1}^n A_{i*}$ \tcc{this is the mean row of $A$}
	 Set $(S',\Delta) \gets \apxalg(A-\mathds{1} \cdot \mu(A)^T, j, \epsilon)$ \label{twoalg1-unweighted} \\
   Set $\displaystyle S\gets \mathds{1} \cdot \mu(A)^T+\sqrt{\frac{m}{n}}\cdot
     \begin{bmatrix} S'\\
      -S'
     \end{bmatrix}
     $\label{four-unweighted}\\\label{alg-affine-weights-unweighted}
	 Set $w$ to be the $2m$-dimensional vector with all entries $\frac{n}{2m}$\\	
	 \Return{$(S,\Delta,w)$}
   }
\end{algorithm}

\begin{lemma}\label{lem:Cdist0}
Let $M\in\REAL^{n\times d}$ with $\mu(M)=\vec 0$ and let $C=t+L$ with $t \in L^{\bot}$ be an affine $j$-dimensional subspace of $\REAL^d$ for $j \le d-1$. Then
$
\dist^2(M,C) = \dist^2(M,L) + n\cdot ||t||^2.
$
\end{lemma}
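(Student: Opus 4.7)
The plan is to use the orthogonal decomposition of $\REAL^d$ into $L$ and $L^{\bot}$. For each row $M_{i*}$, write $M_{i*} = p_i + q_i$ with $p_i \in L$ and $q_i \in L^{\bot}$. Since $t \in L^{\bot}$ and $L$ is a linear subspace, the closest point in $C = t + L$ to $M_{i*}$ is $p_i + t$, so
\[
\dist^2(M_{i*}, C) = \|q_i - t\|_2^2 = \|q_i\|_2^2 - 2\langle q_i, t\rangle + \|t\|_2^2.
\]
I would verify this decomposition step explicitly in one line, since the orthogonality of $p_i - (p_i+t)+(M_{i*})$ components makes it immediate from the Pythagorean theorem (Claim~\ref{Claim:Pythagorean} applied row-wise, or simply from the fact that $L$ and $L^{\bot}$ are orthogonal).

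Next, I would sum over $i$ from $1$ to $n$:
\[
\dist^2(M,C) \;=\; \sum_{i=1}^n \|q_i\|_2^2 \;-\; 2\Big\langle \sum_{i=1}^n q_i,\, t\Big\rangle \;+\; n\|t\|_2^2.
\]
The first term is $\dist^2(M,L)$ by definition (since $q_i$ is the component of $M_{i*}$ in $L^{\bot}$, its squared length equals the squared distance from $M_{i*}$ to $L$). The third term is the desired $n\|t\|_2^2$. What remains is to show the cross term vanishes.

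For the cross term, I would use the hypothesis $\mu(M) = \vec{0}$, which means $\sum_{i=1}^n M_{i*} = \vec{0}$. Since the projection onto $L^{\bot}$ (represented by the matrix $YY^T$ for some $Y$ with orthonormal columns spanning $L^{\bot}$) is a linear operator, we get $\sum_{i=1}^n q_i = YY^T \sum_{i=1}^n M_{i*}^T = \vec{0}$. Therefore $\langle \sum_i q_i, t\rangle = 0$, and the identity follows. The argument is entirely routine; the only care needed is keeping the row/column orientation of $M_{i*}$ consistent, which is the one potential minor obstacle but does not affect the substance.
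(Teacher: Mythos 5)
Your proof is correct and follows essentially the same route as the paper's: both reduce to the Pythagorean decomposition along $L$ and $L^\bot$ and use $\mu(M)=\vec 0$ to make the cross term $-2\langle\sum_i q_i,\,t\rangle$ vanish (the paper expands $\sum_i\|M_{i*}-t\|^2$ after translating by $t$, which is the same computation organized slightly differently). No gaps.
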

\begin{proof}
Assume that $Y$ spans $L^{\bot}$. Then it holds that
\begin{align}
\label{translation}\dist^2(M,C) & = \sum_{i=1}^n \dist^2(M_{i\ast},C)
 = \sum_{i=1}^n \dist^2(M_{i\ast}-t,L)\\
\label{pyth1} & = \Big(\sum_{i=1}^n ||M_{i\ast}-t||^2\Big) - ||MYY^T||^2\\
\label{magic} & = \Big(\sum_{i=1}^n ||M_{i\ast}||^2+ n\cdot ||t||^2 \Big)  - ||MYY^T||^2\\
\label{pyth2} & = \dist^2(M,L) + n\cdot ||t||^2
\end{align}
where~\eqref{translation} follows because translating $M$ and $C$ by $t$ does not change the distances, ~\eqref{pyth1}, ~\eqref{magic} follows by $\mu(M)=\vec 0$
and~\eqref{pyth2} follows by Claim \ref{Claim:Pythagorean} and the fact that $Y$ has orthonormal columns.
\end{proof}

First observe that Lemma~\ref{lem:Cdist0} is only applicable to point sets with mean $\vec 0$. This is true for $A' = A - \mu(A)$, and also for any matrix that has $S'$ and $-S'$ as its rows, even if we scale it.
We know that $S'$ is a coreset for $A$ for linear subspaces, satisfying that $\dist^2(S',L)$ is approximately equal to $\dist^2(A,L)$ for any linear subspace $L$. Since we double the points, a likely coreset candidate would by
\[
S'' = \frac{1}{\sqrt{2}}      \begin{bmatrix} S'\\
      -S'
     \end{bmatrix}
\]
since $\dist^2(S'',L)=\dist^2(S',L)$ for any linear subspace $L$, and $\mu(S'')=\vec 0$ is also satisfied. What is the problem with $S''$? Again consider Lemma~\ref{lem:Cdist0}. Assume that we have a linear subspace $L$ and start to move it around, obtaining an affine subspace $C=L+t$ for $t \in L^\bot$. Then the distance of $A$ and $S''$ increases by a multiple of $||t||^2$ -- but the multiple depends on the number of points. Thus, we either need to increase the number of points in $S''$ (clearly not in line with our idea of a coreset), or we need to weight the points by $n/2m$. However, $(2m/n) \dist^2(S'',L)$ is not comparable to $\dist^2(A',L)$ anymore. To compensate for the weighting, we need to scale $S''$ by $\sqrt{2m/n}$ (notice that the $\sqrt{2}$ now cancels out). This is how we obtain Line~\ref{alg-affine-weights-weighted} of Algorithm~\ref{algkk-weighted}. We conclude by stating and showing the coreset guarantee. Notice that all rows in $S''$ receive the same weight, so we do not need to deal with the weights explicitly and rather capture the weighting by a multiplicative factor in the following theorem.

\begin{theorem}[Coreset for affine $j$-subspace]\label{subapproxcor3}\label{col22}\label{thm:affinesubspace:coreset}
Let $A\in\REAL^{n\times d}$, $j\in[1,d-1]$ be an integer, and $\eps>0$.
Let $(S,\Delta,w)$ be the output of a call to $\apxalgb (P,j,\eps)$; see Algorithm~\ref{algkk-unweighted}. Then $S\in\REAL^{(2m)\times d}$ where $m\leq j + \lceil j/\epsilon \rceil -1$, $\Delta>0$, and for every \emph{affine} $j$-dimensional subspace $C$ of $\REAL^d$ we have that
 $$
 \dist^2(A,C) \le \sum_{i=1}^{2m} w_i \cdot \dist^2(S_{i*},C) + \Delta \le (1+\eps) \cdot \dist^2(A,C).
 $$
 This takes $\min\br{nd^2,dn^2}$ time.
\end{theorem}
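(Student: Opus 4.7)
The plan is to reduce the affine case back to the linear case already established in Theorem~\ref{col1}, using Lemma~\ref{lem:Cdist0} as the main bridge. First I would fix an arbitrary affine $j$-subspace $C\subseteq\REAL^d$ and translate the whole configuration by $-\mu(A)$. Distances are translation invariant, so $\dist^2(A,C)=\dist^2(A',C-\mu(A))$, where $A':=A-\mathds 1\cdot\mu(A)^T$ has mean $\vec 0$. Write $C-\mu(A)=t+L$ with $L$ a linear $j$-subspace and $t\in L^\bot$ the closest point of $C-\mu(A)$ to the origin. By Lemma~\ref{lem:Cdist0} applied to $A'$,
\[
\dist^2(A,C)=\dist^2(A',L)+n\|t\|^2.
\]

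Next I would show that the same decomposition holds for the weighted coreset. Let $S'$ be the output of the internal call in Line~\ref{twoalg1-unweighted}, so by Theorem~\ref{col1} we have $\dist^2(A',L)\le\dist^2(S',L)+\Delta\le(1+\epsilon)\dist^2(A',L)$ for every linear $j$-subspace $L$. Define the translated coreset $S'':=S-\mathds 1\cdot\mu(A)^T=\sqrt{m/n}\,[S'^T\;\;{-S'}^T]^T$. Two observations are crucial: $\mu(S'')=\vec 0$ because the $S'$ and $-S'$ blocks cancel, and squared distances to a linear subspace are invariant under negation of a row, so
\[
\dist^2(S'',L)=2\cdot\tfrac{m}{n}\,\dist^2(S',L).
\]
Applying Lemma~\ref{lem:Cdist0} to $S''$ (with the $2m$ rows playing the role of $n$) against the translated query $C-\mu(A)=t+L$ gives
\[
\dist^2(S'',C-\mu(A))=\dist^2(S'',L)+2m\,\|t\|^2=\tfrac{2m}{n}\dist^2(S',L)+2m\,\|t\|^2.
\]

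Multiplying by the uniform weight $w_i=n/(2m)$ and using translation invariance one more time yields
\[
\sum_{i=1}^{2m} w_i\dist^2(S_{i*},C)=\tfrac{n}{2m}\dist^2(S'',C-\mu(A))=\dist^2(S',L)+n\|t\|^2.
\]
Adding $\Delta$ to both sides and inserting the linear-case sandwich from Theorem~\ref{col1} gives
\[
\dist^2(A',L)+n\|t\|^2\;\le\;\sum_{i=1}^{2m} w_i\dist^2(S_{i*},C)+\Delta\;\le\;(1+\epsilon)\dist^2(A',L)+n\|t\|^2,
\]
and since $n\|t\|^2\le(1+\epsilon)\,n\|t\|^2$, the right-hand side is at most $(1+\epsilon)\bigl(\dist^2(A',L)+n\|t\|^2\bigr)=(1+\epsilon)\dist^2(A,C)$, while the left-hand side equals $\dist^2(A,C)$. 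The size bound $2m$, the positivity of $\Delta$, and the runtime all follow directly from Theorem~\ref{col1} applied to $A'$; the extra work of centering, doubling and reweighting is $O(nd)$.

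The main conceptual obstacle is the bookkeeping in the middle step: one has to resist the temptation to ``just undo'' the $\sqrt{m/n}$ scaling, and instead realize that the scaling and the doubling together are exactly what make the additive $n\|t\|^2$ term come out right after the uniform reweighting by $n/(2m)$. Once the symmetry $\mu(S-\mathds 1\mu(A)^T)=\vec 0$ is in place, both invocations of Lemma~\ref{lem:Cdist0} line up and the proof is a direct substitution.
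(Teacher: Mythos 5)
Your proposal is correct and follows essentially the same route as the paper's proof: translate by $-\mu(A)$, use the symmetrized and rescaled coreset $S''$ with $\mu(S'')=\vec 0$, apply Lemma~\ref{lem:Cdist0} to both $A'$ and $S''$ so the $n\|t\|^2$ terms line up, and then invoke the linear-subspace guarantee of Theorem~\ref{col1}. The only cosmetic difference is that the paper bounds the difference $\frac{n}{2m}\dist^2(S,C)-\dist^2(A,C)$ directly while you expand both sides and add $\Delta$ explicitly; the argument is the same.
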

\begin{proof}
The running time follows from computing the exact SVD~\cite{pearson1901}.
Let $C=p+L$ be any affine $j$-dimensional subspace of $\REAL^d$, where $L$ is a linear subspace. We assume w.l.o.g. that $p$ is chosen such that $p - \mu(A) \in L^\bot$.
Let $A'$ be the translation of $A$ by $-\mu(A)$, \ie\ $A'_{i\ast} = A_{i\ast}-\mu(A)$ for all $i\in[n]$.
Set $S'' = S - \mathds{1} \cdot \mu(A)^T$ and observe that $\dist^2(S'',L) = (2m/n) \cdot \dist^2(S',L)$.
This fact together with Theorem~\ref{thm:subcor} yields that
\begin{align}\vspace*{-\baselineskip}
\label{eq:guar} \dist^2(A',L) \le (n/(2m)) \cdot \dist^2(S'',L) + \Delta \le (1+\eps) \cdot \dist^2(A',L)
\end{align}
because $S'$ was constructed as a coreset for $A'$.
Set $t = p - \mu(A)$, i.e., $L + t = C - \mu(A)$. By our assumption above, $t \in L^\bot$. We get that

\begin{align*}
 \frac{n}{2m} \cdot \dist^2(S,C) - \dist^2(A,C)
=~& \frac{n}{2m} \cdot \dist^2(S'',C-\mu(A)) - \dist^2(A',C-\mu(A))\\
=~& \frac{n}{2m} \cdot (\dist^2(S'',L) + (2m) \cdot ||t||^2) - (\dist^2(A',L) + n\cdot ||t||^2)\\
=~& \frac{n}{2m} \cdot \dist^2(S'',L) - \dist^2(A',L)
\end{align*}
where we first translate $S$ and $A$ by $-\mu(A)$ and then exploit $\mu(A')=\mu(S'')=\vec 0$ to use Lemma~\ref{lem:Cdist0} twice.
Now \eqref{eq:guar} yields the statement of the theorem since all $w_i$ equal $n/(2m)$.
\end{proof}

\subsection{Weighted Inputs}

There are situation where we would like to apply the coreset computation on a weighted set of input points (for example,
lateron in our streaming algorithms). If the point weights are integral then we can reduce to the unweighted case by replacing a point by a corresponding
number of copies. Finally, we observe that the same argument works for general point weights, if we reduce the problem
to an input set where each point has a weight $\delta$ and we let $\delta$ go to $0$. This blows up the input set, but
we will only require this to argue that the analysis is correct. In the algorithm we use that for the linear subspace problem
scaling by a factor of $\sqrt{w}$ is equivalent to assigning a weight of $w$ to a point.
The algorithm can be found below.

\begin{algorithm}[ht]
    \caption{\textsc{affine-$j$-subspace-Coreset-Weighted-Inputs}$(A,j,\eps,w)$\label{algkk-weighted}}
{\begin{tabbing}
\textbf{Input: \quad   }  \= $A\in\REAL^{n\times d}$, an integer $j\geq1$ and an error parameter $\eps>0$.\\
\> weight vector $w=(w_1,\dots, w_n)$\\
\textbf{Output: } \> A coreset $(S,\Delta,w)$ that satisfies the guarantees of Theorem~\ref{col22}.
\end{tabbing}}
\vspace{-0.3cm}
  {
	 Set $W= \sum_{i=1}^n w_i$\\
   Set $\mu(A) = \frac{1}{W}\sum_{i=1}^n w_i \cdot A_{i*}$ \\
	 Set $B$ to be the $n \times d$-matrix with rows $B_{i*} = \sqrt{w_i} (A_{i*} - \mu(A)^T)$ \\
	Set $(S',\Delta) \gets \apxalg(B, j, \epsilon)$ \label{twoalg1-weighted} \\
   Set $\displaystyle S\gets \mathds{1} \cdot \mu(A)^T+\sqrt{\frac{m}{W}}\cdot
     \begin{bmatrix} S'\\
      -S'
     \end{bmatrix}
     $\label{four-weighted}\\\label{alg-affine-weights-weighted}
	 Set $w$ to be the $2m$-dimensional vector with all entries $\frac{W}{2m}$\\	
	 \Return{$(S,\Delta,w)$}
   }
\end{algorithm}


\section{Dimensionality Reduction for Clustering Problems under \elltt-distance}\label{sec:affinesubspace}


In this chapter we show that the results from the previous chapter can be used to define a general dimensionality reduction for clustering
problems under the $\ell_2^2$-distance, if the cluster centers are contained in a low dimensional subspace. For example, in $k$-means clustering
the cluster centers are contained in a $k$-dimensional subspace.
To define the reduction, let $L$ be an arbitrary linear $j$-dimensional subspace represented by a $d \times j$ matrix $X$ with orthonormal columns and with $Y$ being an $d \times (d-j)$ matrix with orthonormal columns that span $L^{\bot}$. We can think of $L$ as being an arbitrary subspace that contains
a candidate solution to the clustering problem at hand.
Our first step will be to show that if we project both $A$ and $A^{(m)}:=U\Sigma^{(m)}V^T$ on $L$ 
 by computing $A XX^T$ and $A^{(m)}XX^T$, then  the sum of squared distances between the corresponding rows of the projection is small compared to the cost of the projection.
In other words, after the projection the points of $A$ will on average be relatively close to their counterparts of $A^{(m)}$.
Notice the difference from Lemma~\ref{thm:main}: In Lemma ~\ref{thm:main}, we showed that if we project $A$ to $L$ and sum up the squared \emph{lengths} of the projections, then this sum is approximately the sum of the squared lengths of the projections of $A^{(m)}$. In the following corollary, we look at the distances between a projection of a point from $A$ and the projection of the corresponding point in $A^{(m)}$, then we square these distances and show that the sum of them is small.
\begin{corollary}
\label{col2}
Let $A\in \REAL^{n\times d}$, $\eps>0$. Let $j\in[1,d-1]$ and $m\ge\j+\lceil \j/\eps \rceil -1$ be a pair of integers, and suppose that $m\leq \min\br{n,d}-1$.
Let $X \in \REAL^{d \times j}$ be a matrix whose columns are orthonormal, and let $Y\in \REAL^{d\times (d-j)}$ be a matrix with orthonormal columns that span
the orthogonal complement of the column space of $X$. Then
$$
0\leq \norm{A XX^T - A^{(m)}XX^T}_F^2 \le \eps \cdot \|AY\|_F^2.
$$
\end{corollary}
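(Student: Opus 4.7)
The plan is to reduce this to quantities that have already been bounded. First I would unroll the expression $\|AXX^T - A^{(m)}XX^T\|_F^2 = \|(A - A^{(m)})XX^T\|_F^2$ and observe that since $X$ has orthonormal columns ($X^TX = I_j$), multiplying on the right by $X^T$ does not change the Frobenius norm: $\|MXX^T\|_F^2 = \mathrm{tr}(XX^T M^T M XX^T) = \mathrm{tr}(X^T M^T M X) = \|MX\|_F^2$ for any $M$. So the quantity to bound is just $\|(A - A^{(m)})X\|_F^2$.

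Second, I would use a Pythagorean-type identity to replace the norm of a difference by a difference of norms. Writing $A = A^{(m)} + (A - A^{(m)})$ via the SVD $A = U\Sigma V^T$, we have $A^{(m)T}(A - A^{(m)}) = V \Sigma^{(m)}(\Sigma - \Sigma^{(m)}) V^T = 0$, because $\Sigma^{(m)}$ and $\Sigma - \Sigma^{(m)}$ have disjoint supports on the diagonal. This orthogonality gives
\[
\|AX\|_F^2 = \|A^{(m)}X\|_F^2 + \|(A - A^{(m)})X\|_F^2,
\]
so $\|(A - A^{(m)})X\|_F^2 = \|AX\|_F^2 - \|A^{(m)}X\|_F^2$. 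Combined with Step one, this yields $\|AXX^T - A^{(m)}XX^T\|_F^2 = \|AX\|_F^2 - \|A^{(m)}X\|_F^2$, which is non-negative and, by Lemma~\ref{thm:main}, at most $j\sigma_{m+1}^2$.

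Finally, I would convert this additive bound into the claimed multiplicative bound in terms of $\|AY\|_F^2$ by repeating the averaging argument from inequality~\eqref{byby} in the proof of Corollary~\ref{pcaoff3}. From $m \geq j + \lceil j/\eps\rceil - 1$ one obtains $j \leq \eps(m - j + 1)$, and because the singular values are non-increasing,
\[
j\,\sigma_{m+1}^2 \;\leq\; \eps\,(m-j+1)\sigma_{m+1}^2 \;\leq\; \eps\sum_{i=j+1}^{m+1}\sigma_i^2 \;\leq\; \eps\sum_{i=j+1}^{\min\{n,d\}}\sigma_i^2.
\]
Since the columns of $Y$ are orthonormal and span a $(d-j)$-dimensional subspace, $\|AY\|_F^2$ is the squared distance of $A$ to some $j$-dimensional subspace, hence at least the optimum $\sum_{i=j+1}^{\min\{n,d\}}\sigma_i^2$, finishing the proof.

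I expect no serious obstacle: the only subtlety is spotting the identity $\|MXX^T\|_F = \|MX\|_F$ to eliminate the outer $X^T$, after which the statement collapses to Lemma~\ref{thm:main} plus the tail-bound calculation already used for Corollary~\ref{pcaoff3}.
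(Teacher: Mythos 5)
Your proof is correct and follows essentially the same route as the paper: both drop the outer $X^T$ by orthonormality, both identify $\|(A-A^{(m)})X\|_F^2$ with $\|AX\|_F^2-\|A^{(m)}X\|_F^2$ (you via the orthogonality $(A^{(m)})^T(A-A^{(m)})=0$, the paper via the identity $\|\Sigma M\|_F^2-\|\Sigma^{(m)}M\|_F^2=\|(\Sigma-\Sigma^{(m)})M\|_F^2$ already established inside Lemma~\ref{thm:main}), and both finish with the bound $j\sigma_{m+1}^2\le\eps\|AY\|_F^2$ from~\eqref{byby}. The only cosmetic difference is that you invoke Lemma~\ref{thm:main} as a black box where the paper redoes the $\le j\sigma_{m+1}^2$ computation inline.
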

\begin{proof}
Using the singular value decomposition of $A$ we get
\[
\begin{split}
&\norm{AXX^T - A^{(m)}XX^T}_F^2
=\|(A-A^{(m)})XX^T\|_F^2
=\|(A-A^{(m)})X\|_F^2\\
&=\|(U\Sigma V^T-U\Sigma^{(m)} V^T)X\|_F^2 =\|U(\Sigma -\Sigma^{(m)}) V^TX\|_F^2\\
&=\|(\Sigma -\Sigma^{(m)}) V^TX\|_F^2
\leq \sum_{i=m+1}^{m+j}\sigma_i^2
\leq j\sigma_{m+1}^2,
\end{split}
\]
where the first and second equality follows since the columns of $X$ and $U$ respectively are orthonormal.
By~\eqref{byby}, $j\sigma_{m+1}^2\leq \eps\norm{AY}_F^2$, which proves the theorem.
\end{proof}

In the following we will prove our main dimensionality reduction result. The result states that we can use $A^{(m)}$ as
an approximation for $A$ in any clustering or shape fitting problem of low dimensional shapes, if we add $\|A - A^{(m)}\|_F^2$ to the cost.
Observe that this is simply the cost of projecting the points on the subspace spanned by the first $m$ right singular vectors, i.e., the cost of ``moving'' the points in $A$ to
$A^{(m)}$. In order to do so, we use the following \lq weak triangle inequality\rq, which is well known in the coreset literature.

\begin{corollary}\label{lem:movement:simple}
Let $\varepsilon>0$, $A\in \mathbb{R}^{n\times d}$ and $B \in \mathbb{R}^{n\times d}$ be two matrices.
Let $C \subset \REAL^d$ be an arbitrary nonempty set.
Then
\[
| \dist^2(A,C) - \dist^2(B,C) | \le \varepsilon \cdot \dist^2(A,C) + (1+\frac{1}{\epsilon}) \cdot \|A-B\|_F^2.
\]
\end{corollary}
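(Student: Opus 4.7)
The plan is to prove the inequality pointwise, i.e., for each row $i\in[n]$, and then sum. Define $a=A_{i*}$, $b=B_{i*}$, and observe that the usual triangle inequality $\dist(b,C)\le \dist(a,C)+\|a-b\|_2$ holds because any $c\in C$ satisfies $\|b-c\|_2 \le \|a-c\|_2 + \|a-b\|_2$ and we may take the infimum on the right.

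Squaring this and expanding yields
\[
\dist^2(b,C)\le \dist^2(a,C)+2\dist(a,C)\cdot\|a-b\|_2+\|a-b\|_2^2.
\]
Young's inequality $2xy\le \varepsilon x^2+y^2/\varepsilon$ applied with $x=\dist(a,C)$ and $y=\|a-b\|_2$ then gives
\[
\dist^2(b,C)-\dist^2(a,C)\le \varepsilon\cdot\dist^2(a,C)+\Bigl(1+\tfrac{1}{\varepsilon}\Bigr)\|a-b\|_2^2.
\]
For the reverse direction, I would argue by cases on whether $\dist(a,C)\ge\dist(b,C)$. If yes, then factoring $\dist^2(a,C)-\dist^2(b,C)=(\dist(a,C)-\dist(b,C))(\dist(a,C)+\dist(b,C))$ and using $\dist(a,C)+\dist(b,C)\le 2\dist(a,C)$ together with $\dist(a,C)-\dist(b,C)\le\|a-b\|_2$ and Young's inequality yields the same bound (in fact with the tighter constant $1/\varepsilon$). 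If no, the left-hand side is non-positive while the right-hand side is non-negative, so the bound is trivial.

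Summing the resulting pointwise inequalities $|\dist^2(A_{i*},C)-\dist^2(B_{i*},C)| \le \varepsilon\cdot\dist^2(A_{i*},C)+(1+1/\varepsilon)\|A_{i*}-B_{i*}\|_2^2$ over $i=1,\dots,n$ and using $\|A-B\|_F^2=\sum_{i=1}^n\|A_{i*}-B_{i*}\|_2^2$ and $\dist^2(A,C)=\sum_i\dist^2(A_{i*},C)$ gives the claim after one final application of the triangle inequality $|\sum_i x_i|\le \sum_i|x_i|$. There is no real obstacle here; the only subtlety is keeping the asymmetry of the bound (the $\varepsilon\,\dist^2(A,C)$ term uses $A$ and not $B$), which forces one of the two directions to take the slightly looser constant $1+1/\varepsilon$ instead of $1/\varepsilon$.
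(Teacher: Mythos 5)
Your proposal is correct and follows essentially the same route as the paper: the pointwise triangle inequality for point-to-set distances, the step $2xy\le \varepsilon x^2+y^2/\varepsilon$, and summation over rows. The paper merely handles both directions at once by factoring $|\dist^2(p,C)-\dist^2(q,C)|=|\dist(p,C)-\dist(q,C)|\cdot(\dist(p,C)+\dist(q,C))$ and bounding both factors, whereas you treat the two signs separately, which changes nothing of substance.
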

\begin{proof}
Let $p$ be a row in $B$ and $q$ be a corresponding row in $A$.
Using the triangle inequality,
\begin{equation}\label{aab}
\begin{split}
& |\dist^2(p,C)-\dist^2(q,C)| \\
&=|\dist(p,C)-\dist(q,C)|\cdot(\dist(p,C)+\dist(q,C))\\
&\leq \norm{p-q}_2\cdot(2\dist(p,C)+\norm{p-q}_2)\\
&= \norm{p-q}^2_2+2\dist(p,C)\norm{p-q}_2\\
&= \norm{p-q}^2_2+2\sqrt{\eps} \cdot \dist(p,C)\cdot\frac{\norm{p-q}_2}{\sqrt{\eps}}\\
&\leq \norm{p-q}^2_2+\eps\cdot\dist^2(p,C)+\frac{\norm{p-q}^2_2}{\eps}\\
&= \eps\cdot\dist^2(p,C)+ (1+\frac{1}{\eps}) \cdot {\norm{p-q}^2_2}
\end{split}
\end{equation}
where the last inequality is since $2ab\leq a^2+b^2$ for every $a,b\in \REAL$.
Summing the last inequality over all the $n$ rows of $A$ and $B$ yields
\[
\begin{split}
| \dist^2(A,C) - \dist^2(B,C) |
&=
\left| \sum_{i=1}^n \dist^2(A_{i*},C) - \dist^2(B_{i*},C) \right|\\
&\leq  \sum_{i=1}^n \left|\dist^2(A_{i*},C) - \dist^2(B_{i*},C) \right|\\
&\leq  \sum_{i=1}^n \left(\eps\cdot\dist^2(A_{i*},C) + (1+\frac{1}{\eps}) \cdot \norm{A_{i*}-B_{i*}}^2_F\right)\\
&=  \eps\ \cdot \dist^2(A,C) + (1+\frac{1}{\eps})\cdot \norm{A-B}^2_F.\\
\end{split}
\]
\end{proof}

The following theorem combines Lemma~\ref{pcaoff2} with Corollary \ref{col2} and \ref{lem:movement:simple} to get the dimensionality reduction result.

\begin{theorem}\label{thm:main2}
Let $A\in \REAL^{n \times d}$, $j\in[1,d-1]$ be an integer, and $\epsilon\in (0,1]$.
Let $m \ge \lceil 8 j/\eps^2\rceil -1$ and suppose that
Let $m \le \min\br{n,d} -1$.
Then for any non-empty set $C$, which is contained in a $j$-dimensional subspace, we have
$$
\left|  \left(\dist^2(A^{(m)},C) + \norm{A-A^{(m)}}_F^2 \right)- \dist^2(A,C) \right| \le \eps \cdot \dist^2(A,C).
$$
\end{theorem}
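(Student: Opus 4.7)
Let $L$ be a $j$-dimensional linear subspace of $\REAL^d$ containing $C$, with $X \in \REAL^{d \times j}$ being a matrix with orthonormal columns spanning $L$ and $Y \in \REAL^{d \times (d-j)}$ having orthonormal columns spanning $L^{\bot}$. The key geometric observation is that since $C \subseteq L$, for every row $p = A_{i*}$ we can split $p - c = (p - pXX^T) + (pXX^T - c)$ for any $c \in C$, where the two summands are orthogonal. Minimizing over $c \in C$ and applying Pythagoras componentwise then summing over rows yields
$$
\dist^2(A,C) = \dist^2(A,L) + \dist^2(AXX^T, C), \qquad \dist^2(A^{(m)},C) = \dist^2(A^{(m)},L) + \dist^2(A^{(m)}XX^T, C).
$$
Hence the quantity to bound factors cleanly into a ``distance-to-$L$'' part and a ``within-$L$'' part, each of which we will handle by an earlier result.

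For the distance-to-$L$ part I would invoke Lemma~\ref{pcaoff2} with the matrix $Y$ above, which gives
$$
0 \le \bigl(\dist^2(A^{(m)},L) + \|A - A^{(m)}\|_F^2\bigr) - \dist^2(A,L) \le j\sigma_{m+1}^2.
$$
This is the part where the additive term $\|A-A^{(m)}\|_F^2$ does its work, absorbing the contribution of the pseudorandom tail.

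For the within-$L$ part I would apply Corollary~\ref{lem:movement:simple} (weak triangle inequality) to the matrices $AXX^T$ and $A^{(m)}XX^T$ with parameter $\eps' := \eps/2$, getting
$$
\bigl|\dist^2(AXX^T,C) - \dist^2(A^{(m)}XX^T,C)\bigr| \le \tfrac{\eps}{2}\dist^2(AXX^T,C) + \bigl(1+\tfrac{2}{\eps}\bigr)\|AXX^T - A^{(m)}XX^T\|_F^2.
$$
The Frobenius-norm term is controlled by Corollary~\ref{col2}, whose proof in fact establishes the sharper inequality $\|AXX^T - A^{(m)}XX^T\|_F^2 \le j\sigma_{m+1}^2$.

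Summing the two pieces by the triangle inequality and using $\dist^2(AXX^T,C) \le \dist^2(A,C)$ gives total error at most
$$
\tfrac{\eps}{2}\dist^2(A,C) + \bigl(2 + \tfrac{2}{\eps}\bigr) j \sigma_{m+1}^2.
$$
The remaining step is to verify that the choice $m \ge \lceil 8j/\eps^2\rceil - 1$ forces $(2 + 2/\eps)\,j\sigma_{m+1}^2 \le \tfrac{\eps}{2}\dist^2(A,C)$; this follows from inequality \eqref{byby} in the proof of Corollary~\ref{pcaoff3}, which bounds $j\sigma_{m+1}^2$ by $\eps_0\|AY\|_F^2 \le \eps_0 \dist^2(A,C)$ whenever $m \ge j/\eps_0 + j - 1$, applied with $\eps_0 = \Theta(\eps^2)$ of the right constant. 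The main technical subtlety, and essentially the only place where care is required, is this error budgeting: the $(1 + 1/\eps')$ blow-up in the weak triangle inequality is what forces the quadratic dependence $m = \Omega(j/\eps^2)$ rather than the linear $m = \Omega(j/\eps)$ that sufficed in Corollary~\ref{pcaoff3} for the single-subspace case.
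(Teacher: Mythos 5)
Your proposal is correct and follows essentially the same route as the paper's proof: the Pythagorean split of $\dist^2(\cdot,C)$ into the $\|\cdot Y\|_F^2$ part handled by Lemma~\ref{pcaoff2}/Corollary~\ref{pcaoff3} and the within-$L$ part handled by Corollary~\ref{col2} together with the weak triangle inequality of Corollary~\ref{lem:movement:simple}, followed by budgeting the accumulated $j\sigma_{m+1}^2$ terms against $\eps\cdot\dist^2(A,C)$ via inequality~\eqref{byby}. The only differences are cosmetic (you defer the substitution $\eps_0=\Theta(\eps^2)$ to the end rather than plugging $\eps^2/8$ into each lemma upfront), and your closing remark about the $(1+1/\eps')$ blow-up forcing $m=\Omega(j/\eps^2)$ correctly identifies the source of the quadratic dependence.
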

\begin{proof}
Let $L$ denote the $j$-dimensional subspace that spans $C$, and
let $X\in\REAL^{d\times j}$ be a matrix with orthonormal columns that span $L$.
Let $Y\in\REAL^{d\times (d-j)}$ denote a matrix with orthonormal columns that span the orthogonal complement of $L$.
By the Pythagorean theorem (see also Claim \ref{Claim:Pythagorean}) we get
$$\dist^2(A^{(m)},C) = \|A^{(m)}Y\|_F^2 + \dist^2(A^{(m)}XX^T,C)$$ and
\begin{equation}\label{acaca}
\dist^2(A,C) = \|AY\|_F^2 + \dist^2(AXX^T,C).
\end{equation}
Hence,
\begin{align}
    \nonumber& \left|  \left(\dist^2(A^{(m)},C) + \norm{A-A^{(m)}}^2 \right) - \dist^2(A,C) \right| \\
 \nonumber=   & \left| \|A^{(m)}Y\|_F^2  + \dist^2(A^{(m)}XX^T,C)
+ \norm{A-A^{(m)}}^2
- \left(\|AY\|_F^2 + \dist^2(AXX^T,C) \right) \right|\\
\label{do1}\le & \Big| \|A^{(m)}Y\|_F^2 + \|A-A^{(m)}\|_F^2 - \|AY\|_F^2 \Big| + \left|\dist^2(A^{(m)}XX^T,C) -\dist^2(AXX^T,C)\right| \\
\label{do2}\le & \frac{\varepsilon^2}{8}  \cdot \|AY\|_F^2 + \left| \dist^2(A^{(m)}XX^T,C) - \dist^2(AXX^T,C) \right|\\
\label{do3}\le & \frac{\varepsilon^2}{8}  \cdot \dist^2(A,C) + \left| \dist^2(A^{(m)}XX^T,C) - \dist^2(AXX^T,C) \right|,
\end{align}
where~\eqref{do1} is by the triangle inequality,~\eqref{do2} is by replacing $\eps$ with $\eps^2/8$ in Corollary~\ref{pcaoff3},
and ~\eqref{do3} is by~\eqref{acaca}.

By Corollary~\ref{col2},
\[
\norm{A^{(m)}XX^T-A XX^T}_F^2 \le \frac{\eps^2}{8} \cdot \|AY\|_F^2.
\]
Since $C$ is contained in $L$, we have $\norm{AY}^2_F = \dist^2(A,L) \leq \dist^2(A,C)$.
Using Corollary~\ref{lem:movement:simple} while substituting $\epsilon$ by $\eps/2$, $A$ by $A^{(m)}XX^T$ and $B$ by $AXX^T$ yields
\begin{equation}
| \dist^2(A^{(m)}XX^T,C)-\dist^2(A XX^T,C)| \le \frac{\eps}{4} \cdot \dist^2(A XX^T,C) + (1 + \frac{4}{\epsilon}) \cdot \norm{A^{(m)}XX^T-A XX^T}_F^2.
\end{equation}
By~\eqref{acaca}, $\dist^2(A XX^T,C)\leq \dist^2(A,C)$.
Combining the last two inequalities with ~\eqref{do3} proves the theorem, as
\[
\begin{split}
  &\left|  \left(\dist^2(A^{(m)},C) + \norm{A-A^{(m)}}_F^2 \right) - \dist^2(A,C) \right| \\
	\le &  \; \frac{\varepsilon^2}{8}  \cdot \dist^2(A,C) + \frac{\eps}{4}\cdot \dist^2(A,C) + \frac{\eps^2}{8} \cdot (1+\frac{4}{\eps}) \cdot \dist^2(A,C)\\
\leq & \; \eps \cdot \dist^2(A,C),
\end{split}
\]
where in the last inequality we used the assumption $\eps\leq1$.
\end{proof}

Theorem~\ref{thm:main2} has a number of surprising consequences. For example, we can solve $k$-means or any subspace clustering problem
approximately by using $A^{(m)}$ instead of $A$.

\newcommand{\dimappalg}{\textsc{Dimensionality-Reduction-$k$-means}}
\begin{algorithm}[ht]
  \caption{\dimappalg({$A$, $k$, $\epsilon$, $\alpha$})\label{dimappalg}}
{\begin{tabbing}
\textbf{Input: \quad   }  \= $A\in\REAL^{n\times d}$, an integer $k\geq1$ and error parameters $\alpha\geq 0$ and $\eps\in(0,1/2)$.\\
\textbf{Output: } \> A $\alpha(1+\eps)$-approximation $C$ for $k$-means; see Corollary~\ref{cor:demred:kmeans}.
\end{tabbing}}
	\SetAlgoLined\DontPrintSemicolon
  \vspace*{-\baselineskip}
   $m = k + \lceil 72 k/\epsilon^2\rceil -1$\;
   Compute the singular value decomposition $A = U \Sigma V^T$\;
	 Set $A^{(m)} = U \Sigma^{(m)} V^T$, where $\Sigma^{(m)}$ contains only the first $m$ diagonal entries of $\Sigma$ and
	is $0$ otherwise \;
	 Let $C$ be a set of $k$ centers that is an $\alpha$-approximation to the optimal $k$-means clustering of $A^{(m)}$\;
	 \Return{$C$}\;
\end{algorithm}

\begin{corollary} [Dimensionality reduction for $k$-means clustering]\label{cor:demred:kmeans}
Let $A\in\REAL^{n\times d}$, $k\geq1$ be an integer, $\eps\in(0,1/3]$,  and $\alpha\geq1$.
Suppose that $C$ is the output set of a call to\\ $\dimappalg(A,k,\eps,\alpha)$.
Then $C$ is an $(\alpha (1+\epsilon))$-approximation to the optimal $k$-means
clustering problem of $A$. In particular, if $\alpha=1$, then $C$ is a $(1+\epsilon)$-approximation.
\end{corollary}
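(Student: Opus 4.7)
The plan is to use Theorem~\ref{thm:main2} as a black box, setting $j=k$ because any set $C$ of $k$ points lies in a linear subspace of dimension at most $k$. The choice $m = k + \lceil 72k/\eps^2\rceil -1$ in the algorithm corresponds to running the theorem with an internal precision $\eps' := \eps/3$, so that $m \ge \lceil 8k/(\eps')^2\rceil -1$ as required by Theorem~\ref{thm:main2}. With this choice, every candidate $k$-center set $C$ satisfies
\[
(1-\eps')\dist^2(A,C) \;\le\; \dist^2(A^{(m)},C) + \Delta \;\le\; (1+\eps')\dist^2(A,C),
\]
where $\Delta := \|A-A^{(m)}\|_F^2$ does not depend on $C$.

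The key observation, which I would state next, is that adding the $C$-independent constant $\Delta$ to the $k$-means cost of $A^{(m)}$ does not change the set of minimizers. Hence an $\alpha$-approximate $k$-means solution $C$ for $A^{(m)}$ is also an $\alpha$-approximate minimizer of the surrogate objective $f(C') := \dist^2(A^{(m)},C') + \Delta$. Writing $C^*$ for an optimal $k$-means solution for $A$, I would chain
\[
f(C) \;\le\; \alpha\cdot f(C^*) \;\le\; \alpha(1+\eps')\cdot \dist^2(A,C^*),
\]
using the right-hand side of the Theorem~\ref{thm:main2} inequality in the last step.

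Combining this with the left-hand side of the same inequality applied to our returned $C$ yields
\[
\dist^2(A,C) \;\le\; \frac{f(C)}{1-\eps'} \;\le\; \frac{\alpha(1+\eps')}{1-\eps'}\cdot \dist^2(A,C^*).
\]
A short elementary calculation (using $\eps \le 1$, which follows from $\eps \in (0,1/3]$) shows that $(1+\eps/3)/(1-\eps/3) \le 1+\eps$, so the overall guarantee is $\alpha(1+\eps)$, and the case $\alpha=1$ gives the final $(1+\eps)$ statement.

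The main obstacle is bookkeeping rather than any deep step: verifying (i) that the $k$-point center set legitimately satisfies the hypothesis of Theorem~\ref{thm:main2} (namely, that $C$ is contained in some linear $k$-dimensional subspace so that we may take $j = k$), (ii) that the additive term $\Delta$ is truly constant over $C$ so that optimizing over $A^{(m)}$'s $k$-means cost is equivalent to optimizing $f$, and (iii) that the chosen $m$ and the parameter substitution $\eps' = \eps/3$ produce exactly the right numeric constants so that the final multiplicative factor collapses to $(1+\eps)$ under the restriction $\eps \le 1/3$.
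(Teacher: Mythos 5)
Your proposal is correct and follows essentially the same route as the paper: both apply Theorem~\ref{thm:main2} with precision $\eps/3$ to the returned solution $C$ and to an optimal $C^*$, use the $\alpha$-approximation guarantee on $A^{(m)}$ (absorbing the constant $\Delta=\|A-A^{(m)}\|_F^2$ via $\alpha\ge 1$, which is exactly your observation that adding a $C$-independent constant preserves approximate minimizers), and finish with $(1+\eps/3)/(1-\eps/3)\le 1+\eps$. The bookkeeping points you flag (centers of $k$ points lie in a $k$-dimensional subspace, and the algorithm's $m$ meets the theorem's threshold for $\eps'=\eps/3$) all check out.
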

\begin{proof}
Let $\eps\in(0,1/3]$ be an input parameter. Let $C^*$ denote an optimal set of $k$ centers for the $k$-means objective function on input $A$. We apply Theorem \ref{thm:main2} with parameter $\epsilon/3$ and for both $C$ and $C^*$ in order to get that
$$
\big|\dist^2(A^{(m)},C) + \norm{A-A^{(m)}}_F^2 - \dist^2(A,C) \big| \le \epsilon/3 \cdot \dist^2(A,C).
$$
and
$$
\big|\dist^2(A^{(m)},C^*) + \norm{A-A^{(m)}}_F^2 - \dist^2(A,C^*) \big| \le \epsilon/3 \cdot \dist^2(A,C^*).
$$
From these inequalities we can deduce that
$$
(1-\epsilon/3) \cdot \dist^2(A,C) \le  \dist(A^{(m)},C) +\norm{A-A^{(m)}}_F^2
$$
and
$$
\dist^2(A^{(m)},C^*) + \norm{A-A^{(m)}}_F^2 \le (1+\epsilon/3) \cdot \dist^2(A,C^*).
$$
Since $C$ is an $\alpha$-approximation, we also have $\dist(A^{(m)},C) \le \alpha \cdot \dist(A^{(m)},C^*)$.
It follows that
\[
\begin{split}
(1-\epsilon/3) \cdot \dist^2(A,C) &\le  \dist(A^{(m)},C) + \norm{A-A^{(m)}}_F^2\\
&\le \alpha \cdot  \dist(A^{(m)},C^*) + \norm{A-A^{(m)}}_F^2 \\
&\le \alpha \cdot  \big(\dist(A^{(m)},C^*) + \norm{A-A^{(m)}}_F^2 \big) \\
&\le \alpha \cdot (1+\epsilon/3) \cdot \dist^2(A,C^*).
\end{split}
\]
Since $\epsilon <1/3$ we have $\frac{1+\epsilon/3}{1-\epsilon/3} \le 1+\epsilon$ and so the corollary follows.
\end{proof}

Our result can be immediately extended to the affine $j$-subspace $k$-clustering problem. The proof is similar to the proof of the previous corollary.

\newcommand{\affinejk}{\textsc{affine $j$-subspace $k$-clustering approximation}}
\begin{algorithm}[h!t]
  \caption{\affinejk($A$, $k$, $\epsilon$)}
{\begin{tabbing}
\textbf{Input: \quad   }  \= $A\in\REAL^{n\times d}$, an integer $k\geq1$ and error parameters $\alpha\geq 0$ and $\eps\in(0,1/2)$.\\
\textbf{Output: } \> A $\alpha(1+\eps)$-approximation $C$ for the affine $j$-subspace $k$-clustering problem;\\
\> see Corollary~\ref{cor:dimred:jk}.
\end{tabbing}}
\vspace*{-\baselineskip}
  \SetAlgoLined\DontPrintSemicolon
   $m = k(j+1) + \lceil 72 \cdot k(j+1)/\epsilon^2\rceil -1$\;
   Compute the singular value decomposition $A = U \Sigma V^T$\;
	 Set $A^{(m)} = U \Sigma^{(m)} V^T$, where $\Sigma^{(m)}$ contains only the first $m$ diagonal entries of $\Sigma$ and
	is $0$ otherwise \;
	 Let $C$ be a set of $k$ affine $j$-subspaces that is an $\alpha$-approximation to the optimal affine $j$-subspace $k$-clustering\;
	 \Return{$C$}\;
\end{algorithm}

\begin{corollary} [Dimensionality reduction for affine $j$-subspace $k$-clustering]\label{cor:dimred:jk}
A call to Algorithm $\affinejk$ returns an $(\alpha (1+\epsilon))$-approximation to the optimal solution for the affine $j$-subspace $k$-clustering problem on input $A$. In particular, if $\alpha=1$, the solution is a $(1+\epsilon)$-approximation.
\end{corollary}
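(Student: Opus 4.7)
The plan is to mirror the proof of Corollary \ref{cor:demred:kmeans} almost verbatim; the only genuinely new ingredient is identifying the right enclosing linear subspace for a candidate solution. Observe that a single affine $j$-dimensional subspace $p+L$ is contained in the linear subspace $\mathrm{span}(L\cup\{p\})$, which has dimension at most $j+1$. Taking a union of $k$ such affine subspaces, any candidate solution $C$ for the affine $j$-subspace $k$-clustering problem lies inside a linear subspace of dimension at most $k(j+1)$. This is exactly what is needed in order to apply Theorem \ref{thm:main2} to $C$.

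Next, I would set $\epsilon' = \epsilon/3$ and check that the chosen $m = k(j+1) + \lceil 72 \cdot k(j+1)/\epsilon^2\rceil -1$ satisfies $m \ge \lceil 8 \cdot k(j+1)/(\epsilon')^2\rceil -1$, so Theorem \ref{thm:main2} applies with ``ambient dimension'' $k(j+1)$ and error $\epsilon'$. Applying it to both the algorithm's output $C$ and an optimal solution $C^*$ (which also consists of $k$ affine $j$-subspaces and hence sits in a $k(j+1)$-dimensional linear subspace) gives
\[
(1-\epsilon/3)\cdot \dist^2(A,C) \le \dist^2(A^{(m)},C) + \|A-A^{(m)}\|_F^2
\]
and
\[
\dist^2(A^{(m)},C^*) + \|A-A^{(m)}\|_F^2 \le (1+\epsilon/3)\cdot \dist^2(A,C^*).
\]

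Now combine these with the $\alpha$-approximation property $\dist^2(A^{(m)},C) \le \alpha\cdot \dist^2(A^{(m)},C^*)$ (which holds by construction, since $C$ is the output of an $\alpha$-approximation algorithm run on $A^{(m)}$). Adding $\|A-A^{(m)}\|_F^2$ on both sides of this inequality, sandwiching with the two bounds above, and using $\epsilon \le 1/2$ to get $\frac{1+\epsilon/3}{1-\epsilon/3}\le 1+\epsilon$ yields $\dist^2(A,C) \le \alpha(1+\epsilon)\cdot \dist^2(A,C^*)$, which is the claim.

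The only step that requires any thought is the first: verifying the $k(j+1)$ dimension bound for the enclosing linear subspace of a union of $k$ affine $j$-subspaces, and adjusting the constants in the choice of $m$ so that Theorem \ref{thm:main2} applies with this larger ambient dimension. Everything else is a literal transcription of the $k$-means proof, since the theorem was stated for arbitrary $C$ contained in a subspace of the appropriate dimension.
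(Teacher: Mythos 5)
Your proposal is correct and follows exactly the route the paper intends: the paper gives no explicit proof for this corollary, merely remarking that it is "similar to the proof of the previous corollary," and your argument supplies precisely the missing details — the observation that a union of $k$ affine $j$-subspaces lies in a linear subspace of dimension $k(j+1)$, the verification that $m = k(j+1)+\lceil 72\,k(j+1)/\epsilon^2\rceil-1$ meets the hypothesis of Theorem~\ref{thm:main2} with parameter $\epsilon/3$, and the literal transcription of the $k$-means sandwich argument. Nothing further is needed.
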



\section{Small Coresets for \mathC-Clustering Problems}\label{sec:coresets:generalandk-means}


In this section we use the result of the previous section to prove that any $\mathcal C$-clustering problem, which is closed under rotations and reflections, has a  coreset of cardinality independent of the dimension of the space, if it has a coreset for a constant number of dimensions.

\begin{definition}\label{def:closedsubset}
A set $\mathcal C$ of non-empty subsets of $\REAL^d$ is said to be closed under rotations and reflections, if
for every $C\in \mathcal C$ and every orthogonal matrix  $U\in\REAL^{d\times d}$ we have $U(C) \in \mathcal C$, where $U(C):=\{Ux: x \in C\}$.
\end{definition}

In the last section, we showed that the projection $A^{(m)}$ of $A$ approximates $A$ with respect to the $\ell_2^2$-distance to any low dimensional shape. $A^{(m)}$ still has $n$ points, which are $d$-dimensional but lie in an $m$-dimensional subspace.
To reduce the number of points, we will apply known coreset constructions to $A^{(m)}$ within the low dimensional subspace. At first glance, this means that the coreset property
 only holds for centers that are also from the low dimensional subspace, but of course we want that the centers can be chosen from the full dimensional space. We get around this problem by applying the coreset constructions to a slightly larger space than the subspace that $A^{(m)}$ lies in. The following lemma provides us with the necessary tool to complete the argumentation.

\begin{lemma}\label{rotationlemma}
Let $S$ be an $r$-dimensional subspace of $\REAL^d$ and let $L$ be an $(r+j)$-dimensional subspace of $\REAL^d$
that contains $S$. Let $V$ be a $j$-dimensional subspace of $\REAL^d$. Then there is an orthogonal matrix $U$ such
that $Ux=x$ for every $x\in S$, and $Uc \in L$ for every $c\in V$.
\end{lemma}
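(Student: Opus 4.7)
The plan is to construct $U$ as the identity on $S$ combined with a suitably chosen orthogonal map on the orthogonal complement $S^{\bot}$. First I observe that since $U$ must be orthogonal and fix $S$ pointwise, it must also preserve $S^{\bot}$ setwise; equivalently, with respect to the decomposition $\REAL^d = S \oplus S^{\bot}$, $U$ must take the block form $U = I_S \oplus U'$ for some orthogonal map $U'$ of $S^{\bot}$. Moreover, since $S \subseteq L$, we have the orthogonal decomposition $L = S \oplus (L \cap S^{\bot})$, where the second summand has dimension exactly $j$.

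Next I would rewrite the condition $Uc \in L$ for every $c \in V$ in terms of $U'$. Writing $c = c_S + c_\bot$ with $c_S \in S$ and $c_\bot \in S^\bot$, we get $Uc = c_S + U' c_\bot$. Since $c_S \in S \subseteq L$, the condition $Uc \in L$ is equivalent to $U' c_\bot \in L \cap S^\bot$. Letting $\pi \colon \REAL^d \to S^\bot$ denote the orthogonal projection, this means I only need
$$U'\bigl(\pi(V)\bigr) \subseteq L \cap S^\bot.$$

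The remaining task is then to produce an orthogonal map $U'$ of $S^\bot$ sending the subspace $\pi(V) \subseteq S^\bot$ into the subspace $L \cap S^\bot \subseteq S^\bot$. Here I would use the dimension bookkeeping: $\dim \pi(V) \le \dim V = j = \dim(L \cap S^\bot)$, and $\dim S^\bot = d - r \ge j$ because $L \subseteq \REAL^d$ has dimension $r+j$. Hence I can pick an orthonormal basis of $\pi(V)$ and extend it to an orthonormal basis of $S^\bot$, pick an orthonormal basis of $L \cap S^\bot$ and extend it to another orthonormal basis of $S^\bot$, and define $U'$ as the orthogonal map sending the first basis to the second. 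By construction $U'(\pi(V)) \subseteq L \cap S^\bot$, and extending by the identity on $S$ yields the desired $U$.

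I do not anticipate any serious obstacle; the argument is a standard dimension-count combined with the fact that any subspace of $S^\bot$ can be moved into any subspace of at least the same dimension by an orthogonal transformation of $S^\bot$. The one point worth emphasising is the initial reduction that $U$ must split over $S \oplus S^\bot$, which is forced by orthogonality together with the requirement $Ux = x$ on $S$.
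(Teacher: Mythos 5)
Your proof is correct and is essentially the same construction as the paper's: the paper builds $U=B_1B_2^T$ from two orthonormal bases that agree on $S$, one adapted to $L$ and one adapted to (a subspace containing) $V$, which is exactly your basis-to-basis map $U'$ on $S^{\bot}$ carried out with $S$ kept explicitly in the coordinates. The reduction to $S^{\bot}$, the identity $L=S\oplus(L\cap S^{\bot})$, and the dimension count all match the paper's argument in substance.
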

\begin{proof}
Let $B_1\in\REAL^{d\times d}$ be an orthogonal matrix whose first $r$
columns span $S$ and whose first $r+j$ columns span $L$. Let $B_2\in\REAL^{d\times d}$ be an orthogonal matrix whose first $r$ columns are the same as the first $r$ columns of $B_1$, and whose first $r+j$ columns span a subspace that contains $V$. Define the orthogonal matrix $U=B_1 B_2^T$. For every $x\in S$, the last $d-r$ entries of the vector $y=B_1^Tx$ are all zeroes, and $x=B_2y$. Thus $Ux = B_1B_2^T B_2y = B_1y =x$ as desired. Furthermore,
for every $c \in V$ there is $z\in\REAL^d$ whose last $d-(r+j)$ entries are all zeroes and $c=B_2z$. Hence, $Uc = B_1 B_2^T B_2 z = B_1 z \in L$, as desired.
\end{proof}

\begin{corollary}\label{cordanny}
Let $A\in\REAL^{n\times d}$ be a matrix of rank $r$ and let $L$ be an $(r+j+1)$-dimensional subspace of $\REAL^d$ that contains the row vectors  $(A_{i*})$ for every $1\le i \le n$. Then for every affine $j$-dimensional subspace $V$ of $\REAL^d$ there is a corresponding affine $j$-dimensional subspace $V'\subseteq L$ such that for every $i\in[n]$ we have
\[
\dist(A_{i*},V)=\dist(A_{i*},V').
\]
\end{corollary}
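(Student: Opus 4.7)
The plan is to reduce this to Lemma \ref{rotationlemma} by lifting the affine subspace $V$ to a linear subspace of one higher dimension, and then using the orthogonal transformation from that lemma (which fixes the row space of $A$) to rotate $V$ into $L$.

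More concretely, I would let $S$ denote the $r$-dimensional row space $\operatorname{span}\{(A_{i*})^T : 1 \le i \le n\}$; by hypothesis $S \subseteq L$ since $L$ contains every row of $A$. Writing $V = p + W$ with $W$ a $j$-dimensional linear subspace, I define $V^{\mathrm{lin}} := \operatorname{span}(\{p\} \cup W)$, a linear subspace of dimension at most $j+1$ (this is exactly where the ``$+1$'' in the hypothesis ``dimension $r+j+1$'' gets used). Applying Lemma \ref{rotationlemma} with subspaces $S$, $L$, and $V^{\mathrm{lin}}$ (taking the lemma's parameter to be $j+1$), I obtain an orthogonal matrix $U \in \REAL^{d \times d}$ such that $Ux = x$ for every $x \in S$ and $Uc \in L$ for every $c \in V^{\mathrm{lin}}$.

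Next I set $V' := U(V) = \{Uv : v \in V\} = Up + U(W)$. Since $U$ is invertible and linear, $V'$ is an affine subspace of dimension $j$; and since $V \subseteq V^{\mathrm{lin}}$, we have $V' \subseteq U(V^{\mathrm{lin}}) \subseteq L$. To verify the distance equality, fix $i \in [n]$ and let $a := (A_{i*})^T \in S$. Using that $U$ preserves Euclidean norm and $U^{-1} = U^T$ also fixes $S$ (since $Ux = x$ for $x \in S$ implies $U^{-1}x = x$), I compute
\[
\dist(a, V') = \inf_{v \in V} \|a - Uv\|_2 = \inf_{v \in V} \|U^{-1}a - v\|_2 = \inf_{v \in V} \|a - v\|_2 = \dist(a, V).
\]

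The only subtle step is the bookkeeping of dimensions: because $V$ is \emph{affine} of dimension $j$, its linear span has dimension $j+1$ in general (whenever $p \notin W$), so Lemma \ref{rotationlemma} must be invoked with parameter $j+1$ and correspondingly $L$ is required to have dimension $r + j + 1$. Beyond this, the argument is a direct combination of the lemma with orthogonal invariance of distance.
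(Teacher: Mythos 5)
Your proof is correct and follows essentially the same route as the paper: both lift the affine subspace $V$ to a $(j+1)$-dimensional linear subspace, invoke Lemma~\ref{rotationlemma} with that lifted subspace to get an orthogonal $U$ fixing the row space of $A$ and mapping the lift into $L$, and then use orthogonal invariance of distances to conclude. Your explicit check that $U^{-1}$ also fixes $S$ is a nice touch of rigor that the paper leaves implicit.
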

\begin{proof}
Let $A\in\REAL^{n\times d}$ be a matrix of rank $r$, let $S$ be an $r$-dimensional subspace of $\REAL^d$ that contains the row vectors
$A_{i*}$ for every $1\le i \le n$, and let $L$ be an $(r+j+1)$-dimensional subspace of $\REAL^d$ that contains $S$. Let $V$ be an arbitrary
affine $j$-dimensional subspace of $\REAL^d$ and let $V_l$ be a $(j+1)$-dimensional linear subspace that contains $V$. We apply
Lemma~\ref{rotationlemma} with $S$, $L$ and $V_l$ to obtain an orthogonal matrix $U$ such that for every $x\in S$ we have $Ux=x$ and
$Uc \in L$ for every $c\in V_l$. This implies in particular that $A_{i*} U^T = A_{i*}$ for every $1\le i \le n$ and that $Uc \in L$ for every $c \in V$.

Since a transformation by an orthogonal matrix preserves distances, we also know for $V' = \{Uc : c \in \REAL^d\}$ that
$$
\dist(A_{i*},V) = \dist(A_{i*} U^T,V') = \dist(A_{i*}, V').
$$
\end{proof}

Now consider a $\mathcal C$-clustering problem, where $\mathcal C$ is closed under rotations and reflections. Furthermore,
assume that each set $C\in \mathcal C$ is contained in a $j$-dimensional subspace. Our plan is to apply the above Corollary
to the matrix $A^{(m)}$. Then we know that there is a space $L$ of dimension $m+j$ such that for every subspace $V$
there is an orthogonal matrix $U$ that moves $V$ into $L$ and keeps the points described by the rows of $A^{(m)}$ unchanged.
Furthermore, since applying $U$ does not change Euclidean distance
we know that the sum of squared distances of the rows of $A^{(m)}$ to $C$ equals the sum of squared distances to
$U(C):=\{Ux: x \in C\}$ and $U(C)$ is contained in $L$ (by the above Corollary) and in $\mathcal C$ since $\mathcal C$ is closed
under rotations and reflections.

Now assume that we have a coreset for the subspace $L$. As oberserved, we have $U(C) \in \mathcal C$ and $U(C) \subseteq L$.
In particular, the sum of squared distances to $U(C)$ is approximated by the coreset. But this is identical to the sum of
squared distances to $C$ and so this is approximated by the coreset as well.

Thus, in order to construct a coreset for a set of $n$ points in $\REAL^d$ we proceed as follows. In the first step we use the dimensionality reduction from
the previous chapter to reduce the input point set to a set of $n$ points that lies in an $m$-dimensional subspace. Then we construct a coreset for an $(m+j)$-dimensional subspace that contains the low-dimensional point set. By the discussion above, the output will be a coreset for the original set in the original $d$-dimensional space.

\begin{theorem}[Dimensionality reduction for coreset computations]\label{thm:dim}
Let $\epsilon \in (0,1]$ and $A \in \REAL^{n\times d}$. Let $\CC$ be a (possibly infinite) set of non-empty subsets of $\REAL^d$ that is closed under rotations and reflections such that each $C\in \CC$ is contained in a $j$-dimensional subspace. Let $m= \min\br{n,d,j + \lceil 32 j/\epsilon^2 \rceil}-1$ and $L$ be a subspace of dimension at most $m+j$ that contains the row vectors of $A^{(m)}$.
Suppose that $(S,\Delta',w)$ is an $(\eps/8)$-coreset (see Definition~\ref{coresetdef}) for input point set $A^{(m)}$ in the input space $L$.

Then $(S,\Delta'+\norm{A-A^{(m)}}_F^2,w)$ is an $\eps$-coreset for for the $\CC$-clustering problem in $\REAL^d$ and with input $A$, i.e.,
\[
(1-\epsilon) \cdot \dist^2(A,C) \le  \sum_{i=1}^r w_i \cdot \dist^2(S_{i*}, C) +\Delta'+\norm{A-A^{(m)}}_F^2 \leq (1+\epsilon) \cdot \dist^2(A,C).
\]
\end{theorem}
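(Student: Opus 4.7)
I would combine three ingredients: the dimensionality-reduction bound of Theorem~\ref{thm:main2}, the rotational reduction from Corollary~\ref{cordanny} and Lemma~\ref{rotationlemma}, and the hypothesized $(\eps/8)$-coreset property for $A^{(m)}$ inside $L$.

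First I would fix an arbitrary $C\in\mathcal{C}$ and apply Theorem~\ref{thm:main2} with parameter $\eps/2$. The hypothesis $m\ge\lceil 8j/(\eps/2)^2\rceil-1$ is guaranteed by the choice $m\ge j+\lceil 32j/\eps^2\rceil-1$, and $C$ is contained in some $j$-dimensional subspace by assumption, so
\[
\bigl|\bigl(\dist^2(A^{(m)},C)+\norm{A-A^{(m)}}_F^2\bigr)-\dist^2(A,C)\bigr|\le\tfrac{\eps}{2}\cdot\dist^2(A,C). \qquad (\ast)
\]

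Next I would reduce the query $C$ to one that lies inside $L$. Let $R$ be the row span of $A^{(m)}$, of dimension $r\le m$; by hypothesis $R\subseteq L$ and $\dim L\ge r+j$. Corollary~\ref{cordanny}, built on Lemma~\ref{rotationlemma}, then produces an orthogonal matrix $U$ that fixes every vector of $R$ and satisfies $C':=U(C)\subseteq L$. Closure of $\mathcal{C}$ under rotations and reflections gives $C'\in\mathcal{C}$; since $U$ is an isometry fixing every row of $A^{(m)}$, we get $\dist^2(A^{(m)},C)=\dist^2(A^{(m)},C')$. For the SVD-style coreset constructions used throughout this paper each row $S_{i*}$ in fact lies inside $R$, so $U$ also fixes every $S_{i*}$, yielding $\dist^2(S_{i*},C)=\dist^2(S_{i*},C')$ for every $i$.

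Applying the $(\eps/8)$-coreset hypothesis to the valid query $C'\in\mathcal{C}$, $C'\subseteq L$, gives
\[
(1-\tfrac{\eps}{8})\dist^2(A^{(m)},C')\le\sum_i w_i\dist^2(S_{i*},C')+\Delta'\le(1+\tfrac{\eps}{8})\dist^2(A^{(m)},C'),
\]
and substituting the two equalities from the rotation step yields the same chain with $C$ in place of $C'$. Setting $\Delta:=\Delta'+\norm{A-A^{(m)}}_F^2$ and combining with $(\ast)$ produces an overall multiplicative error of at most $(1\pm\eps/2)(1\pm\eps/8)$, which is contained in $[1-\eps,1+\eps]$ for $\eps\in(0,1]$. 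The main obstacle is the interplay between the query-dependent rotation $U$ and the coreset rows: the dimension budget $\dim L\ge m+j$ is chosen precisely so that $U$ can simultaneously fix $R$ (and, via the SVD-based coreset, the rows of $S$) while moving the $j$-dimensional query subspace into $L$; once that alignment is in place, the remainder of the argument is routine chaining.
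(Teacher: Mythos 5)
Your proposal is correct and follows the same skeleton as the paper's proof: Theorem~\ref{thm:main2} with parameter $\eps/2$, the rotation argument of Lemma~\ref{rotationlemma}/Corollary~\ref{cordanny} to move the query into $L$, and then the $(\eps/8)$-coreset hypothesis. The one genuine difference is in how the two error bounds are combined. The paper adds the errors via the triangle inequality, obtaining $\tfrac{\eps}{2}\dist^2(A,C)+\tfrac{\eps}{8}\dist^2(A^{(m)},C)$, and then needs the weak triangle inequality (Corollary~\ref{lem:movement:simple} with $\eps=1$) to establish $\dist^2(A^{(m)},C)\le 4\dist^2(A,C)$ before the two terms can be merged. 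You instead chain the bounds multiplicatively, using that $\norm{A-A^{(m)}}_F^2\ge 0$ can be absorbed into the factor $(1\pm\eps/8)$, so that the final error is $(1\pm\eps/2)(1\pm\eps/8)\subseteq[1-\eps,1+\eps]$. This is a cleaner route: it dispenses with Corollary~\ref{lem:movement:simple} entirely and in fact yields a slightly tighter constant ($\tfrac{11}{16}\eps$ rather than $\eps$). I will also note that you correctly flag a point the paper's proof leaves implicit: the rotation $U$ fixes only the row span $R$ of $A^{(m)}$, not all of $L$, so the identity $\dist^2(S_{i*},C)=\dist^2(S_{i*},U(C))$ requires the coreset rows to lie in $R$ (which holds in all applications in the paper, where $S$ is a subset of the rows of $A^{(m)}$, but is not forced by the theorem's hypotheses as literally stated). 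Your explicit acknowledgment of this assumption is a genuine improvement in rigor over the paper's "by the discussion prior to the theorem".
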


\begin{proof}
We first apply Theorem \ref{thm:main2} with $\epsilon$ replaced by $\epsilon/2$ to obtain for every $C\in \mathcal C$:
$$
\left|  \left(\dist^2(A^{(m)},C) + \norm{A-A^{(m)}}_F^2 \right)- \dist^2(A,C) \right| \le \frac{\eps}{2} \cdot \dist^2(A,C).
$$
Now let $(S,\Delta',w)$ be an $(\eps/8)$-coreset for the $\CC$-clustering problem in the subspace $L$ and with input set $A^{(m)}$.
By Corollary \ref{cordanny} and the discussion prior to Theorem \ref{thm:dim} we know that the coreset
property holds for the whole $\REAL^d$ (rather than just $L$) and so we obtain for every $C\in \mathcal C$:
$$
\big|\sum_{i=1}^r w_i \cdot \dist^2(S_{i*}, C)+\Delta' -\dist^2(A^{(m)},C)\big|   \le \frac{\epsilon}{8} \cdot \dist^2(A^{(m)},C).
$$
By the triangle inequality,
\begin{equation}\label{blast}
\begin{split}
&\left| \left(\sum_{i=1}^r w_i \cdot \dist^2(S_{i*}, C)+\Delta'\right) -\left( \dist^2(A,C) -\norm{A-A^{(m)}}_F^2 \right)\right|\\
&\leq \left|\left(\sum_{i=1}^r w_i \cdot \dist^2(S_{i*}, C)+\Delta'\right) -\dist^2(A^{(m)},C)\right|\\
&\quad+\left|  \dist^2(A^{(m)},C) -\left( \dist^2(A,C) -\norm{A-A^{(m)}}_F^2 \right) \right|\\
&\le \frac{\eps}{2} \cdot \dist^2(A,C) + \frac{\eps}{8} \cdot \dist^2(A^{(m)},C).
\end{split}
\end{equation}
Using $\eps=1$ in Corollary~\ref{lem:movement:simple} we obtain
\[
|\dist^2(A,C)-\dist^2(A^{(m)},C)|\leq \dist^2(A,C) + 2 \cdot \norm{A-A_m}^2,
\]
so
\begin{equation}\label{alast}
\dist^2(A^{(m)},C)\leq 2 \cdot \dist^2(A,C)+2 \cdot \norm{A-A^{(m)}}^2\leq 4 \cdot \dist^2(A,C),
\end{equation}
where the last inequality is since $C$ is contained in a $j$-subspace and $j\geq m$.
Plugging~\eqref{alast} in~\eqref{blast} yields
 \[
\left|  \left(\sum_{i=1}^r w_i \cdot \dist^2(S_{i*}, C) +\Delta'+ \norm{A-A^{(m)}} \right)- \dist^2(A,C) \right|
\le \eps \cdot \dist^2(A,C).
 \]
 \end{proof}


\subsection{The Sensitivity Framework \label{senssec}}


Before turning to specific results for clustering problems, we describe a framework introduced by Feldman and Langberg \cite{FL11}
that allows to compute coresets for certain optimization problems (that minimize sums of cost of input objects) that also include the
clustering problems considered in this paper. The framework is based on a non-uniform sampling technique. We sample
points with different probabilities in such a way that points that have a high influence on the optimization problem are sampled with higher
probability to make sure that the sample
contains the important points. At the same time, in order to keep the sample unbiased, the sample points are weighted reciprocal to their
sampling probability. In order to analyze the quality of this sampling process Feldman and Langberg \cite{FL11} establish a reduction to
$(\eta ,\eps)$-approximations of a certain range space.

The first related sampling approach in the area of coresets for clustering problems was by Chen \cite{C09} who partitions the input point set in a way
that sampling from each set uniformly results in a coreset. The partitioning is based on a constant bicriteria approximation (the idea to use bicriteria approximations as a basis for coreset constructions goes back to Har-Peled  and Mazumdar~\cite{HM04}, but their work did not involve sampling), i.e.,
we are computing a solution with $O(k)$ (instead of $k$) centers, whose cost is at most a constant times the cost of the best solution with $k$ centers.
In Chen's construction, every point is assigned to its
closest center in the bicriteria approximation. Uniform sampling is then applied to each subset of points. Since the points in the same subset have a
similar distance to their closest center, the sampling error can be charged to this contribution of the points and this is
sufficient to obtain coresets of small size.

A different way is to directly base the sampling probabilities on the distances to the centers from the bicriteria approximation.
This idea is used by Arthur and Vassilvitskii~\cite{arthur2007k} for computing an approximation for the $k$-means problem, and it is used for the
construction of (weak) coresets by Feldman, Monemizadeh and Sohler~\cite{FMS07}. The latter construction uses a set of centers that provides
an approximative solution and distinguishes between points that are close to a center and points that are further away from their closest center.
Uniform sampling is used for the close points. For the other points, the probability is based on the cost of the points. In order to keep the sample
unbiased the sample points are weighted with $1/p$ where $p$ is the sampling probability.

Instead of sampling from two distributions, Langberg and Schulman~\cite{LS10} and Feldman and Langberg~\cite{FL11} define a single distribution,
which is a mixture of the two distributions in \cite{FMS07}. For the analysis they define the notion of \emph{sensitivity} of
points which is an even more direct way of measuring the importance of a point. Their work is not restricted to the $k$-means problem but works for a
large class of optimization problems. We review their technique in the following. The shape fitting framework is to describe problems of the following form:
We are given a set of input objects $F$ and a set of candidate shapes $\ucc{Q}$. Each input object is described by a function $f: \ucc{Q} \rightarrow \RR^{\ge 0}$
that encodes how well each
candidate shape fits the input object (the smaller the value the better the fit). Let $F$ be the set of functions corresponding to the input objects.
Then the shape fitting problem can be posted as minimizing $\sum_{f\in F} f(Q)$ over all $Q \in \ucc{Q}$.
As an example, let us consider the linear $j$-subspace approximation problem for a $d$-dimensional point set that is represented by the rows
of a matrix $A \in \RR^{n \times d}$. In this example, the set $\ucc{Q}$ is the set of all linear $j$-dimensional
linear subspaces. For each input point $A_{i*}$, we define a function $f_{A_{i\ast}} : \ucc{Q} \to \RR^{\ge 0}$ by setting
 $f_{A_{i\ast}} (C)= \dist^2(A_{i*}, C)$ for all $j$-dimensional linear subspaces $C$.
This way, the problem can be described as a shape fitting problem. More generally, for the affine $j$-subspace $k$-clustering problem
a shape $Q\in \ucc{Q}$ is the union of all sets of $k$ affine subspaces of dimenson $j$.

The sensitivity of a function is now defined as the maximum share that it can contribute to the sum of the function values for any given shape. The total sensitivity of the input objects with respect to the shape fitting problem is the sum of the sensitivities over all $f \in F$. We remark that the functions will be weighted later on. However, a weight will simply encode a multiplicity of a point and so we will first present the framework for unweighted sets.

\begin{definition}[Sensitivity,~\cite{LS10,FL11}]
Let $F$ be a finite set of functions, where each function $f\in F$ maps every item in $\ucc{Q}$ to a non-negative number in $\RR^{\ge 0}$. The \emph{sensitivity} $\sigma(f)$ of $f$ is defined as
\[
\sigma(f) := \sup \frac{f(Q)}{\sum_{h \in F} h(Q)},
\]
where the $\sup$ is over all $Q \in \ucc{Q}$ with $\sum_{h \in F} h(Q)>0$ (if the set is empty we define $ \sigma(f):= 0$).
The total sensitivity of $F$ is $\mathfrak{S}(F) := \sum_{f \in F} \sigma(f)$.
\end{definition}
We remark that a function with sensitivity $0$ does not contribute to any solution of the problem and can be removed from the input.
Thus, in the following we will assume that no such functions exist.

Notice that sensitivity is a measure of the influence of a function (describing an input object) with respect to the cost function of the shape fitting
optimization problem. If a point has a low sensitivity, then there is no set of shapes to which cost the object contributes significantly. In contract,
if a function has high sensitivity then the object is important for the shape fitting problem.
For example, if in the $k$-means clustering problem there is one point that is much further away from the cluster centers than all other points then it
contributes significantly to the cost function and we will most likely not be able to approximate the cost if we do not sample this point.

How can we exploit sensitivity in the context of random sampling? The most simple sampling approach (that does not exploit sensitivity)
is to sample a function $f^*$ uniformly at random and assign a weight $n$ to the point (where $|F|=n)$.
For each fixed $Q\in \ucc{Q}$ this gives an unbiased estimator, i.e., the expected value of $n \cdot f^*(Q)$ is $\sum_{f\in F} f(Q)$.
Similarly, if we would like to sample $s$ points we can assign a weight $n/s$ to any of them to obtain an unbiased estimator.
The problem with uniform sampling is that it may miss points that are of high influence to the cost of the shape fitting problem
(for example, a point far away from the rest in a clustering problem). This also leads to a high variance of uniform sampling.

The definition of sensitivity allows us to reduce the variance by defining the probabilities based on the sensitivity. The basic idea is very simple: If a function
contributes significantly to the cost of some shape, then we need to sample it with higher probability. This is where the sensitivity
comes into play. Since the sensitivity measures the maximum influence a function $f$ has on any shape, we can sample $f$
with probability $\sigma(f) / \mathfrak{S}(F)$. This way we make sure that we sample points that have a strong impact on the cost function
for some $Q \in \ucc{Q}$ with higher probability. In order to ensure that the sample remains unbiased, we rescale a function $f$
that is sampled with probability $\sigma(f) / \mathfrak{S}(F)$ with a scalar  $\mathfrak{S}(F)/\sigma(f)$ and call the rescaled function $f'$
and let $F'$ be the set of rescaled functions from $F$.
This way, we have for every fixed $Q \in \ucc{Q}$ that the expected contribution of $f'$ is $\sum_{f\in F} \frac{\sigma(f)}{\mathfrak{S}(F)}\cdot
\frac{\mathfrak{S}(F)}{\sigma(f)} \cdot f(Q) = \sum_{f\in F} f(Q)$, i.e., $f'$ is an unbiased estimator for the cost of $Q$.
The rescaling of the functions has the effect that the ratio between the maximum contribution a function has on a shape and
the average contribution can be bounded in terms of the total sensitivity, i.e., if the total sensitivity is small then all functions
contribute roughly the same to any shape. This will also result in a reduced variance.

Now the main contribution of the work of Feldman and Langberg \cite{FL11} is to establish a connection to the theory of range
spaces and VC-dimension. In order to understand this connection we rephrase the non-uniform sampling process as described above
by a uniform sampling process. We remark that this uniform sampling process is only used for the analysis of the algorithm and must not
be carried out by the sampling algorithm.
The reduction is as follows. For some (large) value $n^*$, we replace each rescaled function $f'\in F'$ by $n^* \cdot \sigma(f)$
copies of $f'$ (for the exposition at this place let us assume that $n^* \cdot \sigma(f)$ is integral).
This will result in a new set $F_{\text{new}}$ of $n^* \cdot \mathfrak{S}(F)$ functions.
We observe that sampling uniformly from $F_{\text{new}}$ is equivalent to sampling a function $f\in F$ with probability
$\sigma(f) / \mathfrak{S}(F)$ and rescaling it by $\mathfrak{S}(F)/\sigma(f)$.
Thus, this is again an unbiased estimator for $F$ (i.e., $\sum_{f'\in F_{\text{new}}} \frac{1}{|F_{\text{new}}|} f' = \sum_{f\in F} f(Q)$ holds.).
Also notice that
$\frac{1}{n^* \cdot \mathfrak{S}(F)} \cdot \sum_{f'\in F_{\text{new}}} f'(Q) = \sum_{f\in F} f(Q)$, which means that relative error bounds for $\sum_{f'\in F_{\text{new}}} f'(Q)$ carry over to error bounds for $\sum_{f\in F} f(Q)$.

We further observe that for any fixed $Q\in \ucc{Q}$ and any function $f' \in F_{\text{new}}$ that corresponds to $f \in F$
we have that $\frac{f'(Q)}{\sum_{g'\in F_{\text{new}}} g'(Q)}
\le \sigma(f) \cdot \frac{1}{n^* \cdot \mathfrak{S}(F)} \cdot \frac{\mathfrak{S}(F)}{\sigma(f)} = \frac{1}{n^*}$. Furthermore, the average value of
$\frac{f'(Q)}{\sum_{g'\in F_{\text{new}}}g'(Q)}$ is $\frac{1}{n^* \cdot \mathfrak{S}(F)}$. Thus, the maximum contribution of an $f'$ only slightly deviates
from its average contribution.


Now we can discretize the distance from any $Q$ to the input points into ranges according to their relative
distance from $Q$. If we know the number of points inside
these ranges approximately, then we also know an approximation of $\sum_{f\in F} f(Q)$.

In order to analyze this, Feldman and Langberg \cite{FL11} establish a connection to the theory of range spaces and the Vapnik-Chervonenkis
dimension (VC dimension). In our exposition we will mostly follow a more recent work by Braverman et al. \cite{braverman2016new} that obtains
stronger bounds.

\begin{definition}
\label{dimfunctions}
Let $F$ be a finite set of functions from a set $\ucc{Q}$ to $\image$.
For every $Q\in \ucc{Q}$ and $r \geq 0$, let
$$
\range(F,Q,r)=\{f\in F \mid  f(Q) \ge r \}.
$$
Let
$$
\ranges(F)= \{\range(F, Q, r)\mid Q\in \ucc{Q}, r \geq 0\}.
$$
Finally, let $\mathfrak R_{\ucc{Q},F}:= \big(F,\ranges(F)\big)$ be the \emph{range space} induced by $\ucc{Q}$ and $F$.
\end{definition}

In our analysis we will be interested in the VC-dimension of the range space $\mathfrak R_{\ucc{Q},F_{\text{new}}}$.
We recall that $F_{\text{new}}$ consists of (possibly multiply) copies of rescaled functions from the set $F$.
We further observe that multiple copies of a function do not affect the VC-dimension. Therefore, we will be interested
in the VC-dimension of the range space $\mathfrak R_{\ucc{Q},F^*}$ where $F^*$ is obtained from $F$ by rescaling each
function in $F$ by a non-negative scalar.

Finally, we remark that the sensitivity of a function is typically unknown. Therefore, the idea is to show that it suffices to
be able to compute an \emph{upper bound} on the sensitivity. Such an upper bound can be obtained in different ways. For example,
for the $k$-means clustering problem, such bounds can be obtained from a constant (bi-criteria) approximation.

In what follows we will prove a variant of a Theorem from \cite{braverman2016new}.
The difference is that in our version we guarantee that the weight of a coreset point is at least its weight in the input set, which will be useful in the context of streaming when the sensitivity is a function of the number of input points.
The bound on the weight follows by including all points of very high sensitivity approximation value directly into the coreset.

Observe that in the context of the affine $j$-subspace $k$-clustering problem, the sum of the weights of a coreset for an unweighted $n$ point set cannot exceed $(1+\eps) n$ (since we can put the centers to infinity).\footnote{If for a different problem it is not possible to directly obtain an upper bound on the weights (for example, in the case of linear subspaces), one can add an artificial set of centers that enforces the bound on the weights in a similar way as in the affine case. However, we will not need this argument when we apply Theorem~\ref{sensitivitysamplingscheme}.}
Thus, when we apply Theorem~\ref{sensitivitysamplingscheme} later on, we know that the weight of each point in the coreset is at least its weight in the input set, and that the total weight is not very large.

\begin{theorem}[Variant of a Theorem in \cite{braverman2016new}]\label{sensitivitysamplingscheme}
Let $F$ be a finite weighted set of functions from a set $\ucc{Q}$ to $[0,\infty)$, with weights $w_f > 0$ for every $f\in F$, and let $\delta,\eps\in(0,1/2)$.
Let $\tilde{\sigma}(f)\geq \max\{\frac{1}{|F|},\sigma(f)\}$ for every $f\in F$, and $\mathfrak{\tilde{S}}(F)=\sum_{f\in F}\tilde{\sigma}(f)$.
Given $\tilde{\sigma}$, one can compute in time $O(|F|)$ a set $S \subset F$ of
$$
O\left(\frac{\mathfrak{\tilde{S}}(F)}{\eps^2} \cdot \left(d \log \mathfrak{\tilde{S}}(F)+\log \frac{1}{\delta}\right)\right)
$$
 weighted functions such that with probability $1-\delta$ we have for all $Q \in \ucc Q$ simultaneously
\[
(1-\eps)\sum_{f\in F}w_f \cdot f(Q)\leq \sum_{f\in S}u_f \cdot f(Q)\leq (1+\eps)\sum_{f\in F}w_f \cdot f(Q),
\]
where $u_f \ge w_f$ denotes the weight of a function $f \in S$,
and where $d$ is an upper bound on the VC-dimension of every range space $\mathfrak R_{\ucc{Q},F^*}$ induced by $F^*$ and $Q$
that can be obtained by defining $F^*$ to be the set of functions from $F$ where each function is scaled by a separate non-negative scalar.
\end{theorem}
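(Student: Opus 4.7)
The plan is to follow the sensitivity-sampling strategy of \cite{LS10,FL11,braverman2016new} outlined in Section~\ref{senssec}, but with one modification tailored to our stronger ``$u_f\ge w_f$'' requirement: heavy functions are separated out and placed into the coreset deterministically at their original weight, while the remaining (light) functions are handled by importance sampling whose analysis is then reduced to an $(\eta,\eps)$-approximation problem on the range space $\mathfrak R_{\ucc{Q},F^*}$.

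First, fix the target sample size $m$ as in the theorem statement and the threshold $\tau=\mathfrak{\tilde S}(F)/m$. Place every $f$ with $\tilde\sigma(f)\ge\tau$ directly into $S$ with $u_f=w_f$; there are at most $\mathfrak{\tilde S}(F)/\tau=m$ such functions, so our budget is respected, and for them $u_f\ge w_f$ holds trivially. For each remaining (light) function draw $m$ i.i.d.\ samples with probability $p_f=\tilde\sigma(f)/\mathfrak{\tilde S}(F)$, giving each drawn copy weight $w_f\mathfrak{\tilde S}(F)/(m\tilde\sigma(f))$. Since $\tilde\sigma(f)<\tau=\mathfrak{\tilde S}(F)/m$, the contribution of even a single draw exceeds $w_f$, so the accumulated weight $u_f$ of any sampled light function is $\ge w_f$. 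This estimator is unbiased by construction.

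Next, to analyze the sampling error uniformly over $Q\in\ucc Q$, we convert the non-uniform sampling into a uniform one on a blown-up multiset, exactly as in the discussion preceding the theorem: for a large $n^\ast$ replace each light $f$ by $\lceil n^\ast\tilde\sigma(f)\rceil$ copies of the rescaled function $f':=\frac{w_f\mathfrak{\tilde S}(F)}{\tilde\sigma(f)}\cdot f$. Uniform sampling from this multiset $F_{\mathrm{new}}$ is equivalent (as $n^\ast\to\infty$) to the importance-sampling above, and by construction every $f'\in F_{\mathrm{new}}$ satisfies $f'(Q)/\sum_{g'\in F_{\mathrm{new}}}g'(Q)\le 1/n^\ast$ for all $Q$ -- this boundedness is precisely what removing the heavy functions in the previous step bought us. Using the layer-cake identity
\[
\sum_{g'\in F_{\mathrm{new}}}g'(Q)\;=\;\int_0^\infty\bigl|\range(F_{\mathrm{new}},Q,r)\bigr|\,dr,
\]
the problem of $(1\pm\eps)$-estimating the weighted sum reduces to obtaining an $(\eta,\eps)$-approximation of the range space $\mathfrak R_{\ucc Q,F_{\mathrm{new}}}$ for $\eta=\Theta(1/\mathfrak{\tilde S}(F))$. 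Because duplicating a function does not change the VC-dimension of $\mathfrak R_{\ucc Q,\cdot}$, the VC-dimension of $\mathfrak R_{\ucc Q,F_{\mathrm{new}}}$ is at most $d$, and plugging $\eta$, $\eps$, $\delta$ into Theorem~\ref{approximation} yields the claimed sample size $O\!\bigl(\mathfrak{\tilde S}(F)\eps^{-2}(d\log\mathfrak{\tilde S}(F)+\log(1/\delta))\bigr)$.

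The main obstacle is the last reduction. Theorem~\ref{approximation} only guarantees a relative error on ranges containing at least an $\eta$-fraction of $F_{\mathrm{new}}$, and absolute error $\eps\eta$ on thin ranges; one must argue that the total mass contained in thin ranges is $O(\eps)$ times the objective, so that the integral of range sizes is approximated multiplicatively. This is exactly where the bound $f'(Q)/\sum g'(Q)\le 1/n^\ast$ is used, which in turn required the deterministic inclusion of heavy functions in the first step -- without that split, a single function could carry arbitrarily much of the mass and the additive term from Theorem~\ref{approximation} would be unabsorbable. Once this accounting is set up the final estimate follows by taking a union bound inside each of the at most $\mathfrak{\tilde S}(F)/\eta$ relevant range layers and combining with the VC sample bound.
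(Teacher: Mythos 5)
Your proposal follows essentially the same route as the paper's proof: deterministic inclusion of the high-sensitivity functions at their original weights, importance sampling of the light functions, reduction to uniform sampling over a blown-up multiset $F_{\mathrm{new}}$, and the layer-cake/$(\eta,\eps)$-approximation argument in which the per-copy bound $f'(Q)/\sum_{g'}g'(Q)\le 1/n^\ast$ absorbs the additive error on thin ranges. One small correction: that $1/n^\ast$ bound is supplied by the sensitivity-based duplication itself (each function is split into $n^\ast\tilde\sigma(f)$ down-scaled copies and $\tilde\sigma(f)$ upper-bounds its contribution ratio), not by the removal of heavy functions --- the heavy/light split is needed only to guarantee $u_f\ge w_f$, as you correctly observe earlier in your second paragraph.
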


\begin{proof}
Our analysis follows the outline sketched in the previous paragraphs, but will be extended to non-negatively weighted sets of functions.
The point weights will be interpreted as multiplicities. If each function $f\in F$ has a weight $w_f > 0$, the definition of sensitivity
becomes
\[
\sigma(f) := \sup \frac{w_f \cdot f(Q)}{\sum_{h \in F} w_{h} \cdot h(Q)}.
\]
Since the sensitivities may be hard to compute we will sample according to a function $\tilde {\sigma}$ that provides an upper bound on the sensitivity and we will use $\tilde{\mathfrak{S}}(F) = \sum_{f\in F} \tilde{\sigma}(f)$, i.e., our plan is to sample a function $f$ with
probability $\tilde \sigma(f) / \tilde{\mathfrak{S}}(F) $ and weight the sampled function
 with $w_f \cdot \tilde{\mathfrak{S}}(F) / \tilde \sigma(f)$. More precisely, since we want to sample $s$ functions
 i.i.d. from our probability distribution for an $s$ defined below, the weight will become $\frac{1}{s} \cdot w_f \cdot \tilde{\mathfrak{S}}(F) / \tilde \sigma(f)$ to keep the sample unbiased. We then analyze the quality by using the reduction to uniform sampling and applying Theorem \ref{approximation} to get the desired approximation.

In the following we will describe this analysis in more detail and in this process deal with a technicality that arises when we would like to ensure that the weight of a coreset point is at least its input weight.
 Namely, if we would like to sample $s$ functions i.i.d. according to our sampling distribution and there exists a
function in the input set with $\tilde \sigma(f) / \tilde{\mathfrak{S}}(F) > 1/s$ then $f$ will receive a weight $\frac{1}{s} \cdot w_f
\cdot \tilde{\mathfrak{S}}(F) / \tilde \sigma(f) < w_f$, which is a case that we would like to avoid (for example, in the streaming case this may have
sometimes undesirable consequences).\footnote{Observe that in this case the expected number of copies of $f$ in the sample is bigger than $1$, so they could typically be combined to a single point with weight at least $1$. However, there is also some probability that this is not possible, which is why we deal with the functions that satisfy $\tilde \sigma(f) / \tilde{\mathfrak{S}}(F) > 1/s$ explicitly.}
In order to deal with this issue, we simply remove all functions with $\sigma(f) / \mathfrak{S}(F) > 1/s$ and put a copy of this weighted function into the coreset.
We then sample from the remaining functions.
We only need to take care of the fact that removing functions from the input set also affects the total sensitivity.

Let us start with a detailed description. Let $s \ge \frac{c}{4\eta\eps^2} \cdot \big(d \log \frac{1}{\eta}+ \log 2/\delta \big)$ where the constant
$c$ is from Theorem \ref{approximation}, $\eta = 1/\tilde{\mathfrak{S}}(F) $ and $d$ is as in the description of the theorem.
We define $S_1 = \{f\in F \mid \tilde \sigma(f)/\tilde{\mathfrak{S}}(F) > 1/s\}$.  Clearly, we have $|S_1|< s$. The functions in $S_1$ are put
in the final coreset using their original weights. Let us define $F_1 = F \setminus S_1$ to be the set of remaining functions; it remains to show
that we can approximate the cost of $F_1$.
For all functions in $F_1$ we define $\tilde \sigma_1(f)$ such that $\tilde \sigma_1(f) \ge \tilde \sigma(f)$, $\tilde \sigma_1(f) \le \tilde{\mathfrak{S}}(F)/s$ and
$\tilde{\mathfrak{S}_1}(F_1) = \tilde {\mathfrak{S}}(F)$, where $\tilde{\mathfrak{S}_1}(F_1) = \sum_{f\ \in F_1} \tilde \sigma_1(f)$. This
ensures that in the set $F_1$ there are no functions with $\tilde \sigma_1(f) / \tilde{\mathfrak{S}_1}(F_1) > 1/s$ and so each sampled function $f$
will receive a weight of at least $w_f$. We remark that the choice of
$\tilde \sigma_1$ does not necessarily satisfy the sensitivity definition for the set $F_1$.
However, we have $\tilde \sigma_1(f) \ge \frac{w_f f(Q)}{\sum_{h\in F} w_h h(Q)}$ for all $Q \in \ucc Q$ and $f \in F_1$.

For the remainder of the analysis, it will be convenient to move from weighted functions to unweighted functions. This can be easily done by
replacing each weighted function $f$ with weight $w_f$ by a function $g$ with $g(Q) = w_f \cdot f(Q)$ for all $Q$ to obtain a set $F_2$.
Note that this does not affect the sensitivity and so we can define $\tilde \sigma_2(g) = \tilde \sigma_1(f)$ where $g \in F_2$ is the unweighted
function obtained from $f\in F_1$.
This implies that
$\tilde \sigma_2(g) \ge \frac{g(Q)}{\sum_{h\in F} w_h h(Q)}$ for all $Q \in \ucc Q$ and $g \in F_2$.
We then define $\tilde{\mathfrak{S}_2}(F_2) = \sum_{g\ \in F_2} \tilde \sigma_2(g)$.

For the sake of analysis we will now apply our reduction to uniform sampling to the set $F_2$.
We replace every function $g \in F_2$ by $\lceil n^* \tilde \sigma_2(g) \rceil$ copies and we rescale each function $g\in F_2$
by $\frac{1}{\lceil n^* \tilde \sigma_2(g) \rceil}$ and call the resulting set of functions $F_{\text{new}}$.
We observe that this scaling is different from what has been discussed in the previous paragraphs. This new scaling will make some technical arguments
in proof a bit simpler and we rescale the sampled functions in the end one more time.
 We can make $n^*$ arbitrarily large, which makes the error induced by the rounding arbitrarily small. We assume that $n^*$
is large enough that the probability that the reduction behaves different from the original sampling process is at most $\delta/2$. In order to keep
the presentation simple, we will assume in the following that all $n^* \tilde \sigma_2(g)$ are integral and so we can argue as explained before.
We observe that the VC-dimension of the range space $\mathfrak R_{\ucc{Q},F_{\text{new}}}$ is at most $d$.
We also observe that $\sum_{f\in F_{\text{new}}} f(Q) = \sum_{g\in F_2} g(Q)$.
We recall that sampling uniformly from $F_{\text{new}}$ is equivalent to sampling a scaled copy of $g \in F_2$ with probability
$\tilde \sigma_2{g} / \tilde{\mathfrak{S}_2}$.

It now follows from Theorem \ref{approximation} that an i.i.d. sample of $s$ functions from the uniform distribution over $F_{\text{new}}$ is an $(\eta,\epsilon/2)$-approximation for the range space $\mathfrak R_{\ucc{Q},F_{\text{new}}}$ with probability at least $1-\delta/2$. We call this sample $S$. In the following, we show that $S$ (suitably scaled) together with $S_1$ is a coreset for $F$. In order to do so, we show that $S$ approximates the cost of every $Q \in \ucc Q$
for $F_2$ (and so for $F_1$).
For this purpose let us fix an arbitrary $Q \in \ucc{Q}$ and let us assume that $S$ is indeed an $(\eta,\epsilon/2)$-approximation.
We would like to estimate $\sum_{g \in F_2} g(Q) = \sum_{f \in F_{\text new}} f(Q)$ upto small error.
First we observe that
\begin{align*}
\sum_{f \in F_{\text new}} f(Q) = \sum_{f\in F_{\text new}} \int_{r=0}^{\infty} \mathbf{1}(f(Q) \ge r) \text{d} r
&= \int_{r=0}^{\infty} \sum_{f\in F_{\text new}} \mathbf{1}(f(Q) \ge r) \text{d} r\\
&= \int_{r=0}^{\infty} |\range(F_{\text new},Q,r)| \text{d} r
\end{align*}
where $\mathbf {1}(f(Q)\ge r)$ is the indicator function of the event $f(Q)\ge r$.
If there are more than $\eta \cdot |F_{\text{new}}|$ functions with $f(Q) \ge r$
then our approximation provides a relative error.
Let $I_1 \subset \RR^{\ge 0}$ be the set of all $r\ge 0$ with $\range(F_{\text new},Q,r) \ge \eta \cdot |F_{\text{new}}|$.
Then we know that $\big|\frac{|F_{\text new}|}{|S|}|\range(S,Q,r)|-|\range(F_{\text new},Q,r)|\big| \le \frac{\epsilon}{2}\cdot|\range(F_{\text new},Q,r)|$ for all $r \in I_1$.

Let $I_2 = \RR^{\ge 0} \backslash I_1$ contain the values for $r$ for which $\range(F_{\text new},Q,r) < \eta \cdot |F_{\text{new}}|$.
For these, we obtain an additive error of $\frac{\epsilon \cdot \eta}{2}  |F_{\text{new}}|$.
Let $r_{\max}$ be the maximum value of $f(Q)$ for any $f \in F_{\text{new}}$. For any $r \ge r_{\max}$,
$\range(F_{\text new},Q,r)$ will contain all functions of $F_{\text{new}}$ and so $r\ge r_{\max}$ implies
$r \notin I_2$. This implies
\begin{align*}
\int_{r \in I_2} \Big|\frac{|F_{\text new}|}{|S|}  |\range(S,Q,r)|-|\range(F_{\text new},Q,r)|\Big|
 &\le \int_{r=0}^{r_{\max}} \frac{\epsilon\cdot \eta}{2} \cdot |F_{\text{new}}|\\
 &= r_{\max} \cdot \frac{\epsilon\cdot \eta}{2} \cdot |F_{\text{new}}|.
\end{align*}
In order to charge this error, consider $f \in F_{\text{new}}$ with corresponding $g \in F_2$. We know that
$$
\frac{g(Q)}{{\sum_{h\in F}}w_h \cdot h(Q)} \le \tilde \sigma_2(g)
$$
and that $g$ was replaced by $n^* \tilde \sigma_2(g)$ copies $f$ in $F_{\text{new}}$ scaled by $1/(n^* \tilde \sigma_2(g))$.
Hence,
$$
f(Q) =  \frac{1}{n^* \cdot \tilde \sigma_2(g)} \cdot g(Q)
$$
and so
$$
\frac{f(Q)}{{\sum_{h\in F}}w_h \cdot h(Q)} \le \frac{1}{n^*}.
$$

%
%
%
This implies $r_{\max} \le \frac{1}{n^*} \sum_{h\in F_{\text new}}w_h h(Q)$.
Combining both facts and the choice of $\eta$, we obtain that the error is bounded by
\begin{align*}
r_{\max} \cdot \frac{\epsilon\cdot \eta}{2} |F_{\text{new}}|
\le&
\frac{1}{n^*} \sum_{h\in F} w_q h(Q) \cdot \frac{\epsilon\cdot \eta}{2} |F_{\text{new}}|\\
\le & \frac{1}{n^*} \sum_{h\in F_{\text new}} w_q h(Q) \cdot \frac{\epsilon\cdot \eta}{2} n^* {\tilde{\mathfrak{S}_2}}(F_2)
= \frac{\epsilon}{2}\sum_{h \in F} w_h h(Q),
\end{align*}
where we use that $|F_{\text{new}}| = n^* \cdot \tilde{\mathfrak{S}_2}(F_2)$ and
 $\tilde{\mathfrak{S}_2}(F_2) = \tilde{\mathfrak{S}_1}(F_1) = \tilde{\mathfrak{S}}(F) = 1/\eta$.
Combining the two error bounds and using $\sum_{f\in F_{\text{new}}} f(Q) \le \sum{h\in F} w_h h(Q)$ we obtain
\begin{align*}
 &\int_{r=0}^{\infty} \Big| \frac{|F_{\text new}|}{|S|}|\range(S,Q,r)|-|\range(F_{\text new},Q,r)| \Big| \\
\le & \int_{r \in I_1} \frac{\epsilon}{2}\cdot|\range(F_{\text new},Q,r)|
 + \frac{\epsilon}{2}\sum_{h \in F} w_h h(Q)
\le \epsilon \cdot \sum_{h \in F} w_h f(Q)\\
\end{align*}
This implies
$$
\Big| \frac{|F_{\text new}|}{|S|} \cdot \sum_{f \in S} f(Q) - \sum_{f \in F_{\text new}} f(Q)\Big| \le \epsilon \cdot \sum_{h \in F} w_h h(Q)\\
$$
Thus, when we rescale the functions inf $S$ by $\frac{|F_{\text new}|}{|S|}$ to obtain a new set of function $S'$
we obtain
\begin{align*}
\Big| \sum_{g \in S'} g(Q) - \sum_{g \in F_{2}} g(Q)\Big| \le \epsilon \cdot \sum_{h \in F} w_h h(Q).
\end{align*}
We observe that the functions in $S'$ correspond to functions $g$ in $F_2$ rescaled by $\tilde{\mathfrak{S}_2} / \tilde \sigma_2(g)$, which in turn
corresponds to function $f\in F$ with weight $w_f \cdot \tilde {\mathfrak{S}_2} / \tilde \sigma_2(g)$.
It follows that $S'\cup S_1$ is a coreset for $F$.
Since the non-uniform sample as well as all preprocessing steps can be implemented in $O(|F|)$ time \cite{vose1991linear}, the theorem follows.
\end{proof}

\subsection{Bounds on the VC dimension of clustering problems}
\newcommand{\sgn}{\mathrm{sgn}}

In this section we show how to obtain a bound on the VC-dimension on a range space as in the previous Theorem in the case of the affine $j$-subspace $k$-clustering problem. In order to bound this, we use a method due to Warren \cite{warren1968lower}.
We consider a weighted set of $n$ points and for every set $Q$ of $k$ affine $j$-dimensional subspaces.
Then we consider the range defined by the subset of input points whose weighted squared distance to $Q$ is at least $r$.
We show that the VC-dimension of this range space is $O(djk \log k)$.
We remark that in some previous papers a bound of $O(djk)$ has been claimed for a related range space, but we could not fully reproduce the proofs.
In what follows, $\sgn(x)$ denotes the sign of $x \in \REAL$. More precisely, $\sgn(x)=1$ if $x>0$, $\sgn(x)=-1$ if $x<0$, and $\sgn(x)=0$ otherwise.
We will use the following theorem.

\begin{theorem}[Theorem 3 in~\cite{warren1968lower}\label{rrr}]
Let $f_1,\cdots,f_m$ be real polynomials in $d^* \le m$ variables, each of degree at most $\ell \ge 1$. Then
the number of sign sequences $(\sgn f_1(x),\ldots,\sgn f_m(x))$,  $x\in\REAL^{d^*}$, that consist of $1$ and $-1$ is at most
$(4e\ell m/d^*)^{d^*}$.
\end{theorem}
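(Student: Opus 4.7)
The stated result is Warren's theorem, which the paper quotes from~\cite{warren1968lower} rather than proving, so I would appeal to the standard real-algebraic-geometry proof rather than re-deriving it from scratch. Still, here is how I would lay out the argument.

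The plan is to pass from sign sequences to a geometric quantity: the number of connected components of the complement of a real algebraic variety. For each sign pattern $\epsilon \in \{-1,+1\}^m$ realized at some $x \in \REAL^{d^*}$, the set $R_\epsilon = \{x : \sgn f_i(x) = \epsilon_i \text{ for all } i\}$ is open (since each $\epsilon_i \ne 0$) and the sets for distinct patterns are pairwise disjoint. Moreover every $R_\epsilon$ is contained in the open set $U = \REAL^{d^*} \setminus Z$, where $Z = \{x : \prod_{i=1}^m f_i(x) = 0\}$, and each $R_\epsilon$ is a union of connected components of $U$. Hence the number of $\pm 1$ sign sequences is at most the number of connected components $b_0(U)$ of $U$, and it suffices to bound $b_0(U)$.

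To bound $b_0(U)$, I would use a Morse-theoretic critical-point count on a suitable perturbation of $P(x) = \prod_{i=1}^m f_i(x)^2$. Concretely, consider $F(x) = P(x) - \delta (1 + \|x\|_2^2)^N$ for small $\delta > 0$ and large $N > \ell m$. Each bounded connected component of $U$ contains at least one local maximum of $F$, and the decay term $-\delta(1+\|x\|_2^2)^N$ forces unbounded components to contribute critical points as well; after a generic perturbation to ensure $F$ is Morse, $b_0(U)$ is at most the number of real critical points of $F$. These critical points are the real solutions of the polynomial system $\nabla F = 0$, consisting of $d^*$ polynomial equations each of degree at most roughly $2\ell m$ in $d^*$ variables. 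By a Bezout/B\'ezout-type bound one obtains at most $(2\ell m)^{d^*}$ such critical points, and a careful accounting (tracking that only $m$ polynomials of degree $\ell$ enter the product, and using Stirling's approximation on binomial factors that arise in the bookkeeping) tightens the constant to the claimed $(4e\ell m / d^*)^{d^*}$.

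The main obstacle is precisely this last tightening: the elementary Milnor--Thom bound gives $(O(\ell m))^{d^*}$ without the $1/d^*$ factor inside the base, and recovering Warren's sharper base $4e\ell m / d^*$ requires the specific perturbation scheme he uses and a careful Morse-theoretic accounting, rather than a black-box application of a connected-components bound. Handling the non-compactness of $\REAL^{d^*}$ properly (so that unbounded components are charged a critical point via the auxiliary decay term) is the delicate part, and the perturbation must simultaneously make $F$ Morse and keep the degree of $\nabla F$ equal to the target value used in the Bezout step.
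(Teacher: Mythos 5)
The paper does not prove this statement at all: it is imported verbatim as Theorem~3 of Warren's 1968 paper, so there is no internal proof to compare against, and deferring to the literature, as you do, is exactly what the authors do. Your sketch of Warren's argument is accurate in its first reduction --- each $\pm 1$ sign vector that is realized picks out a nonempty open set that is a union of connected components of $\REAL^{d^*}\setminus\{\prod_i f_i=0\}$, so the count is dominated by the number of such components --- and the component bound does ultimately rest on a critical-point count of a perturbed polynomial together with a B\'ezout-type degree bound, in the Oleinik--Petrovskii/Milnor--Thom style. The one place where your account slightly mislocates the source of the improvement is the base $4e\ell m/d^*$: applying the component bound directly to the single product polynomial of degree $\ell m$ only yields $(O(\ell m))^{d^*}$, as you note, but Warren's sharpening does not come from a cleverer perturbation of that product. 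It comes from a combinatorial decomposition in which each component of the complement is charged to a subset of at most $d^*$ of the $m$ hypersurfaces (after putting them in general position), so that the count is a sum of terms weighted by binomial coefficients $\binom{m}{i}$ for $i\le d^*$; the factor $(em/d^*)^{d^*}$ is then exactly the standard estimate $\sum_{i=0}^{d^*}\binom{m}{i}\le (em/d^*)^{d^*}$, which is the ``Stirling on binomial factors'' step you allude to. So the delicate part is the reduction to low-dimensional intersections of the zero sets, not the non-compactness handling. Since you explicitly present this as an appeal to Warren rather than a self-contained derivation, this is a mild misattribution in the narrative rather than a gap in a claimed proof.
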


\begin{corollary}[Corollary 3.1 in~\cite{warren1968lower}]
\label{rrr2}
If $\ell \ge 2$ and $m \ge 8{d^*} \log \ell$, then the number of distinct sequences as in the above theorem is less than $2^{m}$.
\end{corollary}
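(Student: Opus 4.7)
The statement is a straightforward quantitative consequence of Theorem~\ref{rrr} (the preceding theorem), so the plan is simply to substitute the upper bound from Theorem~\ref{rrr} and verify that the hypothesis $m \ge 8 d^* \log \ell$ forces it to be strictly smaller than $2^m$. That is, I would start from the bound $(4e\ell m/d^*)^{d^*}$ and reduce the desired inequality, by taking base-$2$ logarithms, to
\[
d^*\,\log_2\!\bigl(4e\ell\, m/d^*\bigr) \;<\; m.
\]

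Next, to make the calculation clean, I would introduce the shorthand $t := m/d^*$ so that the hypothesis becomes $t \ge 8 \log \ell$ and the goal becomes $\log_2(4e\ell t) < t$. Expanding the logarithm gives
\[
\log_2(4e\ell t) \;=\; \log_2(4e) + \log_2 \ell + \log_2 t,
\]
with $\log_2(4e) = 2 + \log_2 e < 3.5$. From $\ell \ge 2$ we get $\log \ell \ge 1$, hence $t \ge 8$; and from $t \ge 8 \log_2 \ell$ we get $\log_2 \ell \le t/8$. Combined with the elementary estimate $\log_2 t \le t/2$ (valid for $t \ge 4$), this yields
\[
\log_2(4e\ell t) \;\le\; 3.5 \;+\; \tfrac{t}{8} \;+\; \tfrac{t}{2} \;=\; 3.5 + \tfrac{5t}{8} \;<\; t,
\]
where the last strict inequality holds once $t > 28/3 \approx 9.33$; for the borderline range $t \in [8, 28/3]$ one checks directly that $\log_2(4e\ell t) \le 3.5 + 1 + 3 = 7.5 < 8 \le t$, using $\log_2 \ell \le 1$ there.

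The only real content is this numerical verification, and it is routine; there is no conceptual obstacle beyond making sure the constant $8$ in the hypothesis does in fact suffice to absorb $\log_2(4e)$, $\log_2\ell$, and $\log_2 t$ simultaneously with strict inequality. Since the corollary is quoted directly from~\cite{warren1968lower}, an alternative ``proof'' would simply cite Corollary~3.1 of that paper; I would mention this as well for completeness.
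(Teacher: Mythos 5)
The paper never proves this statement: it is quoted directly as Corollary~3.1 of Warren's paper, so there is no in-paper argument to compare against, and your derivation from Theorem~\ref{rrr} is the only proof on the table. It is essentially sound, with two caveats. First, everything hinges on reading $\log$ as $\log_2$. With that reading the hypothesis gives $t := m/d^* \ge 8$ and your estimate closes; with the natural logarithm it only gives $t \ge 8\ln 2 \approx 5.55$, and then the conclusion genuinely fails to follow from Theorem~\ref{rrr}'s bound (take $\ell = 2$, $d^* = 1$, $m = 6$: then $(4e\ell m/d^*)^{d^*} = 48e > 64 = 2^m$), so the base must be stated explicitly. Second, your borderline check for $t \in [8, 28/3]$ is slightly off in its numbers: $\log_2 t \le 3$ is false for $t > 8$ (the correct bound is $\log_2 t \le \log_2(28/3) < 3.23$), and $\log_2 \ell \le 1$ requires invoking the integrality of $\ell$, since $\log_2\ell \le t/8 \le 7/6$ only gives $\ell \le 2^{7/6} < 3$. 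The corrected sum $\log_2(4e) + 7/6 + 3.23 < 7.9 < 8 \le t$ still closes the case, so nothing breaks, but as written the arithmetic does not check out. In fact the case split is avoidable altogether: the function $f(t) = t - t/8 - \log_2 t - \log_2(4e)$ has $f'(t) = 7/8 - 1/(t\ln 2) > 0$ for $t \ge 2$ and $f(8) = 4 - \log_2(4e) > 0.55$, which settles all $t \ge 8$ in one line.
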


We use these results to obtain.

\begin{corollary}
\label{vc-dim}
Let $d,j,k$ be positive integers such that $j\leq d-1$. Let $\ucc{Q}_{jk}$ be the family of all sets which are the union of $k$ affine subspaces of
$\mathbb{R}^d$, each of dimension $j$. Let $P=\br{p_1,\cdots,p_n}$ be a set of $n$ points in $\REAL^d$ with weights $w:P\to[0,\infty)$. Let $F^* =\br{f_1,\cdots,f_n}$ where $f_i(Q)=w(p_i) \cdot \dist^2(p_i,Q)$ for every $i\in[n]$, $Q\in  \ucc {Q}_{jk}$. Then the dimension of the range space $\mathfrak R_{\ucc{Q}_{jk},F^*}$
that is induced by $\ucc{Q}_{jk}$ and $F^*$ is $O(jdk \log k)$.
\end{corollary}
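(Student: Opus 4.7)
The plan is to reduce the VC-dimension bound to a standard real-algebraic counting argument via Warren's theorem. First I would parameterize the query set: each affine $j$-subspace $\ell_s$ is written as $\ell_s = t_s + \mathrm{span}(V_s)$ for some $V_s\in\REAL^{d\times j}$ and $t_s\in\REAL^d$, so a $Q\in\ucc{Q}_{jk}$ and a threshold $r$ together form a parameter vector in $\REAL^{d^*}$ with $d^* = kd(j+1)+1$. Second, using the Schur complement identity, the squared distance from $p_i$ to $\ell_s$ is
\[
\dist^2(p_i,\ell_s) \;=\; \frac{\det M_{i,s}}{\det(V_s^T V_s)},\qquad M_{i,s}=\begin{pmatrix} V_s^T V_s & V_s^T(p_i-t_s)\\ (p_i-t_s)^T V_s & \|p_i-t_s\|^2\end{pmatrix}.
\]
Because $\det(V_s^TV_s)>0$ on the open set where $V_s$ has full column rank, the inequality $w(p_i)\dist^2(p_i,\ell_s)\ge r$ coincides with the sign of the polynomial
\[
P_{i,s}(V_1,t_1,\ldots,V_k,t_k,r) \;:=\; w(p_i)\det M_{i,s} \;-\; r\cdot\det(V_s^T V_s),
\]
which has total degree $\ell = O(j)$ in the $d^*$ parameters.

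Next I would translate shattering into sign-pattern counting. For any $S\subseteq F^*$ of size $s$, the range $\range(F^*,Q,r)\cap S$ is determined by the signs of $\{f_i(Q)-r\}_{i\in S}$. Since $f_i(Q)=w(p_i)\min_{s}\dist^2(p_i,\ell_s)$, the sign of $f_i(Q)-r$ is determined by the signs of the $k$ polynomials $P_{i,1},\ldots,P_{i,k}$ (namely, $f_i(Q)\ge r$ iff all of them are $\ge 0$). Thus the number of distinct ranges on $S$ is bounded by the number of sign patterns realized by the $sk$ polynomials $\{P_{i,s}\}_{i\in S,\, s\in[k]}$. Applying Theorem~\ref{rrr} to these polynomials (each of degree $\ell=O(j)$ in $d^*$ variables) gives at most $(4e\ell\cdot sk/d^*)^{d^*}$ sign sequences in $\{+1,-1\}$; degenerate sign vectors with zero entries can be absorbed by a generic perturbation of $r$ and the $t_s$.

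Third, I would solve the shattering inequality. A subset $S$ of size $s$ shatters only if
\[
2^s \;\le\; \left(\frac{4e\,\ell\, sk}{d^*}\right)^{d^*}.
\]
With $d^* = \Theta(kdj)$ and $\ell=O(j)$ the inner ratio is $O(s/d)$, so taking logarithms yields $s = O\!\left(kdj\,\log(s/d)\right)$. Solving this implicit inequality in the standard way (e.g.\ by substituting the ansatz $s=C\,kdj\log k$ and verifying $\log(s/d)=O(\log k)$ in the relevant regime) gives $s = O(kdj\log k)$, which is the desired bound on the VC-dimension.

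The main obstacle is the bookkeeping around Warren's theorem: ensuring that the polynomials $P_{i,s}$ are well defined (handled by restricting to the open dense stratum where $V_s^TV_s$ is nonsingular), accounting for the degree inflation caused by clearing the denominator $\det(V_s^TV_s)$, and verifying the condition $d^*\le m=sk$ of Theorem~\ref{rrr} (which is automatic for $s\ge d(j+1)+1$; smaller $s$ trivially satisfy the final bound). Once these points are dealt with, the $\log k$ factor arises purely from the $\log(s/d^*)$ term in Warren's estimate, rather than from any explicit construction.
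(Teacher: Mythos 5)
Your overall strategy is sound and is genuinely different from the paper's: you parameterize all $k$ subspaces simultaneously and apply Warren's theorem once to the $sk$ polynomials, whereas the paper first bounds the $k=1$ case by $O(jd)$ (using a \emph{degree-two} parameterization of the squared distance via a matrix with orthonormal rows and the Pythagorean identity) and then invokes the standard composition result that $k$-fold intersections of ranges from a class of VC-dimension $v$ form a class of dimension $O(vk\log k)$. Your reduction of range membership to the signs of the $k$ polynomials $P_{i,1},\dots,P_{i,k}$ (using $f_i(Q)\ge r$ iff all $P_{i,s}\ge 0$) is correct, as is the Schur-complement identity and the handling of degenerate signs and of the stratum where the Gram matrices are nonsingular.

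However, there is a quantitative gap in the last step. Clearing the denominator $\det(V_s^TV_s)$ inflates the degree to $\ell=\Theta(j)$, and with $d^*=\Theta(kdj)$, $m=sk$ the Warren ratio is $4e\ell m/d^*=\Theta(s/d)$, so the shattering inequality reads $s\le O\bigl(kdj\log(s/d)\bigr)$. Substituting $s=Ckdj\log k$ gives $\log(s/d)=\log(Ckj\log k)=\Theta(\log k+\log j)$, which is \emph{not} $O(\log k)$ when $j$ is large relative to $k$ (e.g.\ $k=2$, $j=d-1$ huge); the self-consistent solution is $s=O\bigl(kdj\log(kj)\bigr)$, weaker than the claimed $O(kdj\log k)$ by a $\log j$ factor. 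The fix within your framework is to replace the determinant parameterization by the degree-two one: write $\dist^2(p_i,\ell_s)=\norm{p_i-t_s}^2-\norm{X_s(p_i-t_s)}^2$ with $X_s\in\REAL^{j\times d}$ having orthonormal rows, so each $w(p_i)\dist^2(p_i,\ell_s)-r$ is a quadratic polynomial in the $O(kdj)$ parameters. Then the Warren ratio becomes $\Theta\bigl(s/(dj)\bigr)$, the inequality becomes $s\le O\bigl(kdj\log(s/(dj))\bigr)$, and the ansatz $s=Ckdj\log k$ now verifies, recovering the stated bound (and, as a bonus, still avoiding the composition lemma the paper relies on). Alternatively, keep your polynomials but fall back to the paper's two-step route.
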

\begin{proof}
We first show that in the case $k=1$ the VC-dimension of the range space $\mathfrak R_{\ucc{Q},F^*}$ is $O(jdk)$. Then the result follows from the fact
that the $k$-fold intersection of range spaces of VC-dimension $O(jdk)$ has VC-dimension $O(jdk \log k)$ ~\cite{BEHW89,eisenstat2007vc}.

If $n<d$ then the result is immediate. Thus, we consider the case $n \ge d$.
We will first argue that the weighted distance to a subspace can be written as a polynomial in $O(jd)$ variables.
Let $Q$ be an arbitrary $j$-dimensional affine subspace. By the Pythagorean Theorem we can write $\dist^2(p_i,Q) =
\norm{p_i}^2 - \norm{X_Q p_i - b_Q}^2$ where $X_Q\in\REAL^{j\times d}$ with $X_Q^TX_Q=I$ and $b_Q\in\REAL^{j}$. Therefore,
$f_i(Q)-r$ is a polynomial of constant degree $\ell$ with $d^* \in O(jd)$ variables.

Consider a subset $G \subset F^*$ with $|G|=m$, denote the functions in $G$ by $f_1,\ldots,f_m$. Our next step will be to give an upper bound on the number of different ranges in our range space $\mathfrak R_{\ucc{Q}_{jk},F^*}$ for $k=1$ that intersect with $G$.
Recall that the ranges are defined as
$$
\{p \in P \mid w(p) \cdot \dist(p,Q) \ge r\}
$$
for $Q \in \ucc{Q}_{jk}$ and $r \ge 0$.
We observe that $w(p_i) \cdot \dist^2(p_i,Q)\geq r$, iff $\sgn(f_i(Q) -r) \ge 0$.
Thus, the number of ranges is at most
\[
\big|\br{(\sgn (f_1(Q)-r),\ldots,\sgn (f_m(Q)-r) )\mid x\in\REAL^\ell}\big|.
\]
We also observe that for every sign sequence that has zeros, there is a sign sequence corresponding to the same range that only
contains $1$ and $-1$ (this can be obtain by infinitesimally changing $r$). Thus, by Theorem \ref{rrr} the number of such sequences
is bounded by $(4e\ell m/d^*)^{d^*}$, where $\ell=O(1)$. By Corollary \ref{rrr2} we know that for $\ell \ge 2$ (which we can always assume as $\ell$ is
an upper bound for the degree of the involved polynomials) and $m \ge 8 d^* \log \ell$ the number of such ranges is less than $2^m$.
At the same time, a range space with VC-dimension $d$ must contain a subset $G$ of size $d$ such that any subset of $G$ can be written as $G\cap \range$ for some $\range\in \ranges$, which implies that the number of such sets is $2^d$. Since this is not possible for $G$ if $m \ge 8 d^* \log \ell$, we know that the VC dimension of our range space is bounded by $8 d^* \log \ell \in O(jd)$ (for the case $k=1$).
 Now the he result follows by observing that in the case of $k$ centers every range is obtained by taking the intersection of $k$ ranges of the range space for $k=1$.
\end{proof}


\subsection{New Coreset for \mathk-Means Clustering}


We now apply the results from the previous section to the $k$-means problem. For this problem,
it is known how to compute \lq weak\rq\ coresets of size independent of $n$ and $d$. A weak coreset is a weighted point set that approximates the cost of the objective function for some, but not all possible solutions (not for all sets of $k$ centers).
For these results, see~\cite{FMS07,braverman2016new,FL11} and references therein.
However, in this paper we focus on coresets that approximate the cost for \emph{every} set of $k$ centers in $\REAL^d$ as stated in Definition~\ref{coresetdef}.
We will use the following result that follows from the sensitivity framework as presented in the previous section
and as suggested in \cite{braverman2016new}. We also remark that a slightly better version can be obtained (see \cite{braverman2016new}).
In order to keep this presentation self-contained, we will use the following theorem.

\begin{theorem}[Coreset for $k$-means]
\label{k-means-coreset}
Let $A\in\REAL^{n\times d}$, $k\geq1$ be an integer, and $\eps,\delta\in(0,1)$.
Let $\CC=\br{C\subset \REAL^d\mid |C|=k}$ be the family of all sets of $k$ centers
 in $\REAL^d$.
Then, with probability at least $1-\delta$, an $\eps$-coreset $(S,0,w)$ for the $\CC$-clustering problem of $A$ of size
\[
|S|= O\left(\frac{k^2 \log k}{\eps^2} (d\log(k)+\log(1/\delta))\right)
\]
can be computed in time $O(ndk\log(1/\delta))$.
\end{theorem}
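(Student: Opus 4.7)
The plan is to invoke the sensitivity sampling framework of Theorem~\ref{sensitivitysamplingscheme} directly on the $k$-means shape fitting problem, where $F = \{f_1,\ldots,f_n\}$ with $f_i(C) := \dist^2(A_{i*},C)$ for every set $C$ of $k$ points in $\REAL^d$. To apply the theorem I need three ingredients: an efficiently computable upper bound $\tilde\sigma(\cdot)$ on the sensitivity, a bound on the total sensitivity $\tilde{\mathfrak{S}}(F)$, and a bound on the VC dimension of the induced range space.

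For the sensitivity bound, I would first compute an $O(1)$-approximation $B=\{b_1,\ldots,b_k\}$ for the $k$-means instance $A$ in time $O(ndk\log(1/\delta))$ (for instance by running a linear-time constant-factor algorithm, or $O(\log(1/\delta))$ independent trials of a fast randomized algorithm to amplify its success probability). Assign each point to its closest center in $B$, yielding clusters $B_1,\ldots,B_k$ with costs $\cost_i := \sum_{p\in B_i}\dist^2(p,b_i)$. For each point $p\in B_i$, I would set
\[
\tilde\sigma(p) \;:=\; c_1\cdot\frac{\dist^2(p,B)}{\cost(A,B)} \;+\; c_2\cdot\frac{\cost_i}{|B_i|\cdot\cost(A,B)} \;+\; \frac{c_3}{|B_i|},
\]
for suitable absolute constants $c_1,c_2,c_3$. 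That this upper bounds $\sigma(p)$ is a standard consequence of the relaxed $\ell_2^2$ triangle inequality $\dist^2(p,C^\ast)\le 2\dist^2(p,b_i)+2\dist^2(b_i,C^\ast)$ combined with averaging $\dist^2(b_i,C^\ast)$ over the cluster $B_i$ and the fact that $\cost(A,C^\ast)\ge\Omega(\cost(A,B))$ (since $B$ is an $O(1)$-approximation of the optimum, and the optimum lower-bounds the cost of every competitor via a standard argument, or alternatively one only needs the statement for the optimal $C^\ast$ in the numerator since the lower bound carries over). Summing over $p$ gives $\tilde{\mathfrak{S}}(F)=c_1+c_2+c_3 k=O(k)$.

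For the VC dimension, I would invoke Corollary~\ref{vc-dim} with $j=0$: each $k$-means center is a $0$-dimensional affine subspace, so the range space $\mathfrak R_{\ucc Q,F^\ast}$ induced by any rescaling $F^\ast$ of $F$ has VC dimension $O(dk\log k)$ (the argument of the corollary goes through verbatim once one notes that the weighted squared distance to a point in $\REAL^d$ is a degree-two polynomial in $O(d)$ variables per center, and the $k$-fold intersection costs the extra $\log k$ factor). Plugging $\tilde{\mathfrak{S}}(F)=O(k)$ and this VC dimension bound into Theorem~\ref{sensitivitysamplingscheme}, with failure probability $\delta$ and accuracy $\eps$, yields a weighted sample of cardinality
\[
O\!\left(\frac{k}{\eps^2}\bigl(dk\log k\cdot \log k+\log(1/\delta)\bigr)\right) \;=\; O\!\left(\frac{k^2\log k}{\eps^2}\bigl(d\log k+\log(1/\delta)\bigr)\right),
\]
which is exactly the claimed bound. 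The additive constant is $\Delta=0$ because the sensitivity framework produces a reweighted subsample of the input and no projection is used. The running time is dominated by computing $B$, i.e.\ $O(ndk\log(1/\delta))$; computing each $\tilde\sigma(p)$, the normalization, and the alias-table sampling is linear in $n$.

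The main obstacle is the sensitivity bound in the second step, since this is where the $k$-means-specific geometry enters. Concretely, one has to be careful that the upper bound $\tilde\sigma$ dominates $\sigma(f)$ for \emph{every} candidate $C\in\CC$ (not merely for the optimum) while still summing to $O(k)$, and one also has to verify the technical side condition $\tilde\sigma(p)\ge 1/|F|$ required by Theorem~\ref{sensitivitysamplingscheme} (which can be enforced by taking the pointwise maximum with $1/n$, increasing the total sensitivity only by an additive $1$). The remaining steps are essentially bookkeeping: plugging the numbers into Theorem~\ref{sensitivitysamplingscheme} and checking that the VC-dimension estimate of Corollary~\ref{vc-dim} still applies to the range space induced by arbitrary nonnegative rescalings of $F$, as required by the theorem.
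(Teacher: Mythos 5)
Your proposal follows essentially the same route as the paper's proof: sensitivity sampling via Theorem~\ref{sensitivitysamplingscheme}, the VC-dimension bound $O(dk\log k)$ from Corollary~\ref{vc-dim} with $j=0$, and per-point sensitivity upper bounds of total weight $O(k)$ derived from a constant-factor approximation, all of which you instantiate correctly. The only (inessential) deviation is that you assume a constant-factor approximation with exactly $k$ centers computable in $O(ndk\log(1/\delta))$ time, whereas the paper uses a bicriteria $(\alpha,\beta)$-approximation from~\cite{ADK09}, which is what is actually known to run in that time; your argument is unaffected by this substitution since the total sensitivity is still $O(\beta k)=O(k)$.
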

\begin{proof}
We will apply the sensitivity framework. We define $Q=\CC$ to be the family of sets of $k$ centers in $\REAL^d$
and $F$ to be a set that has one function for each input point (row of $A$) and define $f(Q)$ to be the distance
of this point to the nearest center in $Q$. According to Corollary \ref{vc-dim} the previous section, the VC-dimension of any range space
$\mathfrak R_{\ucc{Q},F^*}$ is $O(dk\log k)$, where $F^*$ is obtained from $F$ by rescaling the functions in $F$.

Our next step is to observe that given an  $(\alpha,\beta)$-approximation $C'$ ($C'$ is a set of $\beta k$ centers such that
$\dist^2(A,C')\leq \alpha\min_{C\in \CC}\dist^2(A,C)$) the sensitivity of a point $A_{i*}$ in a cluster $J$ is $O(\frac{1}{|J|}
+ \alpha \cdot \dist^2(A_{i*},C')/ \dist^2(J,C'))$ implying a total sensitivity of $O(k)$ (assuming $\alpha$ to be constant).
This can be seen by a case distinction: Either there exists a center within squared distance $O(\dist^2(A_{i*},C'))$,
then the second term is an upper bound. Or there exists a center within squared distance
$O(\alpha \cdot \frac{\dist^2(J,C')}{|J|})$ in which case the first term gives a bound. Or neither of the two is true. In
this case, there is no center within distance $O(\cost(A_{i*},C') + \alpha \cdot \frac{\dist^2(J,C')}{|J|})$. In this case,
by Markov's inequality, at least $|J|/2$ points have distance $\Omega(\dist^2(A_{i*},C'))$ and so the first term is a bound. Now the bound on the coreset size also arises from plugging the bound on the total sensitivity into the upper bound on the sample size in Theorem \ref{sensitivitysamplingscheme}.

A constant $(\alpha,\beta)$-approximation can be computed in $O(ndk \log (1/\delta))$ time with probability at least $1-\delta$
\cite{ADK09}. From this we can compute the upper bounds on the sensitivites and so the result follows from Theorem  \ref{sensitivitysamplingscheme}.
\end{proof}

The following theorem reduces the size of the coreset to be independent of $d$. We remark that also here one can obtain
slightly stronger bounds that are a bit harder to read. We opted for the simpler version.

\begin{theorem}[Smaller Coreset for $k$-means\label{tthm}]
Let $A\in\REAL^{n\times d}$ whose rows are weighted with non-negative weights $w=(w_1,\dots,w_n)$.
Let $k\geq1$ be an integer, $\CC=\br{C\subseteq\REAL^d\mid |C|=k}$ denote the union over every set of $k$ centers in $\REAL^d$, $\eps,\delta\in(0,1)$.
Then an $\eps$-coreset $(S,\Delta,w)$ for the $\CC$ clustering of $A$ of size
\[
|S|\in O\left(\frac{k^3\log^2 k}{\eps^4} \log(1/\delta)\right)
\]
can be computed, with probability at least $1-\delta$, in $O(\min\{nd^2,n^2d\} + \frac{nk}{\epsilon^2} (d+k \log(1/\delta))$
time.
\end{theorem}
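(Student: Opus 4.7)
The plan is to chain together two results already proved in the paper: the dimensionality reduction of Theorem~\ref{thm:dim}, which collapses the ambient dimension to $O(k/\eps^2)$ at the cost of an additive offset $\|A-A^{(m)}\|_F^2$, and the $d$-dependent $k$-means coreset construction of Theorem~\ref{k-means-coreset}. The family $\CC$ of all $k$-point sets in $\REAL^d$ is closed under rotations and reflections since orthogonal maps preserve cardinality, and every $C \in \CC$ lies in its linear span, a subspace of dimension at most $k$. Hence Theorem~\ref{thm:dim} applies with $j = k$, giving $m = \min\{n,d,k+\lceil 32k/\eps^2\rceil\} - 1 = O(k/\eps^2)$ and a subspace $L$ of dimension at most $m+k = O(k/\eps^2)$ that contains every row of $A^{(m)}$.

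First I would compute the (weighted) SVD $A = U\Sigma V^T$ in time $O(\min\{nd^2, n^2d\})$ and form $A^{(m)} = U\Sigma^{(m)}V^T$; the top $m$ right singular vectors yield an orthonormal basis of the $m$-dimensional row span of $A^{(m)}$, and adjoining any $k$ further orthonormal vectors produces coordinates in $L \cong \REAL^{m+k}$. Then I would interpret the rows of $A^{(m)}$, expressed in these $L$-coordinates, as an $n$-point instance of weighted $k$-means in ambient dimension $O(k/\eps^2)$, and invoke Theorem~\ref{k-means-coreset} with error parameter $\eps/8$ and failure probability $\delta$ to produce an $(\eps/8)$-coreset $(S', 0, w')$. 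Theorem~\ref{thm:dim} then certifies that $(S', \|A-A^{(m)}\|_F^2, w')$ is an $\eps$-coreset for the original instance in $\REAL^d$. A minor bookkeeping point is that Theorem~\ref{k-means-coreset} is stated for unweighted inputs, but its proof goes via Theorem~\ref{sensitivitysamplingscheme}, which already supports arbitrary non-negative weights; adapting the bi-criteria subroutine to weights is standard.

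Finally I would plug the dimension $|L| = O(k/\eps^2)$ into the size bound of Theorem~\ref{k-means-coreset} with error $\eps/8$ to obtain
$$
|S'| = O\!\left(\frac{k^2 \log k}{\eps^2}\Big(\tfrac{k}{\eps^2}\log k + \log(1/\delta)\Big)\right) = O\!\left(\frac{k^3 \log^2 k}{\eps^4}\log(1/\delta)\right),
$$
after absorbing the lower-order term using $\log(1/\delta) \ge 1$. For the running time, the SVD costs $O(\min\{nd^2, n^2d\})$, re-expressing the $n$ rows of $A^{(m)}$ in the $L$-coordinates costs $O(nmd) = O(nkd/\eps^2)$, and the coreset step of Theorem~\ref{k-means-coreset} operating in dimension $O(k/\eps^2)$ contributes $O(n \cdot (k/\eps^2) \cdot k \log(1/\delta)) = O(nk^2\log(1/\delta)/\eps^2)$; these add to the claimed $O(\min\{nd^2, n^2d\} + \tfrac{nk}{\eps^2}(d + k\log(1/\delta)))$. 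No substantive obstacle remains once the error budget is split ($\eps$ for the dimensionality reduction, $\eps/8$ for the low-dimensional coreset) — everything hard has already been carried out in Theorems~\ref{thm:dim} and~\ref{k-means-coreset}; the only care needed is to track the weight extension through the sensitivity-sampling machinery.
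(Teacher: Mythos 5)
Your proposal is correct and follows essentially the same route as the paper: compute the (weighted, via $\sqrt{w_i}$-scaling) SVD, project onto the top $m=O(k/\eps^2)$ right singular vectors, apply the low-dimensional $k$-means coreset of Theorem~\ref{k-means-coreset} with parameter $\eps/8$ inside an $(m+k)$-dimensional subspace, and combine via Theorem~\ref{thm:dim}. The paper handles the weighted input by splitting points into infinitesimally weighted copies rather than by tracking weights through the sensitivity framework, but this is a cosmetic difference and your size and running-time accounting matches the claimed bounds.
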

\begin{proof}
We would like to apply Theorem \ref{thm:dim} where we need to do minor modifications to deal with weighted points.
We first need to compute an optimal subspace in the weighted setting. We exploit that scaling each row by $\sqrt{w_i}$ and then computing
in $O(\min\{nd^2 , n^2 d\})$ time the singular value decomposition $U\Sigma V^T$ will result in a subspace that minimizes the squared distances
from the weighted points. Next we need to project $A$ on the subspace spanned by the first $m$ right singular vectors for $m=O(k/\eps^2)$, i.e., we compute
$A^* = AV^{(m)}(V^{(m)})^T$ in $O(ndm)$ time where $V^{(m)}$ is the matrix spanned by the first $m$ right singular vectors.
The correctness of this approach follows from dividing the weighted points into infinitesimally weighted points of equal weight.

By replacing $d$ with $m$ in Theorem~\ref{k-means-coreset}, an $(\epsilon/8)$-coreset $(S,0,w)$ of the desired size and probability of failure
can be computed for $A^*$. Plugging this coreset in Theorem~\ref{thm:dim} yields the desired coreset $(S,\Delta,w)$ in time $O(nk^2/\eps^2 \log(1/\delta))$
\end{proof}


\subsection{Improved Coreset for \mathk-Line-Means}


The result in the following theorem is a coreset which is a weighted subset of the input set.
Smaller coresets for $k$-line means whose weights are negative or depends on the queries, as well as weaker coresets, can be found in~\cite{feldman2010coresets,FL11} and may also be combined with the dimensionality reduction technique in our paper. 
\begin{theorem}[Coreset for $k$-line means~\cite{varadarajan}\label{linescoreset}]
Let $A\in\REAL^{n\times d}$, let $k\geq1$ be an integer, and let $\eps,\delta\in(0,1)$.
Let $$\CC_L=\br{C\subset \REAL^d\mid \text{$C$
 is the union of a set of $k$ lines in $\REAL^d$}}$$
be the family of all sets of $k$ lines in $\REAL^d$. Then, with probability at least $1-\delta$, an $\eps$-coreset $(S,\Delta,u)$ for the $\CC_L$-clustering problem of $A$ of size
\[
|S|\in \frac{k^{O(k)}\log^2(n)}{\eps^2} \left(d\log\log n+\log(1/\delta)\right),
\]
can be computed in time $T(d)=n\cdot (dk^k\log n\log(1/\delta)/\eps)^{O(1)}$.
\end{theorem}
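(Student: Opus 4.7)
The plan is to derive this as an instance of the sensitivity sampling framework of Theorem~\ref{sensitivitysamplingscheme}, following the same blueprint used in the proof of Theorem~\ref{k-means-coreset} but specialized to affine subspaces of dimension $j=1$. Concretely, the ingredients are: (a) a bicriteria $(\alpha,\beta)$-approximation for $k$-line-means, which provides the backbone against which sensitivities are estimated; (b) an upper bound $\tilde\sigma$ on each point's sensitivity and hence on the total sensitivity $\tilde{\mathfrak S}(F)$; and (c) a bound on the VC-dimension of the associated range space. The coreset produced is a reweighted subset of the input, so $\Delta = 0$.

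For ingredient (c), I would apply Corollary~\ref{vc-dim} with $j=1$, giving VC-dimension $O(dk\log k)$ for the range space induced by the weighted squared-distance functions to unions of $k$ lines. This is where the $d\log\log n$ and $d$-dependent factors inside the parentheses of the final bound originate. For ingredient (a), I would use a standard bicriteria approximation for $k$-line-means obtainable in time $n\cdot(dk\log n/\eps)^{O(1)}$; its $\log n$ factors are the ultimate source of the $\log^2 n$ factor in the coreset size. Plugging $\tilde{\mathfrak S}(F) = k^{O(k)}\log n$ and the VC-dimension into Theorem~\ref{sensitivitysamplingscheme} yields the claimed bound on $|S|$, and the running time follows from the construction of the bicriteria approximation plus the $O(|F|)$ cost of the sampling step.

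The main obstacle, and the heart of Varadarajan's argument~\cite{varadarajan}, is ingredient (b): bounding the total sensitivity for $k$-line-means is substantially more delicate than for $k$-means. For points, each cluster has a well-defined center and the sensitivity of a point splits cleanly into a uniform $1/|J|$ term plus its relative cost; for lines, the analogue of a "cluster" is an unbounded strip around a line, and a single input point can be the dominant contributor for exponentially many different $k$-line configurations — in particular, for configurations in which a second line nearly passes through that one point. This is what forces the $k^{O(k)}$ (rather than $\mathrm{poly}(k)$) dependence on $k$. The standard way to handle it is to partition each strip into $O(\log n)$ distance buckets along the line and to charge the sensitivity bucket by bucket, which also contributes a $\log n$ factor; combined with the $\log n$ from the bicriteria approximation this gives the $\log^2 n$ in the final bound.

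Once the sensitivity bound and the VC-dimension are in place, the coreset property $(1-\eps)\dist^2(A,C) \le \sum_i u_i \dist^2(S_{i*},C) \le (1+\eps)\dist^2(A,C)$ follows directly from Theorem~\ref{sensitivitysamplingscheme} applied to the function family $F = \{f_i\}$ with $f_i(C) = \dist^2(A_{i*},C)$ over $Q \in \CC_L$, and the fact that $S \subseteq A$ (with adjusted weights) makes the additive term $\Delta = 0$. The failure probability $\delta$ is inherited from the sampling theorem, and the total running time is dominated by computing the bicriteria approximation.
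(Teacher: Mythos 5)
Your proposal follows essentially the same route as the paper: bound the VC-dimension via Corollary~\ref{vc-dim} with $j=1$, compute an $(\alpha,\beta)$-bicriteria approximation with $\beta=O(\log n)$ lines, bound the total sensitivity by reducing (via the movement argument of Theorem~\ref{sens_move}) to the points projected onto each bicriteria line and invoking the $k^{O(k)}\log n$ one-dimensional sensitivity bound of Schulman/Varadarajan, and then plug everything into Theorem~\ref{sensitivitysamplingscheme}. One small internal inconsistency: you write $\tilde{\mathfrak S}(F)=k^{O(k)}\log n$ when plugging into the sampling theorem, but your own (correct) accounting of the two $\log n$ factors gives total sensitivity $\beta\cdot k^{O(k)}\log n = k^{O(k)}\log^2 n$, which is what the claimed coreset size requires.
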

\begin{proof}
The proof is based on bounding the sensitivity of each point and then using the sensitivity framework in Theorem~\ref{sensitivitysamplingscheme}, similar to its application for $k$-means in Theorem~\ref{k-means-coreset}. By Corollary~\ref{vc-dim}, $\mathfrak R_{\ucc{Q},F^*} \in O(mk\log k)$ for $F^* =\br{f_1,\cdots,f_n}$ where $f_i(Q)=\dist^2(A_{i^\ast},Q)$ for every $i\in[n]$, $Q\in  \ucc \CC_L$. (observe that $\CC_L=\CC_{1,k}$ in the notation of the corollary).

It is thus left to bound the sensitivity of each point and the total sensitivity. As explained in~\cite{varadarajan}, computing these bounds is based on two steps: firstly we compute an approximation to the optimal $k$-line mean, so we can use Theorem~\ref{sens_move} to bound the sensitivities of the projected sets of points on each line. Secondly, we bound the sensitivity independently for the projected points on each line, by observing that their distances to a query is the same as the distances to $k$ weighted centers. Sensitivities for such queries were bounded in~\cite{schulman} by $k^{O(k)}\log n$. We formalize this in the rest of the proof.

An $(\alpha,\beta)$-approximation for the $k$-line means problem with $\alpha=O(1)$ and $\beta=O(\log n)$ can be computed in time $O(T(d))$ with probability at least $1-\delta/10$, where $T(d)$ is defined in the theorem, using $O(\log(1/\delta))$ runs (amplification) of the algorithm in Theorem~10 in~\cite{FL11}.

Next, due to~\cite{schulman} and~\cite{varadarajan}, any $(\alpha,\beta)$-approximation $C$ for the $k$-line means problem can be used to compute upper bounds on the point sensitivities and then the sum of all point sensitivities is bounded by $O(\alpha)+\beta k^{O(k)} \log n=\beta k^{O(k)} \log^2n$ in additional $T(d)$ time as defined in the theorem.

By combining this bound on the total sensitivity with the bound on the VC dimension in Theorem~\ref{sensitivitysamplingscheme}, we obtain that it is possible to compute a set of size $|S|$ as desired.
\end{proof}

Notice that computing a constant factor approximation (or any finite multiplicative factor approximation that may be depend on $n$) to the $k$-line means problem is NP-hard as explained in the introduction, if $k$ is part of the input. No bicriteria approximation with $\beta \in O(1)$ that takes polynomial time in $k$ is known. This is why we get a squared dependence on $\log n$ in our coreset size. It is possible to compute a constant factor approximation (in time exponential in $k$) : Set the precision to a reasonable constant, say $\eps'=1/2$, and then use exhaustive search on the $\eps'$-coreset to obtain a solution with a constant approximation factor. The constant factor approximation can then be used to compute a coreset of smaller size. However, exhaustive search on the coreset still takes time $|S(1/2,0,\delta)|^k$, meaning that the running time would include $\log^k(n)$ and a term that is doubly exponential in $k$. We thus consider it preferable to use the coreset computation as stated in Theorem~\ref{linescoreset}. This is in contrast to the case of $k$-means where a constant factor approximation can be computed in time polynomial in $k$; see the proof of Theorem~\ref{tthm}.

Now we apply our dimensionality reduction to see that it is possible to compute coresets whose size is independent of $d$. The running time of the computation is also improved compared to Theorem~\ref{linescoreset}.
\begin{theorem}[Smaller Coreset for $k$-line means\label{tlthm-kline}]
Let $A\in\REAL^{n\times d}$, and let $k\geq1$ be an integer, and let $\eps,\delta\in(0,1)$.
Let $$\CC_L=\br{C\subset \REAL^d\mid \text{$C$ is the union of a set of $k$ lines in $\REAL^d$}}$$
be the family of all $k$-lines in $\REAL^d$. Then, with probability at least $1-\delta$, an
$\eps$-coreset $(S(\eps,\delta),\Delta,w) \subset A$ for the $\CC_L$-clustering problem of $A,u$ of size
\[
|S(\eps,\delta)|\in \frac{k^{O(k)}\log^2n}{\eps^2} \left(\log (k)\log\log (n)k^2/\eps^2+\log(1/\delta)\right),
\]
can be computed in time $O(nd^2)+n(k^k\log n\log(1/\delta)/\eps)^{O(1)}$.
\end{theorem}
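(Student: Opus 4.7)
The plan is to combine the dimensionality reduction of Theorem~\ref{thm:dim} with the known coreset construction of Theorem~\ref{linescoreset}, in complete analogy with how Theorem~\ref{tthm} is derived for $k$-means from Theorem~\ref{k-means-coreset}.

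First, I would verify that the hypotheses of Theorem~\ref{thm:dim} apply to the family $\CC_L$. The family is closed under rotations and reflections, since the image of a union of $k$ lines under any orthogonal map is again a union of $k$ lines. Moreover, every $C\in\CC_L$ is contained in a linear subspace of dimension at most $j:=2k$: each affine line is contained in a $2$-dimensional linear subspace, so the union of $k$ such lines is contained in a linear subspace of dimension at most $2k$. Thus Theorem~\ref{thm:dim} applies with this choice of $j$, and setting $m := \min\br{n,d,j+\lceil 32 j/(\eps/2)^2\rceil}-1 = O(k/\eps^2)$ it reduces the problem of building an $\eps$-coreset for $A$ to the problem of building an $(\eps/16)$-coreset $(S,\Delta',w)$ of $A^{(m)}$ inside the subspace $L$ of dimension at most $m+2k = O(k/\eps^2)$; the final coreset for $A$ is then $(S,\Delta'+\norm{A-A^{(m)}}_F^2,w)$.

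Next, I would compute the SVD of $A$ in time $O(nd^2)$ to obtain $A^{(m)}$ and $\norm{A-A^{(m)}}_F^2$, and then run the construction of Theorem~\ref{linescoreset} on $A^{(m)}$ viewed as an $n$-point set in the $O(k/\eps^2)$-dimensional subspace $L$, with error parameter $\eps/16$ and confidence $\delta$. Substituting the effective ambient dimension $O(k/\eps^2)$ for $d$ into the size bound of Theorem~\ref{linescoreset}, and tracking carefully the contribution of the VC-dimension from Corollary~\ref{vc-dim}, which for $k$ lines in dimension $d'$ is $O(d'\cdot k\log k)$, yields the claimed size
\[
\frac{k^{O(k)}\log^2n}{\eps^2}\left(\log(k)\log\log(n)\,k^2/\eps^2 + \log(1/\delta)\right).
\]
The running time is dominated by the SVD ($O(nd^2)$) plus the inner coreset call in the reduced space, which is $T(O(k/\eps^2)) = n\cdot (k^k\log n\log(1/\delta)/\eps)^{O(1)}$, matching the stated bound.

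The main obstacle is a bookkeeping one, together with the subtle point that Theorem~\ref{tlthm-kline} claims $S\subset A$ rather than $S\subset A^{(m)}$. Since the coreset of Theorem~\ref{linescoreset} is produced by the sensitivity-sampling scheme of Theorem~\ref{sensitivitysamplingscheme}, which picks rows by index and re-weights them, the cleanest way to handle this is to run the sampling on $A^{(m)}$ (to compute upper bounds on sensitivities in the low-dimensional reduced instance) but return the corresponding rows of $A$ itself; the coreset guarantee for $A$ with the extra additive term $\norm{A-A^{(m)}}_F^2$ then follows exactly as in Theorem~\ref{thm:dim}. The other technical step that requires care is isolating the $\log k$ factor appearing inside the parentheses of the target bound, which arises from the VC-dimension contribution $O(m\,k\log k)$ of the line-clustering range space once $d$ has been replaced by the reduced dimension $m=O(k/\eps^2)$; in the original bound of Theorem~\ref{linescoreset} this $k\log k$ is absorbed into the leading $k^{O(k)}$, but after the reduction it must be carried explicitly as $k^2\log k/\eps^2$.
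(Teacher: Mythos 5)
Your proposal follows essentially the same route as the paper's own (very terse) proof: compute $A^{(m)}$ with $m=O(k/\eps^2)$ via the SVD, apply Theorem~\ref{linescoreset} with $d$ replaced by the reduced dimension, and plug the result into Theorem~\ref{thm:dim}. Your additional remarks (the containment of $k$ lines in a $2k$-dimensional subspace, the VC-dimension bookkeeping, and the tension between the claim $S\subset A$ and the fact that the inner coreset is drawn from $A^{(m)}$) only make explicit details that the paper glosses over, so the argument is correct and matches the intended proof.
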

\begin{proof}
Similarly to the proof of Theorem~\ref{tthm}, we compute $A^{(m)}$ in $O(nd^2)$ time where $m=O(k/\eps^2)$.
By replacing $d$ with $m$ in Theorem~\ref{linescoreset}, a coreset $(S,0,w)$ of the desired size and probability of failure can be computed for $A^{(m)}$. Plugging this coreset in Theorem~\ref{thm:dim} yields the desired coreset $(S,\Delta,w)$.
\end{proof}


\subsection{Computing Approximations Using Coresets}


A well-known application of coresets is to first reduce the size of the input and then to apply an approximation algorithm.
In Algorithm~\ref{kmeansall} below we demonstrate how Theorem~\ref{thm:dim} can be combined with existing coreset constructions and approximation algorithms
to improve the overall running time of clustering problems by applying them on a lower dimensional space, namely, $m+j$ instead of $d$ dimensions.
The exact running times depend on the particular approximation algorithms and coreset constructions.
In Algorithm~\ref{algkmean} below, we consider any $\mathcal C$-clustering problem that is closed under rotations and reflections and such that
each $C\in \mathcal C$ is contained in some $j$-dimensional subspace.

\newcommand{\appalg}{\textsc{approx-solution}}
\begin{algorithm}[h!t]
    \caption{$\appalg(A,j,\eps,\alpha)$\label{algkmean}\label{kmeansall}}
{\begin{tabbing}
\textbf{Input: \quad   }  \= $A\in\REAL^{n\times d}$, an integer $k\geq1$ and error parameters $\alpha\geq 0$ and $\eps\in(0,1/2)$.\\
\textbf{Output: } \> An $\frac{\alpha(1+\eps)}{1-\epsilon}$-approximation $C$ for the $\CC$-clustering of $A$; see Corollary~\ref{corapprox}.
\end{tabbing}}
\vspace{-0.3cm}
   Set $m \gets \min\br{n,d,k +1 +\lceil 16(k+1)/\eps^2\rceil} -1$.\\
	 Compute the rank $m$ approximation $A^{(m)}$ of $A$ \\
   Compute an $\eps/8$-coreset $(S,\Delta',w)$ for $A^{(m)}$  for some $m+j$-dimensional subspace $L$ that contains the row vectors of $A^{(m)}$ \label{lk4}\\
   Compute an $\alpha$-approximation $C$ for the $\CC$-clustering of $(S,0,w)$\\ \tcc{the term $\Delta'$ can be ignored}
    \Return $C$
\end{algorithm}

\begin{corollary}\label{corapprox}
Let $\epsilon \in (0,1/2]$ and $A \in \REAL^{n\times d}$. Let $\CC$ be a (possibly infinite) set of non-empty subsets of $\REAL^d$ that is closed under
rotations and reflections such that each $C\in \CC$ is contained in a $j$-dimensional subspace
Let $C\in\CC$ be the output of a call to $\appalg(A,j,\eps,\alpha)$; see Algorithm~\ref{algkmean}.
Then
\[
\dist^2(A, C)\leq \frac{\alpha(1+\eps)}{1-\eps}\cdot\dist^2(A, C^*).
\]
\end{corollary}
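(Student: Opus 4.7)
The plan is to combine the $\epsilon$-coreset guarantee obtained from Theorem~\ref{thm:dim} with the fact that the algorithm's $\alpha$-approximation is computed on $(S,0,w)$ rather than on the full coreset $(S,\Delta'+\|A-A^{(m)}\|_F^2,w)$. The key observation is that these two cost functions differ by an additive non-negative constant that is independent of the candidate solution, so an $\alpha$-approximation for one is automatically an $\alpha$-approximation for the other.

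More concretely, let $\Delta := \Delta' + \|A-A^{(m)}\|_F^2$ and define, for every $C'\in\CC$,
\[
\widetilde{\cost}(C') \;:=\; \sum_{i=1}^{|S|} w_i \cdot \dist^2(S_{i*},C') + \Delta
\quad\text{and}\quad
\phi(C') \;:=\; \sum_{i=1}^{|S|} w_i \cdot \dist^2(S_{i*},C').
\]
First I would invoke Theorem~\ref{thm:dim} (with parameter $\eps/8$ for the inner coreset and the choice of $m$ in the algorithm, noting that since $\CC$ is closed under rotations/reflections and each $C\in\CC$ lies in a $j$-dimensional subspace the hypotheses are met). This yields the two-sided bound
\[
(1-\eps)\dist^2(A,C') \;\le\; \widetilde{\cost}(C') \;\le\; (1+\eps)\dist^2(A,C')\qquad \forall\,C'\in\CC.
\]

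Next I would use the fact that $\widetilde{\cost}(C') = \phi(C')+\Delta$, where $\Delta\ge 0$ does not depend on $C'$. Hence the minimizers of $\phi$ and of $\widetilde{\cost}$ coincide; call this common minimizer $\tilde C^{\,*}$. Let $C$ be the output of the algorithm, so that $\phi(C) \le \alpha\,\phi(\tilde C^{\,*})$. Since $\alpha\ge 1$ and $\Delta\ge 0$,
\[
\widetilde{\cost}(C) = \phi(C)+\Delta \;\le\; \alpha\,\phi(\tilde C^{\,*}) + \Delta \;\le\; \alpha\bigl(\phi(\tilde C^{\,*})+\Delta\bigr) = \alpha\,\widetilde{\cost}(\tilde C^{\,*}) \;\le\; \alpha\,\widetilde{\cost}(C^{*}),
\]
where the last inequality uses optimality of $\tilde C^{\,*}$ for $\widetilde{\cost}$.

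Finally, chaining the coreset inequality on both ends gives
\[
(1-\eps)\,\dist^2(A,C) \;\le\; \widetilde{\cost}(C) \;\le\; \alpha\,\widetilde{\cost}(C^{*}) \;\le\; \alpha(1+\eps)\,\dist^2(A,C^{*}),
\]
and dividing by $(1-\eps)$ yields the claimed $\alpha(1+\eps)/(1-\eps)$-approximation. There is no real obstacle here — the whole content is packaged in Theorem~\ref{thm:dim}; the only point worth articulating carefully is that dropping the additive $\Delta$ when invoking the $\alpha$-approximation routine is harmless, which is exactly the $\alpha\ge 1$, $\Delta\ge 0$ argument above.
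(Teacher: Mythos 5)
Your proposal is correct and matches the paper's own argument: both combine the two-sided guarantee of Theorem~\ref{thm:dim} with the $\alpha$-approximation on $(S,0,w)$, and both absorb the additive term $\Delta$ into the factor $\alpha$ using $\alpha\ge 1$ and $\Delta\ge 0$ before rearranging the resulting inequality $\dist^2(A,C)\le\alpha(1+\eps)\dist^2(A,C^*)+\eps\dist^2(A,C)$. The only difference is cosmetic bookkeeping (you route the comparison through the coreset-optimal solution $\tilde C^{\,*}$, while the paper compares directly to $C^*$), so no further comment is needed.
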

\begin{proof}
Let $C^*\in\CC$ be the set that minimizes $\dist^2(A,C')$ over every $C'\in\CC$.
Let $\Delta=\Delta'+\norm{A-A^{(m)}}^2_F$ and $\dist_w^2(S, C)=\sum_{i=1}^{|S|} w_i \cdot \dist^2(S_{i*},C)$.
Hence,
\begin{align}
\dist^2(A, C)
\label{twokmean}&\leq \big(\dist_w^2(S, C)+\Delta\big)+\eps \cdot \dist^2(A, C)\\
\label{threekmean}&\leq \dist_w^2(S, C^*)+\Delta+(\alpha-1)\dist_w^2(S, C^*)+ \eps\dist^2(A, C)\\
\label{fourkmean}&\leq (1+\eps)\dist^2(A, C^*)+(\alpha-1)\dist_w^2(S, C^*)+ \eps\dist^2(A, C)\\
\label{fivekmean}&\leq (1+\eps)\dist^2(A, C^*)+(\alpha-1)(1+\eps)(\dist^2(A, C^*)) + \eps\dist^2(A, C)\\
\nonumber\label{sixkmean}&=\alpha(1+\eps)\dist^2(A, C^*)+ \eps\dist^2(A, C),
\end{align}
where~\eqref{twokmean},~\eqref{fourkmean} and~\eqref{fivekmean} follows from Theorem~\ref{thm:dim},~\eqref{threekmean} follows since $C$ is an $\alpha$-approximation to the $\CC$-clustering of $(S,0,w)$.
After rearranging the last inequality,
\[
\dist^2(A, C)\leq \frac{\alpha(1+\eps)}{1-\eps}\cdot\dist^2(A, C^*).
\]
where in the last inequality we used the assumption $\eps<1$.
\end{proof}


\section{Streaming Algorithms for Subspace Approximation and \mathk-Means Clustering}\label{sec:streaming}


Our  next step will be to show some applications of our coreset results. We will use the standard
merge and reduce technique \cite{bent} (more recently known as a general streaming method for \emph{composable coresets},
 e.g.~\cite{indyk2014composable,mirrokni2015randomized,aghamolaei2015diversity}), to develop a streaming algorithm \cite{AHV04}.
In fact, even for the off-line case, where all the input is stored in memory, the running time may be improved by using the merge and reduce technique.
\newcommand{\Alg}{\textsc{Subspace-Coreset}}

The idea of the merge and reduce technique is to read a batch of input points and
then compute a coreset of them. Then the next batch is read and a second coreset is
built. After this, the two coresets are merged and a new coreset is build.
Let us consider the case of the linear $j$-subspace problem as an example.
We observe that the union of two coresets is a coreset in the following sense:
Assume we have two disjoint point sets $A_1$ and $A_2$ with corresponding coresets
$(R_1,\Delta_1')$ and $(R_2,\Delta_2')$, such that
$$
 \dist^2(A_1,C) \le \dist^2(R_1,C) + \Delta_1 \le (1+\eps) \cdot \dist^2(A_1,C),
$$
and
$$
 \dist^2(A_2,C) \le \dist^2(R_2,C) + \Delta_2 \le (1+\eps) \cdot \dist^2(A_2,C).
$$
Then it also hold that
$$
\dist^2(A,C) \le \dist^2(R,C) + \Delta_1 + \Delta_2 \le (1+\eps) \cdot \dist^2(A,C)
$$
where $A=A_1 \cup A_2$ and $R=R_1 \cup R_2$. Thus, the set $R$ together with the real value $\Delta_1 + \Delta_2$
is a coreset for $A$.

The merges are arranged in a way such that in an input stream of length $n$, each input point is
involved in $O(\log n)$ merges. Since in each merge we are losing a factor of $(1+\epsilon')$ we
need to put $\epsilon' \approx \epsilon/\log n$ to obtain an $\epsilon$-coreset in the end.
We will now start to work out the details.


\subsection{Streaming Algorithms for the Linear \mathj-Subspace Problem}


We will start with the simplest case, which is a streaming algorithm for the linear subspace
approximation problem. In this case, the coreset construction does not involve weights and it is
deterministic, which allows us to use the most simple form of the merge and reduce paradigm.
We will require a function $\coresetsize(\epsilon, j) := j + \lceil j/\eps \rceil -1$ that denotes the size of the coreset
for the linear subspace approximation problem. Then we can use the algorithm \algstr \ below to maintain our summary of the data
and the algorithm \algout \ to output a coreset. The algorithm uses algorithm {\sc Subspace-Coreset} from Section \ref{sec:linearsubspace} as a subroutine.
We will assume that our algorithm has access to an input stream of points from $\REAL^d$ of unknown length. The algorithm receives
an error parameter $\eps$.

\begin{algorithm}[h!t]
    \caption{$\algstr(\eps, j)$\label{algstream}}
Set $Q\gets\emptyset$\\
\For{every integer $h$ from $1$ to $\infty$}{
Set $S_h \gets \emptyset$; $\Delta_h^S \gets 0$\\
Set $T_i\gets \emptyset$ and $\Delta_i^T \gets 0$ for every integer $1 \leq i\leq h$ \\
Set $\gamma\gets \eps/(10 h)$\\
\For{$2^h$ iterations}{
Read the next point from the input stream and add it to $Q$ \\
\If{$|Q|=2\cdot \coresetsize(\gamma,j)$\label{seven}}{
Set $(T,\Delta^T,w) \gets \Alg(Q,j,\gamma)$ \\ 
Set $i\gets1$\\
\While{$T_i\neq \emptyset$}{
Set $(T, \Delta^T,w) \gets \Alg(T\cup T_i, j, \gamma)$ \\
  Set $\Delta^T \gets \Delta^T +\Delta_i^T$\\
	Set $T_i\gets \emptyset$;	$\Delta_i^T \gets 0$ \\
  Set $i\gets i+1$ \\
}
Set $T_i\gets T$; $\Delta_i^T \gets \Delta^T$ \\
Define $S \gets \bigcup_{i=1}^{h} S_i \cup T_i$ and $\Delta^S \gets \sum_{i=1}^{h} \Delta_i^S + \Delta_i^T$\\
Set $Q\gets\emptyset$
}
}
Set $S_h\gets T$; $\Delta_h^S \gets \Delta^T$\\
}
\end{algorithm}

\begin{algorithm}[h!t]
    \caption{\algout$(j,\eps)$}
		Set $(T,\Delta^T,w) \gets \Alg(S\cup Q,j,\eps)$\\
		Set $\Delta^T \gets \Delta^T + \Delta^S$\\
\textbf{return} $(T,\Delta^T,w)$ \label{15}\\
\end{algorithm}

During the streaming, we only compute coresets of small sets of points.
The size of these sets depends on the smallest input that can be reduced by half using our specific coreset construction.
This property allows us to merge and reduce coresets of coresets for an unbounded number of levels, while introducing only multiplicative $(1+\eps)$ error.
Note that the size here refers to the cardinality of a set, regardless of the dimensionality or required memory of a point in this set.
We obtain the following result for the subspace approximation problem.

\begin{theorem}\label{streaming-linear-subspace}
Let $\eps\in(0,1/2)$ and $j \ge 1$. On input a stream of $n$ points, algorithm $\algstr$ maintains in overall time $O(ndj \log^2 n/\epsilon)$
a set $S$ of $O(j \log^2 n / \epsilon)$ points and a real value $\Delta^S$ such that for every linear $j$-subspace $C\subseteq\REAL^d$
the following inequalities are satisfied:
 $$
 \dist^2(A,C) \le \dist^2(S,C) + \Delta^S \le (1+\eps) \cdot \dist^2(A,C),
$$
where $A$ denotes the matrix whose rows are the $n$ input points.

Furthermore, algorithm $\algout$ computes in time $O(dj^2 \log^4 n/\epsilon^2)$ from $S$ and $\Delta$ a coreset $(T, \Delta_T,w)$ of size $j + \lceil j/\eps\rceil -1$ such that
$$
 \dist^2(A,C) \le \dist^2(S,C) + \Delta \le (1+3\eps) \cdot \dist^2(A,C),
$$
\end{theorem}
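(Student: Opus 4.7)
The plan is to establish three facts in sequence: composability of our subspace coresets, an inductive error bound for the merge-and-reduce tree, and the size/time accounting for the streaming machinery.

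First I would verify composability: given disjoint matrices $A_1, A_2$ with coresets $(R_1,\Delta_1)$ and $(R_2,\Delta_2)$ both of multiplicative error $\eps'$ for the linear $j$-subspace problem, summing the two coreset inequalities immediately gives that $(R_1 \cup R_2,\, \Delta_1 + \Delta_2)$ is an $\eps'$-coreset of $A_1 \cup A_2$, as noted at the start of Section~\ref{sec:streaming}. I would then derive a compositional rule: applying $\Alg$ with parameter $\gamma$ to an $\eps'$-coreset of $A$ produces a coreset of $A$ with error at most $(1+\gamma)(1+\eps') - 1$, by chaining the guarantee of Theorem~\ref{thm:linearsubspace:coreset} with the existing coreset inequality.

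Next I would establish the merge-and-reduce invariant by induction on tree levels within each outer-loop iteration $h$. Iteration $h$ runs a binary-counter merge-and-reduce with error parameter $\gamma_h = \eps/(10h)$ on batches of $2\coresetsize(\gamma_h, j)$ raw points; by the compositional rule, each $T_i$ at level $i$ is a coreset of the $2^i \coresetsize(\gamma_h,j)$ input points it represents with error $(1+\gamma_h)^i - 1$. Since the tree in iteration $h$ has depth at most $\log_2(2^h/\coresetsize(\gamma_h,j)) \le h$, every $T_i$ in iteration $h$ has error at most $(1+\gamma_h)^h \le e^{\eps/10} - 1 \le \eps/8$ for $\eps \le 1/2$; similarly, each stored $S_i$ from a completed prior iteration is a $(1+\eps/8)$-coreset of the $2^i$ points of that iteration. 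Composability (applied again) then upgrades the union $S \cup Q$ with offset $\Delta^S$ to a $(1+\eps/2)$-coreset of the entire prefix of the stream read so far. This directly yields the first inequality claimed in the theorem.

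Finally I would do the size, running-time, and output accounting. The cardinality bound follows from $|S_i| + |T_i| \le O(\coresetsize(\gamma_i,j)) = O(ji/\eps)$ summed over $i = 1, \dots, h = O(\log n)$, giving $|S| = O(j\log^2 n/\eps)$. Each input point participates in one coreset computation per level of its iteration's tree, and every such computation is an SVD of an $O(j\log n/\eps) \times d$ matrix, costing $O(dj\log n/\eps)$ amortized per participating point by the running-time bound of Theorem~\ref{thm:linearsubspace:coreset}; summing over $O(\log n)$ levels yields the claimed $O(ndj\log^2 n/\eps)$. For $\algout$, the final call to $\Alg$ on $S \cup Q$ with parameter $\eps$ produces a coreset of $S \cup Q$ with error $\eps$; by the compositional rule, this is a coreset of $A$ with error $(1+\eps)(1+\eps/2) - 1 \le 3\eps$, of size $\coresetsize(\eps,j)$, computed in time $O(d \cdot (j\log^2 n/\eps)^2) = O(dj^2\log^4 n/\eps^2)$. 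The main obstacle in the argument is simply the bookkeeping of the per-iteration parameter $\gamma_h = \eps/(10h)$: because the length $n$ is unknown in advance, one cannot pick a uniform $\gamma$ tuned to $\log n$, so the doubling schedule has to be combined carefully with the tree-depth bound $h$ so that the telescoped error stays bounded by $\eps/2$ uniformly over all $h$ while still giving the claimed $\log^2 n$ (and not $\log^3 n$) dependence in the size.
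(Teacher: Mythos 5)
Your proposal is correct and follows essentially the same route as the paper's proof: composability of the $(R,\Delta)$-coresets, the telescoped error bound $(1+\eps/(10h))^h\le 1+\eps$ over the merge-and-reduce levels of each stage, and the size/time bookkeeping, with the final recompression in $\algout$ costing one extra $(1+\eps)$ factor. The only (cosmetic) differences are that you bound $(1+\eps/(10h))^h$ directly via $e^{\eps/10}$ rather than through the paper's integrality argument for $10/\eps$, and you account for the running time per point per level rather than per stage.
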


\begin{proof}
The proof follows earlier applications of the merge and reduce technique in the streaming setting~\cite{AHV04}.
We first observe that after $n$ points have been processed, we have $h = O(\log n)$. From this, the bound on the size of $S$ follows immediately.

To analyze the running time let $h^*$ be the maximum value of $h$ during the processing of the $n$ input points.
We observe that the overall running time $T(n)$ is dominated by the coreset computations. Since the
running time for the coreset computation for $n'$ input point is is $O(d(n')^2)$, we get
$$
T(n) \le \sum_{i=1}^{h^*} 2^i \cdot O(d i^2 j^2/\eps^2) = O(2^{h^*} d (h^*)^2 j^2/\eps^2).
$$
At the same time, we get $n \ge 2^{h^*} \cdot j (h^*-1) / \eps$ since the  value of $h$ reached the value $h^*$ and so
the stage $h^*-1$ has been fully processed. Using $h^* = O(\log n)$ we obtain
$$
T(n) = O(djn \log n / \eps).
$$
Finally, we would like to prove the bound on the approximation error. For this purpose fix some value of $h$.
We observe that the multiplicative approximation factor in the error bound for $T_i$ is $(1+\gamma)^i$ for $i \le h$.
Thus, this factor is at most $(1+\gamma)^h = (1+\frac{\epsilon}{10h})^h$. It remains to prove the following claim.
\begin{claim}
\label{claim:streaming}
$$
(1+\frac{\epsilon}{10h})^h \le  1+\eps.
$$
\end{claim}
\begin{proof}
In the following we will use the inequality $(1+1/n)^n < e < (1+1/n)^{n+1}$, which holds for all integer $n\ge 1$.
We first prove the statement when $10/\eps$ is integral. Then
$$
(1+\frac{\epsilon}{10h})^h  =  (1 +\frac{1}{\frac{10h}{\epsilon}})^{\frac{10h}{\eps} \cdot \frac{\eps}{10h} \cdot h}
 \le e^{\frac{\eps}{10}}  \le (1+\eps/10)^{(10/\eps +1) \cdot \eps/10} \le (1+\eps/10)^2 \le 1+\eps
$$

If $10/\eps$ is not an integer, we can find $\eps'$ with $\eps<\eps'< (1+1/10) \eps$ such that $10/\eps'$ is integral.
The calculation above shows that
$$
(1+\frac{\epsilon}{10h})^h \le (1+\frac{\epsilon'}{10h})^h \le (1+\eps'/10)^2 \le (1+\eps/5)^2 \le 1+\eps
$$
which finishes the proof.
\end{proof}
With the above claim the approximation guarantee follows.
Finally, we observe that the running time for algorithm \algout\ follows from Theorem \ref{thm:linearsubspace:coreset} and the claim on the quality is true because $(1+\eps)^2 \le (1+3\eps)$.
\end{proof}


\subsection{Streaming algorithms for the affine \mathj-subspace problem}


We continue with the affine $j$-subspace problem. This is the first coreset construction in this paper that uses weights.
However, we can still use the previous algorithm together with algorithm \textsc{affine-$j$-subspace-Coreset-Weighted-Inputs}
which can deal with weighted point sets. We obtain the following result. Let us use $\algstr^*$ and $\algout^*$
to denote the algorithms $\algstr$ and $\algout$ with algorithm $\Alg$ replaced by algorithm \textsc{affine-$j$-subspace-Coreset-Weighted-Inputs}.

\begin{theorem}\label{streaming-affine-subspace}
 On input a stream of $n$ points, algorithm $\algstr^*$ maintains in overall time $O(ndj \log^2 n/\epsilon)$
a set $S$ of $O(j \log^2 n / \epsilon)$ points weighted with a vector $w$ and a real value $\Delta^S$ such that for every affine
$j$-subspace $C\subseteq\REAL^d$ the following inequalities are satisfied (during the distance computation point weights
are treated as multiplicities):
 $$
 \dist^2(A,C) \le \sum_{i=1}^{|S|} w_i \cdot \dist^2(S_{i*},C) + \Delta^S \le (1+\eps) \cdot \dist^2(A,C),
$$
where $A$ denotes the matrix whose rows are the $n$ input points.

Furthermore, algorithm $\algout^*$  computes in time $O(dj^2 \log^4 n/\epsilon^2)$
from $(S,\Delta,w_S)$ an $\eps$-coreset $(T, \Delta_T,w_T)$ of size $j + \lceil j/\eps\rceil -1$ for the affine $j$-subspace problem.
\end{theorem}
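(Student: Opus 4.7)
The plan is to mirror the proof of Theorem~\ref{streaming-linear-subspace} almost verbatim, with three modifications to accommodate the fact that our building block is now Algorithm~\ref{algkk-weighted} (the weighted affine coreset) rather than Algorithm~\ref{algk}. The central ingredient that we need to establish first is that the weighted affine $j$-subspace coreset is \emph{composable}: if $(S_1,\Delta_1,w_1)$ is an $\eps$-coreset for the weighted point set $(A_1,u_1)$ and $(S_2,\Delta_2,w_2)$ is an $\eps$-coreset for $(A_2,u_2)$, where the two input sets are disjoint, then the concatenation $(S_1\cup S_2,\,\Delta_1+\Delta_2,\,w_1\oplus w_2)$ is an $\eps$-coreset for $(A_1\cup A_2,u_1\oplus u_2)$. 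This is a one-line consequence of summing the two inequalities in Definition~\ref{coresetdef} and using $\dist^2(A_1\cup A_2,C)=\dist^2(A_1,C)+\dist^2(A_2,C)$. Composability is exactly what makes the merge-and-reduce tree correct: every time we merge two summaries, we add the two $\Delta$ values and concatenate the (weighted) point sets before feeding the result back into Algorithm~\ref{algkk-weighted}.

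Given composability, the analysis of $\algstr^\ast$ proceeds exactly as in the proof of Theorem~\ref{streaming-linear-subspace}. In stage $h$ we invoke the coreset subroutine with precision $\gamma=\eps/(10h)$ and build a merge-and-reduce tree of depth at most $h$. Because $\Alg$ has been replaced by the weighted version, the batch sizes are $2\cdot\coresetsize(\gamma,j)=O(j/\gamma)$ and each reduction preserves the coreset property up to a multiplicative factor of $1+\gamma$ (by Theorem~\ref{col22} applied to weighted inputs, cf.\ Algorithm~\ref{algkk-weighted}). After $h$ levels the accumulated multiplicative error is at most $(1+\gamma)^h\le 1+\eps$ by Claim~\ref{claim:streaming}, while the $\Delta$ values add up in the obvious manner. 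The union of the $h=O(\log n)$ surviving summaries contributes $O(j/\gamma)=O(j\log n/\eps)$ points each, so $|S|=O(j\log^2 n/\eps)$, matching the claimed bound. The running time bound follows by the same geometric sum as in the linear case: each reduction on $n'$ input rows costs $O(\min\{n'd^2,(n')^2d\})$ by Theorem~\ref{col22}, and summing over all reductions (with at most $O(\log n)$ reductions per input point) yields $O(ndj\log^2 n/\eps)$.

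For $\algout^\ast$, we simply invoke Algorithm~\ref{algkk-weighted} one final time on the maintained weighted summary $(S,w)$ with precision $\eps$; the output coreset has size $O(j/\eps)$ (up to the extra factor of two coming from the symmetric construction in Algorithm~\ref{algkk-weighted}, which the theorem statement absorbs), and by composability combined with the final reduction step its additive constant is $\Delta_T+\Delta^S$. The error analysis is a single application of $(1+\eps)^2\le 1+3\eps$, identical to the final line of the linear case; the running time is dominated by the SVD of $S$, which has $O(j\log^2 n/\eps)$ rows in $d$ dimensions, yielding $O(dj^2\log^4 n/\eps^2)$.

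I expect the only subtle point to be the handling of weights through the merge-and-reduce recursion: because Algorithm~\ref{algkk-weighted} reweights its output to uniform weights $W/(2m)$, we must verify that repeatedly rescaling weights across levels does not corrupt the coreset guarantee. This is handled once we observe that the coreset definition treats weights purely as multiplicities and is invariant under splitting a weighted point into infinitesimally weighted copies (as already exploited in the derivation of Algorithm~\ref{algkk-weighted} from the unweighted Algorithm~\ref{algkk-unweighted}); thus the inductive argument in the merge-and-reduce tree goes through without modification, and everything else is bookkeeping.
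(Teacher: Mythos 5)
Your proposal is correct and follows exactly the route the paper intends: the paper states Theorem~\ref{streaming-affine-subspace} without proof, implicitly deferring to the merge-and-reduce analysis of Theorem~\ref{streaming-linear-subspace} with \textsc{affine-$j$-subspace-Coreset-Weighted-Inputs} substituted for $\Alg$, which is precisely what you carry out (composability of the $(S,\Delta,w)$ triples, error accumulation via Claim~\ref{claim:streaming}, and reduction of weights to multiplicities). Your explicit treatment of the weight-rescaling across levels and of the factor-two in the coreset size is a useful filling-in of details the paper leaves tacit, not a departure from its approach.
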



\subsection{Streaming algorithms for \mathk-means clustering}


Next we consider streaming algorithms for $k$-means clustering. Again we need to slightly modify our approach due to the
fact that the best known coreset constructions are randomized. We need to make sure that the sum of all error probabilities
over all coreset constructions done by the algorithm is small. We assume that we have access to an algorithm
\textsc{$k$-MeansCoreset}$(A,k,\eps, \delta,w)$ that computes on input a weighted point set $A$ (represented by a matrix $A$ and weight vector
$w$) with probability $1-\delta$ an $\eps$-coreset $(S,\Delta,w)$ of size $\coresetsize(k,\eps,\delta)$ for the $k$-means clustering problem
as provided in Theorem \ref{tthm}.

\begin{algorithm}[h!t]
    \caption{\textsc{Streaming-$k$-Means-Approximation}$(k,\eps,\delta)$\label{algstream2}}
Set $Q\gets\emptyset$\\
Set $j \gets 2$\\
\For{every integer $h$ from $1$ to $\infty$}{
Set $S_h \gets \emptyset$; $\Delta_h^S \gets 0$; $u_h \gets 0$ \\
Set $T_i\gets \emptyset$, $v_i \gets 0$ and $\Delta_i^T \gets 0$ for every integer $1 \leq i\leq h$ \\
Set $\gamma\gets \eps/(10 h)$\\
\For{$2^h$ iterations}{
Read the next point from the input stream and add it to $Q$ \\
\If{$|Q|=2\cdot \coresetsize(k,\gamma,\delta)$}
{
Set $(T,\Delta^T,v) \gets \textsc{$k$-MeansCoreset}(Q,k,\gamma, \delta / j^2,v)$ \\ 
Set $j \gets j+1$\\
Set $i\gets1$\\
\While{$T_i\neq \emptyset$}
{
  Set $v$ to be the weight vector composed of $v$ and $v_i$\\
  Set $(T, \Delta^T,v) \gets \textsc{$k$-MeansCoreset}(T\cup T_i,k, \gamma, \delta/j^2,v)$ \\
  Set $j \gets j+1$\\
  Set $\Delta^T \gets \Delta^T +\Delta_i^T$\\
	Set $T_i\gets \emptyset$;	$\Delta_i^T \gets 0$ \\
  Set $i\gets i+1$ \\
}
Set $T_i\gets T$; $\Delta_i^T \gets \Delta^T$; $v_i \gets v$ \\
Define $S \gets \bigcup_{i=1}^{h} S_i \cup T_i$ and $\Delta^S \gets \sum_{i=1}^{h} \Delta_i^S + \Delta_i^T$\\
Define $w$ to be the weight vector corresponding to $S$\\
Set $Q\gets\emptyset$
}
Set $S_h\gets T$; $\Delta_h^S \gets \Delta^T$; $u_h \gets v$\\
}}
\end{algorithm}

\begin{theorem}\label{streaming-k-means}
 On input a stream of $n$ points, algorithm \textsc{Streaming$k$-Means-Approximation} maintains with probability at least $1-\delta$ in
overall time $nd(k\log n\log(1/\delta)/\eps)^{O(1)}$  a set $S$ of $M=(k\log n\log(1/\delta)/\eps)^{O(1)}$ points weighted with a vector $w$ and a real value $\Delta^S$
such that for every set $C\subseteq \REAL^d$ of $k$ centers the following inequalities are satisfied:
 $$
 \dist^2(A,C) \le \sum_{i=1}^{|S|} w_i \cdot \dist^2(S_{i*},C) + \Delta^S \le (1+\eps) \cdot \dist^2(A,C),
$$
where $A$ denotes the matrix whose rows are the $n$ input points.

Furthermore, with probability at least $1-\delta'$ we can compute in time $d(k\log n\log(1/\delta')/\eps)^{O(1)}$
from $(S,\Delta,w_S)$ a coreset $(T, \Delta^T,w_T)$ of size $O\left(\frac{k^3 \log^2 k}{\eps^4} \log(1/\delta')\right)$ such
that
$$
 \dist^2(A,C) \le  \sum_{i=1}^{|S|} w_i \cdot \dist^2(S_{i*},C) + \Delta \le (1+3\eps) \cdot \dist^2(A,C).
$$
Finally, we can compute in $|T|^{O(k/\eps)}$ time a $(1+O(\eps))$-approximation for the $k$-means problem from this coreset.
\end{theorem}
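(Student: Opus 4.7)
The plan is to mirror the merge-and-reduce analysis of Theorem~\ref{streaming-linear-subspace}, but with two additional complications: the underlying \textsc{$k$-MeansCoreset} routine is randomized, and the coreset guarantee involves both weights and an additive $\Delta$-term. For correctness of the streaming invariant, I would fix a level $h$ and note that any coreset ever placed in slot $T_i$ (or $S_h$) at that level has been obtained by at most $h$ nested applications of \textsc{$k$-MeansCoreset}, each with parameter $\gamma = \eps/(10h)$. Conditioning on all those calls succeeding, the multiplicative errors compose as $(1+\gamma)^h \le 1+\eps$ by the argument of Claim~\ref{claim:streaming}, and the $\Delta$-values sum correctly because the coreset property is closed under disjoint unions of point sets together with summation of the corresponding $\Delta$'s, exactly as in the discussion preceding Theorem~\ref{streaming-linear-subspace}. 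To handle the randomization, I would use the $1/j^2$ schedule: the $j$-th invocation of \textsc{$k$-MeansCoreset} is made with failure probability $\delta/j^2$, so a union bound over all $O(n)$ invocations yields total failure probability $O(\delta)$, which we absorb by rescaling $\delta$ at the start.

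For the size and running-time bound of the streaming summary, I would observe that after $n$ insertions we have $h = O(\log n)$ and hence at most $\mathrm{poly}(n)$ invocations of \textsc{$k$-MeansCoreset}; every call is therefore made with accuracy $\gamma = \Theta(\eps/\log n)$ and failure parameter $\Omega(\delta/n^2)$. Plugging these into Theorem~\ref{tthm} gives $\coresetsize = (k \log n \log(1/\delta)/\eps)^{O(1)}$. Since $S$ consists of at most one coreset per level and there are $O(\log n)$ levels, its total size is still $(k \log n \log(1/\delta)/\eps)^{O(1)}$, as claimed. The running time is dominated by the coreset computations: each point participates in $O(\log n)$ of them, and each call costs polynomial in its input size and in $k/\gamma$, summing to $nd \cdot (k \log n \log(1/\delta)/\eps)^{O(1)}$.

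For the second assertion, I would apply Theorem~\ref{tthm} once more to the weighted set $(S, w_S)$ with accuracy $\eps$ and failure probability $\delta'$, producing a coreset $(T, \Delta', w_T)$ of the stated size; setting $\Delta^T = \Delta' + \Delta^S$ and composing the two $(1+\eps)$-factors gives the $(1+3\eps)$-guarantee (using $(1+\eps)^2 \le 1+3\eps$ for $\eps \le 1/2$), just as in the proof of Theorem~\ref{streaming-linear-subspace}. For the final $(1+O(\eps))$-approximation in time $|T|^{O(k/\eps)}$, I would first apply the dimensionality reduction of Corollary~\ref{cor:demred:kmeans} (or the tighter $\lceil k/\eps\rceil$-dimensional variant of Cohen et al.\ discussed in Section~\ref{ourresults}) to collapse the weighted coreset $T$ onto an $O(k/\eps)$-dimensional subspace, and then solve $k$-means exactly on the reduced weighted instance via an Inaba--Katoh--Imai style enumeration of candidate centroids, whose running time is $|T|^{O(k/\eps)}$. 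Combining the coreset guarantee with the dimensionality-reduction error (via Corollary~\ref{corapprox}) converts this optimal solution on $T$ into a $(1+O(\eps))$-approximation on $A$.

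The main obstacle, in my view, is the careful bookkeeping of randomization together with the polynomial-in-$h$ scaling of $\gamma$: one must argue that even though later merges at level $h$ use \emph{already-noisy} inputs produced by earlier randomized calls, the conditional argument still goes through once we condition on the good event that all of the at most $\mathrm{poly}(n)$ relevant calls succeed. Everything else is a direct adaptation of Theorem~\ref{streaming-linear-subspace} to the weighted, randomized setting afforded by Theorem~\ref{tthm}.
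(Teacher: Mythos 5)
Your proposal follows essentially the same route as the paper: merge-and-reduce with per-call failure probabilities $\delta/j^2$ summing to at most $\delta$, error composition via Claim~\ref{claim:streaming} exactly as in Theorem~\ref{streaming-linear-subspace}, one further application of the $k$-means coreset construction for the second assertion, and enumeration of candidate centers as means of $O(1/\eps)$-point subsets of $T$ for the final approximation. The only (harmless) deviation is the extra dimensionality-reduction step you insert before that enumeration, which the paper does not need because the subset-mean enumeration already yields a $(1+O(\eps))$-approximation in $|T|^{O(k/\eps)}$ time without reducing the dimension first.
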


\begin{proof}
We first analyze the success probability of the algorithm.
In the $j$th call to a coreset construction via $\Alg$ during the execution of Algorithm~\ref{algstream}, we apply the above coreset construction with probability of failure $\delta/j^2$. After reading $n$ points from the stream, all the coreset constructions will succeed with probability at least
\[
1-\delta\sum_{j=2}^{\infty} \frac{1}{j^2}\geq 1-\delta.
\]
Suppose that all the coreset constructions indeed succeeded (which happens with probability at least $1-\delta$), the error bound follows from Claim
\ref{claim:streaming} in a similar way as in the proof of Theorem \ref{streaming-linear-subspace}. The space bound of $T$ follows from the fact
that $h=O(\log n)$ and since $j^2/\delta$ is at most $n^2/\delta$.

The running time follows from the fact that the computation time of a coreset of size $(k\log n\log(1/\delta)/\eps)^{O(1)}$ can be done in time
$d(k\log n\log(1/\delta)/\eps)^{O(1)}$.

The last result follows from the fact that for every cluster there exists
a subset of $O(1/\eps)$ points such that their mean is a $(1+\eps)$-approximation to the center of the cluster (and so we can enumerate all
such candidate centers to obtain a $(1+\eps)$-approximation for the coreset).
\end{proof}


\section{Coresets for Affine \jj-Dimensional Subspace \kk-Clustering}\label{sec:projclustering}

Now we discuss our results for the projective clustering problem. A preliminary version of parts of this chapter was published in~\cite{S14}.

\subsection{The Affine \jj-Dimensional Subspace \kk-Clustering Problem}\label{integerprojclustering}
In this section, we use the sensitivity framework to compute coresets for the affine subspace clustering problem. We do so by combining the dimensionality reduction technique from Theorem~\ref{thm:main2} with the work by Varadarajan and Xiao~\cite{VX12-soda} on coresets for the integer linear projective clustering problem.

Every set of $k$ affine subspaces of dimension $j$ is contained in a $k(j+1)$-dimensional linear subspace.
Hence, in principle we can apply Theorem~\ref{thm:dim} to the integer projective clustering problem, using $m := O(kj / \eps^2)$ and replace the input $A$ by the low rank approximation $A^{(m)}$.

The problem with combining this dimensionality reduction with known algorithms for the integer projective clustering problem is that the lower dimensional representation of a point set does not necessarily have integer coordinates even if the original points have this property. We discuss the details of this difficulty before we consider the technique by Varadarajan and Xiao to obtain the coreset result.
For a matrix $A\in\REAL^{n\times d}$ and a subset $V\subseteq\REAL^d$, we denote
\[
\dist_{\infty}(A,V)=\max_{i} \dist(A_{i*},V),
\]
where the maximum is over $i\in\br{1,\cdots,n}$.
We need the following well-known technical fact, where we denote the determinant of $A$ by $\det(A)$.
A proof can for example be found in ~\cite{GKL95}, where this theorem is the second statement of Theorem 1.4 (where the origin is a vertex of the simplex). 

\newcommand{\vol}{\mathrm{Vol}}
\begin{lemma}\label{volsim}
Let $A\in\REAL^{k\times d}$ be a matrix of full rank $k\leq d$. Let $S$ be the $k$-simplex that is the convex hull of the rows of $A$ and the origin,
\[
S=\br{a_1A_{1*}+\ldots+a_kA_{k*}\mid a_i\geq 0, 1\leq i\leq k, \sum_{j=1}^k a_j=1}.
\]
Then the $k$-volume of $S$ is
\[
\vol(S)=\frac{1}{k!}\sqrt{\det(AA^T)}.
\]
\end{lemma}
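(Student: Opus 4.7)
\medskip\noindent\textbf{Proof plan.}
The plan is to reduce the statement to the elementary case of a full-dimensional simplex in $\REAL^k$ by choosing an orthonormal basis for the $k$-dimensional linear subspace $W\subseteq\REAL^d$ spanned by the rows of $A$ (which has dimension exactly $k$ because $A$ has full rank). Since the simplex $S$ lies inside $W$, its $k$-volume can be computed in any isometric coordinate system on $W$, so the first step is to switch to such coordinates via the (thin) singular value decomposition of $A$.

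Concretely, I would write $A = U\Sigma V^T$ with $U\in\REAL^{k\times k}$ orthogonal, $\Sigma\in\REAL^{k\times k}$ diagonal with strictly positive diagonal entries $\sigma_1,\ldots,\sigma_k$ (valid since $A$ has full rank $k$), and $V\in\REAL^{d\times k}$ with orthonormal columns spanning $W$. The map $x\mapsto V^T x$ is an isometry from $W$ onto $\REAL^k$, and under this map the rows of $A$ are sent to the rows of $AV=U\Sigma\in\REAL^{k\times k}$ while the origin is fixed. Hence $\vol(S)$ equals the $k$-volume of the $k$-simplex in $\REAL^k$ whose vertices are the origin and the rows of $U\Sigma$.

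For a simplex in $\REAL^k$ with one vertex at the origin and the remaining $k$ vertices given by the rows of a square matrix $B$, the classical formula is $\vol=\frac{1}{k!}|\det B|$ (obtained by computing the volume of the parallelepiped spanned by these row vectors and dividing by $k!$). Applying this with $B = U\Sigma$, and using multiplicativity of the determinant together with $|\det U|=1$, yields
\[
\vol(S) \;=\; \frac{1}{k!}\,|\det(U\Sigma)| \;=\; \frac{1}{k!}\prod_{i=1}^k \sigma_i.
\]
The final step is to relate this to $\det(AA^T)$: since $V^TV = I_k$, we have $AA^T = U\Sigma V^T V\Sigma U^T = U\Sigma^2 U^T$, so $\det(AA^T) = \det(\Sigma)^2 = \prod_{i=1}^k \sigma_i^2$. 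Taking square roots and comparing gives $\vol(S)=\frac{1}{k!}\sqrt{\det(AA^T)}$, as claimed.

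There is no real obstacle; the only point that requires a bit of care is step one, namely justifying that the $k$-volume of $S$ as a subset of $\REAL^d$ agrees with its $k$-volume after applying the isometry $V^T$ (this is immediate because $V$ has orthonormal columns, so left-multiplication by $V^T$ preserves Euclidean distances on $W$), and invoking the standard parallelepiped-determinant formula. Alternatively, one could cite this as a standard fact (e.g., the Cayley--Menger / Gram determinant formula for the volume of a simplex), shortening the proof to the SVD computation of $\det(AA^T)$ above.
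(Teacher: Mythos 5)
Your argument is correct. Note, however, that the paper does not prove Lemma~\ref{volsim} at all: it simply cites the reference \cite{GKL95} (Theorem~1.4 there, the case where the origin is a vertex of the simplex) and treats the formula as a known fact. Your proposal therefore supplies a self-contained derivation where the paper offers only a pointer. The derivation itself is sound: the thin SVD $A=U\Sigma V^T$ with all $\sigma_i>0$ exists precisely because $A$ has full rank $k$; left-multiplication by $V^T$ is an isometry of the row space onto $\REAL^k$ carrying the vertices to the rows of $U\Sigma$ and fixing the origin, so the $k$-volume is preserved; the parallelepiped formula gives $\frac{1}{k!}\lvert\det(U\Sigma)\rvert=\frac{1}{k!}\prod_i\sigma_i$; and $\det(AA^T)=\det(U\Sigma^2U^T)=\prod_i\sigma_i^2$ closes the computation. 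The only cosmetic point worth flagging is that the displayed formula in the lemma statement (with $\sum_j a_j=1$) literally describes the convex hull of the rows alone; the intended set, consistent with the verbal description and with the claimed volume, is the convex hull of the rows \emph{and} the origin (i.e.\ $\sum_j a_j\le 1$), which is what your proof correctly computes the volume of.
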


In the following lemma (that goes back to Lemma 5.1 in~\cite{VX12-soda}), we establish a lower bound on $\dist_{\infty}(A,\mathcal{T})$ for any $\mathcal T$ that is a set of $k$ affine $j$-dimensional subspaces. We will later construct a grid, move all points of $A^{(m)}$ to this grid and than scale in order to obtain integer coordinates. For bounding the error of this process, we will need Lemma~\ref{lem:lowerboundpc}.

Observe that a low dimensional $A$ could be completely covered by $\mathcal{T}$, implying that $\dist_{\infty}(A,\mathcal{T})$ would be zero. However, if the rank of $A$ is at least $k(j+1) + 1$, then there will always be at least one point that is not covered, and we can give a lower bound of the distance of this point to $\mathcal{T}$.

\begin{lemma}[Variation of Lemma~5.1 in~\cite{VX12-soda}]\label{lem:lowerboundpc}
 Let $M\ge 2, k$ and $j \leq d-1$ be positive integers.
Let $\mathcal{Q}_{jk}$ be the family of all sets of $k$ affine subspaces of $\mathbb{R}^d$, each of dimension $j$.
 Suppose that $A \in \{-M,\ldots,M\}^{n\times d}$ 
 is a matrix of rank larger than $k(j+1)$. Then
\begin{itemize}
\item
for every $C \in \mathcal{Q}_{jk}$ we have $\dist_{\infty}(A,C) \ge  \frac{1}{ (d M)^{cj}}$ for some universal constant $c>0$. 
\end{itemize}
If $A$ additionally satisfies $||A_{i*}||_2 \le M$, for all $1 \le i \le n$, then we have
\begin{itemize}
\item
$\dist_{\infty}(A,C) \ge  \frac{1}{M^{h(j)}}$ for a non-negative function $h$ that depends only on $j$.
\end{itemize}
\end{lemma}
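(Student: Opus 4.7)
The plan is to combine a pigeonhole argument with a volume/singular-value estimate for a carefully chosen simplex. Since $\rank(A) > k(j+1)$, I would first select $k(j+1)+1$ rows of $A$ that are linearly independent. Each of these rows is assigned to its nearest subspace in $C$, and by the pigeonhole principle at least one of the $k$ subspaces, call it $L$, receives $\lceil (k(j+1)+1)/k\rceil = j+2$ such rows. Call them $p_1,\ldots,p_{j+2}$. They are linearly independent as vectors, hence affinely independent, and each satisfies $\dist(p_i,L)\le \dist_\infty(A,C)=:\Delta$.

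Next, I would form the $(j+1)\times d$ matrix $B$ whose rows are $p_2-p_1,\ldots,p_{j+2}-p_1$. By Lemma~\ref{volsim}, the $(j+1)$-volume of the simplex on $p_1,\ldots,p_{j+2}$ equals $\tfrac{1}{(j+1)!}\sqrt{\det(BB^T)}$, which is also $\tfrac{1}{(j+1)!}\prod_{i=1}^{j+1}\sigma_i(B)$. Since $B$ has integer entries and full row rank, $BB^T$ is a nonsingular integer matrix, so $\det(BB^T)\ge 1$, giving the crucial lower bound $\prod_{i=1}^{j+1}\sigma_i(B)\ge 1$.

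The main technical step is an upper bound on $\sigma_{j+1}(B)$. Writing $L=t+W$ with $W$ the parallel $j$-dimensional linear subspace, I decompose each row of $B$ as a sum of a component in $W$ and a component in $W^\perp$, the latter being $\epsilon_i-\epsilon_1$ with $\|\epsilon_i\|\le\Delta$. Since the ``parallel'' part $B^\parallel$ has rank at most $j$, the min-max characterization applied to the $(d-j)$-dimensional subspace $W^\perp$ gives
\[
\sigma_{j+1}(B)\;\le\;\max_{u\in W^\perp,\|u\|=1}\|Bu\|\;=\;\max_{u\in W^\perp,\|u\|=1}\|B^\perp u\|\;\le\;\|B^\perp\|_F\;\le\;2\Delta\sqrt{j+1}.
\]
For the remaining $j$ singular values I use $\sigma_i(B)\le\|B\|_F$, where the row norms of $B$ are bounded by $2M\sqrt{d}$ in the integer case and by $2M$ under the additional assumption $\|A_{i*}\|_2\le M$. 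Combining this with $\prod_i\sigma_i(B)\ge 1$ immediately yields, in the first case,
\[
\Delta\;\ge\;\frac{1}{2\sqrt{j+1}\,(2M\sqrt{d(j+1)})^{\,j}}\;\ge\;\frac{1}{(dM)^{cj}}
\]
for a universal constant $c$ (absorbing factors of $(j+1)$ and constants into the exponent, using $M\ge 2$), and in the second case a bound of the form $\Delta\ge M^{-h(j)}$ for a function $h$ depending only on $j$.

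I expect the main obstacle to be the upper bound on $\sigma_{j+1}(B)$: one must carefully exploit that the ``parallel'' component of $B$ is rank-deficient in order to convert the coordinate-wise closeness of the $p_i$ to $L$ into a bound on the smallest singular value. The pigeonhole and integrality steps are then routine, and the two cases differ only in the trivial bound on the row norms of $B$.
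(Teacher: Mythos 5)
Your proof is correct, and it reaches the conclusion by a genuinely different technical route than the paper's. Both arguments ultimately rest on the same integrality fact---a full-row-rank integer matrix $B$ has $\det(BB^T)\ge 1$, i.e.\ the $(j+1)$-volume spanned by suitable points is bounded below---but you convert ``all chosen points lie within $\Delta$ of a single $j$-dimensional affine subspace'' into a volume upper bound via the Courant--Fischer characterization of $\sigma_{j+1}(B)$, whereas the paper encloses the simplex on $B\cup\{\vec 0\}$ in an explicit $(j+1)$-dimensional box whose short side is $2\dist_{\infty}(B,V)$ and compares volumes. Your selection of points also differs: you pigeonhole $k(j+1)+1$ linearly independent rows over the $k$ subspaces, while the paper partitions all rows by nearest subspace and extracts a part of rank at least $j+2$; both are valid. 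Two things your version buys: (i) working with the differences $p_i-p_1$ makes the argument translation-invariant, so you handle affine subspaces directly and keep integer entries throughout, whereas the paper translates $L$ and the rows of $A'$ to the origin ``for the analysis'' and then still invokes integrality of those rows---a step that needs essentially your difference trick to be made rigorous; (ii) the singular-value bookkeeping ($\prod_{i}\sigma_i(B)\ge 1$, $\sigma_{j+1}(B)\le 2\Delta\sqrt{j+1}$, $\sigma_i(B)\le\|B\|_F$) replaces the box-containment geometry with a routine linear-algebra computation. The resulting constants differ slightly ($2\sqrt{j+1}\,(2M\sqrt{d(j+1)})^{\,j}$ versus $2(j+1)!\,(2\sqrt{d}M)^j$), but both are absorbed into $(dM)^{cj}$ using $j+1\le d$ and $M\ge 2$, and in both proofs the second bullet follows by the same substitution of the row-norm bound $2M$ for $2M\sqrt{d}$.
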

\begin{proof}
Let $C\in\ucc{Q}_{jk}$ be any set of $k$ affine $j$-dimensional subspaces.
Consider the partitioning $\br{A_1,\cdots,A_k}$ of the rows in $A$ into $k$ matrices, according to their closest subspace in $C$. Ties broken arbitrarily. Let $A'$ be a matrix in this partition whose rank is at least $j+2$. There must be such a matrix by the assumptions of the lemma. By letting $L\in C$ denote the closest affine subspace from $C$ to the rows of $A'$, we have
\begin{equation}\label{eq144}
\dist_{\infty}(A,C)
\geq \dist_{\infty}(A',L).
\end{equation}

Without loss of generality, we assume that $L$ is a $j$-dimensional linear subspace (intersects the origin), otherwise, for the analysis 
we translate both $L$ and the rows of $A'$. \newcommand{\Sp}{\mathrm{sp}}
Let $B=\br{A_{i_1*'},\cdots, A_{i_{j+1}*}'}$ be a set of rows from $A'$ that span a $(j+1)$-dimensional linear subspace $\Sp\br{B}$. Let $V\subseteq \Sp\br{B}$ be a $j$-dimensional linear subspace that contains the projection of $L$ onto $\Sp\br{B}$. Since $A'\supseteq B$, and by the Pythagorean Theorem respectively,
\begin{equation}\label{eq244}
\dist_{\infty}(A',L)\geq \dist_{\infty}(B,L) \geq \dist_{\infty}(B,V).
\end{equation}
    
Consider a $j$-dimensional cube that is contained in $V$, and contains the origin as well as the projection of $B$ onto $V$. Suppose we choose the cube such that its side length is minimal, and let $s$ be this side length. 
For $A \in \{-M,\ldots,M\}^{n\times d}$, we know that
\begin{align}\label{boxlength-a}
s \le 2 \cdot\max_{b\in B} \norm{b}_2 \le 2 \sqrt{d} M .
\end{align}
If all points in $A$ satisfy $||A_i||\le M$, 
then \begin{align}\label{boxlength-b}
s \le 2 \cdot \max_{b\in B}\norm{b}_2 \le 2 M.
\end{align}

The cube can be extended to a $(j+1)$-dimensional box that also contains $B$ by assigning a side length of $2\dist_{\infty}(B,V)$ to the remaining orthogonal direction in $\Sp\br{B}$. 
The $(j+1)$-dimensional volume of this box is
$ \mathrm{Vol}(Box):= s^j \cdot 2\dist_{\infty}(B,V)$,
which means that 
\begin{equation}\label{rear}
\dist_{\infty}(B,V)\geq \frac{1}{2\cdot s^j} \cdot \mathrm{Vol}(Box).
\end{equation}
A lower bound for the volume of the box is obtained by noting that the box contains $B\cup\br{\vec{0}}$, and thus contains the $(j+1)$-simplex whose vertices are the points of $B$ and the origin. Observe that the origin is not contained in the convex hull of $B$ because the $j+1$ points in $B$ are linearly independent and span a $(j+1)$-dimensional linear subspace. Thus considering the simplex with vertices $B \cup \{\vec 0\}$ is well-defined, and this simplex is $(j+1)$-dimensional.
Hence, the volume $\mathrm{Vol}(Box)$ of the box is larger than the volume $\mathrm{Vol}(Simplex)$ of the simplex, i.e.,
\begin{equation}\label{box}
\mathrm{Vol}(Box)\geq \mathrm{Vol}(Simplex).
\end{equation}

By letting $F\in\REAL^{(j+1)\times {d}}$ denote the matrix whose rows are the points of $B$ and using $A=F$ in Lemma~\ref{volsim}, we have that the volume of the simplex is
\begin{equation}\label{jjj}
\mathrm{Vol}(Simplex):=\frac{1}{(j+1)!}\sqrt{|\det (F F^T)|}\geq \frac{1}{(j+1)!}\,,
\end{equation}
where the last inequality follows by combining the facts: (i) $\det(FF^T)=\det(D^2)\geq 0$ by letting $UDV^T$ denote the SVD of $F$, (ii) $\det(FF^T)\neq 0$ since $F$ is invertible (has full rank), and (iii) each entry of $F$ is an integer, so $\det(FF^T)>0$ implies $\det(FF^T)\geq 1$.
Combining the last inequalities yields 
\[
\begin{split}
\dist_{\infty}(A,C)
&\geq \dist_{\infty}(A',L)
\geq \dist_{\infty}(B,V) 
\geq \frac{1}{2s^j} \cdot \mathrm{Vol}(Box)\\
&\geq \frac{1}{2s^j} \cdot \mathrm{Vol}(Simplex)
\geq  \frac{1}{2(j+1)!\cdot s^j},
\end{split}
\]
where the inequalities hold respectively by~\eqref{eq144},~\eqref{eq244},~\eqref{rear},~\eqref{box}, and~\eqref{jjj}.
Now~\eqref{boxlength-a} implies that $\dist_{\infty}(A,C)\geq  \frac{1}{2(j+1)!\cdot (2\sqrt{d}M)^j} \ge \frac{1}{(dM)^{cj}}$ for some constant $c > 0$.
If, additionally, we have $\|A_{i*}\|_2 \le M$ then ~\eqref{boxlength-b} implies that $\frac{1}{2(j+1)!\cdot (2M)^j} \ge \frac{1}{M^{h(j)}}$ for a function $h$ that only depends on $j$.
\end{proof}


\subsection{\Linfty-coresets}


Our next step is to introduce $\L_\infty$-coresets, which will be a building block in the computation
of coresets for the affine $j$-dimensional $k$-clustering problem.
An $\L_\infty$-coreset $S$ is a coreset approximating the maximum distance between the point set and any query shape. The name is due to the fact that the maximum distance is the infinity norm of the vector that consists of the distances between each point and its closest subspace. The next definition follows \cite{EV05}.

\begin{definition}[$\L_{\infty}$-coreset]\label{linfty}
Let $A$ in $\REAL^{n \times d}$, and $\eps>0$. Let $\ucc{Q}$ be a family of closed and non-empty subsets of $\RR^d$.
A matrix $S \in \REAL^{r \times d}$ whose rows are a subset of $r$ rows from $A$ is an \emph{$\epsilon$-$\L_\infty$-coreset for $(A,\ucc{Q})$}, if for every $C \in \ucc{Q}$ we have
\[
\max_{i\in[n]} \dist(A_{i\ast},C) \le (1+\eps) \cdot \max_{i\in[r]} \dist(S_{i\ast},C).
\]
If $\ucc{Q}=\ucc{Q}_{jk}$ is the family of all sets of $k$ affine subspaces of dimension $j$, then we call the $\epsilon$-$\L_\infty$-coreset  an \emph{$\L_\infty$-$(\epsilon,j,k)$-coreset for $A$}.
\end{definition}

We need the following result on $L_\infty$-coresets for our construction.

\begin{theorem}[\cite{EV05}]\label{evmaintheorem}
Let $M\geq2$ be an integer and $A\in\br{-M,\ldots,M}^{n \times d}$.
Let $k\geq1$ and $\eps\in(0,1)$.
There is an $\L_\infty$-$(\epsilon,d-1,k)$-coreset $S$ for $A$, of size $|S|=(\log (M) / \eps)^{f(d,k)}$, where $f(d,k)$ depends only on $d$ and $k$. Moreover, $S$ can be constructed (with probability $1$) in $n\cdot |S|^{O(1)}$ time.
\end{theorem}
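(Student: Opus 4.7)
The plan is to follow the approach of~\cite{EV05}, using the integer-coordinate assumption to discretize the relevant distance spectrum, and then to induct on $k$.

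For the base case $k=1$, note that a single hyperplane is described by $O(d)$ parameters, and by (a direct analog of) Lemma~\ref{lem:lowerboundpc} every positive distance $\dist(A_{i*}, H)$ lies in $[(dM)^{-O(d)}, M\sqrt{d}]$. Thus only $N = O(d\log(dM)/\eps)$ multiplicative scales of the form $r_t = (1+\eps)^t$ are relevant. At each scale $r_t$ I would consider the range space on $A$ whose ranges are the symmetric slabs of thickness $2r_t$ around arbitrary hyperplanes; this range space has VC-dimension $O(d)$, so a standard deterministic $\eps$-net construction yields a subset $N_t \subseteq A$ of size $O((d/\eps)\log(d/\eps))$ that hits every such slab containing an $\eps$-fraction of $A$. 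Taking $S^{(1)} = \bigcup_t N_t \subseteq A$ gives a single-hyperplane $\L_\infty$-coreset of size $(\log M / \eps)^{O(d)}$: for any hyperplane $H$ whose true maximum distance to $A$ is in $[r_t, r_{t+1}]$, some point of $N_t$ must lie in the slab of width $2r_t$ around $H$, yielding the required $(1+\eps)$-approximation.

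For the inductive step from $k-1$ to $k$, first compute an $O(1)$-approximation $C^* = \{H_1^*,\ldots,H_k^*\}$ to the $k$-hyperplane $\L_\infty$-clustering of $A$, partition $A$ into $A_1 \cup \cdots \cup A_k$ by nearest representative in $C^*$, and recursively build an $(\eps/(2k))$-$\L_\infty$-$(d-1, k-1)$-coreset $S_i$ for each $A_i$. The output is $S = \bigcup_{i=1}^k S_i$. For any query $C$ of $k$ hyperplanes, let $p \in A$ realize $\max_i \dist(A_{i*}, C)$ and locate its cluster $A_{i_0}$. The key claim to establish is that there exists a sub-collection $C' \subset C$ of size $k-1$ with $\dist(p, C) \le (1+\eps/2)\cdot \max_{q \in A_{i_0}} \dist(q, C')$, so that the inductive hypothesis on $A_{i_0}$ produces a point of $S_{i_0} \subseteq S$ witnessing the required lower bound. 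The bookkeeping yields the recursion $f(d,k) \le f(d,k-1) + O(d)$, proving the stated size bound, and each level of recursion contributes $n \cdot |S|^{O(1)}$ work.

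The main obstacle is the combinatorial claim in the inductive step: the partition into $A_i$ is fixed by $C^*$, but the query $C$ may assign $p$ to a hyperplane that does not naturally correspond to $A_{i_0}$. The resolution in~\cite{EV05} is a case analysis: either some hyperplane of $C$ can be "matched" to $H_{i_0}^*$ by a short combinatorial argument exploiting the approximate optimality of $C^*$, in which case dropping it and recursing on $A_{i_0}$ with the remaining $k-1$ hyperplanes works directly, or else the near-optimality of $C^*$ forces every point of $A_{i_0}$ to be close enough to $H_{i_0}^*$ that replacing $H_{i_0}^*$ by any hyperplane of $C$ perturbs distances only by a $(1+\eps/2)$ factor, and the recursion again applies. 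The multiplicative errors accumulate across the $k$ levels of recursion to at most $(1+\eps/(2k))^k \le 1+\eps$, as required.
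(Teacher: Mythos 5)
The paper does not actually prove this theorem; it imports it verbatim from Edwards and Varadarajan \cite{EV05} and uses it as a black box, so there is no internal proof to compare against. Judged on its own terms, your reconstruction has a fatal flaw in the base case. An $\eps$-net for a range space only guarantees hitting ranges that contain an $\eps$-fraction of the points, whereas an $\L_\infty$-coreset must certify a \emph{maximum}, which can be realized by a single point: take $n-1$ points on a hyperplane $H_0$ and one outlier $p^*$; for the query $H_0$ the coreset is forced to contain $p^*$, yet the range ``points at distance $\ge r$ from $H_0$'' has measure $1/n$ and no $\eps$-net need hit it. (Your stated conclusion is also inverted: a point of $N_t$ lying \emph{inside} a slab around $H$ is a point \emph{close} to $H$, which gives no lower bound on $\max_{q\in S}\dist(q,H)$.) The part of your base case that is right is the reduction, via the integrality lower bound of Lemma~\ref{lem:lowerboundpc}, to $O(d\log(dM)/\eps)$ multiplicative scales; but at each scale the correct tool is a deterministic extent/kernel or greedy farthest-point construction (as in the Agarwal--Har-Peled--Varadarajan line of work that \cite{EV05} builds on), not a VC/net argument. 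Sampling-type arguments are exactly what this paper's sensitivity framework uses for the \emph{sum}-of-squared-distances objective, and they do not transfer to the $\L_\infty$ objective.

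The inductive step is also not established. If $C'\subset C$, then $\dist(q,C')\ge\dist(q,C)$ for every $q$, so a point of $S_{i_0}$ witnessing a large value of $\max_q\dist(q,C')$ may be very close to the hyperplane you dropped and hence useless as a witness for $C$. The ``key claim'' you state --- that some $(k-1)$-subcollection $C'$ works --- is precisely the content of the Edwards--Varadarajan argument, and you defer it to a case analysis ``in \cite{EV05}'' rather than proving it; their actual recursion is organized differently (it recurses jointly over the number of hyperplanes and the distance scale, enumerating over the $O(\log(dM)/\eps)$ guesses for the radius at each level, which is where the $(\log M/\eps)^{f(d,k)}$ size with its strong dependence on $d$ and $k$ comes from, rather than the $f(d,k)\le f(d,k-1)+O(d)$ recursion you posit). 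As written, neither the base case nor the inductive step yields the claimed guarantee.
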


If $(j+1)k$ is much smaller than $d-1$, then we want to avoid the dependency on $d$. We observe that for any set of $k$ affine $j$-dimensional subspaces, the union of $A$ and the subspaces is contained in a linear subspace of dimension $k(j+1)+r$. Assume that $(j+1)k+r \le d$ and let $V$ be an arbitrary subspace of dimension $k(j+1)+r$ that contains $A$. Representing $A$ in an arbitrary orthonormal basis of $V$ results in a matrix  $A' \in \RR^{n \times (k(j+1)r)}$.

By Theorem~\ref{evmaintheorem} there exists a coreset $S' \subset A'$ of size $(\log (M) / \eps)^{f(d,k)}$ that satisfies the coreset property for $A'$ and all $j$-dimensional affine subspaces of $V$. Since $A$ and $A'$ describe the same points, this also holds for $A$ when we replace $S'$ by the corresponding subset $S$ of $A$. Now let $V'$ be any affine $j$-dimensional subspace of $\RR^d$. We can define a rotation that rotates $V'$ into $V$ while changing neither $A$ nor the distances between points in $A$ and $V'$. We get a subspace $V''$ that lies within $V$. Thus, $S$ satisfies the coreset property for $V''$ and this implies that it satisfies the coreset property for $V'$ as well. Thus, the following corollary is true.

\begin{corollary}[\cite{EV05}]\label{rankr}
Let $M\geq2$ be an integer and $A\in\br{-M,\ldots,M}^{n \times d}$ be a matrix of rank $r$.
Let $k\geq1$, $j\in\{1,\ldots,d-1\}$, and let $\eps\in(0,1)$. Assuming the singular value decomposition of $S$ is given,
an $\L_\infty$-$(\epsilon,j,k)$-coreset $S\subseteq A$ for $A$ of size $|S|=(\log (M) / \eps)^{f(j,k,r)}$ can be constructed in $(n+d) \cdot |S|^{O(1)}$ time, where $f(j,k,r)$ depends only on $j$, $k$ and $r$. 
\end{corollary}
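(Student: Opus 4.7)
The plan is to reduce the ambient dimension from $d$ to a quantity $\rho := r + k(j+1)$ depending only on $j,k,r$, and then to invoke Theorem~\ref{evmaintheorem} inside the resulting $\rho$-dimensional subspace. Using the SVD $A = U\Sigma V^T$, the rows of $A$ all lie in the $r$-dimensional linear subspace $W\subseteq \REAL^d$ spanned by the first $r$ columns of $V$. I would fix once and for all any linear subspace $L\subseteq\REAL^d$ of dimension $\rho$ that contains $W$, and express $A$ in an orthonormal basis of $L$ as a matrix $\tilde A \in \REAL^{n\times \rho}$.

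For any set $C$ of $k$ affine $j$-dimensional subspaces of $\REAL^d$, the linear span of $C$ has dimension at most $k(j+1)$. Applying Lemma~\ref{rotationlemma} with $W$, $L$, and this spanning subspace produces an orthogonal matrix $U_C$ that fixes every vector of $W$ (and hence every row of $A$) while mapping $C$ into $L$. Because Euclidean distances are preserved by orthogonal transformations, $\dist(A_{i*},C) = \dist(A_{i*},U_C(C))$ for every $i$, and $U_C(C)$ is a set of $k$ affine $j$-subspaces contained in $L$. This shows that it suffices to construct an $\L_\infty$-$(\eps,j,k)$-coreset of $\tilde A$ in the ambient space $L$: the rows of $A$ indexed by the chosen rows of $\tilde A$ will automatically form the desired coreset in $\REAL^d$.

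Finally, I would apply Theorem~\ref{evmaintheorem} to $\tilde A$ in $\rho$ dimensions, which yields a coreset of size $(\log M/\eps)^{f(\rho,k)}$; since $\rho$ depends only on $j,k,r$, the exponent can be relabelled as $f(j,k,r)$, matching the stated bound. Given the precomputed SVD, forming $\tilde A$ costs $O(nd\rho + d\rho^2)$ and the algorithm of Theorem~\ref{evmaintheorem} runs in $n\,|S|^{O(1)}$ time on the $\rho$-dimensional representation, which together give the $(n+d)\,|S|^{O(1)}$ bound. The main technical obstacle is that $\tilde A$ is generally irrational even though $A$ is integer, whereas Theorem~\ref{evmaintheorem} requires integer entries. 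I would handle this by rounding the entries of $\tilde A$ onto an integer grid of spacing $1/\poly(d,M,1/\eps)$, and then use Lemma~\ref{lem:lowerboundpc} (whose lower bound on $\dist_{\infty}(A,C)$ applies as soon as $\mathrm{rank}(A) > k(j+1)$; smaller-rank cases degenerate and can be treated separately) to certify that the rounding perturbation multiplies every $\L_\infty$-cost by at most a $(1+\eps)$-factor, so the coreset property survives the rounding. The inflated integer range becomes $\poly(d,M,1/\eps)$, whose logarithm is absorbed into constant factors and into the exponent $f(j,k,r)$.
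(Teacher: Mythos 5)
Your core argument is exactly the paper's: since any set of $k$ affine $j$-subspaces spans at most $k(j+1)$ dimensions, fix a $\bigl(r+k(j+1)\bigr)$-dimensional subspace $L$ containing the row space of $A$, use the rotation of Lemma~\ref{rotationlemma} to move an arbitrary query $C$ into $L$ while fixing every row of $A$ (so distances are unchanged), and then invoke Theorem~\ref{evmaintheorem} on the re-expressed matrix in the low-dimensional ambient space, relabelling the exponent as $f(j,k,r)$. Where you go beyond the paper is the integrality issue: the paper simply writes $A$ in an orthonormal basis of $L$ and applies Theorem~\ref{evmaintheorem} to the result, even though that theorem requires integer coordinates and the re-expressed matrix is generically irrational. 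You are right to flag this, and your fix (snap to a fine grid and use the lower bound of Lemma~\ref{lem:lowerboundpc} to absorb the perturbation) is the same device the paper uses later in Theorem~\ref{thmproj2}. Two caveats on your patch, though: the lower bound from Lemma~\ref{lem:lowerboundpc} is of order $(dM)^{-cj}$, so the grid spacing must be roughly $\eps/(dM)^{O(j)}$ rather than $1/\mathrm{poly}(d,M,1/\eps)$ — harmless for the final size bound since the logarithm of the inflated range only enters the exponent $f(j,k,r)$, but worth stating precisely; and Lemma~\ref{lem:lowerboundpc} only applies when $\mathrm{rank}(A)>k(j+1)$, whereas the corollary is stated for arbitrary rank $r$, so saying the low-rank case "can be treated separately" leaves a real hole (for such $A$ some queries have arbitrarily small positive $\dist_\infty$, and the rounding error cannot be charged multiplicatively). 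That residual gap is inherited from the paper, which does not address integrality at all in this corollary.
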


\subsection{Sensitivity bounds via \Linfty-coresets}

A key idea in the work of Varadarajan and Xiao is a way to bound the sensitivity of a point set $A$ based on a so-called $\L_\infty$-coreset of $A$.
For this purpose, let us state our problem in terms of the sensitivity framework.
Let $F_A(\ucc{Q}):=\{ f_{A_{i\ast}}\mid i\in[n] \}$ where  $f_{A_{i\ast}} : \ucc{Q} \to [0,\infty)$ and $f_{A_{i\ast}}(C) = \dist^2(A_{i\ast},C)$ for every $i\in[n]$ and $C\in\ucc{Q}$.
The following lemma states how to obtain bounds on the sensitivities using $\L_\infty$-coresets. We slightly expand it to deal with weighted point set.

\begin{lemma}[Lemma 3.1 in \cite{VX12-soda}, weighted and applied to squared distances]\label{sensitivitylemmasquareddistances}
Let $A\in \REAL^{n\times d}$ be a matrix whose rows have non-negative weights $w_1,\dots,w_n \ge 1$ and let $W= \sum_{i=1}^n w_i$. 
Let $\ucc{Q}$ be a set of non-empty subsets of $\REAL^d$ and let $F_A(\ucc{Q}):=\{ f_{A_{i\ast}}\mid i\in[n] \}$ where  
$f_{A_{i\ast}} : \ucc{Q} \to [0,\infty)$ and $f_{A_{i\ast}}(C) = w_i \cdot \dist^2(A_{i\ast},C)$ 
for every $i\in[n]$ and $C\in\ucc{Q}$.

Suppose that for every matrix $A' \in \REAL^{m\times d}$ whose (unweighted) rows are rows in $A$ there is an $\epsilon -\L_\infty$-coreset 
$S$ for $(A',\ucc{Q})$ for $\epsilon = 1/2$ of size $|S|\leq g(m)$ that can be computed in time $t(m)$. 
Then $\tilde{\sigma}$ such that $\tilde{\sigma}(f_{A_{i*}})\geq \sigma(f_{A_{i*}})$ for every $i\in [n]$, and
\[
\mathfrak{\tilde{S}}(F) :=\sum_{i\in [n]}\tilde{\sigma}(f_{A_{i*}})\in O(\log W)\cdot g(n),
\]
can be computed in $n \cdot t(n)$ time.
\end{lemma}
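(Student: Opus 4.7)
The plan is to adapt the iterative $\L_\infty$-coreset peeling argument of Varadarajan and Xiao~\cite{VX12-soda} from the unweighted, unsquared-distance setting of their original lemma to the present weighted, squared-distance framework. The procedure will construct $\tilde\sigma$ incrementally over $O(\log W)$ phases; each phase computes an $\L_\infty$-$(1/2)$-coreset of the current working set, charges a constant incremental contribution to $\tilde\sigma(p)$ for every $p$ selected into that coreset, and then geometrically decreases the remaining ``mass'' so that the process terminates after the stated number of phases.

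As a preliminary step, I would translate Definition~\ref{linfty} into a squared-distance guarantee. For $\eps = 1/2$, squaring both sides yields
\[
\dist^2(p, C) \le (1+\eps)^2 \max_{s \in S} \dist^2(s, C) = \tfrac{9}{4} \max_{s \in S} \dist^2(s, C)
\]
for every $p \in A'$ and $C \in \ucc{Q}$. Combined with the hypothesis $w_s \ge 1$, the maximum is dominated by the weighted sum, so $w_p \dist^2(p, C) \le \tfrac{9}{4}\, w_p \sum_{s \in S} w_s \dist^2(s, C)$, which is the inequality that drives the sensitivity accounting.

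The peeling itself proceeds as follows. Initialize $A^{(0)} := A$, $w^{(0)} := w$, $W^{(0)} := W$. In round $i \ge 0$, compute an $\L_\infty$-$(1/2)$-coreset $S_i \subseteq A^{(i)}$ of size at most $g(n)$ in time $t(n)$; increment $\tilde\sigma(p)$ by a constant additive term for every $p \in S_i$; and halve the weights of all points in $S_i$ to form $A^{(i+1)}$ and $w^{(i+1)}$, discarding points whose weight has dropped below $1$. Since every point with initial weight $w_p \le W$ can be halved at most $\lceil \log_2 W \rceil$ times before being removed, the procedure terminates after $O(\log W)$ rounds. Each round contributes at most $g(n)$ points, giving $\tilde{\mathfrak{S}}(F) \in O(\log W) \cdot g(n)$; the running time is $O(\log W) \cdot t(n)$, which fits within the claimed budget $n \cdot t(n)$ together with linear-time bookkeeping.

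The main obstacle is establishing the pointwise domination $\tilde\sigma(p) \ge \sigma(f_p)$. Fix an adversarial $C \in \ucc{Q}$ and consider any $p \in A$. In each round $i$ in which $p \in S_i$, the preliminary inequality charges $w^{(i)}_p \dist^2(p,C)$ directly against the coreset's own contribution to $\sum_q w_q \dist^2(q,C)$, producing a constant increment to $\tilde\sigma(p)$ at that round; in rounds where $p \notin S_i$, the $\L_\infty$ property certifies that $\dist^2(p,C)$ is already dominated (up to factor $9/4$) by the current maximum within $S_i$, so no further increment is needed. Combining these two cases and summing the geometric series induced by the weight halvings is the technical heart of the argument: it yields a telescoping bound whose limit is a constant multiple of the first-round contribution, which in turn is an upper bound on $w_p \dist^2(p,C) / \sum_q w_q \dist^2(q,C)$ and hence on $\sigma(f_p)$. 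Taking the supremum over $C$ completes the proof.
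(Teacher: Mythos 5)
There is a genuine gap here, in two places. First, your termination claim is false: in each round only the at most $g(n)$ points selected into the $\L_\infty$-coreset $S_i$ have their weights halved, so a point that is never selected never loses weight, and the fact that an individual point can be halved at most $\lceil\log_2 W\rceil$ times does not bound the number of \emph{rounds} by $O(\log W)$ --- exhausting all $n$ points requires on the order of $\sum_p \log_2 w_p / g(n)$ rounds, and in any case up to $\Theta(n\log W)$ of them. Second, and more fundamentally, the accounting ``increment $\tilde\sigma(p)$ by a constant for every round in which $p\in S_i$'' cannot give $\tilde{\mathfrak S}(F)\in O(\log W)\cdot g(n)$: the total of all constant increments equals (number of selections) $\times O(1)$, which is $\Omega(n)$ already, and your sketched justification for pointwise domination (a ``geometric series'' telescoping to a constant multiple of the first-round contribution) is not an argument that the resulting $\tilde\sigma(p)$ upper-bounds $\sigma(f_p)$. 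The single-coreset inequality you derive in the preliminary step only yields the trivial bound $\sigma(f_p)\le \frac94 w_p$, so something sharper is genuinely needed.

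The mechanism that actually works, and that the paper uses, is different in a crucial way: the peeled coresets must be \emph{disjoint}. The paper (conceptually) replaces each weighted row by $\lfloor w_i\rfloor$ unweighted copies, forming $B$ with $W'$ rows, and iteratively sets $B_{t+1}=B_t\setminus S_t$; thus $S_1,\dots,S_\ell$ partition $B$. If a copy of $p$ last appears in $S_v$, then each of the $v$ earlier sets $B_1\supseteq\dots\supseteq B_v$ contains $p$, so each $S_u$ ($u\le v$) contains a \emph{distinct} witness $B_{i_u*}$ with $\dist^2(B_{i_u*},C)\ge \dist^2(p,C)/(1+\eps)^2$; summing these $v$ distinct terms gives $\dist^2(B,C)\ge \frac{v}{(1+\eps)^2}\dist^2(p,C)$ and hence a sensitivity bound of $(1+\eps)^2/v$ --- a \emph{harmonically decaying} increment, not a constant one. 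The total is then $\sum_{v=1}^{\ell}|S_v|(1+\eps)^2/v\le g(n)(1+\eps)^2(\ln W'+1)=O(\log W)\cdot g(n)$, with the logarithm coming from the harmonic series rather than from the number of rounds. (For the running time one does not literally make copies: in each round one subtracts from the selected points the minimum current weight among them, so at least one point is exhausted per round, giving at most $n$ rounds and time $n\cdot t(n)$.) Your halving scheme destroys the disjointness of the $S_i$ and replaces the $1/v$ increments by constants, so both the domination $\tilde\sigma\ge\sigma$ and the $O(\log W)\cdot g(n)$ total fail as argued.
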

\begin{proof}
In order to deal with the weights we proceed as follows. We first define $w_i' = \lfloor w_i \rfloor$ and $W' = \sum_{i=1}^n w_i'$. 
Since $w_i \ge 1$ all $w_i'$ are within a factor of $2$ of $w_i$. Then we replace the input matrix $A$ by a matrix $B$ that contains
$w_i'$ copies of row $i$ of matrix $A$, i.e. we replace each row of $A$ by a number of unweighted copies corresponding to its weight (rounded down). 

Now the proof proceeds similarly to the original proof of Varadarajan and Xiao. At the moment we ignore the running time and will address
an efficient implementation at the very end of the proof. The proof uses a nested sequence of $\ell$ subsets of $B$ for an $\ell \leq W'$. 
$B_1$ is the matrix $B$. The other sets are computed iteratively in the following way. If $B_t$ contains at most $g(n)$ points, the sequence ends and $\ell:=t$. Otherwise, an $\eps$-$\L_\infty$-coreset $S_t$ of $B_t$ for $\eps=1/2$ is computed, and the set $B_{t+1}$ is defined as $B_t \setminus S_t$.
Note that for the coreset computation we may remove multiple copies of a point from the current set $B_t$ and so the resulting coreset will
have size $g(n)$.

The result is a sequence of subsets $B_\ell\subseteq B_{\ell-1} \subseteq B_2 \subseteq B_1 = B$ with $|B_\ell| \le g(n)$ and a sequence of coresets $S_1,\ldots,S_{\ell-1}$. Notice that for $S_\ell := B_\ell$, the coresets $S_1,\ldots,S_\ell$ form a partitioning of $B$.
 Now let $B_{i\ast}$ be an input point, and let $v \in \{1,\ldots,\ell\}$ be the largest index of a coreset $S_v$ that contains $B_{i\ast}$, \ie\ $B_{i\ast} \in S_v$. Let $C \in \ucc{Q}$. The goal is to upper bound the sensitivity of $B_{i\ast}$ by lower bounding the contribution of the remaining points.

Consider the set $B_u$ for some $1 \le u \le v$, and notice that it contains $B_{i\ast}$ by definition. For each $u \in \{1,\ldots,v\}$, let $B_{i_u\ast}$ be one of the points in $S_u$ of maximum distance to $C$. By the $\L_\infty$-coreset property, this implies that
\[
\dist(B_{i\ast},C) \le (1+\eps) \cdot \dist(B_{i_u\ast},C).
\]
That is,
\[
\dist^2(B_{i_u\ast},C) \ge \frac{1}{(1+\eps)^2} \dist^2(B_{i\ast},C).
\]
Using this with the fact that $\br{B_{i_1*}, \ldots, B_{i_v*}}$ is a subset of $B$ yields
\[
\dist^2(B,C) \geq \sum_{x=1}^v \dist^2 \cdot (B_{i_u\ast},C) \ge {(1+\eps)^2} \cdot v \cdot \dist^2(B_{i\ast},C) .
\]
By the definition of the sensitivity of a point, splitting a point into $k$ equally weighted points 
leads to dividing its sensitivity by $k$. Recall that $B$ contains $w_i'$ copies of $A_{i\ast}$.
This implies that for every pair $A_{i\ast}$ and $j$ with $A_{i\ast} = B_{j \ast} \in S_v$ we get
\begin{eqnarray*}
\sigma(f_{A_i\ast}) & := & \max_{C\in \ucc{Q}}\frac{w_i \cdot \dist^2(A_{i\ast},C)}{\dist^2_w(A,C)} \\
& \le & \max_{C\in \ucc{Q}}\frac{2 w_i' \cdot \dist^2(A_{i\ast},C)}{\dist^2(B,C)} \\
& = & \max_{C\in \ucc{Q}} \frac{2 w_i' \cdot \dist^2(B_{j \ast},C)}{\dist^2(B,C)} \\
& \le & \frac{2 w_i' \dist^2(B_{j \ast},C)} {\frac{v}{(1+\eps)^2} \dist^2(B_{j\ast},C)}\\
&=& \frac{2 w_i'(1+\eps)^2}{v}.
\end{eqnarray*}

Thus, the sensitivity of $f_{A_i\ast}$ is at most $2 w_i'$ times the sensitivity of one of its $w_i'$ copies in $B$.
We define $\tilde{\sigma}(f_{A_i\ast})= 2 w_i' \frac{(1+\eps)^2 \cdot }{v}$ and $\tilde{\sigma}(f_{B_j\ast})= \frac{(1+\eps)^2 \cdot }{v}$. 
To estimate the total sensitivity we sum up the sensitivities of the rows of $B$. the total sensitivity of $B$ is bounded by
\begin{align*}
\mathfrak{\tilde{S}}(F)&=\sum_{v=1}^{\ell} \sum_{B_{i\ast}\in B_v} \tilde{\sigma}(f_{B_{i\ast}}) \le \sum_{v=1}^{\ell}  \frac{|S_v| (1+\eps)^2}{v}\\
&\le g(n) \cdot (1+\eps)^2 \cdot  \sum_{v=1}^{\ell}  \frac{1}{v}
\leq  g(n) \cdot (1+\eps)^2 \cdot  \sum_{v=1}^{W'}  \frac{1}{v}\\
& \le g(n) \cdot (1+\eps)^2 \cdot (\ln W' +1),
\end{align*}
where the second inequality follows because $|S_v|\leq g(n)$ by its definition, and the last inequality is a bound on the harmonic number $\mathcal{H}_n$.

It remains to argue how to efficiently implement the algorithm. 
For this purpose we modify the construction and construct a sequence of sets $A_i$ in the following way.
At the beginning each input point $A_{i*}$ is assigned its weight $w_i' = \lfloor w_i \rfloor$. Once we compute an $\eps$-coreset 
$S_t$ of $A_t$ for $\eps=1/2$ is computed, we assign a weight $u_t$ to all its input points that equals the minimum (current) $w_i'$
of a row $A_{i*}$ that has been included in $S_t$. In comparison to the previous construction we can think of $S_t$ as the union of 
$u_t$ unweighted coresets containing the same points as $S_t$.
For each coreset point $A_{i*}$ we then subtract $u_t$ from $w_i'$. If the weight becomes $0$ we remove the point from $A$. Note that
in every interation at least one point gets weight $0$. As before, the sequence ends, if $A_t$ contains at most $g(n)$ points and we set $\ell:=t$. 
As before, the result is a sequence of subsets $A_\ell\subseteq A_{\ell-1} \subseteq A_2 \subseteq A_1 = A$ with $|A_\ell| \le g(n)$ and a sequence of coresets 
$S_1,\ldots,S_{\ell-1}$ with weights $u_1,\dots, u_{\ell-1}$. Notice that for $S_\ell := B_\ell$, the coresets $S_1,\ldots,S_\ell$ form a fractional partitioning of $B$, i.e. the sum of weights over all ocurrences of $A_{i*}$ is $w_i'$. 
We can compute such a sequence in time $O(n t(n))$ and compute in $O(n)$ time the bounds on the sensitivities from it.
\end{proof}

\subsection{Bounding sensitivities by a movement argument}

In this section we will describe a way to bound the sensitivities using a movement argument. Such
an approach first appeared in \cite{varadarajan} and we will present a slight variation of it.

\begin{theorem}[Variant of a Theorem from \cite{varadarajan}]\label{sens_move}
Let $A\in \REAL^{n\times d}$ be a matrix. 
 Let the rows of matrix $A$ be weighted with non-negative weights 
$w_1,\dots,w_n \ge 1$ and let $W= \sum_{i=1}^n w_i$. Let $A' \in \REAL^{n \times d}$ be a matrix such that 
$$
\sum_{i=1}^n w_i \cdot \|A_{i*} - A_{i*}'\|_F^2 \le \alpha \cdot \dist_w^2(A,C^*),
$$
where $C^*$ minimizes  $\dist_w^2(A,C)$ over all sets $C$ of $k$ $j$-dimensional subspaces.
If $\dist_w^2(A,C*) >0$ then we have
$$
  (4+4 \alpha) (\sigma(f_{A_{i*}'}) + \frac{w_i \cdot \dist^2(A_{i*},A_{i'*})}{\dist_w^2(A,C^*)})  \ge \sigma(f_{A_{i*}}),
$$
where $\sigma(f_{A_{i*}}) = \sup_C \frac{w_i \cdot \dist^2(A_{i*},C)}{\dist_w^2(A,C)}$ (here the supremum is taken over all
sets $C$ of $k$ $j$-dimensional subspaces with $\dist^2(A,C) >0$).
\end{theorem}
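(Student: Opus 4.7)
The plan is to apply the weak triangle inequality $(a+b)^2\le 2(a^2+b^2)$ twice: once at the level of a single row to compare $\dist^2(A_{i*},C)$ with $\dist^2(A'_{i*},C)$, and once at the level of the total weighted cost to compare $\dist_w^2(A',C)$ with $\dist_w^2(A,C)$. The optimality of $C^*$ is then used to absorb the total movement $\sum_{\ell}w_\ell\|A_{\ell*}-A'_{\ell*}\|^2\le \alpha\,\dist_w^2(A,C^*)$ into a factor of $\dist_w^2(A,C)$ itself.

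Fix an arbitrary set $C$ of $k$ affine $j$-dimensional subspaces with $\dist_w^2(A,C)>0$; if no such $C$ exists then $\sigma(f_{A_{i*}})=0$ and there is nothing to prove. Choosing $c\in C$ closest to $A'_{i*}$ yields the pointwise bound
\[
w_i\dist^2(A_{i*},C)\le 2w_i\|A_{i*}-A'_{i*}\|^2 + 2w_i\dist^2(A'_{i*},C).
\]
Summing the same inequality over all rows with weights gives
\[
\dist_w^2(A',C)\le 2\sum_{\ell}w_\ell\|A_{\ell*}-A'_{\ell*}\|^2 + 2\dist_w^2(A,C) \le 2\alpha\,\dist_w^2(A,C^*)+2\dist_w^2(A,C),
\]
and since $C^*$ is the weighted optimum, $\dist_w^2(A,C^*)\le \dist_w^2(A,C)$, so $\dist_w^2(A,C)\ge \tfrac{1}{2+2\alpha}\dist_w^2(A',C)$. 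In the degenerate case $\dist_w^2(A',C)=0$, the assumption $w_\ell\ge 1$ forces every $\dist^2(A'_{\ell*},C)$ to be zero, so the $\sigma(f_{A'_{i*}})$-term below is automatically zero and the argument still goes through.

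Dividing the pointwise bound by $\dist_w^2(A,C)$, using $\dist_w^2(A,C)\ge \dist_w^2(A,C^*)$ on the movement term and the lower bound just derived on the other term, gives
\begin{align*}
\frac{w_i\dist^2(A_{i*},C)}{\dist_w^2(A,C)}
&\le \frac{2w_i\|A_{i*}-A'_{i*}\|^2}{\dist_w^2(A,C^*)}+(4+4\alpha)\frac{w_i\dist^2(A'_{i*},C)}{\dist_w^2(A',C)}\\
&\le (4+4\alpha)\Bigl(\sigma(f_{A'_{i*}})+\frac{w_i\dist^2(A_{i*},A'_{i*})}{\dist_w^2(A,C^*)}\Bigr),
\end{align*}
where the last step uses $2\le 4+4\alpha$, which holds since $\alpha\ge 0$ (because $\alpha\,\dist_w^2(A,C^*)$ upper-bounds the nonnegative total movement). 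Taking the supremum over admissible $C$ yields the claimed inequality. The only subtle point, and the main (mild) obstacle, is that $\sigma(f_{A'_{i*}})$ has $\dist_w^2(A',\cdot)$ in its denominator rather than $\dist_w^2(A,\cdot)$, so a direct pointwise comparison fails; the conversion between the two denominators is precisely where the factor $(2+2\alpha)$ enters and where the optimality of $C^*$ is used to guarantee $\dist_w^2(A,C^*)\le \dist_w^2(A,C)$. Everything else is routine manipulation from $(a+b)^2\le 2(a^2+b^2)$.
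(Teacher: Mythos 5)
Your proof is correct and follows essentially the same route as the paper's: the weak triangle inequality $(a+b)^2\le 2(a^2+b^2)$ applied per row, then the bound $\dist_w^2(A',C)\le(2+2\alpha)\dist_w^2(A,C)$ obtained from the movement hypothesis together with the optimality of $C^*$, combined to give the factor $4+4\alpha$. Your treatment is in fact slightly more careful than the paper's terse write-up (explicit handling of the degenerate case $\dist_w^2(A',C)=0$ and of the denominator switch), but it is the same argument.
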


\begin{proof}
Let $A$ and $A'$ be defined as in the theorem. Let $C$ be an arbitrary set of $k$ $j$-dimensional affine subspaces. 
For every row of $A$ we have
\begin{eqnarray*}
\frac{\dist^2(A_{i*},C)}{\dist_w^2(A,C)} & \le & \frac{2 \cdot (\dist^2(A_{i*},A_{i*}') + \dist^2(A_{i*}',C))}{\dist_w^2(A,C)}\\
& \le & 2 \cdot \left( \frac{\dist^2(A_{i*}',C)}{\dist_w^2(A,C)} + \frac{\dist^2(A_{i*},A_{i*}')}{\dist_w^2(A,C)} \right) \\
\end{eqnarray*}
The result is immediate if $\dist_w^2(A',C) =0$ since then $\dist^2(A_{i*}',C)=0$ as well. Hence, we can assume
$\dist_w^2(A',C) >0$ and obtain
\begin{eqnarray*}
& \le & (4+4 \alpha) \cdot \big( \frac{\dist^2(A_{i*}',C)}{\dist_w^2(A',C)} + \frac{\dist^2(A_{i*},A_{i*}')}{\dist_w^2(A,C^*)} \big) \\
\end{eqnarray*}
where the last inequality follows from
$$
\dist^2_w(A',C) \le 2 ( \sum_{i=1}^n w_i \cdot \|A_{i*} - A_{i*}'\|_F^2 + \dist^2_w(A,C)) \le (2+2\alpha) \cdot \dist_w^2(A,C).
$$
Now the result follows from the definition of sensitivity.
\end{proof}

\subsection{Coresets for the Affine \jj-Dimensional \kk-Clustering Problem}

In this section, we combine the insights from the previous subsections and conclude with our coreset result.

\begin{theorem}\label{thmproj}
Let $k\ge 1$, $j\ge 0$ and $M\ge 4$ be fixed integers and let $1/2 > \eps,\delta >0$.
Let $A \in \mathbb Z^{n \times d}$ be a matrix with $\|A_{i*}\|_2 \le M$ for all $1\le i \le n$.
Let the rows of $A$ be weighted with $w_1,\dots, w_n \ge 1$ and let $W= \sum_{i=1}^n w_i$.
If the rank of $A$ is at most $k (j+1)$ then in time $O(\min(n^2d,d^2n) + n(n+d) (\log M)^{h(j,k)})$ and with probability at
least $1-\delta$ we can construct
an $\eps$-coreset $(S, 0, u)$ for the affine $j$-dimensional
$k$-clustering problem of size
\[
|S|=O\big(\frac{\log W \log \log W \cdot\log(M)^{h(j,k)} \log(1/\delta)}{\epsilon^2} \big),
\]
where $h$ is a function that depends only on $j$ and $k$. Furthermore, the points in $S$ have
integer coordinates and $u_1,\dots, u_{|S|} \ge 1$ and the points have norm at most $M$.
\end{theorem}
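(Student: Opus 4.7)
The plan is to bound the total sensitivity of the squared-distance functions via an $\L_\infty$-coreset argument, then invoke the sensitivity sampling scheme of Theorem~\ref{sensitivitysamplingscheme}. First, since $A$ has rank at most $k(j+1)$, every submatrix formed by rows of $A$ also has rank at most $k(j+1)$, has integer entries in $\{-M,\dots,M\}$, and lies in the same low-dimensional subspace. Thus Corollary~\ref{rankr} (with $\eps = 1/2$, after computing the SVD of $A$ once in $O(\min(nd^2, n^2 d))$ time) yields an $\L_\infty$-$(1/2, j, k)$-coreset of size $g(m) \le (\log M)^{h_1(j,k)}$ for some function $h_1$ depending only on $j,k$, constructible in time $(n+d)\cdot g(m)^{O(1)}$. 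Feeding this into Lemma~\ref{sensitivitylemmasquareddistances} produces, in time $n(n+d)(\log M)^{O(h_1(j,k))}$, sensitivity upper bounds whose total satisfies $\mathfrak{\tilde S}(F) = O(\log W)\cdot(\log M)^{h_1(j,k)}$; I replace each $\tilde\sigma(f)$ by $\max\{\tilde\sigma(f),1/n\}$ so that the precondition of Theorem~\ref{sensitivitysamplingscheme} holds, which adds only $1$ to the total.

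The main obstacle is that the VC-dimension of $\mathfrak R_{\ucc{Q}_{jk}, F^*}$ delivered by Corollary~\ref{vc-dim} is $O(jdk\log k)$, depending on $d$. I will sidestep this via the rotation argument behind Corollary~\ref{cordanny}: the rows of $A$ span a linear subspace of dimension at most $k(j+1)$, so I fix a subspace $L\subseteq\REAL^d$ of dimension at most $k(j+1)+j+1$ containing them. Corollary~\ref{cordanny} then guarantees that every $C\in\ucc{Q}_{jk}$ has an equivalent $C'\subseteq L$ (obtained through an orthogonal $U$ fixing the rows of $A$) with $\dist(A_{i*},C) = \dist(A_{i*},C')$ for every $i$. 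Consequently the set of ranges in $\mathfrak R_{\ucc{Q}_{jk}, F^*}$ is unchanged if centers are restricted to $L$, and applying Corollary~\ref{vc-dim} \emph{inside} $L$ gives a VC-dimension bound $h_2(j,k) = O\!\bigl(j(k(j+1)+j+1)k\log k\bigr)$ depending only on $j,k$.

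Plugging $\mathfrak{\tilde S}(F)$ and $h_2(j,k)$ into Theorem~\ref{sensitivitysamplingscheme} with parameters $\eps,\delta$ produces, with probability at least $1-\delta$, a weighted subset $S$ of the rows of $A$ of size
\[
O\!\left(\frac{\mathfrak{\tilde S}(F)}{\eps^2}\bigl(h_2(j,k)\log \mathfrak{\tilde S}(F)+\log(1/\delta)\bigr)\right).
\]
Since $\log \mathfrak{\tilde S}(F) = O(\log\log W + \log\log M + h_1(j,k))$, absorbing all $j,k$-dependent factors and the $\log\log M$ term into a single $(\log M)^{h(j,k)}$ recovers the stated bound $O(\log W\log\log W\cdot(\log M)^{h(j,k)}\log(1/\delta)/\eps^2)$. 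The additive term $\Delta$ is $0$ since sensitivity sampling gives a purely multiplicative approximation. The remaining structural properties are immediate: $S$ is a subset of rows of $A$, so entries are integers and norms are at most $M$; and Theorem~\ref{sensitivitysamplingscheme} guarantees $u_i \ge w_i \ge 1$. The total running time is $O(\min(nd^2,n^2d))$ for the SVD plus $n(n+d)(\log M)^{h(j,k)}$ for the sensitivity computation and sampling, matching the claim.
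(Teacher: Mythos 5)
Your proposal follows essentially the same route as the paper's proof: bound the sensitivities via the $\L_\infty$-coreset of Corollary~\ref{rankr} fed into Lemma~\ref{sensitivitylemmasquareddistances} to get total sensitivity $O(\log W)\cdot(\log M)^{h_1(j,k)}$, control the VC-dimension by restricting attention to the low-dimensional subspace spanned by the rows of $A$, and conclude with Theorem~\ref{sensitivitysamplingscheme}. The only quibble is that to rotate a \emph{union} of $k$ affine $j$-subspaces into $L$ simultaneously while fixing the rows of $A$ you need $L$ of dimension about $\mathrm{rank}(A)+k(j+1)$ rather than $k(j+1)+j+1$ (Corollary~\ref{cordanny} as stated handles a single subspace), but this changes nothing asymptotically, and your explicit treatment of the $\tilde\sigma(f)\ge 1/|F|$ precondition is a detail the paper elides.
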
 

\begin{proof}
Let $\ucc{Q}_{jk}$ be the family of $k$ affine subspaces of $\REAL^d$, each of dimension $j$.
We would like to use Theorem \ref{sensitivitysamplingscheme} with $\ucc{Q}_{jk}$ and where for each row
$A_{i*}$ we have a function $f_{A_i}$ with $f_{A_i}(C)= \dist^2(A_{i*},C)$ for all $C\in \ucc{Q}_{jk}$.
Since $A$ has rank at most $k (j+1)$ and any set of $k$ affine $j$-subspaces is contained in a 
$k (j+1)$-dimensional linear subspace, by symmetry it will be sufficient to assume that 
$d= O(k (j+1))$.
By Corollary \ref{vc-dim} the VC-dimension of the range spaces $\mathfrak R_{\ucc{Q}_{jk},F^*}$ is therefore $O((j+1)^2k^2 \log k)$,
where $F$ is the set of the $f_{A_{i*}}$ and $F^*$ as defined in Theorem \ref{sensitivitysamplingscheme}.

It remains to argue how to compute upper bounds on the sensitivities and get an upper bound for the total sensitivity. 
The rank of $A$ is at most $r=k(j+1)$, so  Corollary~\ref{rankr} implies that an $\L_\infty$-$(j,k)$-coreset $S\subseteq A$ 
of size $g(n):=(\log M)^{f(k,j)}$ for $A$, can be constructed in $O(\min(n^2d+d^2n)+(n+d)\cdot g(n)^{O(1)}$ time, where $f(j,k)$ depends
 only on $j$ and $k$. Using this with Lemma~\ref{sensitivitylemmasquareddistances} yields an upper bound on the sensitivity 
$\sigma(f_{i})$  for every $i\in [n]$, such that the total sensitivity is bounded by
\[
\mathfrak{S}(A) := \sum_{i=1}^n \sigma(f_{i})\le O(\log W)  g(n),
\]
and the individual sensitivities can be computed in $n(n+d)\cdot g(n)^{O(1)}$ time. The result follows from
Theorem \ref{sensitivitysamplingscheme} and the fact that the coreset computed in Theorem \ref{sensitivitysamplingscheme}
is a subset of the input points.
\end{proof}

\begin{theorem}\label{thmproj2}
Let $A \in \mathbb{Z}^{n \times d}$ be a matrix of rank greater than $k(j+1)$ whose rows are weighted with weights $w_1,\dots, w_n\ge 1$ and whose maximum row norm is $M\ge 4$ . 
Let $W= \sum_{i=1}^n w_i$ and let $\eps \in (0,1/2)$.
Then in time $O(\min(n^2d,d^2n) + n(n+d) (\log WnM)^{h(j,k)}))$ and with probability at least $1-\delta$ we can construct an $\eps$-coreset $(S, \Delta, u)$ for the affine $j$-dimensional
$k$-clustering problem of size
\[
|S|=\frac{\left(\log(M W n)\right)^{h(j,k)} \cdot \log (1/\delta)}{\eps^2},
\]
where $h$ is a function that depends only on $j$ and $k$. Furthermore, the norm of each
row in $S$ is at most $M$ and $u_1,\dots, u_{|S|} \ge 1$.
\end{theorem}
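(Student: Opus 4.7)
My plan is to combine the dimensionality reduction of Theorem~\ref{thm:main2} with the low-rank coreset construction of Theorem~\ref{thmproj}, with a discretization step in between to enforce the integer-coordinate requirement.

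First, I would apply Theorem~\ref{thm:main2}, observing that every $C\in\ucc{Q}_{jk}$ is contained in a subspace of dimension at most $k(j+1)$. Choosing $m = \Theta(k(j+1)/\epsilon^2)$ yields, for every $C$,
\[
\bigl| \dist^2_w(A^{(m)}, C) + \Delta_1 - \dist^2_w(A, C) \bigr| \le (\epsilon/3)\,\dist^2_w(A, C),
\]
where $\Delta_1 = \sum_{i=1}^n w_i\,\|A_{i*} - A^{(m)}_{i*}\|^2$. Since $A^{(m)}$ has rank at most $m$, it lies in an $m$-dimensional subspace $L_0$, and by Corollary~\ref{cordanny} it suffices to construct a coreset for $A^{(m)}$ in an ambient subspace of dimension $m + k(j+1)$.

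Second, I would discretize $A^{(m)}$ in order to meet the integer-coordinate hypothesis. Choose a grid inside $L_0$ of spacing $\alpha = \Theta\bigl(\epsilon/(M^{h(j)}\sqrt{Wd})\bigr)$, round each row of $A^{(m)}$ to its nearest grid point to obtain $A''$, and rescale by $1/\alpha$ to produce $\hat A \in \mathbb Z^{n\times d}$. The crucial input here is Lemma~\ref{lem:lowerboundpc}: because $A$ has rank strictly greater than $k(j+1)$ and row norms at most $M$, we have $\dist^2_w(A, C) \ge M^{-2h(j)}$ for every $C\in\ucc{Q}_{jk}$, so the chosen $\alpha$ forces $\sum_i w_i\,\|A^{(m)}_{i*} - A''_{i*}\|^2 \le \epsilon^2 \,\dist^2_w(A, C)$ for every $C$. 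By the weak triangle inequality of Corollary~\ref{lem:movement:simple}, this rounding step contributes only a $(1\pm\epsilon/3)$ multiplicative factor. The scaled integer matrix $\hat A$ has entries bounded in absolute value by $\hat M = \mathrm{poly}(M, W, n, 1/\epsilon)$, so $\log \hat M = O(\log(MWn))$.

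Third, I would invoke Theorem~\ref{thmproj} on $\hat A$ inside the reduced ambient subspace to obtain, with probability at least $1-\delta$, an $(\epsilon/3)$-coreset $(\hat S, 0, u)$ of size $O\!\left((\log\hat M)^{h(j,k)}\log(1/\delta)/\epsilon^2\right) = O\!\left((\log(MWn))^{h(j,k)}\log(1/\delta)/\epsilon^2\right)$, absorbing the $\log W\log\log W$ factor from Theorem~\ref{thmproj} into a slightly larger exponent $h(j,k)$. Undoing the scaling and rounding and collecting additive constants yields the desired $\epsilon$-coreset $(S, \Delta, u)$ for $A$; the weight bound $u_i \ge 1$ is inherited directly from Theorem~\ref{thmproj}, and the row-norm bound $\|S_{i*}\|_2 \le M$ can be enforced by rounding inward near the boundary at negligible additional cost.

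The main obstacle is reconciling the rank assumption of Theorem~\ref{thmproj} (rank at most $k(j+1)$) with the rank of $\hat A$, which can be as large as $m = O(k(j+1)/\epsilon^2)$. I anticipate two possible remedies: either observe that the rank hypothesis in Theorem~\ref{thmproj} can be relaxed via Corollary~\ref{cordanny} to a bound on the dimension of an ambient subspace that contains both $\hat A$ and the rotated candidate solutions, keeping the $L_\infty$-coreset of size $(\log M)^{h(j,k)}$; or use Theorem~\ref{sens_move} to transfer sensitivity upper bounds from the rank-$k(j+1)$ integer approximation of $\hat A$ back to $\hat A$ itself, at the cost of constant factors in the total sensitivity, with the movement bound supplied by the optimality of $A^{(k(j+1))}$ among all rank-$k(j+1)$ approximations.
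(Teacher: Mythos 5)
There is a genuine gap, and it is precisely the one the paper warns about at the start of its own proof: the ``simple argument'' of snapping $A^{(m)}$ to a grid and feeding the resulting integer matrix $\hat A$ into Theorem~\ref{thmproj} does not give the claimed coreset size. Theorem~\ref{thmproj} (through Corollary~\ref{rankr} and ultimately Theorem~\ref{evmaintheorem}) produces an $\L_\infty$-coreset of size $(\log M/\eps)^{f(j,k,r)}$ where the \emph{exponent} depends on the rank $r$ of the point set. Your $\hat A$ has rank up to $m=\Theta(k(j+1)/\eps^2)$, so the exponent becomes a function of $1/\eps$ and the resulting coreset size is exponential in $1/\eps$, not of the form $(\log(MWn))^{h(j,k)}$. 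Your proposed remedy (a) cannot repair this: Corollary~\ref{cordanny} only lets you rotate candidate solutions into a larger ambient subspace containing the points; it does not reduce the dimension of the span of the points, which is what governs the exponent in the Edwards--Varadarajan construction.

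Your remedy (b) is the correct fix and is essentially what the paper does: project further down to an \emph{optimal} $k(j+1)$-dimensional subspace, snap to a fine grid there (so the $\L_\infty$-machinery sees a rank-$k(j+1)$ integer matrix and the exponent depends only on $j$ and $k$), and transfer the sensitivity upper bounds back via the movement argument of Theorem~\ref{sens_move}, charging both the projection and the snapping cost against the lower bound of Lemma~\ref{lem:lowerboundpc}; the first projection to dimension $m$ is retained only to bound the VC-dimension. But you present this as an unelaborated alternative rather than carrying it out, and you omit a detail the paper needs to make the charging legitimate: one must set aside $k(j+1)+1$ rows spanning more than a $k(j+1)$-dimensional subspace (the matrix $B_1$ in the paper) so that the structured part still has rank exceeding $k(j+1)$ and Lemma~\ref{lem:lowerboundpc} yields a positive lower bound on its optimal cost. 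As written, your main construction would prove a weaker statement with $h$ depending on $\eps$, so the proof is not complete.
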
 
\begin{proof}
The outline of the proof is as follows. We first apply our results on dimensionality reduction 
for coresets and reduce computing a coreset for the input matrix $A$ to computing a coreset for
the low rank approximation $A^{(m)}$ for $m=O(k(j+1)/\eps^2)$. A simple argument would then be to snap
the points to a sufficiently fine grid and apply the reduction to $l_\infty$-coresets summarized in
this section. However, such an approach would give a coreset size that is exponential in $m$ (and so in $1/\eps$),
which is not strong enough to obtain streaming algorithms with polylogarithmic space. 

Therefore, we will proceed slightly differently. We still start by projecting $A$ to $A^{(m)}$. However,
the reason for this projecting is only to get a good bound on the VC-dimension. In order to compute upper
bounds on the sensitivities of the points we apply Lemma \ref{sens_move} in the following way. We project
the points of $A^{(m)}$ to an optimal $k(j+1)$-dimensional subspace and snap them to a sufficiently fine grid. 
Then we use Lemma \ref{sens_move} to get a bound on the total sensitivity. Note that we can charge the cost of
snapping the points since the input matrix has rank more than $k(j+1)$ and so by Lemma \ref{lem:lowerboundpc} there is
a lower bound on the cost of an optimal solution. We now present the construction in detail.

Our first step is to replace the input matrix $A$ by a low rank matrix. An annoying technicality is that
we would like to make sure that our low rank matrix has still optimal cost bounded away from $0$. We therefore
proceed as follows. We take an arbitrary set of $k(j+1)+1$ rows of $A$ that are not contained in a $k (j+1)$-dimensional
subspace. Such a set must exist by our assumption on the rank of $A$. We use $B_1$ to denote the matrix that
corresponds to this subset (with weights according to the corresponding weights of $A$) and we use $B_2$ to denote
the matrix corresponding to the remaining points. We then compute 
$B_2^{(m)}$ for a value $m=  \min\br{n,d,k(j+1) + \lceil 32 k(j+1)/\epsilon^2 \rceil}-1$. If the rows are weighted, then we can think of a point weight as the multiplicity of a point and 
compute the low rank approximation as described in the proof of Theorem \ref{tthm} and we let $B^* = B_2V^{(m)}(V^{(m)})^T$ denote the projection of the weighted 
points on the subspace spanned by the first $m$ right singular values of $V$, where $B_2=U\Sigma V^T$ is the singular 
value decomposition of $B_2$ (and we observe that the row norms of $B^*$ are at most $M$).
We use $B$ to denote the matrix that corresponds to the union of the matrices $B_1$ and $B^*$.
In the following we will prove the result for the unweighted case and observe that it immediately transfers to the weighted case by reducing weights to 
multiplicities of points. We observe that by Theorem \ref{thm:main2} with $\epsilon$ replaced by $\epsilon/2$ we obtain for every set $C$ that is the union of
$k$ $j$-dimensional affine subspaces:
$$
\left|  \left(\dist^2(B,C) + \norm{B_2-B_2^{(m)}}_F^2 \right)- \dist^2(A,C) \right| \le \frac{\eps}{2} \cdot \dist^2(A,C).
$$
Now let $(S,\Delta',w)$ be an $(\eps/8)$-coreset for the $j+1$-dimensional affine $k$-subspace clustering problem 
in a subspace $L$ that contains $B$ and has dimension $r + k (j+1)$, where $r$ is the rank of $B$. Using identical arguments as in the proof
of Theorem \ref{thm:dim} we obtain that $(S, \Delta' + \|B_2-B_2^{(m)}\|_F^2 , w)$ is an $\epsilon$-coreset for $A$.

Thus, it remains to show how to obtain an $(\eps/8)$-coreset for $B$. For this purpose we define $\ell = k (j+1)$. 
We observe that we can obtain matrix $B^* = AV^{(\ell)} (V^{(\ell)})^T$ (with similar modifications for the weighted case
as above). Our goal is now to use $B^*$ to obtain the sensitivities of the points.
We know by Lemma \ref{lem:lowerboundpc} and the fact that the weights are at least $1$ and by the construction of $B$ (in particular 
the selection of $B_1$) that $\dist_2^2(B,C^*) \ge \frac{1}{M^{c(j,k)}}:= L$ for some non-negative function $c$ that depends only 
on $j$ and $k$. We now define a grid on the span of $B$ such that the cell diagonal $r$ of each grid cell satisfies
$r^2 \le \dist_w^2(A,C^*) / (nW)$. This can be achieved with a grid whose side length is $O(\frac{1}{(MnW)^{c(j,k)}})$.
We then snap all grid points to their closest grid point.
Let $B'$ be the resulting matrix. We observe that
$$
\sum_{i=1}^n w_i \|B^*_{i*}-B'_{i*}\|_2^2 \le \dist_w^2(B,C^*).
$$
Furthermore, we know that the optimal solution for $B$ is contained in an $\ell$-dimensional subspace. This implies that
$$
\sum_{i=1}^n w_i\|B_{i*} - B'_{i*}\|_2^2 \le 4 \dist_w^2(B,C^*).
$$
By Lemma \ref{sens_move} it follows we can compute upper bounds on the sensitivites of $B$ by using the sensitivities of $B'$
plus a term based on the movement distance. The total sensitivity will be bounded by a constant times the total 
sensitivity of $B'$. 

It remains to argue how to compute upper bounds on the sensitivities and get an upper bound for the total sensitivity. 
The rank of $B'$ is at most $r=k(j+1)$, so  Corollary~\ref{rankr} implies that an $\L_\infty$-$(j,k)$-coreset $S\subseteq B$ 
of size $g(n):=(\log (MnW))^{f(k,j)}$ for $B$, can be constructed in $\min(n^2d,d^2n) + (n+d) \cdot g(n)^{O(1)}$ time, where $f(j,k)$ depends
 only on $j$ and $k$. Using this with Lemma~\ref{sensitivitylemmasquareddistances} yields an upper bound on the total sensitivity 
of $O(\log W) g(n)$ and the individual sensitivities can be computed in $n(n+d)\cdot g(n)^{O(1)}$ time. The result follows from
Theorem \ref{sensitivitysamplingscheme}.
\end{proof}

\section{Streaming Algorithms for Affine \jj-Dimensional Subspace \kk-Clustering}

We will consider a stream of input points with integer coordinates and whose maximum norm is bounded by $M$.
In principle, we would like to apply the merge and reduce approach similarly to what we have done in the 
previous streaming section. However, we need to deal with the fact that the resulting coreset does not have
integer coordinates, so we cannot immediately apply the coreset construction recursively. Therefore,
we will split our streaming algorithm into two cases. As long as the input/coreset points lie in a low
dimensional subspace, we apply Theorem \ref{thmproj} to compute a coreset. This coreset is guaranteed to have 
integer coordinates of norm at most $M$. 
Once we reach the situation that the input points are not contained in a $(k(j+1))$-dimensional subspace
we will switch to the coreset construction of Theorem \ref{thmproj2}. We will exploit that by Lemma \ref{lem:lowerboundpc} we have a lower
bound of, say, $L$ on the cost of the optimal solution. In order to meet the prerequisites of Theorem \ref{thmproj2}
we need to move the points to a grid. If the grid is sufficiently fine, this will change the cost of any solution 
insignificantly and we can charge it to $L$. 

We will start with the first algorithm. We assume that there is an algorithm \textsc{$(k,j)$-SubspaceCoreset}$(Q,k,j,\gamma, \delta / j^2,v)$
that computes a coreset of size $\coresetsize(\epsilon,\delta, j,k,M,W)$, where $\coresetsize(\epsilon,\delta, j,k,M,W)$
is the bound guaranteed by Theorem \ref{thmproj}. We do not specify the coreset algorithm is pseudocode since the result
is of theoretical nature and the algorithm rather complicated. 

\begin{algorithm}[h!t]
   \caption{$\algstr(\eps, \delta, k, j, M)$\label{algprojstream}}
Set $Q\gets\emptyset$\\
Set $j \gets 2$\\
\For{every integer $h$ from $1$ to $\infty$}{
Set $S_h \gets \emptyset$; $u_h \gets 0$ \\
Set $T_i\gets \emptyset$, $v_i \gets 0$ for every integer $1 \leq i\leq h$ \\
Set $\gamma\gets \eps/(10 h)$\\
Set $W \rightarrow 0$\\
\For{$2^h$ iterations}{
Read the next point from the input stream and add it to $Q$ \\
Set $W \rightarrow W+1$\\
\If{$|Q|=2\cdot \coresetsize(\gamma,\delta/j^2,k,j, M, W)$}
{
If $Q$ is not contained in a $k(j+1)$-dimensional subspace then continue with Algorithm \ref{algprojstream2}\\
Set $(T,0,v) \gets \textsc{$(k,j)$-SubspaceCoreset}(Q,k,j,\gamma, \delta / j^2,v)$ \\ 
Set $j \gets j+1$\\
Set $i\gets1$\\
\While{$T_i\neq \emptyset$}
{
  Set $v$ to be the weight vector composed of $v$ and $v_i$\\
  If $T \cup T_i$ is not contained in a $k(j+1)$-dimensional subspace then continue with Algorithm \ref{algprojstream2}\\
	Set $(T, 0,v) \gets \textsc{$(k,j)$-SubspaceCoreset}(Q,k,j,\gamma, \delta / j^2,v)$  \\
  Set $j \gets j+1$\\
	Set $T_i\gets \emptyset$ \\
  Set $i\gets i+1$ \\
}
Set $T_i\gets T$; $v_i \gets v$ \\
Define $S \gets \bigcup_{i=1}^{h} S_i \cup T_i$ \\
Define $w$ to be the weight vector corresponding to $S$\\
Set $Q\gets\emptyset$
}
Set $S_h\gets T$; $u_h \gets v$\\
}}
\end{algorithm}

Now we turn to the second algorithm. We assume that the algorithm receives a lower bound of $L$ on the 
cost of an optimal solution. Such a lower bound follows from Lemma \ref{lem:lowerboundpc} when the input consists of
integer points that are not contained on a $k(j+1)$-dimensional subspace. Since this is the case when
Algorithm \ref{algprojstream2} is invoked, we may assume that $L\ge \frac{1}{M^{h(j)}}$.

\begin{algorithm}[h!t]
   \caption{$\algstr(\eps, \delta, k, j, M, L)$\label{algprojstream2}}
Set $Q\gets\emptyset$\\
Set $j \gets 2$\\
\For{every integer $h$ from $1$ to $\infty$}{
Set $S_h \gets \emptyset$; $\Delta_h^S \gets 0$; $u_h \gets 0$ \\
Set $T_i\gets \emptyset$; $\Delta_i^T \gets 0$; $v_i \gets 0$ for every integer $1 \leq i\leq h$ \\
Set $\gamma\gets \eps/(20 h)$\\
Set $W\rightarrow 0$\\
\For{$2^h$ iterations}{
Read the next point from the input stream and add it to $Q$ \\
Set $W \rightarrow W+1$\\
\If{$|Q|=2\cdot \coresetsize(\gamma,\delta/j^2,k,j, M, W)$}
{
Snap the input points to a grid of side length $\gamma^2 \cdot L/(100 d)$  \\
Set $(T,\Delta^T,v) \gets \textsc{$(k,j)$-SubspaceCoreset}(Q,k,j,\gamma, \delta / j^2,v)$ \\ 
Set $j \gets j+1$\\
Set $i\gets1$\\
\While{$T_i\neq \emptyset$}
{
  Set $v$ to be the weight vector composed of $v$ and $v_i$\\
  If $T \cup T_i$ is not contained in a $k(j+1)$-dimensional subspace then continue with second algorithm\\
	Set $(T, \Delta^T,v) \gets \textsc{$(k,j)$-SubspaceCoreset}(Q,k,j,\gamma, \delta / j^2,v)$  \\
  Set $j \gets j+1$\\
	Set $\Delta_T \gets \Delta^T + \Delta_i^T$\\
	Set $T_i\gets \emptyset$; $\Delta_i^T \gets 0$ \\
  Set $i\gets i+1$ \\
}
Set $T_i\gets T$; $\Delta_i^T \gets \Delta^T$; $v_i \gets v$ \\
Define $S \gets \bigcup_{i=1}^{h} S_i \cup T_i$ and $\Delta^S \gets \sum_{i=1}^h \Delta_i^S + \Delta_i^T$ \\
Define $w$ to be the weight vector corresponding to $S$\\
Set $Q\gets\emptyset$
}
Set $S_h\gets T$; $\Delta_h^S \gets \Delta^T$; $u_h \gets v$\\
}}
\end{algorithm}

\begin{theorem}\label{streaming-proj}
Let $1 >\epsilon >0$.
There exists $h(j,k), \ge 0$ such that on input a stream of $n$ $d$-dimensional points with integer coordinates and maximum $l_2$-norm $M \ge4$, algorithms 
\ref{algprojstream} and \ref{algprojstream2} maintain with probability at least $1-\delta$ in overall time $nd(k\log (Mdn)\log(1/\delta)/\eps)^{O(f(j,k))}$  
a set $S$ of $=(k\log (Mn)\log(1/\delta)/\eps)^{f(j,k)}$ points weighted with a vector $w$ and a real value $\Delta^S$
such that for every set $C$ of $k$ $j$-dimensional subspaces the following inequalities are satisfied:
 $$
 \dist^2(A,C) \le \sum_{i=1}^{|S|} w_i \cdot \dist^2(S_{i*},C) + \Delta^S \le (1+\eps) \cdot \dist^2(A,C),
$$
where $A$ denotes the matrix whose rows are the $n$ input points.
\end{theorem}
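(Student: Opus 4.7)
The proof follows the structure of Theorems~\ref{streaming-linear-subspace} and~\ref{streaming-k-means}, adapted to handle the two regimes (Algorithms~\ref{algprojstream} and~\ref{algprojstream2}) and the subtleties specific to integer projective clustering. I would proceed in four steps.

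First, the standard merge-and-reduce analysis. After processing $n$ points we have $h = O(\log n)$, and every input point participates in at most $h$ consecutive coreset computations along any path of the merge tree. Setting $\gamma = \eps/(20h)$ and invoking Claim~\ref{claim:streaming} gives that the compounded multiplicative error is at most $(1+\gamma)^h \leq 1+\eps$. The success probability of the $j$-th coreset construction is set to $\delta/j^2$ and $\sum_{j=2}^{\infty} 1/j^2 \leq 1$, so a union bound yields overall success probability at least $1-\delta$. The size of $S$ is $h$ times the coreset-construction output size at level $h$, which by Theorems~\ref{thmproj} and~\ref{thmproj2} is $(\log(MnW)\log(1/\delta)/\gamma)^{f(j,k)}$; since the total weight $W$ maintained by the algorithm never exceeds $n$ times the maximum weight inflation, which is at most $\poly(n)$, we get the stated size bound $(k\log(Mn)\log(1/\delta)/\eps)^{f(j,k)}$.

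Second, the transition between the two regimes. As long as all accumulated coreset points lie in a linear subspace of dimension at most $k(j+1)$, Algorithm~\ref{algprojstream} applies Theorem~\ref{thmproj}, which returns an integer-coordinate coreset of norm at most $M$ with $\Delta=0$; these properties make it safe to re-feed these coresets into subsequent coreset computations, preserving the hypotheses of Theorem~\ref{thmproj}. The moment some merged set $T \cup T_i$ escapes every $k(j+1)$-dimensional subspace, we switch to Algorithm~\ref{algprojstream2}. At this point Lemma~\ref{lem:lowerboundpc} (applied to the full-rank integer input accumulated so far) provides a lower bound $L \geq 1/M^{h(j)}$ on the cost of every candidate solution, which is what Algorithm~\ref{algprojstream2} requires as input. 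From then on the coreset construction of Theorem~\ref{thmproj2} is used, which accepts inputs with real coordinates; the additive correction $\Delta^T$ is propagated by summation across the merge tree as in Theorem~\ref{streaming-affine-subspace}.

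Third, the grid-snapping step inside Algorithm~\ref{algprojstream2}. To reuse Theorem~\ref{thmproj2} recursively we need integer coordinates and bounded weights. The algorithm snaps each intermediate point to the nearest vertex of a grid of side length $\gamma^2 L/(100 d)$ before invoking the coreset subroutine. For any candidate $C$, the weak triangle inequality (Corollary~\ref{lem:movement:simple}) with parameter $\gamma$ bounds the extra error caused by snapping by $\gamma\cdot\dist^2(\cdot,C) + (1+1/\gamma)\cdot(\text{squared movement})$. Because the movement per point is at most the grid diameter $\sqrt{d}\cdot \gamma^2 L/(100 d)$, the total squared movement over the at most $W$ weighted points is at most $\gamma^4 L W/(100 d)$; since $L$ lower-bounds the optimum and hence $\dist^2(\cdot,C)$ for every $C$ within a constant factor of the optimum, and the polylogarithmic factors in $W$ only inflate the grid-description length (not the error per merge), this contribution is absorbed into the $(1+\gamma)$ factor of each individual merge. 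Rescaling by the inverse grid side then yields integer coordinates of norm $O((MnW)^{h(j,k)})$, which enters the coreset size only through a $\log$ factor and is therefore harmless.

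Finally, the running time. At level $h$ of the merge tree there are $n/2^h$ coreset computations, each on an input of size $O(\coresetsize(\gamma,\delta/j^2,k,j,M,W))$; by Theorems~\ref{thmproj} and~\ref{thmproj2} each such call costs $d \cdot \coresetsize^{O(1)} = d(k\log(Mdn)\log(1/\delta)/\eps)^{O(f(j,k))}$, so summing over the $O(\log n)$ levels yields $nd(k\log(Mdn)\log(1/\delta)/\eps)^{O(f(j,k))}$ total time. The main obstacle in making this rigorous is the grid-snapping bookkeeping in Algorithm~\ref{algprojstream2}: one must verify that the lower bound $L$ provided at the moment of transition remains a valid lower bound for all future levels (even as coreset points are re-weighted and re-snapped), which is why the snapping is done relative to $L$ rather than to the current estimate of the optimum. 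With these accounting arguments, combining the inductive merge-and-reduce error with the snapping error yields the claimed $(1\pm\eps)$-guarantee.
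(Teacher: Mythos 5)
Your proposal is correct and follows essentially the same route as the paper's proof: the union bound over failure probabilities $\delta/j^2$, the merge-and-reduce error accounting via Claim~\ref{claim:streaming}, the regime switch justified by the rank condition and the lower bound $L$ from Lemma~\ref{lem:lowerboundpc}, and the grid-snapping error charged to $L$ through Corollary~\ref{lem:movement:simple} (the paper formalizes your "absorbed into each merge" step by counting the snapping as extra coreset levels, i.e.\ $2h$ instead of $h$). The only differences are cosmetic: your total-movement bound drops a factor of $L$ and is stated over all $W$ points rather than per batch, but the paper's own proof is no more detailed at that point.
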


\begin{proof}
We first analyze the success probability of algorithms \ref{algprojstream} and \ref{algprojstream2}.
In the $j$th call to a coreset construction during the execution of our algorithms, we apply the above coreset construction with probability of 
failure $\delta/j^2$. After reading $n$ points from the stream, all the coreset constructions will succeed with probability at least
\[
1-\delta\sum_{j=2}^{\infty} \frac{1}{j^2}\geq 1-\delta.
\]
The space bound of $S$ follows from the fact that $h=O(\log n)$ and since $j^2/\delta$ is at most $n^2/\delta$.
Furthermore, we observe that for algorithm \ref{algprojstream2} we can assume that the input has integer coordinates
and maximum norm $M^{h'(j) d n}$ for some function $h'()$ (where we use that we can assume $1/\gamma^2 \le n$ as otherwise
we can simply maintain all the points. 
The running time follows from the fact that the computation time of a coreset of size $(k\log (Mdn)\log(1/\delta)/\eps)^{h(j,k)}$ 
can be done in time $d(k\log (Mdn)\log(1/\delta)/\eps)^{O(h(j,k))}$.

It remains to proof that the resulting sets are a coreset. Here we first observe that at any stage of the algorithm a coreset that corresponding
to a set of $n$ input points can have at most $(1+\epsilon) n$ points. Otherwise, the coreset property would be violated if all centers are sufficiently far
away from the input set. For the analysis, we can replace our weighted input set by unweighted sets (written by a matrix $A$) and apply Corollary \ref{lem:movement:simple} to show that 
$$
|\dist^2(A,C)-\dist^2(A',C)| \le \frac{\gamma}{20} \cdot \dist(A,C)
$$
where $A'$ is the matrix obtained by snapping the rows of $A$ to a grid of side length $\gamma^2 L/(100d)$.
Suppose that all the coreset constructions indeed succeeded (which happens with probability at least $1-\delta$), the error bound follows from Claim
\ref{claim:streaming} in a similar way as in the proof of Theorem \ref{streaming-linear-subspace} by viewing the snapping procedure as an additional
coreset construction (so that we have $2h$ levels instead of $h$).
\end{proof}


\section{Small Coresets for Other Dissimilarity Measures}

In this section, we describe an alternative way to prove the existence of coresets with a size that is independent of the number of input points and the dimension. It has an exponential dependency on $\eps^{-1}$ and thus leads to larger coresets. However, we show that the construction works for a $k$-means variant based on a restricted class of $\mu$-similar Bregman divergences. Bregman divergences are not symmetric, and the $k$-means variant with Bregman divergences is not a $\mathcal{C}$-clustering problem as defined in Definition~\ref{c-clustering-problem}. Thus, the additional construction can solve at least one case that the previous sections do not cover.  

Bregman divergences do share an important property with squared Euclidean distances that we will see later on. It will prove critical for our construction. After we define our clustering problem and coresets for it formally, we will proceed in three steps. 
First, we define two niceness conditions for the dissimilarity measure. Any function that assigns a non-negative dissimilarity to any two points in $\REAL^d$, maps the origin to zero and satisfies the two niceness conditions is a nice dissimilarity measure for us. 
Second, we give a general construction for coresets for $k$-means with a nice dissimilarity measure, and prove its correctness. Third, we show that a restricted class of Bregman divergences satisfies the niceness conditions. 

\subsection{Clustering with dissimilarity \texorpdfstring{$d$}{d}}

Let $d : \REAL^{d}\times \REAL^{d} \to \mathbb{R}^{\ge 0}$ be any dissimilarity measure which satisfies $d(0)=0$. As before, we use abbreviations, in particular, we use $d(p,C)=\inf_{c \in C} d(p,c)$ for any  $p \in \REAL^d$, $C \subset \REAL^d$, and we use $d(A,C)=\sum_{i=1}^n d(A_{i\ast},C)$ for any $A \in \REAL^{n\times d}$, $C \in \REAL^d$.
We denote the \emph{centroid} of any finite set $Q \subset \REAL^d$ by $\mu(Q)$. It is defined as $\mu(Q) = \frac{1}{|Q|}\sum_{x \in Q} x$. We also use this notation for matrices, $\mu(A)$ is the centroid of the points stored in the rows of $A$.

\begin{definition}[Clustering with dissimilarity $d$]\label{d-clustering}
Given $A\in\REAL^{n\times d}$, compute a set $C$ of $k$ centers (points) in $\REAL^d$ such that $d(A,C)$ is minimized.
\end{definition}

We get the standard $k$-means problem if we let $d$ be squared Euclidean distances, i.e., in this case, $d(A,C) = \dist^2(A,C)$. 
We denote the optimal cost by $\opt_k(A) = \min_{C \subset \REAL^d, |C|=k} d(A,C)$. This allows us to also use $\opt_i(A)$ if we want to refer to the cost of clustering the points in $A$ with $i \neq k$ centers during our algorithm and proofs. Notice that for standard $k$-means, $\opt_1(A)$ can be computed in polynomial time since the optimum center is the centroid. This is true for Bregman divergences as well. The following definition is similar to 
Definition~\ref{coresetdef}.

\begin{definition}[Coreset for clustering with dissimilarity $d$]\label{coresetdef-d}
Let $\mathcal C$ be the family of all sets $C \subset \REAL^d$ with $k$ points. 
Let $A\in\REAL^{n\times d}$, $k\geq1$ be an integer, and $\eps>0$. A tuple $(S,\Delta,w)$ of a matrix $S\in\REAL^{m\times d}$ with a vector of $n$ weights $w=(w_1,\ldots,w_m)\in \REAL^m$ associated with its rows and a value $\Delta = \Delta(A, \epsilon,\mathcal C)$
 is an \emph{$\epsilon$-coreset for the clustering problem with dissimilarity $d$} if for every $C\in \mathcal C$ we have
\[
(1-\epsilon) d(A,C) \le \sum_{i=1}^m w_i d(S_{i*},C) + \Delta \le (1+\epsilon) \cdot d(A,C).
\]
\end{definition}

\subsection{Clustering problems with nice dissimilarity measures}
We say that a dissimilarity $d$ is \emph{nice} if the clustering problem that it induces satisfies the following two conditions. Firstly, if we have an $A$ where the best clustering with $k$ clusters is not much cheaper than the cost of $A$ with only one center, then this has to induce a coreset for $A$. We imagine this as $A$ being pseudo random; since it has so little structure, representing with fewer points is easy.
Secondly, if a subset $A' \subset A$ has negligible cost compared to $A$, then it is possible to compute a small weighted set which approximates the cost of $A'$ up to an additive error which is an $\epsilon$-fraction of the cost of $A$. Note that this is a much easier task than computing a coreset for $A$, since $A'$ may be represented by a set with a much higher error then its own cost. The following definition states our requirements in more detail. If we say that $A_1,\ldots,A_k$ is a \emph{partitioning} of $A$, we mean that the rows of $A$ are partitioned into $k$ sets which then induce $k$ matrices with $d$ columns. By $A'\subset A$ we mean that the rows of $A'$ are a subset of the rows of $A$, and by $|A|$ we mean the number of rows in $A$.

\begin{definition}\label{def:nicenessford}
We say that a dissimilarity measure $d$ is \emph{nice} if the clustering problem with dissimilarity $d$ (see Definition~\ref{d-clustering}) satisfies the following conditions.
\begin{enumerate}\setcounter{enumi}{-1}
\item It is possible to compute $\opt_1(A)$ in polynomial time for any $A \in \REAL^{n\times d}$.
\item \label{firstcondition} If an optimal $k$-clustering of $A$ is at most a $(1+\eps)$-factor cheaper than the best $1$-clustering, then this must induce a coreset for $A$:

If $\opt_1(A) \le (1+f_1(\eps)) \sum_{i=1}^k \cost(A_i)$ for all partitionings $A_1,\ldots,A_k$ of $A$ into $k$ matrices, 
 then there exists a coreset $(Z, \Delta_Z)$ of size $g(k,\eps)$ such that for any set of $k$ centers we have $\left| d(A,C) - d(Z,C) + \Delta_Z\right| \le \eps\cdot d(A,C)$, for a function $g$ which only depends on $k$ and $\eps$, and a function $f_1$ that only depends on $\epsilon$.
\item If the cost of $A'\subset A$ is very small, then it can be represented by a small set which has error $\eps \cdot d(A,C)$ for any $C, |C|=k$:

If $opt_k(A',f_2(k)) \le f_3(\eps) \opt(A,k)$ for $A' \subset A$, then there exist a set $Z$ of size $h(f_2(k),\eps)$ and a constant $\Delta_Z$ such that for any set of centers $C$ we have $\left| d(A',C) - d(A,C)+\Delta_Z\right| \le \eps \cdot d(A,C)$. \label{secondcondition}
\end{enumerate}
\end{definition}

\subsection{Algorithm for nice dissimilarity measures}

In the following, we will assume that we can solve the clustering problem optimally. This is only for simplicity of exposition; the algorithm still works if we use an approximation algorithm. Algorithms~\ref{alg-d-partition} and~\ref{alg-coresets-d-clustering} give pseudo code for the algorithm. Algorithm~\ref{alg-d-partition} is a recursive algorithm that partitions $A$ into subsets. Every subset $A'$ in the partitioning is either very cheap (defined more precisely below), or pseudo random, meaning that $\opt_1(A') \le (1+f_1(\eps)) \opt_k(A')$. This is achieved by a recursive partitioning. The trick is that whenever a set is not pseudo random, then the overall cost is decreased by a factor of $(1+f_1(\eps))$ by the next partitioning step. This means that after sufficiently many ($\lceil\log_{1+f_1(\eps)} \frac{1}{f_3(\eps)}\rceil$) levels, all sets have to be cheap. Indeed, not only are the individual sets cheap, even the sum of all their $1$-clustering costs is cheap.

\newcommand{\dpartition}{\textsc{partition-helper}}
\begin{algorithm}[h!t]
    \caption{$\dpartition(A,k,\eps)$}\label{alg-d-partition}
{\begin{tabbing}
\textbf{Input: \quad } \=$A\in\REAL^{n\times d}$, integers $k,t,\nu \geq1$, an error parameter $\eps>0$, and a set of sets $M$\\
\textbf{Output:  } \>A partitioning of $A$
\end{tabbing}}
\vspace{-0.3cm}
   Compute an optimal solution $C^\ast=\{c_1,\ldots,c_k\}$ for A.\\
	 Let $A_1,\ldots,A_k$ be the partitioning induced by $C^\ast$.\\
	 \uIf{ $t \le \nu$ or $opt_1(A) \le (1+f_1(\eps)) \sum_{i=1}^k \opt_1(A_i)$}{
	  $M = M \cup \{P\}$\\		
	 }
	 \Else
	 {
	  \For {$i=1,\ldots,k$}
		{
		  $M := M \cup \dpartition(A_i,k,t+1,\nu,\epsilon')$\\
		}		
	 }
	 \Return M\\
\end{algorithm}

\newcommand{\nicedalgo}{\textsc{coresets-for-nice-d-clustering-problems}}
\begin{algorithm}[h!t]
    \caption{$\nicedalgo(A,k,\eps)$}\label{alg-coresets-d-clustering}
{\begin{tabbing}
\textbf{Input: \quad } \=$A\in\REAL^{n\times d}$, an integer $k\geq1$ and an error parameter $\eps>0$.\\
\textbf{Output:  } \>A tuple $(S,\Delta,w)$ that satisfies Definition~\ref{coresetdef-d}.
\end{tabbing}}
\vspace{-0.3cm}
 Set $\epsilon' = \epsilon^2/50$, $S = \emptyset, w = \emptyset$ and $\Delta=0$\\
 M = \dpartition$(A,k,0,\lceil\log_{1+f_1(\eps)} \frac{1}{f_3(\eps)}\rceil, \epsilon')$\\
 \For{all $A' \in M$}{
   Compute $S_{A'},w_{A'},\Delta_{A'}$ by the routines guaranteed in Definition~\ref{def:nicenessford}\\
	Set $S = S \cup S_{A'}, w = w||w_{A'}, \Delta = \Delta+\Delta_{A'}$\\
 }
 \Return {$S, w, \Delta$}
\end{algorithm}

Let $M_i$ denote the set of all subsets generated by the algorithm on level $\nu$ (where the initial call is level $0$, and where not all sets in $M_i$ end up in $M$ since some of them are further subdivided). The input set has cost $\opt_k(A) = \opt_k(A) / (1+f_1(\eps))^0$. For every level in the algorithm, the overall cost is decreased by a factor of $(1+f_1(\eps))$. Thus, the sum of all $1$-clustering costs of sets in $M_i$ is $\opt_k(A)/(1+f_1(\eps))^i$. For $\nu = \lceil\log_{1+f_1(\eps)} \frac{1}{f_3(\eps)}\rceil$, this is smaller than $f_3(\eps)\cdot\opt_k(A)$. We have at most $f(k) := k^{\nu}$ sets that survive until level $\nu$ of the recursion, and then their overall cost is bounded by $\opt_1(A)$. By Condition \ref{secondcondition}, this implies the existence of a set $Z$ of size $h(k^\nu, \eps)$ which has an error of at most $\eps \opt_k(A)$.

For all sets where we stop early (the pseudo random sets), Condition \ref{firstcondition} directly gives a coreset of size $g(k,\eps)$.
The union of these coresets give a coreset for the union of all pseudo random sets. Altogether, they induce an error of less than $\eps \opt_k(A)$. Together with the $\eps \opt_k(A)$ error induced by the cheap sets on level $\nu$, this gives a total error of $2 \eps \opt_k(A)$. So, if we start every thing with $\eps/2$, we get a coreset for $A$ with error $\eps \opt_k(A)$.
The size of the coreset is  $k^\nu \cdot g(k,\eps/2) + h(k^\nu, \eps/2)$.

\begin{lemma}\label{groessenlemma}
If $d$ is a nice dissimilarity measure according to Definition~\ref{def:nicenessford}, then  there exists a coreset of size $k^\nu \cdot g(k,\eps/2) + h(k^\nu, \eps/2)$ for $\nu = \lceil\log_{1+f_1(\eps/2)} \frac{1}{f_3(\eps/2)}\rceil$ for the clustering problem with dissimilarity $d$.
\end{lemma}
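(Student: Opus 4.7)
The plan is to run Algorithm~\ref{alg-coresets-d-clustering} with parameter $\eps' = \eps/2$ and argue that the partitioning produced by \dpartition\ has exactly the structure claimed in the discussion preceding the lemma: every leaf of the recursion tree is either a \emph{pseudo-random} set (to which Condition~\ref{firstcondition} of Definition~\ref{def:nicenessford} can be applied directly) or a \emph{cheap} set reached at depth $\nu = \lceil\log_{1+f_1(\eps/2)}(1/f_3(\eps/2))\rceil$, in which case we invoke Condition~\ref{secondcondition} on the union of all such cheap sets.

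First I would prove by induction on the recursion depth $i$ that if $M_i$ denotes the collection of subsets currently sitting at level $i$ of the recursion (including both those about to be subdivided and those where the algorithm has stopped), then
\[
\sum_{A' \in M_i} \opt_1(A') \le \frac{\opt_k(A)}{(1+f_1(\eps/2))^i}.
\]
The base case $i=0$ uses $\opt_1(A) \ge \opt_k(A)$ trivially (and the claim holds because we begin by considering $A$ itself; more precisely, the first partition step sends $\opt_1$ into $\sum \opt_1(A_i) \le \opt_k(A)$). For the inductive step, any subset $A'$ that is split further satisfies $\opt_1(A') > (1+f_1(\eps/2)) \sum_i \opt_1(A'_i)$ by the algorithm's branching condition, so replacing $A'$ by its $k$ children in the sum shrinks the total by a factor of at least $1+f_1(\eps/2)$. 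Therefore after $\nu$ levels the total $1$-clustering cost of the surviving sets is at most $f_3(\eps/2)\cdot \opt_k(A)$, and consequently also the total $k$-clustering cost of their union $A_{\mathrm{cheap}}:=\bigcup_{A' \in M \cap M_\nu} A'$ satisfies $\opt_{f_2(k)}(A_{\mathrm{cheap}}) \le f_3(\eps/2)\cdot \opt_k(A)$, provided $f_2(k) \ge k^\nu$ (otherwise we slightly loosen $\nu$ so that the number of cheap leaves is bounded by $f_2(k)$ -- this is harmless since $k^\nu$ already bounds the number of leaves).

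Next I would apply the two niceness guarantees. For each pseudo-random leaf $A' \in M$, Condition~\ref{firstcondition} yields a weighted set $(Z_{A'}, \Delta_{A'})$ of size at most $g(k,\eps/2)$ satisfying
\[
\bigl| d(A',C) - d(Z_{A'},C) - \Delta_{A'}\bigr| \le \tfrac{\eps}{2}\cdot d(A',C)
\]
for every $C$ with $|C|=k$. Summing over the (at most $k^\nu$) pseudo-random leaves and using $\sum_{A'}d(A',C)\le d(A,C)$ gives a combined weighted set of size $k^\nu\cdot g(k,\eps/2)$ that approximates $d(A\setminus A_{\mathrm{cheap}},C)$ up to additive error $(\eps/2)\,d(A,C)$. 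For the union $A_{\mathrm{cheap}}$, Condition~\ref{secondcondition} (applied with $A'=A_{\mathrm{cheap}}$) produces a set $Z_{\mathrm{cheap}}$ of size $h(k^\nu,\eps/2)$ and a constant $\Delta_{\mathrm{cheap}}$ such that
\[
\bigl|d(A_{\mathrm{cheap}},C) - d(Z_{\mathrm{cheap}},C) - \Delta_{\mathrm{cheap}}\bigr|\le \tfrac{\eps}{2}\cdot d(A,C).
\]

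Finally I would assemble the coreset by unioning all the weighted sets and summing all the $\Delta$ values, and then bound the total error by the triangle inequality: the two contributions of $(\eps/2)\,d(A,C)$ add to $\eps\cdot d(A,C)$, which is the coreset condition of Definition~\ref{coresetdef-d}. The size bound follows immediately from the two partial bounds. The main subtlety -- and the step I expect to be the most delicate -- is the bookkeeping that ensures the total cost at depth $\nu$ truly drops by the full factor $(1+f_1(\eps/2))^\nu$ despite the fact that some branches stop early (when the pseudo-random condition triggers before depth $\nu$); those early-stopped branches do \emph{not} need to shrink because they are accounted for separately by Condition~\ref{firstcondition}, and only the branches still being subdivided need to contribute to the geometric decay. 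Tracking two separate accountings -- one for the pseudo-random frontier and one for the cheap frontier -- is the only technically careful part of the argument.
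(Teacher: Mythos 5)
Your proposal is correct and follows essentially the same route as the paper: the same recursive partition into pseudo-random leaves (handled by Condition~\ref{firstcondition}) and cheap depth-$\nu$ leaves (handled by Condition~\ref{secondcondition} on their union with $f_2(k)=k^\nu$), the same geometric decay of the summed $1$-clustering costs across levels, and the same $\eps/2+\eps/2$ error accounting. If anything, your explicit treatment of the bookkeeping for early-stopped branches is more careful than the paper's informal discussion, which glosses over exactly that point.
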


For $k$-means, we can achieve that $g\equiv 1$ and $h(k^\nu,\eps)=k^\nu$. Thus, the overall coreset size is $2 k^{\log_{1+f_1(\eps)} \frac{1}{f_3(\eps)}}$. We do not present this in detail as the coreset is larger than the $k$-means coreset coming from our first construction. However, the proof can be deduced from the following proof for a restricted class of $\mu$-similar Bregman divergences, as the $k$-means case is easier.

\subsection{Coresets for \texorpdfstring{$\mu$}{mu}-similar Bregman divergences}

\newcommand{\dd}{d_\phi}
\newcommand{\da}{d_B}

Let $d_\phi : S\times S \to \mathbb{R}$ be a $m$-similar Bregman divergence. This means that $d_\phi$ is defined on a convex set $S \subset \mathbb{R}^d$ and there exists a Mahalanobis distance $\da$ such that $m \da(p,q) \le d_\phi(p,q) \le \da(p,q)$ for all points $p,q \in \REAL^d$ and an $m \in (0,1]$ (note that we use $m$-similar instead of $\mu$-similar in order to prevent confusion with the centroid $\mu$).

We say that $S$ is $A$-covering if it contains the union of all balls of radius $(4/m\eps)\cdot d(p,q)$ for all $p, q \in A$. For our proof, we need that $S$ is convex and $A$-covering.
Because of this additional restriction, our setting is much more restricted than in~\cite{AB09}. It is an interesting open question how to remove this restriction and also how to relax the $m$-similarity.

The fact that $\da$ is a Mahalanobis distance means that there exists a regular matrix $B$ with $\da(x,y) = \norm{B(x-y)}^2$ for all points $x,y \in \REAL^n$. In particular, $m \cdot \norm{B(x-y)}^2 \le \dd(x,y) \le \norm{B(x-y)}^2$.
By \cite{BMDG05}, Bregman divergences (also if they are not $m$-similar) satisfy the Bregman version of a famous equality that is also true for $k$-means. For Bregman divergences, it reads: For all $A \in \REAL^{n\times d}$, it holds that \begin{align}\sum_{p \in A} d_\phi(p,z) = \sum_{p \in A} d_\phi(p,\mu) + |A| \cdot d_\phi(\mu,z).\label{magicformulabreg}\end{align} 
\paragraph{Condition 1}
To show that Condition 1 holds, we set $f_1(\eps) = \frac{1}{(1+\frac{4}{m\cdot\eps})^2}$ and assume that we are given a point set $S$ that is pseudo random. This means that it satisfies  for any partitioning of $S$ into $k$ subsets $S_1,\ldots, S_k$ that
\begin{align*}
&& \sum_{s \in S} \dd(s,\mu(S)) &\le& (1 + f(\eps)) \sum_{j=1}^k \sum_{x \in S_j} \dd(x,\mu(S_j))\\
\Leftrightarrow && \sum_{j=1}^k  \sum_{s \in S} \dd(s,\mu(S_j)) +\sum_{j=1}^k |S_j| \dd(\mu(S_j),\mu(S)) &\le& (1 + f(\eps)) \sum_{j=1}^k \sum_{x \in S_j} \dd(x,\mu(S_j)) \\
 \label{meinLabel} \Leftrightarrow && \sum_{j=1}^k |S_j| \dd(\mu(S_j),\mu(S)) &\le& \frac{1}{(1+\frac{4}{m\cdot\eps})^2}  \sum_{j=1}^k \sum_{x \in S_j} \dd(x,\mu(S_j)) .
\end{align*}

We show that this restricts the error of clustering all points in $S$ with the same center, more specifically, with the center $c(\mu(S))$, the center closest to $\mu(S)$. To do so, we virtually add  points to $S$. For every $j=1,\ldots,k$, we add one point with weight $\frac{1}{4}\eps \cdot m \cdot |S_j|$ with coordinate $\mu(S)+\frac{4}{m\cdot \eps}\left(\mu(S)-\mu(S_j)\right)$ to $S_j$.
Notice that $\da$ is defined on these points because we assumed that $S$ is $A$-covering.
The additional point shifts the centroid of $S_j$ to $\mu(S)$ because
\begin{align*}
&& \frac{|S_j| \cdot \mu(S_j) + \frac{\eps m}{4}  |S_j| \left[\mu(S)+\frac{4}{m\cdot \eps}\left(\mu(S)-\mu(S_j)\right)\right]}{(1+\frac{m\cdot\eps}{4})|S_j|}\\
&=&
\frac{\frac{\eps m}{4}  |S_j| \left[\mu(S)+\frac{4}{m\cdot \eps}\mu(S)\right]}{(1+\frac{m\cdot\eps}{4})|S_j|}
 = \mu(S).
\end{align*}
We name the set consisting of $S_j$ together with the weighted added point $S_j'$ and the union of all $S_j'$ is $S'$.
Now, clustering $S'$ with center $c(\mu(S))$ is certainly an upper bound for the clustering cost of $S$ with $c(\mu(S))$. Additionally, when clustering $S_j'$ with only one center, then $c(\mu(S))$ is optimal, so clustering $S_j'$ with $c(\mu(S_j))$ can only be more expensive. Thus, clustering all $S_j'$ with the centers $c(\mu(S_j))$ gives an upper bound on the cost of clustering $S$ with $c(\mu(S))$. So, to complete the proof, we have to upper bound the cost of clustering all $S_j'$ with the respective centers $c(\mu(S_j))$. We do this by bounding the additional cost of clustering the added points with $c(\mu(S_j))$, which is
\begin{align*}
& \sum_{j=1}^k \frac{\eps m }{4} |S_j| \cdot \dd\left(\mu(S)+\frac{4}{m\cdot \eps}(\mu(S)\right. \\
& \hspace*{4cm} -\mu(S_j)),c(\mu(S_j))\bigg) \\
\le & \sum_{j=1}^k \frac{ \eps m}{4} |S_j| \cdot \norm{B(\mu(S)+\frac{4}{m\cdot \eps}(\mu(S)-\mu(S_j))-c(\mu(S_j)))}^2 \\
= & \norm{a}^2
\end{align*}
for the $k$-dimensional vector $a$ defined by \[a_j := \sqrt{\eps m|S_j|/4} \norm{B(\mu(S)+\frac{4}{m\cdot \eps}\left(\mu(S)-\mu(S_j)\right)-c(\mu(S_j)))}.\]
 By the triangle inequality, 
\[a_j \le \sqrt{\eps m|S_j|/4}\norm{B((1+\frac{4}{m \eps})(\mu(S)-\mu(S_j)))} + \sqrt{\eps m|S_j|/4} \norm{B(\mu(S_j)-c(\mu(S_j)))} = b_j+d_j\] with $b_j= \sqrt{\eps m|S_j|/4}\norm{B((1+\frac{4}{m \eps})(\mu(S)-\mu(S_j)))} $ and $d_j = \sqrt{\eps m|S_j|/4} \norm{B(\mu(S_j)-c(\mu(S_j)))}$.
Then,
\[
\norm{a} \le \norm{b+d} \le \norm{b} + \norm{d},
\]
where we use the triangle inequality again for the second inequality. Now we observe that
\begin{align*}
\norm{b}^2 =& \sum_{j=1}^k \frac{\eps m}{4} |S_j| \norm{B((1+\frac{4}{\eps m})(\mu(S)-\mu(S_j)))}^2 \\
= & \frac{\eps m}{4} \sum_{j=1}^k |S_j| (1+\frac{4}{m \eps})^2 \norm{B(\mu(S_j)-\mu(S))}^2\\
\le &  \frac{\eps m}{4} (1+\frac{4}{m \eps})^2 \sum_{j=1}^k |S_j| \frac{1}{m}  \dd(\mu(S_j),\mu(S))^2\\
\le &
\frac{\eps}{4} \sum_{j=1}^k \sum_{x \in S_j} \dd(x,\mu(S_j)).
\end{align*}
Additionally, by the definition of $m$-similarity and by Equation \eqref{magicformulabreg} it holds that
\begin{align*}
\norm{d}^2 = & \sum_{j=1}^k \frac{1}{4} \eps m |S_j| \norm{B(\mu(S_j)-c(\mu(S_j)))}^2\\
\le & \frac{\eps}{4} \sum_{j=1}^k |S_j| \dd(\mu(S_j),c(\mu(S_j)))\\
\le & \frac{\eps}{4}  \sum_{j=1}^k \sum_{x \in S_j} \dd(x,\mu(S_j)).
\end{align*}
This implies that $\norm{a} \le \norm{b}+\norm{d} \le 2 \sqrt{\eps}/2 \sqrt{ \sum_{j=1}^k \sum_{x \in S_j} \dd(x,\mu(S_j))}$
and thus \[\norm{a}^2 \le \eps \sum_{j=1}^k \sum_{x \in S_j} \dd(x,\mu(S_j)).\]
This means that Condition 1 holds: If a $k$-clustering of $S$ is not much cheaper than a $1$-clustering, then assigning all points in $S$ to the same center yields a $(1+\eps)$-approximation for arbitrary center sets. This means that we can represent $S$ by $\mu(S)$, with weight $w(S)$ and $\Delta_S=d(S,\mu(S))$. Since we only need one point for this, we even get that $g(k,f'(\eps^{-1}))\equiv 1$.

\paragraph{Condition 2} For the second condition, assume that $\mathcal{S}$ is a set of subsets of $A$ representing the $f_2(k)$ subsets according to an optimal $f_2(k)$-clustering. Let a set $C$ of $k$ centers be given, and define the partitioning $S_1,\ldots, S_k$ for every $S \in \mathcal{S}$ according to $C$ as above.
By Equation~\eqref{magicformulabreg} and by the precondition of Condition 2,
\begin{align*}
& \sum_{S \in \mathcal{S}} \sum_{j=1}^k |S_j| \dd(\mu(S_j),\mu(S)) \\
= & \sum_{S \in \mathcal{S}} \sum_{j=1}^k \sum_{x \in S_j} \dd(x,\mu(S)) -
\sum_{S \in \mathcal{S}} \sum_{j=1}^k \sum_{x \in S_j} \dd(x,\mu(S_j)) \\
\le& f_3(\eps) \cdot \opt_k(A).
\end{align*}
We use the same technique as in the proof that Condition 1 holds. There are two changes: First, there are $|\mathcal{S}|$ sets where the centroids of the subsets must be moved to the centroid of the specific $S$ (where in the above proof, we only had one set $S$). Second, the bound depends on $\opt_k(A)$ instead of $\sum_{S \in \mathcal{S}}$, so the approximation is dependent on $\opt_k(A)$ as well, but this is consistent with the statement in Condition 2.

We  set $f_3(\eps) = f_1(\eps)$ and again virtually add points. For each $S \in \mathcal{S}$ and each subset $S_j$ of $S$, we add a point with weight $\frac{m \cdot \eps}{4} |S_j|$  and coordinate $\mu(S) + \frac{4}{m \cdot \eps} (\mu-\mu_j)$ to $S_j$. Notice that these points lie within the convex set $A$ that $\da$ is defined on because we assumed that $S$ is $A$-covering.

We name the new sets $S_j'$, $S'$ and $\mathcal{S}'$.
Notice that the centroid of $S_j'$ is now
\begin{align*}
&\frac{|S_j| \cdot \mu(S_j) + \frac{\eps m}{4}  |S_j| \left[\mu(S)+\frac{4}{m\cdot \eps}\left(\mu(S)-\mu(S_j)\right)\right]}{(1+\frac{m\cdot\eps}{4})|S_j|}\\
= &\mu(S)
\end{align*} in all cases.
Again, clustering $S'$ with $c(\mu(S))$ is an upper bound for the clustering cost of $S$ with $c(\mu(S))$, and because the centroid of $S_j'$ is $\mu(S)$, clustering every $S_j'$ with $c(\mu(S_j))$ is an upper bound on clustering $S$ with $c(\mu(S))$. Finally, we have to upper bound the cost of clustering all $S_j'$ in all $S$ with $c(\mu(S_j))$, which we again do by bounding the additional cost incurred by the added points. Adding this cost over all $S$ yields
\begin{align*}
& \sum_{S \in \mathcal{S}} \sum_{j=1}^k \frac{1}{4} \eps m |S_j| \cdot \dd(\mu(S) \\
 & \quad + \frac{4}{m\cdot \eps}\left(\mu(S)-\mu(S_j)\right),c(\mu(S_j))) \\
\le & \sum_{S \in \mathcal{S}} \sum_{j=1}^k \frac{ \eps m}{4} |S_j| \cdot \norm{B(\mu(S) + \frac{4}{m\cdot \eps}\left(\mu(S)-\mu(S_j)\right)-c(\mu(S_j)))}^2=  \norm{a}^2.
\end{align*}
For the last equality, we define $|\mathcal{S}|$ vectors $a^S$ by
\[a_j^S := \sqrt{\eps m|S_j|/4} \norm{B(\mu(S)+\frac{4}{m\cdot \eps}\left(\mu(S)-\mu(S_j)\right)-c(\mu(S_j)))}\] and concatenate them in arbitrary but fixed order to get a $k \cdot |\mathcal{S}|$ dimensional vector $a$.
By the triangle inequality, 
\[a_j^S \le \sqrt{\eps m|S_j|/4}\norm{B((1+\frac{4}{m \eps})(\mu(S)-\mu(S_j)))} + \sqrt{\eps m|S_j|/4} \norm{B(\mu(S_j)-c(\mu(S_j)))} = b_j^S+d_j^S
\]
 with $b_j^S= \sqrt{\eps m|S_j|/4}\norm{B((1+\frac{4}{m \eps})(\mu(S)-\mu(S_j)))} $ and $d_j^S = \sqrt{\eps m|S_j|/4} \norm{B(\mu(S_j)-c(\mu(S_j)))}$. Define $b$ and $d$ by concatenating the vectors $b^S$ and $d^S$, respectively, in the same order as used for $a$.
Then we can again conclude that \[
\norm{a} \le \norm{b+d} \le \norm{b} + \norm{d},
\]
where we use the triangle inequality for the second inequality.
 Now we observe that
\begin{align*}
\norm{b}^2 =& \sum_{S \in \mathcal{S}} \sum_{j=1}^k \frac{\eps m}{4} |S_j| \norm{B((1+\frac{4}{\eps m})(\mu(S)-\mu(S_j)))}^2\\
=&  \frac{\eps m}{4} \sum_{S \in \mathcal{S}} \sum_{j=1}^k |S_j| (1+\frac{4}{m \eps})^2 \norm{B(\mu(S_j)-\mu(S))}^2\\
\le &  \frac{\eps m}{4} (1+\frac{4}{m \eps})^2 \sum_{S \in \mathcal{S}} \sum_{j=1}^k |S_j| \frac{1}{m}  \dd(\mu(S_j),\mu(S))^2\\
\le &
\frac{\eps}{4} \opt_k(A).
\end{align*}

Additionally, by the definition of $m$-similarity and by Equation \eqref{magicformulabreg} it holds that
\begin{align*}
\norm{d}^2 = & \sum_{S \in \mathcal{S}}\sum_{j=1}^k \frac{1}{4} \eps m |S_j| \norm{B(\mu(S_j)-c(\mu(S_j)))}^2\\
\le & \frac{\eps}{4} \sum_{S \in \mathcal{S}}\sum_{j=1}^k |S_j| \dd(\mu(S_j),c(\mu(S_j)))\\
\le & \frac{\eps}{4} \sum_{S \in \mathcal{S}} \sum_{j=1}^k \sum_{x \in S_j} \dd(x,\mu(S_j)).
\end{align*}
This implies that $\norm{a} \le \norm{b}+\norm{d} \le 2 \sqrt{\eps}/2 \sqrt{ \opt_k(A)}$
and thus \[\norm{a}^2 \le \eps \opt_k(A).\]

\begin{theorem}
If $\da : S\times S \to \mathbb{R}$ is a $m$-similar Bregman divergence on a convex and $A$-covering set $S$ with $m \in (0,1]$, then
there exists a coreset consisting of clustering features of constant size, \ie\ the size only depends on $k$ and $\eps$.
\end{theorem}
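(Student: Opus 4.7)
The plan is to tie together the general framework for nice dissimilarity measures, embodied in Lemma~\ref{groessenlemma}, with the two verifications carried out earlier in the section. First I would confirm Condition~0 of Definition~\ref{def:nicenessford}: for any Bregman divergence, the unique optimal $1$-center under $\dd$ is the centroid $\mu(A)$, which is an immediate consequence of the Bregman identity~\eqref{magicformulabreg}; hence $\opt_1$ is computable in linear time.

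Next I would invoke the two condition checks already worked out in the paragraphs labelled ``Condition 1'' and ``Condition 2.'' Setting $f_1(\eps) = 1/(1 + 4/(m\eps))^2$, the first paragraph shows that any pseudorandom set $S$ (one satisfying $\opt_1(S)\le (1+f_1(\eps))\sum_j \opt_1(S_j)$ for every $k$-partition) admits a coreset consisting of the single weighted point $\mu(S)$ together with the additive constant $\Delta_S = \dd(S,\mu(S))$, so one may take $g(k,\eps)\equiv 1$. Taking $f_3(\eps) = f_1(\eps)$ and letting $f_2(k) = k^\nu$ (the number of leaves produced by $\dpartition$ at depth $\nu$), the second paragraph shows symmetrically that the union of the $\le k^\nu$ cheap leaf sets can be collapsed to one weighted centroid per leaf, so $h(k^\nu,\eps)\le k^\nu$.

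Plugging these into Lemma~\ref{groessenlemma} gives a coreset of total size at most
\[
k^\nu \cdot g(k,\eps/2) + h(k^\nu,\eps/2) \;\le\; 2\,k^\nu,
\qquad \nu = \bigl\lceil \log_{1+f_1(\eps/2)}\bigl(1/f_3(\eps/2)\bigr)\bigr\rceil.
\]
Since $f_1$ and $f_3$ depend only on $m$ and $\eps$, the exponent $\nu$ is a function of $m$ and $\eps$ alone, so the bound $2k^\nu$ depends only on $k$, $m$, and $\eps$, which is the conclusion of the theorem. An approximation algorithm for $\opt_1$ could be substituted throughout without changing the argument, absorbing a constant factor into $f_1$ and $f_3$.

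The one technical point I expect to pause on is the role of the $A$-covering hypothesis. It is used precisely to guarantee that the ``virtual'' auxiliary points $\mu(S) + (4/(m\eps))(\mu(S)-\mu(S_j))$ introduced in the verifications of Conditions~1 and~2 lie in the convex domain on which $\dd$ is defined, so that every invocation of $\dd$ (and of the associated Mahalanobis distance $\da$ via $m$-similarity) is legitimate. Convexity of the domain is simultaneously what allows the triangle-inequality decomposition $\norm{a}\le\norm{b}+\norm{d}$ to go through inside the Mahalanobis metric. Once one double-checks that the $A$-covering radius $(4/(m\eps))\cdot \dd(p,q)$ chosen in the definition is exactly matched to the scaling factor $4/(m\eps)$ appearing in the virtual points, the rest of the argument is pure assembly.
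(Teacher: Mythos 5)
Your proposal is correct and follows essentially the same route as the paper: verify that the two niceness conditions hold with $f_1(\eps)=f_3(\eps)=\frac{1}{(1+\frac{4}{m\eps})^2}$, $g\equiv 1$ and $h(k^\nu,\eps)=k^\nu$ (as established in the preceding ``Condition 1'' and ``Condition 2'' paragraphs), then apply Lemma~\ref{groessenlemma} to obtain the bound $2k^\nu$. Your added remarks on Condition~0 and on why the $A$-covering hypothesis is needed for the virtual points are accurate elaborations of what the paper leaves implicit, and your observation that the size formally also depends on $m$ is a fair reading of the theorem's phrasing.
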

\begin{proof}
We have seen that the two conditions hold with $f_1(\eps)=f_3(\eps)=\frac{1}{(1+\frac{4}{m\cdot\eps})^2}$, and $g \equiv 1$ and $h(k^\nu,\eps)=k^\nu$.
By Lemma~\ref{groessenlemma}, this implies that we get a coreset, and that the size of this coreset is bounded by
\begin{align*}
2 k^\nu &= 2 k^{\lceil\log_{1+f_1(\eps/2)} \frac{1}{f_3(\eps/2)}\rceil}\\
&= 2  k^{\lceil\log_{1+\frac{1}{(1+\frac{8}{m\cdot\eps})^2}} {(1+\frac{8}{m\cdot\eps})^2}\rceil}\hspace*{3cm}
\end{align*}
\end{proof}


\bibliographystyle{amsalpha}
\bibliography{newbibfile}

\newcommand{\etalchar}[1]{$^{#1}$}
\providecommand{\bysame}{\leavevmode\hbox to3em{\hrulefill}\thinspace}
\providecommand{\MR}{\relax\ifhmode\unskip\space\fi MR }
\providecommand{\MRhref}[2]{%
  \href{http://www.ams.org/mathscinet-getitem?mr=#1}{#2}
}
\providecommand{\href}[2]{#2}
\begin{thebibliography}{AHPV04b}

\bibitem[AB09]{AB09}
M.~Ackermann and J.~Bl\"omer, \emph{Coresets and approximate clustering for
  bregman divergences}, \Proc 20th \SODA, 2009, pp.~1088--1097.

\bibitem[ACKS15]{ACKS15}
Pranjal Awasthi, Moses Charikar, Ravishankar Krishnaswamy, and Ali~Kemal Sinop,
  \emph{The hardness of approximation of euclidean k-means}, 31st SoCG, 2015,
  pp.~754--767.

\bibitem[ADHP09]{ADHP09}
Daniel Aloise, Amit Deshpande, Pierre Hansen, and Preyas Popat,
  \emph{{NP}-hardness of {E}uclidean sum-of-squares clustering}, Machine
  Learning \textbf{75} (2009), no.~2, 245 -- 248.

\bibitem[ADK09]{ADK09}
Ankit Aggarwal, Amit Deshpande, and Ravi Kannan, \emph{Adaptive sampling for
  $k$-means clustering}, \Proc 12th \APPROX, 2009, pp.~15--28.

\bibitem[AFZZ15]{aghamolaei2015diversity}
Sepideh Aghamolaei, Majid Farhadi, and Hamid Zarrabi-Zadeh, \emph{Diversity
  maximization via composable coresets}, Proceedings of the 27th Canadian
  Conference on Computational Geometry, 2015.

\bibitem[AHPV04a]{AHV04}
P.~Agarwal, S.~Har-Peled, and K.~Varadarajan, \emph{Approximating extent
  measures of points}, Journal of the ACM \textbf{51} (2004), no.~4, 606--635.

\bibitem[AHPV04b]{AHPV04}
Pankaj~K. Agarwal, Sariel Har-Peled, and Kasturi~R. Varadarajan,
  \emph{Approximating extent measures of points}, Journal of the ACM
  \textbf{51} (2004), no.~4, 606 -- 635.

\bibitem[AMR{\etalchar{+}}12]{AMRSLS12}
Marcel~R. Ackermann, Marcus M{\"a}rtens, Christoph Raupach, Kamil Swierkot,
  Christiane Lammersen, and Christian Sohler, \emph{Streamkm\syms: A clustering
  algorithm for data streams}, ACM Journal of Experimental Algorithmics
  \textbf{17} (2012), article 2.4, 1--30.

\bibitem[ANSW16]{ANSW16}
Sara Ahmadian, Ashkan Norouzi{-}Fard, Ola Svensson, and Justin Ward,
  \emph{Better guarantees for k-means and euclidean k-median by primal-dual
  algorithms}, CoRR \textbf{abs/1612.07925} (2016).

\bibitem[AV07]{arthur2007k}
David Arthur and Sergei Vassilvitskii, \emph{k-means++: The advantages of
  careful seeding}, Proceedings of the eighteenth annual ACM-SIAM symposium on
  Discrete algorithms, Society for Industrial and Applied Mathematics, 2007,
  pp.~1027--1035.

\bibitem[BEHW89]{BEHW89}
Anselm Blumer, Andrzej Ehrenfeucht, David Haussler, and Manfred~K. Warmuth,
  \emph{Learnability and the vapnik-chervonenkis dimension}, Journal of the
  {ACM} \textbf{36} (1989), no.~4, 929--965.

\bibitem[Bey]{Beyer}
M.~Beyer, \emph{Gartner says solving \lq big data\rq\ challenge involves more
  than just managing volumes of data},
  \url{http://www.gartner.com/it/page.jsp?id=1731916}, Gartner. Retrieved 13
  July 2011.

\bibitem[BFL16]{braverman2016new}
Vladimir Braverman, Dan Feldman, and Harry Lang, \emph{New frameworks for
  offline and streaming coreset constructions}, arXiv preprint arXiv:1612.00889
  (2016).

\bibitem[BFL{\etalchar{+}}17]{BFLSY17}
Vladimir Braverman, Gereon Frahling, Harry Lang, Christian Sohler, and Lin~F.
  Yang, \emph{Clustering high dimensional dynamic data streams}, Proceedings of
  the 34th International Conference on Machine Learning, {ICML} 2017, Sydney,
  NSW, Australia, 6-11 August 2017, 2017, pp.~576--585.

\bibitem[BMD09]{BMD09}
Christos Boutsidis, Michael~W. Mahoney, and Petros Drineas, \emph{Unsupervised
  feature selection for the \$k\$-means clustering problem}, \Proc 23rd \NIPS,
  2009, pp.~153 -- 161.

\bibitem[BMDG05]{BMDG05}
A.~Banerjee, S.~Merugu, I.~S. Dhillon, and J.~Ghosh, \emph{Clustering with
  bregman divergences}, Journal of Machine Learning Research \textbf{6} (2005),
  1705--1749.

\bibitem[BS80]{bent}
J.~L. Bentley and J.~B. Saxe, \emph{Decomposable searching problems i.
  static-to-dynamic transformation}, Journal of Algorithms \textbf{1} (1980),
  no.~4, 301--358.

\bibitem[BSS12]{batson2012twice}
Joshua Batson, Daniel~A Spielman, and Nikhil Srivastava, \emph{Twice-ramanujan
  sparsifiers}, SIAM Journal on Computing \textbf{41} (2012), no.~6,
  1704--1721.

\bibitem[BZD10]{BZD10}
Christos Boutsidis, Anastasios Zouzias, and Petros Drineas, \emph{{ Random
  Projections for {$k$}-means Clustering }}, \Proc 24th \NIPS, 2010, pp.~298 --
  306.

\bibitem[BZMD15]{BZMD15}
Christos Boutsidis, Anastasios Zouzias, Michael~W. Mahoney, and Petros Drineas,
  \emph{Randomized dimensionality reduction for k-means clustering}, {IEEE}
  Transactions on Information Theory \textbf{61} (2015), no.~2, 1045--1062.

\bibitem[CEM{\etalchar{+}}15]{CEMMP15}
Michael~B. Cohen, Sam Elder, Cameron Musco, Christopher Musco, and Madalina
  Persu, \emph{Dimensionality reduction for k-means clustering and low rank
  approximation}, Proceedings of the Forty-Seventh Annual {ACM} on Symposium on
  Theory of Computing, {STOC} 2015, 2015, pp.~163--172.

\bibitem[Che09]{C09}
Ke~Chen, \emph{On coresets for k-median and k-means clustering in metric and
  euclidean spaces and their applications}, SIAM Journal on Computing
  \textbf{39} (2009), no.~3, 923 -- 947.

\bibitem[CKM16]{CKM16}
Vincent Cohen{-}Addad, Philip~N. Klein, and Claire Mathieu, \emph{Local search
  yields approximation schemes for k-means and k-median in euclidean and
  minor-free metrics}, {IEEE} 57th Annual Symposium on Foundations of Computer
  Science, {FOCS}, 2016, pp.~353--364.

\bibitem[CNW16]{CohenNW16}
Michael~B. Cohen, Jelani Nelson, and David~P. Woodruff, \emph{Optimal
  approximate matrix product in terms of stable rank}, 43rd International
  Colloquium on Automata, Languages, and Programming, {ICALP} 2016, July 11-15,
  2016, Rome, Italy (Ioannis Chatzigiannakis, Michael Mitzenmacher, Yuval
  Rabani, and Davide Sangiorgi, eds.), LIPIcs, vol.~55, Schloss Dagstuhl -
  Leibniz-Zentrum fuer Informatik, 2016, pp.~11:1--11:14.

\bibitem[CW09]{CW09}
Kenneth~L. Clarkson and David~P. Woodruff, \emph{Numerical linear algebra in
  the streaming model}, Proceedings of the 41st STOC, 2009, pp.~205 -- 214.

\bibitem[CW13]{CW13}
\bysame, \emph{Low rank approximation and regression in input sparsity time},
  STOC 2013, 2013, pp.~81 -- 90.

\bibitem[DFK{\etalchar{+}}04]{DFKVV04}
Petros Drineas, Alan~M. Frieze, Ravi Kannan, Santosh Vempala, and V.~Vinay,
  \emph{Clustering large graphs via the singular value decomposition}, Machine
  Learning \textbf{56} (2004), 9--33.

\bibitem[DR10]{DR10}
Amit Deshpande and Luis Rademacher, \emph{Efficient volume sampling for
  row/column subset selection}, 51th FOCS, 2010, pp.~329 -- 338.

\bibitem[DRVW06]{deshpande2006matrix}
Amit Deshpande, Luis Rademacher, Santosh Vempala, and Grant Wang, \emph{Matrix
  approximation and projective clustering via volume sampling}, Theory of
  Computing \textbf{2} (2006), no.~1, 225 -- 247.

\bibitem[DTV11]{DTV11}
Amit Deshpande, Madhur Tulsiani, and Nisheeth~K. Vishnoi, \emph{Algorithms and
  hardness for subspace approximation}, \Proc 22nd \SODA, 2011, pp.~482--496.

\bibitem[DV06]{DV06}
Amit Deshpande and Santosh Vempala, \emph{Adaptive sampling and fast low-rank
  matrix approximation}, 10th RANDOM, 2006, pp.~292--303.

\bibitem[EA07]{eisenstat2007vc}
David Eisenstat and Dana Angluin, \emph{The vc dimension of k-fold union},
  Information Processing Letters \textbf{101} (2007), no.~5, 181--184.

\bibitem[EV05]{EV05}
Michael Edwards and Kasturi~R. Varadarajan, \emph{No coreset, no cry: {II}},
  \Proc 25th \FSTTCS, 2005, pp.~107--115.

\bibitem[FFS06]{FFS06}
D.~Feldman, A.~Fiat, and M.~Sharir, \emph{Coresets forweighted facilities and
  their applications}, \Proc 47th \FOCS, 2006, pp.~315--324.

\bibitem[FGS{\etalchar{+}}13]{FGSSS13}
Hendrik Fichtenberger, Marc Gill{\'e}, Melanie Schmidt, Chris Schwiegelshohn,
  and Christian Sohler, \emph{{BICO: BIRCH Meets Coresets for k-Means
  Clustering }}, \Proc 21st \ESA, 2013, pp.~481--492.

\bibitem[FL11]{FL11}
D.~Feldman and M.~Langberg, \emph{A unified framework for approximating and
  clustering data}, \Proc 43rd \STOC, 2011, See http://arxiv.org/abs/1106.1379
  for fuller version, pp.~569--578.

\bibitem[FMS07]{FMS07}
D.~Feldman, M.~Monemizadeh, and C.~Sohler, \emph{A ptas for k-means clustering
  based on weak coresets}, \Proc 23rd \SoCG, 2007, pp.~11--18.

\bibitem[FMSW10]{feldman2010coresets}
Dan Feldman, Morteza Monemizadeh, Christian Sohler, and David~P Woodruff,
  \emph{Coresets and sketches for high dimensional subspace approximation
  problems}, Proceedings of the twenty-first annual ACM-SIAM symposium on
  Discrete Algorithms, Society for Industrial and Applied Mathematics, 2010,
  pp.~630--649.

\bibitem[FRS16]{FRS16}
Zachary Friggstad, Mohsen Rezapour, and Mohammad~R. Salavatipour, \emph{Local
  search yields a {PTAS} for k-means in doubling metrics}, 57th FOCS, 2016,
  pp.~365--374.

\bibitem[FS05]{FS05}
Gereon Frahling and Christian Sohler, \emph{Coresets in dynamic geometric data
  streams}, \Proc 37th \STOC, 2005, pp.~209 -- 217.

\bibitem[FS12]{schulman}
Dan Feldman and Leonard~J Schulman, \emph{Data reduction for weighted and
  outlier-resistant clustering}, Proc. of the 23rd annual ACM-SIAM symp. on
  Discrete Algorithms (SODA), SIAM, 2012, pp.~1343--1354.

\bibitem[FSS13]{FSS13}
Dan Feldman, Melanie Schmidt, and Christian Sohler, \emph{{Turning Big Data
  into Tiny Data: Constant-size Coresets for k-means, PCA and Projective
  Clustering}}, \Proc 24th \SODA, 2013, pp.~1434 -- 1453.

\bibitem[FVR15]{FeldmanVR15}
Dan Feldman, Mikhail Volkov, and Daniela Rus, \emph{Dimensionality reduction of
  massive sparse datasets using coresets}, CoRR \textbf{abs/1503.01663} (2015).

\bibitem[FVR16]{feldman2016dimensionality}
Dan Feldman, Mikhail Volkov, and Daniela Rus, \emph{Dimensionality reduction of
  massive sparse datasets using coresets}, Advances in Neural Information
  Processing Systems, 2016, pp.~2766--2774.

\bibitem[GK65]{GK65}
Gene~H. Golub and William Kahan, \emph{Calculating the singular values and
  pseudo-inverse of a matrix}, Journal of the Society for Industrial and
  Applied Mathematics: Series B, Numerical Analysis (1965), 205--224.

\bibitem[GKL95]{GKL95}
Peter Gritzmann, Victor Klee, and David~G. Larman, \emph{Largest j-simplices
  n-polytopes}, Discrete {\&} Computational Geometry \textbf{13} (1995),
  477--515.

\bibitem[GLPW16]{ghashami2016frequent}
Mina Ghashami, Edo Liberty, Jeff~M Phillips, and David~P Woodruff,
  \emph{Frequent directions: Simple and deterministic matrix sketching}, SIAM
  Journal on Computing \textbf{45} (2016), no.~5, 1762--1792.

\bibitem[GR70]{GR70}
Gene~H. Golub and Christian Reinsch, \emph{Singular value decomposition and
  least squares solutions}, Numerische Mathematik (1970), 40--420.

\bibitem[Har04]{H04}
Sariel Har{-}Peled, \emph{No coreset, no cry}, \Proc 24th \FSTTCS, 2004,
  pp.~324--335.

\bibitem[Har06]{HP06}
\bysame, \emph{Coresets for discrete integration and clustering}, 26th FSTTCS,
  2006, pp.~33 -- 44.

\bibitem[Hel]{Hellerstein}
J.~Hellerstein, \emph{Parallel programming in the age of big data}, Gigaom
  Blog, 9th November, 2008.

\bibitem[HL11]{Hilbert}
M.~Hilbert and P.~Lopez, \emph{The world's technological capacity to store,
  communicate, and compute information}, Science \textbf{332} (2011), no.~6025,
  60--65.

\bibitem[HMT11]{HMT11}
Nathan Halko, Per{-}Gunnar Martinsson, and Joel~A. Tropp, \emph{Finding
  structure with randomness: Probabilistic algorithms for constructing
  approximate matrix decompositions}, {SIAM} Review \textbf{53} (2011), no.~2,
  217 -- 288.

\bibitem[HPK07]{HPK07}
Sariel Har-Peled and Akash Kushal, \emph{Smaller coresets for k-median and
  k-means clustering}, Discrete {\&} Computational Geometry \textbf{37} (2007),
  no.~1, 3--19.

\bibitem[HPM04]{HM04}
S.~Har-Peled and S.~Mazumdar, \emph{Coresets for $k$-means and $k$-median
  clustering and their applications}, \Proc 36th \STOC, 2004, pp.~291--300.

\bibitem[HS11]{HPS11}
Sariel Har{-}Peled and Micha Sharir, \emph{Relative (\emph{p},
  \emph{{\(\epsilon\)}})-approximations in geometry}, Discrete {\&}
  Computational Geometry \textbf{45} (2011), no.~3, 462--496.

\bibitem[IBM]{IBM}
\emph{Ibm: What is big data?. bringing big data to the enterprise}, Website,
  \url{ibm.com/software/data/bigdata/}, accessed on the 3rd of October 2012.

\bibitem[Ice]{IceCube}
Homepage of the IceCube neutrino observatory, \url{icecube.wisc.edu/}, accessed
  on the 21th of March 2013.

\bibitem[IMMM14]{indyk2014composable}
Piotr Indyk, Sepideh Mahabadi, Mohammad Mahdian, and Vahab~S Mirrokni,
  \emph{Composable core-sets for diversity and coverage maximization},
  Proceedings of the 33rd ACM SIGMOD-SIGACT-SIGART symposium on Principles of
  database systems, ACM, 2014, pp.~100--108.

\bibitem[JF]{jeff}
Phillips Jeff and Dan Feldman, personal communication.

\bibitem[KSS10]{KSS10}
Amit Kumar, Yogish Sabharwal, and Sandeep Sen, \emph{Linear-time approximation
  schemes for clustering problems in any dimensions}, J. {ACM} \textbf{57}
  (2010), no.~2, 5:1--5:32.

\bibitem[lhc]{lhcb}
Homepage of the large hadron collider beauty experiment,
  \url{lhcb-public.web.cern.ch/lhcb-public/}, accessed on the 21st of March
  2013.

\bibitem[Lib13]{liberty2013simple}
Edo Liberty, \emph{Simple and deterministic matrix sketching}, Proceedings of
  the 19th ACM SIGKDD international conference on Knowledge discovery and data
  mining, ACM, 2013, pp.~581--588.

\bibitem[LLS01]{LLS01}
Yi~Li, Philip.~M. Long, and Aravind Srinivasan, \emph{Improved bounds on the
  sample complexity of learning}, Journal of Computer and System Sciences
  (JCSS) \textbf{62} (2001), 516--527.

\bibitem[LS10]{LS10}
M.~Langberg and L.~J. Schulman, \emph{Universal epsilon-approximators for
  integrals}, \Proc 21st \SODA, 2010, pp.~598--607.

\bibitem[LSW17]{LSW17}
Euiwoong Lee, Melanie Schmidt, and John Wright, \emph{Improved and simplified
  inapproximability for k-means}, Information Processing Letters \textbf{120}
  (2017), 40--43.

\bibitem[Mah11]{Mahoney}
M.~W. Mahoney, \emph{Randomized algorithms for matrices and data}, Foundations
  and Trends{\textregistered} in Machine Learning \textbf{3} (2011), no.~2,
  123--224.

\bibitem[MNV09]{MNV09}
Meena Mahajan, Prajakta Nimbhorkar, and Kasturi~R. Varadarajan, \emph{The
  {P}lanar $k$-means {P}roblem is {NP}-{H}ard}, \Proc 3rd \WALCOM, 2009,
  pp.~274 -- 285.

\bibitem[MT82]{MT82}
Nimrod Megiddo and Arie Tamir, \emph{On the complexity of locating linear
  facilities in the plane}, Operation Research Letters \textbf{1} (1982),
  no.~5, 194--197.

\bibitem[Mut05]{Muthu05}
S.~Muthukrishnan, \emph{Data streams: Algorithms and applications}, Foundations
  and Trends in Theoretical Computer Science \textbf{1} (2005), no.~2, 117 --
  236.

\bibitem[MZ15]{mirrokni2015randomized}
Vahab Mirrokni and Morteza Zadimoghaddam, \emph{Randomized composable core-sets
  for distributed submodular maximization}, Proceedings of the Forty-Seventh
  Annual ACM on Symposium on Theory of Computing, ACM, 2015, pp.~153--162.

\bibitem[NDT09]{NDT09}
Nam~H. Nguyen, Thong~T. Do, and Trac~D. Tran, \emph{A fast and efficient
  algorithm for low-rank approximation of a matrix}, Proceedings of the 41st
  STOC, 2009, pp.~215 -- 224.

\bibitem[Pea01]{pearson1901}
Karl Pearson, \emph{On lines and planes of closest fit to systems of points in
  space}, The London, Edinburgh, and Dublin Philosophical Magazine and Journal
  of Science \textbf{2} (1901), no.~11, 559--572.

\bibitem[QSS00]{QSS10}
Alfio Quarteroni, Riccardo Sacco, and Fausto Saleri, \emph{Numerical
  mathematics}, pp.~22--25, Springer, 2000.

\bibitem[Sar06]{Sarlos06}
Tam{\'{a}}s Sarl{\'{o}}s, \emph{Improved approximation algorithms for large
  matrices via random projections}, 47th FOCS, 2006, pp.~143 -- 152.

\bibitem[Sch14]{S14}
Melanie Schmidt, \emph{Coresets and streaming algorithms for the k-means
  problem and related clustering objectives}, Ph.D. thesis, Universit{\"{a}}t
  Dortmund, 2014.

\bibitem[SH09]{Segaran}
T.~Segaran and J.~Hammerbacher, \emph{Beautiful data: The stories behind
  elegant data solutions}, O'Reilly Media, 2009.

\bibitem[Ste93]{S93}
Gilbert~W. Stewart, \emph{On the early history of the singular value
  decomposition}, SIAM {R}eview \textbf{35} (1993), 551 -- 566.

\bibitem[SV12]{SV12}
Nariankadu~D. Shyamalkumar and Kasturi~R. Varadarajan, \emph{Efficient subspace
  approximation algorithms}, Discrete {\&} Computational Geometry \textbf{47}
  (2012), no.~1, 44--63.

\bibitem[Vos91]{vose1991linear}
Michael~D. Vose, \emph{A linear algorithm for generating random numbers with a
  given distribution}, IEEE Transactions on software engineering \textbf{17}
  (1991), no.~9, 972--975.

\bibitem[VX12a]{VX12-soda}
K.~Varadarajan and X.~Xiao, \emph{A near-linear algorithm for projective
  clustering integer points}, \Proc \SODA, 2012.

\bibitem[VX12b]{varadarajan}
Kasturi Varadarajan and Xin Xiao, \emph{On the sensitivity of shape fitting
  problems}, \Proc 32nd Annual Conference on \FSTTCS, 2012, pp.~486 -- 497.

\bibitem[War68]{warren1968lower}
Hugh~E Warren, \emph{Lower bounds for approximation by nonlinear manifolds},
  Transactions of the American Mathematical Society \textbf{133} (1968), no.~1,
  167--178.

\bibitem[Whi12]{Hadoop}
T.~White, \emph{Hadoop: The definitive guide}, O'Reilly Media, 2012.

\end{thebibliography}
\end{document}